\def\>{\ensuremath{\rangle}}
\def\<{\ensuremath{\langle}}
\newcommand {\supp } {{\rm supp}}
\newcommand {\tr} {{\mathit{tr}}}
\newcommand{\cE}{\mathcal{E}}
\newcommand{\cI}{\mathcal{I}}
\newcommand{\hs}{\mathcal{H}}
\newcommand{\ci}{\mathbf{i}}
\newcommand{\sm}[1]{\ensuremath{\llbracket #1\rrbracket}}
\newtheorem{thm}{Theorem}[section]
\newtheorem{cor}{Corollary}[section]
\newtheorem{lem}{Lemma}[section]
\newtheorem{defn}{Definition}[section]
\newtheorem{prop}{Proposition}[section]
\newtheorem{exam}{Example}[section]
\newtheorem{rem}{Remark}[section]
\newtheorem{clm}{Claim}[section]
\newtheorem{fact}{Fact}[section]
  \providecommand\BibTeX{{
    \normalfont B\kern-0.5em{\scshape i\kern-0.25em b}\kern-0.8em\TeX}}}
\begin{document}

             \title{Reasoning about Parallel Quantum Programs}

\author{Mingsheng Ying}
\affiliation{
  \institution{University of Technology Sydney}
  \streetaddress{Broad Way}
  \city{Sydney}
  \country{Australia}}
  \affiliation{
  \institution{Institute of Software, Chinese Academy of Sciences}
  \city{Beijing}
  \country{China}}
   \affiliation{
  \institution{Tsinghua University}
  \city{Beijing}
  \country{China}}
\email{Mingsheng.Ying@uts.edu.au}

\author{Li Zhou}
\affiliation{
  \institution{Max Planck Institute for Security and Privacy}
  \city{Bochum}
  \country{Germany}}
 \email{zhou31416@gmail.com} 
  
\author{Yangjia Li}

\affiliation{
  \institution{Institute of Software, Chinese Academy of Sciences}
  \city{Beijing}
  \country{China}}
\email{yangjia@ios.ac.cn} 

\renewcommand{\shortauthors}{Ying, Zhou and Li.}
 
\thanks{with paper note}

\begin{abstract}
We initiate the study of parallel quantum programming by defining the operational and denotational semantics of parallel quantum programs. The \textit{technical contributions} of this paper include: (1) find a series of useful proof rules for reasoning about correctness of parallel quantum programs; (2) prove a (relative) completeness of our proof rules for partial correctness of disjoint parallel quantum programs; and (3) prove a strong soundness theorem of the proof rules showing that partial correctness is well maintained at each step of transitions in the operational semantics of a general parallel quantum program (with shared variables). This is achieved by partially overcoming the following \textit{conceptual challenges} that are never present in classical parallel programming: (i) the intertwining of nondeterminism caused by quantum measurements and introduced by parallelism; (ii) entanglement between component quantum programs; and (iii) combining quantum predicates in the overlap of state Hilbert spaces of component quantum programs with shared variables. \textit{Applications} of the techniques developed in this paper are illustrated by a formal verification of Bravyi-Gosset-K\"{o}nig's parallel quantum algorithm solving a linear algebra problem, which gives for the first time an unconditional proof of a computational quantum advantage.
\end{abstract}

\begin{CCSXML}

<ccs2012>
<concept>
<concept_id>10003752.10010124.10010138.10010143</concept_id>
<concept_desc>Theory of computation~Program verification</concept_desc>
<concept_significance>500</concept_significance>
</concept>
</ccs2012>
\end{CCSXML}

\ccsdesc[300]{Software and its engineering~Parallel programming languages}
\ccsdesc[300]{Theory of computation~Operational semantics}
\ccsdesc[300]{Theory of computation~Denotational semantics}
\ccsdesc[500]{Theory of computation~Program verification}
 
\keywords{quantum programming, parallel programs, operational semantics, denotational semantics, partial and total correctness, entanglement, interference}

\maketitle

\section{Introduction}\label{Intro}

Quantum programming research started from several high-level quantum programming languages proposed as early as in the later 1990's and early 2000's: QCL by \"{O}mer \cite{Om03}, qGCL by Sanders and Zuliani \cite{SZ00}, QPL by Selinger \cite{Selinger04} and QML by Altenkirch and Grattage \cite{AG05}. Now it has been extensively conducted for two decades; see \cite{Se04, Gay06, Ying16} for a survey. In particular, some more practical and scalable quantum programming languages have been defined and implemented in the last few years, including Quipper \cite{Green14}, Scaffold \cite{Sca12}, QWIRE \cite{Qwire}, and Microsoft's LIQUi$|\rangle$ \cite{WS14} and Q\# \cite{Svor18}. Various semantics and type theories of quantum programming languages have been extensively studied; for example, a denotational semantics of quantum lambda calculus with recursion was discovered by Hasuo and Hoshino \cite{Hasuo} and Pagani et al. \cite{Pagani}, an algebraic theory for equational reasoning about quantum programs was developed by Staton \cite{Staton}, and type systems have been established for quantum lambda-calculus \cite{SV09} and QWIRE \cite{Qwire}.

{\vskip 4pt}

\textbf{Quantum Hoare Logic}: Several verification techniques for classical programs have also been extended to quantum programs \cite{Baltag06, BJ04, CMS, FDJY07, Gay08, Kaku09, Rand17}. In particular, the notion of weakest precondition for a quantum program as a physical observable (or mathematically a Hermitian operator) was introduced by D'Hondt and Panangaden in \cite{DP06}, and then a Hoare-like logic for both partial and total correctness of quantum programs with (relative) completeness was built in \cite{Ying11}. In the last few year, some significant progress has been made in further developing quantum Hoare logic and related issues. An SDP (Semi-Definite Programming) algorithm for generating invariants and an SDP algorithm for termination analysis of quantum programs with ranking functions (or super-martingales) were presented in \cite{YYW17, LY18}. A theorem prover for quantum Hoare logic was implemented based on Isabelle/HOL in \cite{Liu19}. Ghost (i.e. auxiliary) variables in quantum Hoare logic were carefully examined in \cite{Unruh19b}. A simplification of quantum Hoare logic for more convenient applications was obtained in \cite{Zhou} by restricting to projective preconditions and postconditions. Quantum Hoare logic was also generalised in \cite{Wu19} for reasoning about robustness of quantum programs against quantum noise during execution. As a generalisation of relational Hoare logic \cite{Benton} and probabilistic relational Hoare logic \cite{Barthe}, a quantum relational Hoare logic with subspaces of (equivalently, projection operators on) the state Hilbert space as preconditions and postconditions was first proposed in \cite{Unruh19a}, targetting applications in security verification of quantum cryptographic protocols. It was further extended in \cite{Barthe19, Li19} to the general case where any Hermitian operators can be used as preconditions and postconditions. 

{\vskip 4pt}

\textbf{Why Parallel Quantum Programming?} The works mentioned above concentrate on sequential quantum programming. However, parallel programming problem for quantum computing has already arisen in the following four areas:
\begin{itemize}\item
Several models of parallel and distributed quantum computing were proposed more than fifteen years ago, mainly with the motivation of using the physical resources of two or more small-capacity quantum computers to realise large-capacity quantum computing, which is out of the reach of current technology; for example, a model of distributed quantum computing over noisy channels was considered in \cite{Cirac}. More recently, a quantum parallel RAM (Random Access Memory) model was defined in \cite{Harrow}, and a formal language for defining quantum circuits in distributed quantum computing was introduced in \cite{YF09}. 

\item Quantum algorithms for solving paradigmatic parallel and distributed computing problems that are faster than the known classical algorithms have been discovered; for example, a quantum algorithm for the leader election problem was given in \cite{leader} and a quantum protocol for the dinning philosopher problem was shown in \cite{dinning}. Also, several parallel implementations of the quantum Fourier transform and Shor's quantum factoring algorithm were presented in \cite{Cleve00, Moore01}. In particular, Bravyi, Gosset and K\"{o}nig recently discovered a parallel quantum algorithm solving a linear algebra problem called HLF (Hidden Linear Function), which gives for the first time an unconditional proof of a computational quantum advantage \cite{Bravyi} .

\item Parallelism has been carefully considered in the physical level design of quantum computer architecture; see for example \cite{Ion}. Furthermore, the issue of instruction parallelism has already been discussed in Rigetti's quantum instruction set architecture \cite{Rigetti} and IBM Q \cite{IBM}. Moreover, experiments of the physical implementation of parallel and distributed quantum computing have been frequently reported in the recent years.

\item Motivated by the tremendous progress toward practical quantum hardware in the lastest years, some authors \cite{Boneh17} started to consider how to design an operating system for quantum computers; in particular, what new abstractions could a quantum operating system expose to the programmer? It is well-known that parallelism is a major issue in operating systems for classical computers \cite{Ka15}. As one can imagine, it will also be a major issue in the design and implementation of future quantum operating systems.
\end{itemize}

{\vskip 4pt}

\textbf{Aims of the Paper}:  This paper initiates the study of parallel quantum programming by introducing a programming language that can be used to program parallel and distributed quantum algorithms like those mentioned above. This language is the quantum \textbf{while}-language \cite{Ying11, Ying16} expanded with the construct of parallel composition. We formally define the operational and denotational semantics of parallel composition of quantum programs. The emphasis of this paper is to establish a proof system for reasoning about correctness of parallel quantum programs. We expect that the results obtained in this paper can also be used to model and reason about parallelism in quantum operating systems.

{\vskip 4pt}

\textbf{Owicki-Gries and Lamport Method}: The proof system introduced by Owicki and Gries \cite{Owicki76} and Lamport \cite{Lamport77} is one of the most popular methods for reasoning about classical parallel programs. Roughly speaking, it consists of the Hoare logic for sequential programs, a rule for introducing auxiliary variables recording control flows and a key rule (R.PC) for parallel composition shown in Figure \ref{fig -1}.
\begin{figure}[h]\centering
\begin{align*}({\rm R.PC})\ \ \ \frac{{\rm Proofs\ of}\ \left\{A_i\right\}P_i\left\{B_i\right\}\ (i=1,...,n)\ {\rm are\ interference\ free}}{\left\{\bigwedge_{i=1}^n A_i\right\}P_1\|\cdots\|P_n\left\{\bigwedge_{i=1}^n B_i\right\}}
\end{align*}
\caption{Proof Rule for Parallel Composition.}\label{fig -1}
\end{figure}
The rule (R.PC) degenerates to Hoare's parallel rule introduced in \cite{Hoare72} when components $P_1,...,P_n$ are disjoint; that is, they do not share variables. 

Naturally, a starting point for our research on reasoning about parallel quantum programs is to generalise Hoare's parallel rule and the Owicki-Gries and Lamport method to the quantum setting. However, it is highly nontrivial to develop such a quantum generalisation, especially to find an \textit{appropriate} quantum version of inference rule (R.PC) for parallel composition of programs, and the unique features of quantum systems render us with several challenges in parallel quantum programming that would never be present in parallel programming for classical computers.

{\vskip 4pt}

\textbf{Major Challenges in Parallel Quantum Programming}:  \begin{itemize}\item \textit{Intertwined nondeterminism}: In a quantum \textbf{while}-program, nondeterminism is caused only by the involved quantum measurements, and in a classical parallel program, nondeterminism is introduced only by the parallelism. However, in a parallel quantum program, these two kinds of nondeterminism occur simultaneously, and their intertwining is hard to deal with in defining the denotational semantics of the program; in particular, when it  contains loops which can have infinite computations (see Definition \ref{dp-den-sem} and Example \ref{exam-maximal}).

\item \textit{Entanglement}: The denotational semantics achieved by solving the above challenge provides us with a basis for building an Owicki-Gries and Lamport-like proof system for parallel quantum programs. At the first glance, it seems that disjoint parallel quantum programs are easy to deal with because: (i) interference freedom is automatically there, as what happens in classical disjoint parallel programs; and (ii) conjunctives $\bigwedge_{i=1}^nA_i$ and $\bigwedge_{i=1}^nB_i$ in rule (R.PC) have proper quantum counterparts, namely tensor products $\bigotimes_{i=1}^nA_i$ and $\bigotimes_{i=1}^nB_i$, respectively, when $P_1,...,P_n$ are disjoint. But actually a difficulty that makes no sense in classical computing arises  in reasoning about parallel quantum programs even in this simple case. More explicitly, entanglement is indispensable for realising the advantage of quantum computing over classical computing, but
a quantum generalisation of (R.PC) (more precisely, Hoare's parallel rule) is not strong enough to cope with the situation where entanglement between component programs is present.

\item \textit{Combining predicates in the overlap of state Hilbert spaces}: When we further consider parallel quantum programs with shared variables, another difficulty appears which never happens in classical computation: the Hilbert spaces $\mathcal{H}_{P_i}$ $(i=1,...,n)$ of quantum predicates $A_i$, $B_i$ $(i=1,...,n)$ have overlaps. Then conjunctives $\bigwedge_{i=1}^n A_i$ and $\bigwedge_{i=1}^nB_i$ cannot be simply replaced by tensor products $\bigotimes_{i=1}^nA_i$ and $\bigotimes_{i=1}^nB_i$, respectively, because they are not well-defined in the state Hilbert space $\bigotimes_{i=1}^n\mathcal{H}_{P_i}$ of $P_1\|\cdots\|P_n$.
\end{itemize}

{\vskip 4pt}

\textbf{Technical Contributions of the Paper}: The main technical results are achieved by resolving the first two challenges and partially solving the third challenge discussed above.  
\begin{itemize}\item The challenge of \textit{intertwined nondeterminism} is settled in Section \ref{sec-disjoint} by establishing a subtle confluence property for different execution paths of the parallel quantum program (see Lemmas \ref{lem-det} and \ref{lem-seq} and their proofs in Appendices C and D).

\item We propose two techniques to tame the difficulty of \textit{entanglement}: (a) introducing an additional inference rule obtained by invoking a deep theorem about the relation between noise and entanglement from quantum physics \cite{Gur03} (see rule (R.S2E) in Figure \ref{fig 6}); and (b) introducing auxiliary variables (see Subsection \ref{sss-aux}) based on the observation in physics that entanglement may emerge when reducing a state of a composite system to its subsystems \cite{NC00}. It turns out that technique (a) can only deal with some special cases of entanglement, but (b) is generic. Using technique (b), we are able to develop a proof system for disjoint parallel quantum programs and establish its (relative) completeness theorem in presence of entanglement (see Theorems \ref{thm-complete-dis} and \ref{complete-dis-p}). 

\item We only have a partial solution to the difficulty of \textit{overlaping state Hilbert spaces}. The idea is that probabilistic (convex) combinations of $A_i$ $(i=1,...,n)$ and $B_i$ $(i=1,...,n)$ 
are well-defined in $\bigotimes_{i=1}^n\mathcal{H}_{P_i}$, even when $P_1,...,P_n$ share variables, 
and can serve as a kind of approximations to the quantum counterparts of conjunctives $\bigwedge_{i=1}^nA_i,\bigwedge_{i=1}^nB_i$, respectively. Although a probabilistic combination is not a perfect quantum version of conjunctive, as a tensor product did in the case of disjoint parallel quantum programs, its reasonableness and usefulness can be clearly seen through its connection to local Hamiltonians in many-body quantum systems (see a detailed discussion in Remark \ref{remark-loc-ham}). Furthermore, we can define a notion of parametrised interference freedom between the proof outlines of component quantum programs. Then a quantum variant of inference rule (R.PC) can be introduced to reason about parallel quantum programs with shared variables.
A strong soundness theorem is proved for the rules showing that partial correctness is well maintained at each step of the transitions in the operational semantics of a parallel quantum program with shared variables (see Theorem \ref{convex-sound}), which can be seen as a quantum generalisation of Lemma 8.8 in \cite{Apt09} or the strong soundness theorem in Section 7.4 of \cite{Francez}.
\end{itemize}

{\vskip 4pt}

\textbf{Organisation of the Paper}: For convenience of the reader, we briefly review quantum Hoare logic in Section \ref{qwhile}. Our study of parallel quantum programming starts in Section \ref{sec-disjoint} where we define the operational and denotational semantics  of disjoint parallel quantum programs. In Section \ref{dis-correctness}, we develop a proof system for reasoning about disjoint parallel quantum programs, including a quantum generalisation of rule (R.PC). In particular, we prove its (relative) completeness for both partial and total correctness in Subsection \ref{subsec-entangle}. The syntax and semantics of parallel quantum programs with shared variables are defined in Section \ref{sec-shared-varaiables}. Section \ref{correct-shared} is devoted to develop proof techniques for parallel quantum programs with shared variables. The notion of proof outline is required to present inference rule (R.PC) for classical parallel programs with shared variables. A corresponding notion is needed to present the quantum generalisation(s) of rule (R.PC). As a preparation, such a notion is introduced for quantum \textbf{while}-programs in Subsection \ref{sec-outline}. Then we use it to introduce the notion of parameterised noninterference and present an inference rule for a parallel quantum program with its precondition (resp. postcondition) as a probabilistic combination of the preconditions (resp. postconditions) of its component programs. Several simple examples are given along the way to illustrate the notions and proof rules introduced in these sections and especially to show the subtle difference between the classical and quantum cases. A detailed case study is presented in Section \ref{sec-case} where a formal verification of Bravyi-Gosset-K\"{o}nig's parallel quantum algorithm solving a linear algebra problem, which gives for the  first time an unconditional proof of a computational quantum advantage. Section \ref{sec-con} is the concluding section where several unsolved problems are pointed out and their difficulties are briefly discussed. For readability, all lengthy proofs are postponed into the Appendices. 

\section{Hoare Logic for Quantum \textbf{While}-Programs}\label{qwhile}

The parallel quantum programs considered in this paper are parallel compositions of quantum \textbf{while}-programs studied in \cite{Ying11, Ying16}. In this section, we briefly review the syntax and semantics of quantum \textbf{while}-language and quantum Hoare logic from \cite{Ying11, Ying16}. They will serve as a basis of the subsequent sections.

\subsection{Syntax and Semantics of Quantum \textbf{while}-Programs}

We assume a countably infinite set $\mathit{Var}$ of quantum variables. For each $q\in\mathit{Var}$, we write $\mathcal{H}_q$ for its state Hilbert space. In this paper, it is always assumed to be finite-dimensional or separable. For any $X\subseteq\mathit{Var}$, we put: $$\mathcal{H}_X=\bigotimes_{q\in X}\mathcal{H}_q.$$

\begin{defn}[Syntax \cite{Ying11, Ying16}]\label{q-syntax}
The quantum \textbf{while}-programs are defined by
the grammar:
\begin{align}\label{syntax}P::=\ \mathbf{skip}\ & |\ P_1;P_2\ |\ q:=|0\rangle\ |\ \overline{q}:=U[\overline{q}]\\ \label{syntax+}&|\ \mathbf{if}\ \left(\square m\cdot M[\overline{q}] =m\rightarrow P_m\right)\ \mathbf{fi}\\ \label{syntax++}&|\ \mathbf{while}\ M[\overline{q}]=1\ \mathbf{do}\ P\ \mathbf{od}\end{align}\end{defn}

Here, $q:=|0\rangle$ means that quantum variable $q$ is initialised in a basis state $|0\rangle$. $\overline{q}:=U[\overline{q}]$ denotes that unitary transformation $U$ is applied to quantum register $\overline{q}$, which is a sequence of quantum variables. In the case statement $\mathbf{if}\cdots\mathbf{fi}$, quantum measurement $M$ is performed on the register $\overline{q}$ and then a subprogram $P_m$ is selected for next execution according to the measurement outcome $m$. In the loop $\mathbf{while}\cdots\mathbf{od}$, measurement $M$ in the loop guard has only two possible outcomes $0,1$; if the outcome is $0$ the loop terminates, and if the outcome is $1$ the program executes the loop body $P$ and enters the loop again.

For each quantum program $P$, we write $\mathit{var}(P)$ for the set of quantum variables occurring in $P$. Let $\mathcal{H}_P=\mathcal{H}_{\mathit{var}(P)}$ be the state Hilbert space of $P$. We write $\mathcal{D}(\mathcal{H}_P)$ for the set of partial density operators, i.e. positive operators with traces $\leq 1$, in $\mathcal{H}_P$. A configuration is a pair $C=\langle P,\rho\rangle,$
where $P$ is a program or the termination symbol $\downarrow$, and $\rho\in\mathcal{D}(\mathcal{H}_P)$ denotes the state of quantum variables.

\begin{defn}[Operational Semantics \cite{Ying11, Ying16}]\label{def-op-sem} The operational semantics of quantum \textbf{while}-programs is defined as a transition relation $\rightarrow$ by the transition rules in Figure \ref{fig 3.1}. \begin{figure}[h]\centering
\begin{equation*}\begin{split}&({\rm Sk})\ \ \langle\mathbf{skip},\rho\rangle\rightarrow\langle \downarrow,\rho\rangle\ \ \ \ \ \ \ \ \ \ \ \ \ \ \ \ \ \ \ \ \ \ \ \ \ \ \ \ \ \ \ \ \ \ ({\rm In})\ \ \ \langle
q:=|0\rangle,\rho\rangle\rightarrow\langle \downarrow,\rho^{q}_0\rangle\\
&({\rm UT})\ \ \langle\overline{q}:=U[\overline{q}],\rho\rangle\rightarrow\langle
\downarrow,U\rho U^{\dag}\rangle\ \ \ \ \ \ \ \ \ \ \ \ \ \ \ \ \ ({\rm SC})\ \ \ \frac{\langle P_1,\rho\rangle\rightarrow\langle
P_1^{\prime},\rho^{\prime}\rangle} {\langle
P_1;P_2,\rho\rangle\rightarrow\langle
P_1^{\prime};P_2,\rho^\prime\rangle}\\
&({\rm IF})\ \ \ \langle\mathbf{if}\ (\square m\cdot
M[\overline{q}]=m\rightarrow P_m)\ \mathbf{fi},\rho\rangle\rightarrow\langle
P_m,M_m\rho M_m^{\dag}\rangle\\
&({\rm L}0)\ \ \ \langle\mathbf{while}\
M[\overline{q}]=1\ \mathbf{do}\
P\ \mathbf{od},\rho\rangle\rightarrow\langle \downarrow, M_0\rho M_0^{\dag}\rangle\\
&({\rm L}1)\ \ \ \langle\mathbf{while}\
M[\overline{q}]=1\ \mathbf{do}\ P\ \mathbf{od},\rho\rangle\rightarrow \langle
P;\mathbf{while}\ M[\overline{q}]=1\ \mathbf{do}\ P\ \mathbf{od}, M_1\rho
M_1^{\dag}\rangle\end{split}\end{equation*}
\caption{Transition Rules for Quantum \textbf{while}-Programs.\ \ \ \ In rule (In), $\rho^{q}_0=\sum_i |0\rangle_q\langle i|\rho |i\rangle_q\langle 0|$ for an orthonormal basis $\{|i\rangle\}$ of $\mathcal{H}_q$; e.g. $\rho^{q}_0=|0\rangle_q\langle 0|\rho|0\rangle_q\langle
0|+|0\rangle_q\langle 1|\rho|1\rangle_q\langle 0|$ if
$\mathit{type}(q)=\mathbf{Bool}$ and $\rho^{q}_0=\sum_{n=-\infty}^{\infty}|0\rangle_q\langle n|\rho|n\rangle_q\langle
0|$ if $\mathit{type}(q)=\mathbf{Int}$ (see \cite{Ying16}, page 63 for the definitions of data types $\mathbf{Bool}$ and $\mathbf{Int}$). 
In (SC), we make the convention $\downarrow;P_2=P_2.$
In (IF), $m$ ranges over every possible outcome of measurement $M=\{M_m\}.$}\label{fig 3.1}
\end{figure}\end{defn}

Note that the transitions in rules (IF), (L0) and (L1) are essentially probabilistic; for example, for each $m$, the transition in (IF) happens with probability $p_m=\mathit{tr}(M^\dag M_m\rho)$, and the program state $\rho$ is changed to $\rho_m=M_m\rho M_m^\dag /p_m$. But following Selinger \cite{Selinger04}, we choose to combine probability $p_m$ and density operator $\rho_m$ into a partial density operator $M_m\rho M_m^\dag=p_m\rho_m$. This convention allows us to present the operational semantics as a non-probabilistic transition system, and it further works for the composition of a sequence of transitions because all transformations in quantum mechanics are linear. Thus, it significantly simplifies the presentation.

\begin{defn}[Denotational Semantics \cite{Ying11, Ying16}]\label{den-sem-def} For any quantum \textbf{while}-program $P$, its semantic function is the mapping $\llbracket P\rrbracket:\mathcal{D}(\mathcal{H}_P)\rightarrow \mathcal{D}(\mathcal{H}_P)$ defined by \begin{equation}\llbracket P\rrbracket(\rho)=\sum\left\{|\rho^\prime: \langle P,\rho\rangle\rightarrow^\ast\langle \downarrow,\rho^\prime\rangle|\right\}\end{equation} for every $\rho\in\mathcal{D}(\mathcal{H}_P)$, where $\rightarrow^\ast$ is the reflexive and transitive closure of transition relation $\rightarrow$ given in Definition \ref{def-op-sem}, and $\left\{|\cdot|\right\}$ denotes a multi-set.
\end{defn}

Intuitively, for an input $\rho$, if for each $k\geq 0$, program $P$ terminates at step $k$ with probability $q_k$ and outputs density operator $\sigma_k$, then with the explanation given in the paragraph before the above definition in mind it is easy to see that $\llbracket P\rrbracket (\rho)=\sum_{k=0}^\infty q_k\sigma_k$.

\subsection{Correctness}

First-order logical formulas are used as the assertions about the properties of classical program states.
The properties of quantum program states are described by quantum predicates introduced by D'Hondt and Panangaden in \cite{DP06}. The L\"{o}wner order between operators is defined as follows: $A\sqsubseteq B$ if and only if $B-A$ is positive. Then a quantum predicate in a Hilbert space $\mathcal{H}$ is an observable (a Hermitian operator) $A$ in $\mathcal{H}$ with $0\sqsubseteq A\sqsubseteq I$, where $0$ and $I$ are the zero operator and the identity operator in $\mathcal{H}$, respectively.
Whenever $\mathcal{H}$ is infinite-dimensional, a quantum predicate in it is always required to be a bounded operator.  

\begin{defn}[Correctness Formula, Hoare Triple \cite{DP06, Ying11, Ying16}] A correctness formula (or a Hoare triple) is a statement of the form $\{A\}P\{B\}$, where $P$ is a quantum \textbf{while}-program, and both $A, B$
are quantum predicates in $\mathcal{H}_P$, called the precondition and postcondition, respectively.\end{defn}

\begin{defn}[Partial and Total Correctness \cite{Ying11, Ying16}]\label{correctness-interpretation}
\begin{enumerate}\item The correctness formula $\{A\}P\{B\}$ is true in
the sense of total correctness, written $$\models_{\mathit{tot}}\{A\}P\{B\},$$ if for all
$\rho\in\mathcal{D}(\mathcal{H}_P)$ we have: $$\tr(A\rho)\leq \tr(B\llbracket P \rrbracket (\rho)).$$

\item The correctness formula $\{A\}P\{B\}$ is true in
the sense of partial correctness, written $$\models_{\mathit{par}}\{A\}P\{B\},$$ if for all
$\rho\in\mathcal{D}(\mathcal{H}_P)$ we have: $$\tr(A\rho)\leq \tr(B\llbracket P \rrbracket (\rho))+
[\tr(\rho)-\tr(\llbracket P \rrbracket (\rho))].$$\end{enumerate}
\end{defn}

The defining inequalities of total and partial correctness can be easily understood by noting that the interpretation of $\mathit{tr}(A\rho)$ in physics is the expectation (i.e. average value) of observable $A$ in state $\rho$, and $\mathit{tr}(\rho)-\mathit{tr}(\llbracket P\rrbracket(\rho))$ is indeed the probability that with input $\rho$ program $P$ does not terminate.

\subsection{Proof System}
A Hoare-like logic for quantum \textbf{while}-programs was established in \cite{Ying11, Ying16}. It includes a proof system qPD for partial correctness and a system qTD for total correctness. The axioms and inference rules of qPD are presented in Figure \ref{fig 3.2}.
\begin{figure}[h]\centering
\begin{equation*}\begin{split}
&({\rm Ax.Sk})\ \ \ \{A\}\mathbf{Skip}\{A\}\ \ \ \ \ \ \ \ \ \ \ \ \ \ \ \ \ \ \ \ \ \ \ \ \ \ ({\rm Ax.In}) \ \ \ \left\{\sum_{i}|i\rangle_q\langle 0|A|0\rangle_q\langle
i|\right\}q:=|0\rangle\{A\}\\
&({\rm Ax.UT})\ \ \
\{U^{\dag}AU\}\overline{q}:=U\left[\overline{q}\right]\{A\}\ \ \ \ \ \ \ \ ({\rm R.SC})\ \ \
\frac{\{A\}P_1\{B\}\ \ \ \ \ \ \{B\}P_2\{C\}}{\{A\}P_1;P_2\{C\}}\\
&({\rm R.IF})\ \ \
\frac{\{A_m\}P_m\{B\}\ {\rm for\ all}\ m}{\left\{\sum_m
M_m^{\dag}A_mM_m\right\}\mathbf{if}\ (\square m\cdot
M[\overline{q}]=m\rightarrow P_m)\ \mathbf{fi}\{B\}}\\
&({\rm R.LP})\ \ \
\frac{\{B\}P\left\{M_0^{\dag}AM_0+M_1^{\dag}BM_1\right\}}{\{M_0^{\dag}AM_0+M_1^{\dag}BM_1\}\mathbf{while}\
M[\overline{q}]=1\ \mathbf{do}\ P\ \mathbf{od}\{A\}}\\
&({\rm R.Or})\ \ \ \frac{A\sqsubseteq
A^{\prime}\ \ \ \ \{A^{\prime}\}P\{B^{\prime}\}\ \ \ \
B^{\prime}\sqsubseteq B}{\{A\}P\{B\}}
\end{split}\end{equation*}
\caption{Proof System qPD for Quantum \textbf{while}-Programs.\ \ \ \ In axiom (Ax.In), $\{|i\rangle\}$ is an orthonormal basis of $\mathcal{H}_q$. In rule (R.Or), $\sqsubseteq$ stands for the L\"{o}wner order.}\label{fig 3.2}
\end{figure}
Similar to the classical case, qTD is obtained from qPD by adding a ranking function into rule (R.LP) to guarantee termination (with probability $1$).

The soundness and (relative) completeness of both qPD and qTD were proved in \cite{Ying11, Ying16}.
\begin{thm}[Soundness and Completeness \cite{Ying11, Ying16}]\label{sound-complete} For any quantum \textbf{while}-program $P$, and for any quantum predicates $A,B$,
\begin{align*}&\models_\mathit{par}\{A\}P\{B\}\Leftrightarrow\ \vdash_\mathit{qPD}\{A\}P\{B\},\ \ \ \ \ \ \ \ \ \ \ \ \models_\mathit{tot}\{A\}P\{B\}\Leftrightarrow\ \vdash_\mathit{qTD}\{A\}P\{B\}.
\end{align*}
\end{thm}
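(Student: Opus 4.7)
The plan is the standard Cook-style program, adapted to the quantum setting following D'Hondt--Panangaden: establish soundness by induction on derivations, and (relative) completeness via a weakest (liberal) precondition calculus. Because qTD differs from qPD only at the loop rule (where a ranking function is added), the two proofs run in parallel and share the same structure except at that one point.

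For soundness of qPD, I would induct on the length of the derivation and verify, for each axiom and rule, that the semantic inequality
\[
\tr(A\rho)\leq \tr(B\llbracket P\rrbracket(\rho))+[\tr(\rho)-\tr(\llbracket P\rrbracket(\rho))]
\]
is preserved. The atomic axioms (Ax.Sk), (Ax.In), (Ax.UT) reduce to trace cyclicity once the operational transitions of the corresponding commands are unfolded. Rules (R.SC), (R.IF), (R.Or) follow from the recursive structure of $\llbracket\cdot\rrbracket$ together with linearity of trace and monotonicity under the L\"owner order $\sqsubseteq$. The loop rule (R.LP) is the substantial case: introduce finite $n$-step approximations $W_n$ of $W=\mathbf{while}\ M[\overline{q}]=1\ \mathbf{do}\ P\ \mathbf{od}$, show that $\llbracket W_n\rrbracket(\rho)$ converges nondecreasingly to $\llbracket W\rrbracket(\rho)$ in the strong operator topology, and verify by induction on $n$ that the assumed invariant $M_0^\dagger A M_0+M_1^\dagger B M_1$ propagates through each approximation; continuity of $\tr(A\cdot)$ on the resulting monotone sequence yields the required bound.

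For (relative) completeness, define a weakest liberal precondition map $\mathit{wlp}(P,B)$ and establish three facts: (i) $\models_\mathit{par}\{\mathit{wlp}(P,B)\}P\{B\}$, (ii) $\models_\mathit{par}\{A\}P\{B\}$ implies $A\sqsubseteq \mathit{wlp}(P,B)$, and (iii) $\vdash_\mathit{qPD}\{\mathit{wlp}(P,B)\}P\{B\}$. The consequence rule (R.Or) then yields $\vdash_\mathit{qPD}\{A\}P\{B\}$ from $\models_\mathit{par}\{A\}P\{B\}$. The definition mirrors the axioms: $\mathit{wlp}(\mathbf{skip},B)=B$; $\mathit{wlp}(q:=|0\rangle,B)=\sum_i|i\rangle_q\langle 0|B|0\rangle_q\langle i|$; $\mathit{wlp}(\overline{q}:=U[\overline{q}],B)=U^\dagger B U$; $\mathit{wlp}(P_1;P_2,B)=\mathit{wlp}(P_1,\mathit{wlp}(P_2,B))$; $\mathit{wlp}(\mathbf{if}\cdots\mathbf{fi},B)=\sum_m M_m^\dagger \mathit{wlp}(P_m,B)M_m$; and for the loop $W$, $\mathit{wlp}(W,B)$ is the greatest fixed point of the operator $X\mapsto M_0^\dagger B M_0+M_1^\dagger \mathit{wlp}(P,X)M_1$ on the set of quantum predicates ordered by $\sqsubseteq$, which exists by Knaster--Tarski. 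Facts (i) and (ii) follow from unfolding the denotational semantics and applying the defining partial-correctness inequality, and (iii) follows by structural induction, invoking (R.LP) at the loop with the fixed-point predicate. For qTD, replace $\mathit{wlp}$ by the weakest precondition $\mathit{wp}$ (least fixed point) and extract a ranking super-martingale from the almost-sure termination of the quantum Markov chain induced by the loop.

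The principal obstacle is the loop case of completeness. It requires three subtle ingredients: first, the existence of the required greatest/least fixed point as a bounded operator lying between $0$ and $I$, which in the infinite-dimensional case invokes monotone convergence in the strong operator topology rather than the simpler finite-dimensional Knaster--Tarski argument; second, establishing that the fixed-point predicate verifies the loop-equation hypothesis of (R.LP) so that the rule can be instantiated within a qPD derivation, which requires passing the inner induction hypothesis through the limit; and third, for qTD, extracting from almost-sure termination an operator-valued ranking function monotone along loop iterations, which combines fixed-point continuity with a quantum analog of the classical well-foundedness argument. All remaining cases reduce to routine manipulations once the right invariants are in place.
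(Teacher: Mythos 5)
The paper itself does not prove this theorem; it is imported verbatim from \cite{Ying11, Ying16}, and your plan reconstructs essentially the proof given there: soundness by induction on derivations with the loop handled through monotone finite approximations, and relative completeness through a weakest (liberal) precondition calculus, with (R.Or) closing the argument. So the overall route is the right one and matches the cited proof.

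One step as written would fail, however: you cannot obtain the greatest fixed point of $X\mapsto M_0^\dagger BM_0+M_1^\dagger\,\mathit{wlp}(P,X)M_1$ from Knaster--Tarski, because the quantum predicates under the L\"owner order do not form a (complete) lattice even in finite dimension --- two effects in general have no least upper bound --- so the lattice-theoretic fixed-point theorem simply does not apply, and this is not merely an infinite-dimensional issue as your caveat suggests. The fixed point has to come from the CPO structure instead: iterate downward from $I$, use the fact that bounded monotone sequences of effects have suprema/infima and converge strongly (the same fact the present paper uses in its appendices), and use monotonicity and $\omega$-(co)continuity of the map; equivalently, follow \cite{Ying11, Ying16} and \emph{define} $\mathit{wlp}(P,B)=\llbracket P\rrbracket^\ast(B)+\bigl(I-\llbracket P\rrbracket^\ast(I)\bigr)$ and $\mathit{wp}(P,B)=\llbracket P\rrbracket^\ast(B)$ directly from the denotational semantics, which makes your facts (i) and (ii) immediate and concentrates all work in fact (iii), where the loop case is discharged by showing the semantic predicate satisfies the recursion as a limit of the syntactic approximations. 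A smaller caveat on qTD: validity of $\models_\mathit{tot}\{A\}W\{B\}$ for a general $A$ does not give almost-sure termination of the loop, so the ranking (bound) functions required by the total-correctness loop rule cannot be ``extracted from a.s.\ termination''; in the cited proof they are constructed, for each $\varepsilon>0$, from the terminating part of the semantics applied to the precondition $\mathit{wp}(W,B)$. With these two repairs your outline coincides with the standard proof.
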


\subsection{Auxiliary Axioms and Rules}\label{sub-auxiliary-rules}

Several auxiliary axioms and rules introduced in \cite{Gor75, Harel79} (see also \cite{Apt09}, Section 3.8) are very useful for simplifying the presentation of correctness proofs of classical programs. 
They are generalised in \cite{Ying18} for quantum \textbf{while}-programs. Here, we recall some of them needed in subsequent sections for our purpose of reasoning about parallel quantum programs. 

Let us first introduce several notations. For any $X\subseteq Y\subseteq\mathit{Var}$ and operator $A$ in $\mathcal{H}_X$, $\mathit{cl}_Y(A)=A\otimes I_{\mathcal{H}_{Y\setminus X}}$ is called the cylindric extension of $A$ in $\mathcal{H}_Y$. If $X,Y\subseteq\mathit{Var}$ and $X\cap Y=\emptyset$. Then the partial trace $\mathit{tr}_Y$ is a mapping from operators in $\mathcal{H}_{X\cup Y}$ to operators in $\mathcal{H}_X$ defined by $\mathit{tr}_Y(|\varphi\rangle\langle\psi|\otimes |\varphi^\prime\rangle\langle\psi^\prime|)=\langle\psi^\prime|\varphi^\prime\rangle\cdot |\varphi\rangle\langle\psi|$ for every $|\varphi\rangle,|\psi\rangle$ in $\mathcal{H}_X$ and $|\varphi^\prime\rangle,|\psi^\prime\rangle$ in $\mathcal{H}_Y$, together with linearity. Let $\{A_n\}$ be a sequence of operators on a Hilbert space $\hs$. We say that $\{A_n\}$ weakly converges to an operator $A$, written 
$A_n\overset{w.o.t.}{\longrightarrow}A,$ if $\lim_{n\rightarrow\infty}\<\psi|A_n|\phi\>\> = \<\psi|A|\phi\>$ for all $|\psi\>,|\phi\>\in\hs$. 
Then we can present the auxiliary axioms and rules  in Figure \ref{fig 3.41}.
\begin{figure}[h]\centering
\begin{align*}
&({\rm Ax.Inv})\ \ \ \left\{A\right\}P\left\{A\right\}\ \ \ \ \ \ \ \ \ \ \ \ \ \ \ \ \ \ \ \ \ \ \ \ \ \ \ \ \ \ \ \ ({\rm R.TI})\ \ \ \ \frac{\{A\otimes I_W\}P\{B\otimes I_W\}}{\left\{A\right\}P\left\{B\right\}}\\
&({\rm R.CC})\ \ \ \frac{\left\{A_i\right\}P\left\{B_i\right\}\ (i=1,...,m)}{\left\{\sum_{i=1}^mp_iA_i\right\}P\left\{\sum_{i=1}^mp_iB_i\right\}}\qquad\ \ ({\rm R.Lin})\ \ \ \frac{\{A\}P\{B\}}{\left\{\lambda A\right\}P\left\{\lambda B\right\}}\\
&({\rm R.Inv})\ \ \ \frac{\{A\}P\{B\}}{\{pA+qC\}P\{pB+qC\}}\ \ \ \ \ \ \ \ \ \ \ \ \ \ \ \ ({\rm R.SO})\ \ \ \ \frac{\{A\}P\{B\}}{\left\{\mathcal{E}^\ast(A)\right\}P\left\{\mathcal{E}^\ast(B)\right\}}\\
&({\rm R.Lim})\ \ \ \frac{A_n\overset{w.o.t.}{\longrightarrow}A\ \ \ \left\{A_n\right\}P\left\{B_n\right\}\ \ \ B_n\overset{w.o.t.}{\longrightarrow}B}{\{A\}P\{B\}}
\end{align*}
\caption{Auxiliary Axioms and Rules for Quantum \textbf{while}-Programs.\ \ \ \ In axiom (Ax.Inv), $\mathit{var}(P)\cap V=\emptyset$ and $A=\mathit{cl}_{V\cup\mathit{var}(P)}(B)$ for some $V\subseteq\mathit{Var}$ and for some quantum predicate $B$ in $\mathcal{H}_V.$ In rule (R.TI), $V,W\subseteq \mathit{Var},$ $V\cap W=\emptyset,$ $A, B$ are quantum predicates in $\mathcal{H}_V,$ $I_W$ is the identity operator on $\mathcal{H}_W$ and $\mathit{var}(P)\subseteq V.$
In  (R.CC), $p_i\geq 0$ $(i=1,...,m)$ and $\sum_{i=1}^m p_j\leq 1.$ In (R.Lin), $0\leq\lambda$ and $\lambda A,\lambda B\sqsubseteq I$. 
In (R.Inv), $p,q\geq 0$, $p+q\leq 1$, and $C$ is a quantum predicate in $\mathcal{H}_V$ for some $V\subseteq\mathit{Var}$ with $V\cap\mathit{var}(P)=\emptyset$.
In (R.SO), $\mathcal{E}$ is a super-operator in $\mathcal{H}_V$ for some $V\subseteq\mathit{Var}$ with $V\cap\mathit{var}(P)=\emptyset$. In (R.Lim), $\{A_n\}$ and $\{B_n\}$ are sequences of quantum predicates.}\label{fig 3.41}
\end{figure}

The following lemma establishes soundness of the auxiliary axioms and rules in Figure \ref{fig 3.41}. 

\begin{lem}[Soundness of Auxiliary Axioms and Rules \cite{Ying18}]\label{Aux-Sound}\begin{enumerate}\item The axiom (Ax.Inv) is sound for partial correctness.
\item The rules (R.TI), (R.CC), (R.Inv) and (R.Lim) are sound both for partial and total correctness. 
\item The rule (R.SO) is sound for total correctness, and it is sound for partial correctness whenever $\mathcal{E}$ is trace-preserving. 
\item The rule (R.Lin) is sound for total correctness, and it is sound for partial correctness whenever $\lambda\leq 1$.  
\end{enumerate}\end{lem}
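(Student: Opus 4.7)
The overall plan is to unfold the inequalities of Definition \ref{correctness-interpretation} and verify each clause by direct trace manipulations, exploiting that, on any space $\mathcal{H}_{\mathit{var}(P)\cup V}$ with $V\cap\mathit{var}(P)=\emptyset$, the extended denotational semantics factors as $\llbracket P\rrbracket\otimes\mathrm{id}_V$ where $\llbracket P\rrbracket$ is a completely positive and trace-non-increasing super-operator. The difference between partial and total correctness lives entirely in the non-termination error term $[\mathrm{tr}(\rho)-\mathrm{tr}(\llbracket P\rrbracket(\rho))]$, so for every rule I would first check the cleaner total-correctness inequality and then ask whether that error term survives the transformation prescribed by the rule. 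The recurring tools are linearity of trace, the Heisenberg identity $\mathrm{tr}(\mathcal{E}^{\ast}(A)\sigma)=\mathrm{tr}(A\,\mathcal{E}(\sigma))$, and the key positivity $\mathrm{tr}_{\mathit{var}(P)}(\llbracket P\rrbracket(\rho))\sqsubseteq\mathrm{tr}_{\mathit{var}(P)}(\rho)$ that encodes the trace-non-increasing property on the subsystem $\mathit{var}(P)$.

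For (Ax.Inv) I would prove the above partial-trace positivity via a Kraus representation of $\llbracket P\rrbracket$ and then rewrite the partial-correctness requirement as $\mathrm{tr}((I_V-B)\,[\mathrm{tr}_{\mathit{var}(P)}(\rho)-\mathrm{tr}_{\mathit{var}(P)}(\llbracket P\rrbracket(\rho))])\ge 0$, which follows at once because both factors are positive; the analogous total-correctness inequality fails for exactly the same reason, explaining the restriction in clause~(1). For (R.TI) I would instantiate the premise at $\rho\otimes\tau$ with $\tau$ any density operator on $\mathcal{H}_W$ of unit trace; since $\llbracket P\rrbracket(\rho\otimes\tau)=\llbracket P\rrbracket(\rho)\otimes\tau$, both sides of the inequality and the error term reduce to what the conclusion requires. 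Rule (R.CC) follows from linearity of trace, with the hypothesis $\sum_i p_i\le 1$ being precisely what controls the error term after summation. Rule (R.Inv) is then a corollary, obtainable by combining (Ax.Inv), (R.TI), and (R.CC). For (R.Lim) I would take a spectral decomposition $\rho=\sum_j r_j|\psi_j\rangle\langle\psi_j|$ and use the uniform bound $\|A_n\|,\|B_n\|\le 1$ together with $\sum_j r_j\le 1$ to apply dominated convergence, passing $n\to\infty$ inside $\mathrm{tr}(A_n\rho)=\sum_j r_j\langle\psi_j|A_n|\psi_j\rangle$ and likewise for $B_n\llbracket P\rrbracket(\rho)$; this is the one spot where the infinite-dimensional case deserves explicit justification.

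The last two clauses hinge on the extra hypotheses, and in each case the non-termination error term is doing all the work. For (R.SO), since $\mathcal{E}$ acts on variables disjoint from $\mathit{var}(P)$ it commutes with $\llbracket P\rrbracket$, and the Heisenberg identity lets me recast everything in terms of $\mathcal{E}(\rho)$: applying the premise $\{A\}P\{B\}$ at the partial density operator $\mathcal{E}(\rho)$ gives the total-correctness conclusion immediately, while for partial correctness the error term produced is $\mathrm{tr}(\mathcal{E}(\rho))-\mathrm{tr}(\llbracket P\rrbracket(\mathcal{E}(\rho)))$ and only collapses to $[\mathrm{tr}(\rho)-\mathrm{tr}(\llbracket P\rrbracket(\rho))]$ when $\mathcal{E}$ is trace-preserving. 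For (R.Lin), multiplication by $\lambda$ rescales both sides of the total-correctness inequality and also rescales the error term; absorbing $\lambda[\mathrm{tr}(\rho)-\mathrm{tr}(\llbracket P\rrbracket(\rho))]$ back into $[\mathrm{tr}(\rho)-\mathrm{tr}(\llbracket P\rrbracket(\rho))]$ works precisely when $\lambda\le 1$. The main conceptual obstacle in the whole proof is the positivity step underlying (Ax.Inv): the operator $\rho-\llbracket P\rrbracket(\rho)$ itself is not positive in general on the joint Hilbert space, and positivity is only recovered after tracing out $\mathit{var}(P)$, so identifying this as the right lemma is the one piece of non-routine insight required.
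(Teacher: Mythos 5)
Your proposal is correct in substance but takes a different (and more self-contained) route than the paper. The paper's own proof of Lemma \ref{Aux-Sound} simply cites \cite{Ying18} for (Ax.Inv), (R.TI), (R.CC), (R.Inv), (R.SO) and (R.Lin), and only argues the one new ingredient, namely (R.Lim) strengthened to weak operator convergence: it applies the premises, passes to the limit, and invokes Lemma \ref{fact-alter-def-wot} (weak convergence implies convergence of expectations against every density operator). Your clause-by-clause semantic arguments are the standard ones and are sound: the key positivity $\tr_{\mathit{var}(P)}(\llbracket P\rrbracket(\rho))\sqsubseteq\tr_{\mathit{var}(P)}(\rho)$ (correctly identified as the non-routine step, since $\rho-\llbracket P\rrbracket(\rho)$ itself need not be positive) for (Ax.Inv); instantiation at $\rho\otimes\tau$ for (R.TI); linearity plus $\sum_i p_i\le 1$ for (R.CC); the commutation/duality argument applying the premise at $\mathcal{E}(\rho)$ for (R.SO), with trace preservation exactly controlling the nontermination term; and $\lambda\le 1$ absorbing the rescaled nontermination term for (R.Lin). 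Your treatment of (R.Lim) via spectral decomposition, the uniform bound on predicates, and dominated convergence is essentially a re-proof of the paper's Lemma \ref{fact-alter-def-wot}, so on the only rule the paper proves explicitly the two arguments coincide; what your route buys is self-containedness and an explicit accounting of where each side condition is used.

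The one genuine gap concerns (R.Inv). You obtain it as a corollary of (Ax.Inv), (R.TI) and (R.CC), but (Ax.Inv) is sound only for partial correctness, so this reduction delivers only the partial-correctness half, whereas clause (2) asserts soundness of (R.Inv) for total correctness as well. Moreover, a direct argument cannot close this in the generality of Figure \ref{fig 3.41}: taking $p=0$, $q=1$ and the trivially valid premise $\models_\mathit{tot}\{0\}P\{0\}$, the conclusion of (R.Inv) becomes $\models_\mathit{tot}\{C\}P\{C\}$ for $C$ on variables disjoint from $P$, which fails for divergent $P$ (choose $\rho$ with $\tr((C\otimes I)\rho)>0$ and $\llbracket P\rrbracket(\rho)=0$) --- precisely why clause (1) restricts (Ax.Inv) to partial correctness. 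So the total-correctness part of (R.Inv) must be read with whatever additional proviso \cite{Ying18} imposes (e.g. the missing invariance $\tr(C\rho)\le\tr(C\,\llbracket P\rrbracket(\rho))$, i.e. losslessness of $\llbracket P\rrbracket$ on the relevant inputs); as written, your proof should either state that it covers only the partial-correctness half of (R.Inv) or make such a proviso explicit.
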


The auxiliary rules in Figure \ref{fig 3.41} will be combined with a rule for parallel composition in Subsection \ref{dis-completeness} to obtain a (relatively) complete axiomatisation of partial and total correctness of disjoint parallel quantum programs. However, rule (R.CC) is not strong enough in the case of partial correctness. To present a strengthened version of (R.CC), we first introduce:

\begin{defn}\label{def-abort} Let $A$ be a quantum predicate and $P$ a quantum program.
\begin{enumerate}\item We say that $A$ characterises nontermination of quantum program $P$, written $$\models P:{\rm Term}(A),$$ if $\models_\mathit{tot}\{I-A\}P\{I\}$, where $I$ is the identity operator on $\mathcal{H}_P$; that is, for all density operators $\rho$: 
\begin{equation}\label{eq-term}1-\tr(\llbracket P\rrbracket(\rho))\leq\mathit{tr}(A\rho).\end{equation}
\item We say that $A$ characterises abortion of $P$, written $$\models P:{\rm Abort}(A),$$ if $\models_{par}\{A\}P\{0\},$ where $0$ is the zero operator on $\mathcal{H}_P$; that is, hat is, for all density operators $\rho$:
\begin{equation}\label{eq-abort}\tr(A\rho)\leq 1-\mathit{tr}(\llbracket P\rrbracket (\rho)).\end{equation}
\end{enumerate}
\end{defn}

\begin{rem}\label{rem-au-rules}\begin{enumerate}\item Note that $\mathit{tr}(\llbracket P\rrbracket(\rho))$ is the probability that program $P$ with input $\rho$ terminates. Thus, inequality (\ref{eq-term}) shows that its nontermination probability is upper-bounded by predicate $A$.  
On the other hand, the intuition behind inequality (\ref{eq-abort}) is that predicate $A$ implies nontermination.  
\item It is obvious that $\models P:{\rm Term(A)}$ and $\models P:{\rm Abort}(A)$ can be verified in qTD and qPD, respectively.
\end{enumerate}\end{rem}

With the notations introduced in Definition \ref{def-abort}, for partial correctness, rule (R.CC) can be refined into two rules (R.CC1) and (R.CC2) in Figure \ref{fig RCC}.  

\begin{figure}[h]
\begin{align*}
&{\rm(R.CC1)}\quad \frac{\{A_i\}P\{B_i\}\ (i=1,\cdots,m)\qquad \models P:{\rm Abort}(A)
}{\left\{\sum_{i=1}^mp_iA_i + (1-\sum_{i=1}^mp_i)A\right\}P\left\{\sum_{i=1}^mp_iB_i\right\}} \\
&{\rm(R.CC2)}\quad \frac{\{A_i\}P\{B_i\}\ (i=1,\cdots,m)\qquad \models P:{\rm Term}(A)
}{\left\{\sum_{i=1}^m\lambda_iA_i - (\sum_{i=1}^m\lambda_i-1)A\right\}P\left\{\sum_{i=1}^m\lambda_iB_i\right\}} \end{align*}
\caption{Convex Combination Rules for Partial Correctness.\ \ \ \ In rule (R.CC1), $p_i\ge0,\ \sum_{i=1}^mp_i\le1$. In (R.CC2), 
$\lambda_i\ge0,\ \sum_{i=1}^m\lambda_i\ge1$ so that 
the precondition and post condition are quantum predicates.}
\label{fig RCC}
\end{figure}

\begin{lem}\label{SCC} The rules (R.CC1) and (R.CC2) are sound for partial correctness. 
\end{lem}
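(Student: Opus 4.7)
The plan is to verify both rules by directly unpacking Definition \ref{correctness-interpretation}(2) and then combining the premise Hoare triples with the appropriate termination/abortion inequality. Throughout, fix an arbitrary partial density operator $\rho\in\mathcal{D}(\mathcal{H}_P)$ and write $\delta(\rho)=\tr(\rho)-\tr(\llbracket P\rrbracket(\rho))$ for the nontermination slack; the premise $\models_{\mathit{par}}\{A_i\}P\{B_i\}$ reads $\tr(A_i\rho)\leq \tr(B_i\llbracket P\rrbracket(\rho))+\delta(\rho)$ for every $i$. Also note the definitions of $\text{Abort}(A)$ and $\text{Term}(A)$ extend verbatim to partial density operators: $\tr(A\rho)\leq\delta(\rho)$ in the abort case (since this is exactly $\models_{\mathit{par}}\{A\}P\{0\}$) and $\delta(\rho)\leq\tr(A\rho)$ in the term case (since $\models_{\mathit{tot}}\{I-A\}P\{I\}$ unpacks to $\tr(\rho)-\tr(A\rho)\leq \tr(\llbracket P\rrbracket(\rho))$).

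For (R.CC1), set $p=\sum_{i=1}^mp_i\leq 1$. I would expand $\tr\!\bigl(\bigl(\sum_i p_iA_i+(1-p)A\bigr)\rho\bigr)$ by linearity, bound each $p_i\tr(A_i\rho)$ using the $i$-th premise, and bound $(1-p)\tr(A\rho)$ using $\text{Abort}(A)$ (both are legal because $p_i\geq 0$ and $1-p\geq 0$). The slack contributions add up to $p\cdot\delta(\rho)+(1-p)\cdot\delta(\rho)=\delta(\rho)$, which is exactly what is needed for $\models_{\mathit{par}}\{\cdot\}P\{\sum_i p_iB_i\}$.

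For (R.CC2), set $\Lambda=\sum_{i=1}^m\lambda_i\geq 1$. The same linear expansion handles $\sum_i\lambda_i\tr(A_i\rho)\leq \tr\!\bigl(\sum_i\lambda_iB_i\cdot\llbracket P\rrbracket(\rho)\bigr)+\Lambda\,\delta(\rho)$. The subtracted term $-(\Lambda-1)\tr(A\rho)$ is handled by the $\text{Term}(A)$ inequality $\delta(\rho)\leq\tr(A\rho)$: multiplying by the nonpositive factor $-(\Lambda-1)\leq 0$ reverses the inequality, yielding $-(\Lambda-1)\tr(A\rho)\leq -(\Lambda-1)\delta(\rho)$. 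Adding these two estimates, the slacks combine as $\Lambda\,\delta(\rho)-(\Lambda-1)\delta(\rho)=\delta(\rho)$, which is again what the partial correctness interpretation requires.

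There is essentially no hard step: the arithmetic is forced once one notices the crucial balance $\Lambda-(\Lambda-1)=1$ in (R.CC2) and $p+(1-p)=1$ in (R.CC1). The only point needing care is sign tracking in (R.CC2), where one must use that $-(\Lambda-1)\leq 0$ flips the $\text{Term}$ inequality in the right direction; this is also where the hypothesis $\Lambda\geq 1$ is genuinely used, as otherwise the combination would overshoot the allowed slack $\delta(\rho)$. Positivity/boundedness of the pre- and postconditions so that they are quantum predicates is assumed in the statement, so no separate verification is needed.
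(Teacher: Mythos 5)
Your proof is correct and follows essentially the same argument as the paper's: the same use of the premise triples, the Abort/Term inequalities, and the balancing $p+(1-p)=1$, resp. $\Lambda-(\Lambda-1)=1$, of the nontermination slack. The only cosmetic difference is that the paper first rewrites everything as L\"{o}wner-order inequalities via $\llbracket P\rrbracket^\ast$ (its Lemma on the dual characterisation of correctness), whereas you carry out the identical computation pointwise in $\tr(\cdot\,\rho)$ for each fixed $\rho$, which is equivalent.
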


\begin{proof} See Appendix \ref{proof-SCC}. 
\end{proof}

\section{Syntax and Semantics of Disjoint Parallel Quantum Programs}\label{sec-disjoint}

Now we start to deal with parallel quantum programs. As the first step, let us consider the simplest case, namely disjoint parallel quantum programs, in this and next section. In this section, we define their syntax and operational and denotational semantics. As we saw in Definitions \ref{def-op-sem} and \ref{def-tran-ensemble}, the statistical nature of quantum measurements introduces nondeterminism even in the operational semantics of quantum \textbf{while}-programs. Such nondeterminism is much more complicated in parallel quantum programs; in particular when they contain loops and thus can have infinite computations, because it is intertwined with another kind of nondeterminism, namely nondeterminism introduced in parallelism (see Example \ref{exam-maximal}). But surprisingly, the determinism is still true for the denotational semantics of disjoint parallel quantum programs, and it further entails that disjoint parallel compositions of quantum programs can always be sequentialised.

\subsection{Syntax}\label{disjoint-syntax}

Let us first define the syntax of disjoint parallel quantum programs. 
\begin{defn}[Syntax]\label{syntax-dp} Disjoint parallel quantum programs are generated by the grammar given in equations (\ref{syntax}) and (\ref{syntax+}) together with the following clause: \begin{equation}\label{DP-syntax}P::=P_1\|\cdots\| P_n\equiv \|_{i=1}^n P_i\end{equation} where $n>1$, $P_1,...,P_n$ are quantum \textbf{while}-programs, and $\mathit{var}(P_i)\cap\mathit{var}(P_j)=\emptyset$ for $i\neq j$.
\end{defn}

Program $P$ in equation (\ref{DP-syntax}) is called the disjoint parallel composition of $P_1,...,P_n$. We write: $$\mathit{var}(P)=\bigcup_{i=1}^n\mathit{var}(P_i)$$ for the set of quantum variables in $P$. Thus, the state Hilbert space of $P$ is $$\mathcal{H}_P=\mathcal{H}_{\mathit{var}(P)}=\bigotimes_{i=1}^n\mathcal{H}_{P_i}.$$

\subsection{Operational Semantics}

To accommodate the intertwined nondeterminism introduced by quantum measurements and parallelism together, we have to first recast the operational semantics of quantum \textbf{while}-programs in a slightly different way.  
We define a configuration ensemble as a multi-set $\mathcal{A}=\{|\langle P_i,\rho_i\rangle|\}$ of configurations with $\sum_i\mathit{tr}(\rho_i)\leq 1$.
For simplicity, we identify a singleton $\{|\langle P,\rho\rangle|\}$ with the configuration $\langle P,\rho\rangle$.
Moreover, we need to extend the transition relation between configurations given in Definition \ref{def-op-sem} to a transition relation between configuration ensembles.

\begin{defn}\label{def-tran-ensemble}
The transition relation between configuration ensembles is of the form:
\begin{equation*}\{|\langle P_i,\rho_i\rangle|\}\rightarrow\{|\langle Q_j,\sigma_j\rangle|\}\end{equation*}
and defined by rules (Sk), (In), (UT), (SC) in Figure \ref{fig 3.1} together with the rules presented in Figure \ref{fig ext-3.1}.
\begin{figure}[h]\centering
\begin{equation*}\begin{split}
&({\rm IF'})\ \ \ \langle\mathbf{if}\ (\square m\cdot
M[\overline{q}]=m\rightarrow P_m)\ \mathbf{fi},\rho\rangle\rightarrow\{|\langle
P_m,M_m\rho M_m^{\dag}\rangle|\}\\
&({\rm L}')\ \ \ \langle\mathbf{while}\
M[\overline{q}]=1\ \mathbf{do}\
P\ \mathbf{od},\rho\rangle\rightarrow \{|\langle \downarrow, M_0\rho M_0^{\dag}\rangle, \langle
P;\mathbf{while}\ M[\overline{q}]=1\ \mathbf{do}\ P\ \mathbf{od}, M_1\rho
M_1^{\dag}\rangle|\}\\
&({\rm MS1})\ \ \ \frac{C\rightarrow\mathcal{A}}{\{C\}\rightarrow \mathcal{A}}\ \ \ \ \ \ \ \ \ \ \ \ \ \ \ \ \ \ \ \ \ \ \ \ ({\rm MS2})\ \ \ \frac{\begin{array}{cc}\{\mathcal{A}_i\}_{i\in I}\ {\rm is\ a\ partition\ of}\ \mathcal{A}& I=I_0\cup I_1\\ \mathcal{A}_i\not\rightarrow\ {\rm for\ every}\ i\in I_0& \mathcal{A}_i\rightarrow\mathcal{B}_i\ {\rm for\ every}\ i\in I_1\end{array}}{\mathcal{A}\rightarrow \left(\bigcup_{i\in I_0}\mathcal{A}_i\right)\cup\left(\bigcup_{i\in I_1}\mathcal{B}_i\right)}
\end{split}\end{equation*}
\caption{Extended Transition Rules for Quantum \textbf{while}-Programs.\ \ \ \ In rule (MS1), $C$ is a configuration and $\mathcal{A}$ is a configuration ensembles. In rule (MS2), $\mathcal{A}_i$ and $\mathcal{B}_i$ are all configuration ensembles. Note that in (MS2), $\bigcup$ stands for union of multi-sets.}\label{fig ext-3.1}
\end{figure}
\end{defn}

We observe that for each possible measurement outcome $m$, transition rule (IF) in Figure \ref{fig 3.1} gives a transition from configuration $\langle\mathbf{if}\cdots\mathbf{fi},\rho\rangle$. Transition rule (IF') in Figure \ref{fig ext-3.1} is essentially a merge of these transitions by collecting all the target configurations into a configuration ensemble. Similarly, transition rule (L') is a merge of (L0) and (L1) in Figure \ref{fig 3.1}. Transition rule (MS1) is introduced simply for lifting transitions of configurations to transitions of configuration ensembles. Rule (MS2) allows us to combine several transitions from some small ensembles into a single transition from a large ensemble.

With the above preparation, we can define the operational semantics of disjoint parallel quantum programs in a simple way. 

\begin{defn}[Operational Semantics]\label{disjoint-op-semantics} The operational semantics of disjoint parallel quantum program is the transition relation between configuration ensembles defined by the rules used in Definition \ref{def-tran-ensemble} together with rule (PC) in Figure \ref{fig 2}.
\begin{figure}[h]\centering
\begin{equation*}({\rm PC})\ \ \ \ \frac{\langle P_i,\rho\rangle\rightarrow\{|\langle
P_{ij}^{\prime},\rho_j^{\prime}\rangle|\}} {\begin{array}{cc}\langle
P_1\|\cdots\|P_{i-1}\| P_i\|P_{i+1}\|\cdots\| P_n,\rho\rangle\rightarrow \{|\langle
P_1\|\cdots\|P_{i-1}\| P_{ij}^{\prime}\|P_{i+1}\|\cdots\| P_n,\rho_j^\prime\rangle|\}\end{array}}\end{equation*}
\caption{Transition Rule for (Disjoint) Parallel Quantum Programs.\ \ \ \ Here, $1\leq i\leq n$.}\label{fig 2}
\end{figure}\end{defn}

Intuitively, transition rule (PC) models interleaving concurrency; more precisely, it means that for a fixed $1\leq i\leq n$, the $i$th component $P_i$ of parallel quantum programs $P\equiv P_1\|\cdots\|P_n$ performs a transition, then $P$ can perform the same transition. We will use the convention that $P_1\|\cdots\| P_n=\ \downarrow$ when $P_i=\ \downarrow$ for all $i$. 

To further illustrate the transition rule (PC), we consider the following simple example . In this paper, to simplify the presentation, for a pure state $|\varphi\rangle$ and a complex number $\alpha$ with $|\alpha|\leq 1$, we often use the vector $\alpha |\varphi\rangle$ to denote the corresponding partial density operator $|\alpha|^2|\varphi\rangle\langle\varphi|$.
\begin{exam}\label{comp-value-easy}
Let $p, q, r$ be three qubit variables, \begin{align*}P_1\equiv\ p:=X[p];q:=Z[q],\ \ \ \ \ \ \ \ \ \ P_2\equiv\ &\mathbf{if}\ M[r]=0\rightarrow\mathbf{skip}\\ &\square \ \ \ \ \ \ \ \ \ \ \ \ \ \ 1\rightarrow r:=H[r]\\ &\mathbf{fi} \end{align*}
where $X, Z$ are the Pauli gates, $H$ the Hadamard gate and $M=\{M_0=|0\rangle\langle 0|,M_1=|1\rangle\langle 1|\}$ is the measurement in the computational basis, and let $|\psi\rangle=\frac{1}{\sqrt{2}}(|000\rangle+|111\rangle)$ be the GHZ (Greenberger-Horne-Zeilinger) state.
Then \begin{align*}
\langle P_1\|P_2,|\psi\rangle\rangle &\rightarrow_1\langle q:=Z[q]\|P_2,\frac{1}{\sqrt{2}}(|100\rangle+|011\rangle)\rangle \rightarrow_2\begin{cases}\langle q:=Z[q]\|\mathbf{skip},\frac{1}{\sqrt{2}}|100\rangle\rangle\\ \langle q:=Z[q]\|r:=H[r],\frac{1}{\sqrt{2}}|011\rangle\rangle\end{cases}
\\
&\rightarrow_1\begin{cases}\langle \downarrow\|\mathbf{skip},\frac{1}{\sqrt{2}}|100\rangle\rangle\\ \langle q:=Z[q]\|r:=H[r],\frac{1}{\sqrt{2}}|011\rangle\rangle\end{cases}\rightarrow_2\begin{cases}\langle \downarrow\|\mathbf{skip},\frac{1}{\sqrt{2}}|100\rangle\rangle\\ \langle q:=Z[q]\|\downarrow,\frac{1}{\sqrt{2}}|01-\rangle\rangle\end{cases}\\
&\rightarrow_1\begin{cases}\langle \downarrow\|\mathbf{skip},\frac{1}{\sqrt{2}}|100\rangle\rangle\\ \langle \downarrow,-\frac{1}{\sqrt{2}}|01-\rangle\rangle\end{cases}\rightarrow_2\begin{cases}\langle \downarrow,\frac{1}{\sqrt{2}}|100\rangle\rangle\\ \langle\downarrow,-\frac{1}{\sqrt{2}}|01-\rangle\rangle\end{cases}
\end{align*} is a computation of parallel program $P_1\|P_2$ starting in state $|\psi\rangle$.
Here, we use $\rightarrow_i$ to indicate that the transition is made by $P_i$ according to rule (PC), and $|-\rangle=\frac{1}{\sqrt{2}}(|0\rangle-|1\rangle)$.
\end{exam}

It is interesting to see that at the second step of the computation in the above example, measurement $M$ is performed by component $P_2$ and thus certain nondeterminism occurs; that is, two different configurations are produced according to the two different outcomes $0,1$ of $M$. Then in steps 3, 4 and 5, the following kind of interleaving appears: an action of component $P_2$ happens between two actions of component $P_1$ executed on the two different configurations that come from the same measurement $M$. Here, in a sense, nondeterminism caused by quantum measurements is intertwined with nondeterminism introduced by parallelism.
It is worth noting that for a classical parallel program $P\equiv P_1\|\cdots\|P_n$ with $P_i$ $(1\leq i\leq n)$ being \textbf{while}-programs, such an interleaving never happens because nondeterminism does not occur in the execution of any component $P_i$.

\subsection{Denotational Semantics}

In the last section, operational semantics of quantum \textbf{while}-programs was redefined in terms of the transition between configuration ensembles. Accordingly, denotational semantics (i.e. semantic function) of a quantum \textbf{while}-program can be represented using configuration ensembles. For any configuration ensemble $\mathcal{A}$, we define:
$$\mathit{val}(\mathcal{A})=\sum\{|\rho^\prime:\langle \downarrow,\rho^\prime\rangle\in\mathcal{A}|\}.$$ It is evident that if $\mathcal{A}\rightarrow\mathcal{B}$ then $\mathit{val}(\mathcal{A})\sqsubseteq\mathit{val}(\mathcal{B})$ because $\langle\downarrow,\rho\rangle$ has no transition; that is, $\langle\downarrow,\rho\rangle\in\mathcal{A}$ implies $\langle\downarrow,\rho\rangle\in\mathcal{B}$.

\begin{defn}\label{comp-value}\begin{enumerate}\item A computation of a quantum \textbf{while}-program $P$ starting in a state $\rho\in\mathcal{D}\left(\mathcal{H}_P\right)$ is a maximal finite sequence $$\pi=\langle P,\rho\rangle\rightarrow\mathcal{A}_1\rightarrow\cdots\rightarrow\mathcal{A}_n\not\rightarrow$$ or an infinite sequence: $$\pi=\langle P,\rho\rangle\rightarrow\mathcal{A}_1\rightarrow\cdots\rightarrow\mathcal{A}_n\rightarrow\cdots.$$
\item The value of computation $\pi$ is defined as follows:\begin{equation*}\mathit{val}(\pi)=\begin{cases}&\mathit{val}\left(\mathcal{A}_n\right)\ {\rm if}\ \pi\ {\rm is\ finite\ and}\ \mathcal{A}_n\ {\rm is\ the\ last} \ {\rm configuration\ ensemble},\\ &\lim_{n\rightarrow\infty}\mathit{val}\left(\mathcal{A}_n\right)\ {\rm if}\ \pi\ {\rm is\ infinite}.
\end{cases}
\end{equation*}
\end{enumerate}\end{defn}

Note that in the case of infinite $\pi$, sequence $\left\{\mathit{val}\left(\mathcal{A}_n\right)\right\}$ is increasing according to the L\"{o}wner order $\sqsubseteq$. On the other hand, we know that $\mathcal{D}\left(\mathcal{H}_P\right)$ with $\sqsubseteq$ is a CPO (see \cite{Ying16}, Lemma 3.3.2). So, $\lim_{n\rightarrow\infty}\mathit{val}\left(\mathcal{A}_n\right)$ exists.

The following lemma shows determinism of quantum \textbf{while}-programs.  

\begin{lem}\label{while-det}For any quantum \textbf{while}-program $P$ and $\rho\in\mathcal{D}\left(\mathcal{H}_P\right)$, there is exactly one computation $\pi$ of $P$ starting in $\rho$ and $\llbracket P\rrbracket (\rho)= \mathit{val}(\pi).$\end{lem}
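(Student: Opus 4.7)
The plan is to prove the lemma by structural induction on the quantum \textbf{while}-program $P$, exploiting the fact that the ensemble rules (IF') and (L') of Figure \ref{fig ext-3.1} replace the measurement branching of (IF), (L0), (L1) by single deterministic ensemble transitions, so all the probabilistic nondeterminism present in Definition \ref{def-op-sem} is now absorbed into the ensemble structure itself.

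First I would show, by case analysis on the leading constructor of $P$, that any non-terminated configuration $\langle P,\rho\rangle$ admits a unique first ensemble transition: each production of Definition \ref{q-syntax} matches exactly one of (Sk), (In), (UT), (SC), (IF'), (L') (lifted by (MS1) when necessary). For sequential composition the transition is forced by the inductively unique transition of the prefix $P_1$. Rule (MS2) may only split an ensemble into parts of which those assigned to $I_0$ must already be terminated, since $\mathcal{A}_i\not\rightarrow$ excludes non-terminated configurations from them. Hence the canonical strategy of advancing every non-terminated configuration at each step yields a uniquely determined sequence $\pi$, finite if all branches halt and infinite otherwise.

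Second, I would establish $\mathit{val}(\pi)=\llbracket P\rrbracket(\rho)$ by induction on $P$. The atomic cases are immediate. For $P_1;P_2$ the computation splits, via a natural bijection, into a prefix running $P_1$ followed by subcomputations running $P_2$ on each terminated output of $P_1$, and the sums telescope to match Definition \ref{den-sem-def}. For the conditional, the first transition produces $\{|\langle P_m,M_m\rho M_m^{\dag}\rangle|\}_m$, whose contributions add, by the inductive hypothesis, to $\sum_m\llbracket P_m\rrbracket(M_m\rho M_m^{\dag})$. For $\mathbf{while}$, the loop unfolds step by step; the increasing sequence $\{\mathit{val}(\mathcal{A}_n)\}$ in the CPO $\mathcal{D}(\mathcal{H}_P)$ (\cite{Ying16}, Lemma 3.3.2) converges to the same fixed point that defines $\llbracket P\rrbracket(\rho)$.

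The main obstacle is the \textbf{while}-case at the limit: I must confirm that the ensemble after $n$ transitions collects exactly the partial density operator $\sum_{k\leq n}q_k\sigma_k$ coming from loop iterations of depth at most $n$, matching the intuition given after Definition \ref{den-sem-def}, and that the remaining non-terminated configurations contribute nothing extra in the limit. This requires a small confluence check — any legally reshuffled maximal computation produces the same multi-set of terminated configurations — which is far simpler than the parallel-composition analogues referenced as Lemmas \ref{lem-det} and \ref{lem-seq}, since without parallel composition the different configurations in an ensemble evolve independently of one another.
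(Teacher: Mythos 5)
Your argument is correct, and its first half (uniqueness) is exactly what the paper means when it says uniqueness ``follows immediately from Definition~\ref{def-tran-ensemble}'': your observation that (MS2) forces every part placed in $I_0$ to consist of terminated configurations, so that any ensemble transition advances every non-terminated configuration by its unique step, is the right justification. Where you diverge is the second half. The paper does not induct on the structure of $P$ at all: it notes that terminated configurations persist along the unique computation $\pi=\langle P,\rho\rangle\rightarrow\mathcal{A}_1\rightarrow\cdots$, so $\mathit{val}(\mathcal{A}_n)$ is precisely $\sum\{|\rho^\prime:\langle P,\rho\rangle\rightarrow^{k}\langle\downarrow,\rho^\prime\rangle\ \text{for some}\ k\leq n|\}$, and hence $\mathit{val}(\pi)=\lim_n\mathit{val}(\mathcal{A}_n)=\sum\{|\rho^\prime:\langle P,\rho\rangle\rightarrow^\ast\langle\downarrow,\rho^\prime\rangle|\}=\llbracket P\rrbracket(\rho)$ directly from Definition~\ref{den-sem-def}. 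Your structural induction also goes through, but it is heavier: it needs the compositional laws of the semantic function ($\llbracket P_1;P_2\rrbracket=\llbracket P_2\rrbracket\circ\llbracket P_1\rrbracket$, the measurement-sum law for the case statement, and the least-upper-bound characterisation of the loop), which must be imported from \cite{Ying16} or re-proved, and the sequential-composition case needs care because different branches finish $P_1$ after different numbers of steps, so the computation does not literally factor into a $P_1$-prefix followed by $P_2$-runs. Also, the ``confluence check'' you flag for the loop case is vacuous once uniqueness is established—there is only one maximal computation, so nothing can be reshuffled. What each route buys: the paper's argument is shorter and stays entirely at the operational level, exploiting that Definition~\ref{den-sem-def} is itself operational; yours yields a compositional reading of $\mathit{val}(\pi)$ that mirrors the denotational equations, at the cost of duplicating facts already available in the cited literature.
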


\begin{proof}The uniqueness of the computation $\pi=\langle P,\rho\rangle\rightarrow\mathcal{A}_1\rightarrow\cdots\rightarrow\mathcal{A}_n\rightarrow\cdots$ of $P$ starting in $\rho$ follows immediately from Definition \ref{def-tran-ensemble}.
Furthermore, with Definition \ref{den-sem-def} we have: \begin{align*}\llbracket P\rrbracket (\rho)&=\sum\{|\rho^\prime:\langle P,\rho\rangle\rightarrow^\ast\langle\downarrow,\rho^\prime\rangle|\}=\lim_{n\rightarrow\infty}\sum\{|\rho^\prime:\langle P,\rho\rangle\rightarrow^n\langle\downarrow,\rho^\prime\rangle|\}\\ &=\lim_{n\rightarrow\infty}\mathit{val}\left(\mathcal{A}_n\right)=\mathit{val}(\pi).\end{align*} \end{proof}

Now we are ready to introduce the denotational semantics of disjoint parallel quantum programs. But it cannot be defined by simply mimicking Definitions \ref{den-sem-def} and \ref{comp-value}. For each parallel quantum program $P$, we set: 
\begin{equation}\label{value-set}\mathcal{V}(P,\rho)=\{\mathit{val}(\pi): \pi\ {\rm is\ a\ computation\ of}\ P\ {\rm starting\ in}\ \rho\}\end{equation} for any $\rho\in\mathcal{D}\left(\mathcal{H}_P\right)$, where $\mathit{val}(\pi)$ is given as in Definition \ref{comp-value}. Then we have: 

\begin{defn}[Denotational Semantics]\label{dp-den-sem} The semantic function of a disjoint parallel program $P$ is the mapping $\llbracket P\rrbracket:\mathcal{D}(\mathcal{H}_P)\rightarrow2^{\mathcal{D}(\mathcal{H}_P)}$ defined by $$\llbracket P\rrbracket (\rho)=\left\{{\rm maximal\ elements\ of}\ \left(\mathcal{V}(P,\rho),\sqsubseteq\right)\right\}$$ for any $\rho\in\mathcal{D}(\mathcal{H}_P)$. 
\end{defn}

The above definition deserves a careful explanation. First, the reader may be wondering why we need to take maximal elements in the definition of $\llbracket P\rrbracket(\rho)$. For a parallel quantum programs without loop, it is unnecessary to consider maximal elements; for instance, we simply have: $$\llbracket P_1\|P_2\rrbracket(|\psi\rangle)=\left\{\frac{1}{2}(|100\rangle\langle100|+|01-\rangle\langle01-|)\right\}$$ in Example \ref{comp-value-easy}.
However, the following example clearly shows that only maximal elements are appropriate whenever infinite computations occur.

\begin{exam}\label{exam-maximal} Let $q_0,q_1$ be two qubit variables, and for $k=0,1$, \begin{align*}P_k\equiv\ &\mathbf{if}\ M[q_k]=0\rightarrow\mathbf{skip}\\
&\square\ \ \ \ \ \ \ \ \ \ \ \ \ \ \ \ 1\rightarrow\mathbf{skip}\\ &\mathbf{fi};\ \mathbf{while}_k
\end{align*} where $$\mathbf{while}_k\equiv\ \mathbf{while}\ M[q_k]=k\ \mathbf{do}\ \mathbf{skip}\ \mathbf{od},$$ and $M$ is the measurement in the computational basis. Then the following are three computations of parallel program $P_0\|P_1$ starting in state $|++\rangle$ with $|+\rangle=\frac{1}{\sqrt{2}}(|0\rangle+|1\rangle)$:
\begin{enumerate}\item All transitions are performed by $P_0$:
\begin{align*}\pi_0=\langle P_0\|P_1,|++\rangle\rangle\rightarrow_0\mathcal{A}_1\rightarrow_0\mathcal{A}_2\rightarrow_0\mathcal{A}_3\rightarrow_0\cdots \rightarrow_0\mathcal{A}_{2n}\rightarrow_0\mathcal{A}_{2n+1}\rightarrow_0\cdots\end{align*} where \begin{align*}&\mathcal{A}_1=\left\{\langle\mathbf{while}_0\|P_1,\frac{1}{\sqrt{2}}|0+\rangle\rangle,\langle\mathbf{while}_0\|P_1,\frac{1}{\sqrt{2}}|1+\rangle\rangle\right\},\\
&\mathcal{A}_{2n}=\left\{\langle\mathbf{skip};\mathbf{while}_0\|P_1,\frac{1}{\sqrt{2}}|0+\rangle\rangle,\langle\downarrow\|P_1,\frac{1}{\sqrt{2}}|1+\rangle\rangle\right\},\\
&\mathcal{A}_{2n+1}=\left\{\langle\mathbf{while}_0\|P_1,\frac{1}{\sqrt{2}}|0+\rangle\rangle,\langle\downarrow\|P_1,\frac{1}{\sqrt{2}}|1+\rangle\rangle\right\}
\end{align*} for every $n\geq 1$.
\item All transitions are performed by $P_1$:
\begin{align*}\pi_1=\langle P_0\|P_1,|++\rangle\rangle\rightarrow_1\mathcal{B}_1\rightarrow_1\mathcal{B}_2\rightarrow_1\mathcal{B}_3\rightarrow_1\cdots \rightarrow_1\mathcal{B}_{2n}\rightarrow_1\mathcal{B}_{2n+1}\rightarrow_1\cdots\end{align*} where \begin{align*}&\mathcal{B}_1=\left\{\langle P_0\|\mathbf{while}_1,\frac{1}{\sqrt{2}}|+0\rangle\rangle,\langle P_0\|\mathbf{while}_1,\frac{1}{\sqrt{2}}|+1\rangle\rangle\right\},\\
&\mathcal{B}_{2n}=\left\{\langle P_0\|\downarrow,\frac{1}{\sqrt{2}}|+0\rangle\rangle, \langle P_0\|\mathbf{skip};\mathbf{while}_1,\frac{1}{\sqrt{2}}|+1\rangle\rangle\right\},\\
&\mathcal{B}_{2n+1}=\left\{P_0\|\downarrow,\frac{1}{\sqrt{2}}|+0\rangle\rangle, \langle P_0\|\mathbf{while}_1,\frac{1}{\sqrt{2}}|+1\rangle\rangle\right\}
\end{align*} for every $n\geq 1$.
\item The transitions are fairly performed by $P_0$ and $P_1$:
\begin{align*}\pi=\langle P_0\|P_1,|++\rangle\rangle\rightarrow_0\mathcal{A}_1\rightarrow_0\mathcal{A}_2\rightarrow_0\mathcal{A}_3\rightarrow_1\mathcal{C}_{4}\rightarrow_1\mathcal{C}_{5}\rightarrow\cdots\end{align*} where \begin{align*}\mathcal{C}_{4}&=\left\{\langle \mathbf{while}_0\|\mathbf{while}_1,\frac{1}{2}|00\rangle\rangle,\langle \mathbf{while}_0\|\mathbf{while}_1,\frac{1}{2}|01\rangle\rangle, \langle\downarrow\|\mathbf{while}_1,\frac{1}{2}|10\rangle\rangle,\langle\downarrow\|\mathbf{while}_1,\frac{1}{2}|11\rangle\rangle\right\},\\
\mathcal{C}_{5}&=\left\{\langle \mathbf{while}_0\|\downarrow,\frac{1}{2}|00\rangle\rangle,\langle \mathbf{while}_0\|\mathbf{skip};\mathbf{while}_1,\frac{1}{2}|01\rangle\rangle, \langle\downarrow\|\downarrow,\frac{1}{2}|10\rangle\rangle,\langle\downarrow\|\mathbf{skip};\mathbf{while}_1,\frac{1}{2}|11\rangle\rangle\right\}.\end{align*}
\end{enumerate}

Obviously, $\mathit{val}\left(\pi_0\right)=\mathit{val}\left(\pi_1\right)=0<\frac{1}{4}|10\rangle\langle 10|=\mathit{val}(\pi)$, and $\mathit{val}(\pi)$ is a maximal element of $\mathcal{V}\left(P_0\|P_1,|++\rangle\right)$. Furthermore, we have: $\llbracket P_0\|P_1\rrbracket(|++\rangle)=\left\{\frac{1}{4}|10\rangle\langle 10|\right\}.$
\end{exam}

Second, the output $\llbracket P\rrbracket(\rho)$ of a parallel program $P$ with input $\rho$ is defined as the set of maximal elements of a partially ordered set. In general, there may be no or more than one maximal element. But in the case of disjoint parallelism, the structure of $\llbracket P\rrbracket(\rho)$ is simple. As stated at the beginning of this subsection, the denotational semantics of a disjoint parallel quantum program is deterministic although its operational semantics may demonstrate a very complicated nondeterminism; that is, as a generalisation of Lemma \ref{while-det}, we have: 

\begin{lem}[Determinism]\label{lem-det} For any disjoint parallel quantum program $P$ and $\rho\in\mathcal{D}(\mathcal{H}_P)$, $\llbracket P\rrbracket(\rho)$ is a singleton.\end{lem}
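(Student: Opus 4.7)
The plan is to prove the lemma by a confluence argument that exploits the disjointness assumption. The core observation is that since $\mathit{var}(P_i)\cap\mathit{var}(P_j)=\emptyset$ for $i\neq j$, the Kraus operators realising the one-step transitions of $P_i$ and $P_j$ act on disjoint tensor factors of $\mathcal{H}_P=\bigotimes_{k=1}^n\mathcal{H}_{P_k}$ and therefore commute when both are lifted to $\mathcal{H}_P$ by tensoring with identities on the remaining factors. This is uniform across the syntactic classes: initialisations $q:=|0\rangle$ are implemented by a Kraus family supported on $\mathcal{H}_q$; unitary updates $\overline{q}:=U[\overline{q}]$ by a single unitary on $\mathcal{H}_{\overline{q}}$; and case statements and loop guards by measurement families $\{M_m\}$ on $\mathcal{H}_{\overline{q}}$.

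First I would prove a one-step \emph{diamond} lemma: if from a reachable configuration ensemble $\mathcal{A}$ there are two transitions $\mathcal{A}\rightarrow\mathcal{B}$ performed by component $P_i$ and $\mathcal{A}\rightarrow\mathcal{C}$ performed by $P_j$ with $i\neq j$, then there exists a common successor $\mathcal{D}$ with $\mathcal{B}\rightarrow\mathcal{D}$ via $P_j$ and $\mathcal{C}\rightarrow\mathcal{D}$ via $P_i$. Once the component Kraus operators are lifted to $\mathcal{H}_P$, their commutation immediately yields equality of the resulting partial density operators on each branch; the only subtlety lies in the multi-set bookkeeping, since a measurement step of $P_i$ splits each configuration into several branches, and the subsequent $P_j$-action must be applied branch-by-branch to the correspondingly-split multi-set on the $\mathcal{C}$-side so that the two ways of reaching $\mathcal{D}$ produce the same multi-set.

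Second, I would iterate this diamond to a global confluence property for finite prefixes and then single out a \emph{canonical sequentialised} computation $\pi^{\ast}$ of $P_1\|\cdots\|P_n$ from $\rho$ in which rule (PC) schedules $P_1$ exclusively until it terminates in the sense of Lemma \ref{while-det}, then $P_2$, and so on. By the commutativity above, the value of $\pi^{\ast}$ equals the tensor-lifted composition $\llbracket P_n\rrbracket\circ\cdots\circ\llbracket P_1\rrbracket$ applied to $\rho$. Using the diamond inductively and the fact that $(\mathcal{D}(\mathcal{H}_P),\sqsubseteq)$ is a CPO, I would show that for every computation $\pi$ of $P$ starting in $\rho$, $\mathit{val}(\pi)\sqsubseteq\mathit{val}(\pi^{\ast})$: any finite prefix of $\pi$ can be ``swapped'' into a prefix of a sequentialised computation whose value dominates it, and one then takes the supremum. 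Hence $\mathit{val}(\pi^{\ast})$ is the unique maximal element of $\mathcal{V}(P,\rho)$, so $\llbracket P\rrbracket(\rho)=\{\mathit{val}(\pi^{\ast})\}$ is a singleton.

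The main obstacle will be the loop case. As Example \ref{exam-maximal} shows, unfair schedules can indefinitely starve components and yield values strictly below the supremum, so strict per-step equality of ensembles is hopeless; one must reason in the limit. The delicate point is to align the branching trees of two different computations so that, after enough steps on both sides, measurement outcomes produced in different orders are matched with the same multiplicities, allowing the diamond to be chained inductively through the whole (possibly infinite) transition tree. This multi-set alignment — identifying the ``same'' outcome reached via different interleavings — is where the ``subtle'' in the paper's phrase ``subtle confluence property'' lives: commutativity of tensor-disjoint Kraus operators makes the construction possible, but the indexing and limiting arguments demand real care, and are presumably what forces the proof into Appendix C.
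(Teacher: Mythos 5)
Your argument has the right ingredients (the diamond via commutation of operators on disjoint tensor factors, finite-prefix reordering, passage to the limit in the CPO), but the construction of the canonical computation $\pi^{\ast}$ is flawed as stated. Scheduling ``$P_1$ exclusively until it terminates, then $P_2$, \dots'' is itself one of the starving schedules you warn about: whenever the unique computation of $P_1$ (Lemma \ref{while-det}) is infinite --- e.g.\ $P_1\equiv\mathbf{while}\ M[q_1]=1\ \mathbf{do}\ \mathbf{skip}\ \mathbf{od}$ started in $|+\rangle$, or $P_0$ in Example \ref{exam-maximal} --- your schedule never executes $P_2,\dots,P_n$, no configuration of the parallel program ever reaches $\downarrow$, and $\mathit{val}(\pi^{\ast})=0$, not $\llbracket P_n\rrbracket\circ\cdots\circ\llbracket P_1\rrbracket(\rho)$; in Example \ref{exam-maximal} your $\pi^{\ast}$ is exactly $\pi_0$, whose value is strictly below that of the fair computation, so the claimed dominance $\mathit{val}(\pi)\sqsubseteq\mathit{val}(\pi^{\ast})$ fails. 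The repair is to sequentialise branch-wise rather than globally: take $\pi^{\ast}$ to be the embedding (via rules (SC) and (MS2)) of the unique computation of the sequential program $P_1;\cdots;P_n$, in which each branch starts $P_2$ as soon as $P_1$ has terminated \emph{on that branch}, while other branches keep running $P_1$. Then $\mathit{val}(\pi^{\ast})=\llbracket P_1;\cdots;P_n\rrbracket(\rho)$, and your reordering-of-finite-prefixes argument (plus monotonicity $\llbracket Q\rrbracket(\rho)\sqsubseteq\llbracket Q;Q'\rrbracket(\rho)$ and the limit) does show this dominates every $\mathit{val}(\pi)$, giving a greatest element of $\mathcal{V}(P,\rho)$ and hence the singleton.

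It is also worth noting that the paper proves the lemma by a different, lighter route that sidesteps this pitfall entirely: it establishes your one-step diamond (Lemma in Appendix C, with the same multi-set bookkeeping you identify), iterates it into a confluence property pairing two arbitrary computations with a third whose ensembles dominate both, and then argues by refutation --- two distinct maximal values would both be bounded above by the value of the common dominating computation, contradicting maximality --- so no canonical maximal computation is ever constructed. Your (corrected) plan instead proves the content of the Sequentialisation Lemma \ref{lem-seq}(2) first and derives determinism from it; this buys both results at once, and is essentially the paper's Appendix D argument, but observe that the paper's own proof of Lemma \ref{lem-seq} cites Lemma \ref{lem-det} at one step, so on your route you must (and, with the branch-wise $\pi^{\ast}$, do) obtain the upper-bound property directly, without that appeal.
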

\begin{proof} See Appendix \ref{proof-lem-det}.
\end{proof}

For a disjoint parallel quantum program $P$ and for any $\rho\in\mathcal{D}\left(\mathcal{H}_P\right)$, if singleton $\llbracket P\rrbracket(\rho)=\{\rho^\prime\}$, then we will always identify $\llbracket P\rrbracket(\rho)$ with the partial density operator $\rho^\prime$. Indeed, $\rho^\prime$ must be the greatest element of $(\mathcal{V}(P,\rho),\sqsubseteq).$

It is well-known that every disjoint parallel composition of classical \textbf{while}-programs can be sequentialised (see \cite{Apt09}, Lemma 7.7). This result can also be generalised to the quantum case.

\begin{lem}[Sequentialisation]\label{lem-seq} Suppose that quantum \textbf{while}-programs $P_1,\cdots,P_n$ are disjoint. Then:
\begin{enumerate}\item For any permutation $i_1,\cdots,i_n$ of $1,\cdots,n$, $\llbracket P_1\|\cdots\|P_n\rrbracket =\llbracket P_{i_1}\|\cdots\| P_{i_n}\rrbracket.$
\item $\llbracket P_1\|\cdots\|P_n\rrbracket = \llbracket P_1;\cdots;P_n\rrbracket.$
\end{enumerate}
\end{lem}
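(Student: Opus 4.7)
The plan is to derive both claims from the Determinism Lemma (Lemma \ref{lem-det}) together with two elementary observations: (a) rule (PC) treats the $n$ components symmetrically, and (b) because $\mathit{var}(P_i)\cap\mathit{var}(P_j)=\emptyset$, a single-component transition on $P_i$ acts as a super-operator of the form $\mathcal{E}\otimes I_{\mathit{var}(P)\setminus\mathit{var}(P_i)}$ on the joint state, and such super-operators on disjoint tensor factors commute. Part (1) will fall out almost immediately from (a); Part (2) will be proved by exhibiting a specific ``sequential'' schedule inside the parallel computation and using (b) to match its value against $\llbracket P_1;\cdots;P_n\rrbracket(\rho)$.

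For Part (1), I would argue as follows. Fix $\rho\in\mathcal{D}(\mathcal{H}_P)$, and let $\sigma:\{1,\ldots,n\}\to\{1,\ldots,n\}$ be a permutation. Because the premise of rule (PC) depends only on the existence of a transition of some component and not on the index layout, the map that renames component slots according to $\sigma$ puts the transitions (and thus the computations and their values) of $P_1\|\cdots\|P_n$ and $P_{\sigma(1)}\|\cdots\|P_{\sigma(n)}$ into bijection, so $\mathcal{V}(P_1\|\cdots\|P_n,\rho)=\mathcal{V}(P_{\sigma(1)}\|\cdots\|P_{\sigma(n)},\rho)$ as sets of partial density operators. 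Lemma \ref{lem-det} then forces $\llbracket P_1\|\cdots\|P_n\rrbracket(\rho)=\llbracket P_{\sigma(1)}\|\cdots\|P_{\sigma(n)}\rrbracket(\rho)$.

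For Part (2), the strategy is to construct the ``sequential schedule'' computation $\pi^{\mathrm{seq}}$ of $P_1\|\cdots\|P_n$ starting in $\rho$: apply (PC) only to the first active component, so that one fires all transitions of $P_1$ (using the unique computation of $P_1$ granted by Lemma \ref{while-det}), then, invoking the convention $\downarrow\|Q=Q$, fires all transitions of $P_2$, and so on. Because $\mathit{var}(P_i)$ is disjoint from $\mathit{var}(P_j)$ for $j\neq i$, each component's transitions act on the full state Hilbert space $\mathcal{H}_P=\bigotimes_{i}\mathcal{H}_{P_i}$ as the cylindric extension of its own super-operator. A short induction on $n$, using Lemma \ref{while-det} to identify the value produced by the ``$P_i$-phase'' with the action of $\llbracket P_i\rrbracket$ tensored with the identity on the other factors, shows
\begin{equation*}
\mathit{val}(\pi^{\mathrm{seq}})=\bigl((\llbracket P_n\rrbracket\otimes I)\circ\cdots\circ(\llbracket P_1\rrbracket\otimes I)\bigr)(\rho)=\llbracket P_1;\cdots;P_n\rrbracket(\rho),
\end{equation*}
where the last equality is the definition of sequential composition. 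Since $\pi^{\mathrm{seq}}$ is a computation of $P_1\|\cdots\|P_n$, Lemma \ref{lem-det} gives $\llbracket P_1\|\cdots\|P_n\rrbracket(\rho)=\mathit{val}(\pi^{\mathrm{seq}})$, finishing the proof.

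The main obstacle I expect is the case of infinite computations: if $P_1$ diverges on $\rho$, the $P_1$-phase of $\pi^{\mathrm{seq}}$ is already infinite and no transition of $P_2,\ldots,P_n$ is ever fired. One must check that the limit $\lim_n\mathit{val}(\mathcal{A}_n)$ taken along this infinite schedule coincides with $\llbracket P_1\rrbracket(\rho)$ tensored with the ``zero contribution'' of the never-executed tail, and that this matches the sequential semantics $\llbracket P_1;\cdots;P_n\rrbracket(\rho)$, where by definition the later factors are never applied either. This reduces to the monotone convergence of $\mathit{val}(\mathcal{A}_n)$ in the CPO $(\mathcal{D}(\mathcal{H}_P),\sqsubseteq)$ together with the fact that the action of $\llbracket P_i\rrbracket\otimes I$ preserves partial traces on the $P_1$ factor; no new idea beyond Lemma \ref{while-det} and disjointness is required, but it is the bookkeeping step where care must be taken.
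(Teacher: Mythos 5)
Your Part (1) is fine and is essentially the paper's argument. Part (2), however, has a genuine gap, and it sits exactly where you park your ``bookkeeping'' worry. The inference ``since $\pi^{\mathrm{seq}}$ is a computation of $P_1\|\cdots\|P_n$, Lemma \ref{lem-det} gives $\llbracket P_1\|\cdots\|P_n\rrbracket(\rho)=\mathit{val}(\pi^{\mathrm{seq}})$'' is not valid: Lemma \ref{lem-det} only says that the set of \emph{maximal} elements of $\mathcal{V}(P_1\|\cdots\|P_n,\rho)$ is a singleton, not that every computation attains that value. Different schedules really do yield different values (in Example \ref{exam-maximal}, $\mathit{val}(\pi_0)=\mathit{val}(\pi_1)=0$ while the fair schedule yields $\tfrac14|10\rangle\langle10|$). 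So even after computing $\mathit{val}(\pi^{\mathrm{seq}})$ you still owe the proof that this value is maximal, i.e.\ an upper bound on the values of \emph{all} interleavings. That is the heart of the lemma and of the paper's proof: any finite prefix $\mathcal{A}_k$ of an arbitrary computation is re-ordered, using commutation of operators acting on disjoint tensor factors, into a ``$Q_1$ first, then $Q_2$, \dots'' form with each $P_i$ equivalent to $Q_i;P_i'$, giving $\mathit{val}(\mathcal{A}_k)\sqsubseteq\llbracket Q_1;\cdots;Q_n\rrbracket(\rho)\sqsubseteq\llbracket P_1;\cdots;P_n\rrbracket(\rho)$ via the monotonicity $\llbracket P\rrbracket(\rho)\sqsubseteq\llbracket P;P'\rrbracket(\rho)$, and then one passes to the limit. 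Nothing in your proposal plays this role.

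Moreover, the schedule you actually describe (run the whole unique computation of $P_1$, then $P_2$, \dots) does not even satisfy $\mathit{val}(\pi^{\mathrm{seq}})=\llbracket P_1;\cdots;P_n\rrbracket(\rho)$ when $P_1$ diverges with probability strictly between $0$ and $1$. In that case the $P_1$-phase is infinite, so $P_2$ is never fired on any configuration of the ensemble --- including those on which $P_1$ has already terminated --- and those branches contribute $0$; by contrast, the sequential semantics does apply $\llbracket P_2\rrbracket$ to the terminating part $\llbracket P_1\rrbracket(\rho)$, which has nonzero trace. Concretely, for $P_0\|P_1$ of Example \ref{exam-maximal} your $\pi^{\mathrm{seq}}$ is precisely the computation $\pi_0$ there, with $\mathit{val}(\pi_0)=0$, whereas $\llbracket P_0;P_1\rrbracket(|++\rangle\langle++|)=\tfrac14|10\rangle\langle10|$; your closing claim that in $P_1;\cdots;P_n$ ``the later factors are never applied either'' is true only for divergence with probability one. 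The construction can be repaired by scheduling per configuration (in each configuration of the ensemble, advance its first non-terminated component), which is exactly how the paper embeds computations of $P_1;\cdots;P_n$ into computations of $P_1\|\cdots\|P_n$ and obtains $\llbracket P_1;\cdots;P_n\rrbracket(\rho)\sqsubseteq\llbracket P_1\|\cdots\|P_n\rrbracket(\rho)$; but the reverse inequality still requires the upper-bound (re-ordering/commutation) argument sketched above, which your proposal is missing.
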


\begin{proof} See Appendix \ref{proof-lem-seq}.
\end{proof}

\section{Proof Rules for Disjoint Parallel Programs}\label{dis-correctness}

In this section, we derive a series of rules for proving correctness of disjoint parallel quantum programs. In classical computing, the behaviour of a disjoint parallel program is relatively simple due to noninterference between its components; in particular, only a simplified version of rule (R.PC) in Figure \ref{fig -1} (without noninterference condition) is needed for reasoning about them (see \cite{Apt09}, Lemmas 7.6 and 7.7 and Rule 24 on page 255). As we will see shortly, however, one of the three major challenges pointed out in the Introduction - entanglement - already appear in verification of disjoint parallel quantum programs.

Due to its determinism (Lemma \ref{lem-det}), (partial and total) correctness of a disjoint parallel quantum program $P$ can be defined simply using Definition \ref{correctness-interpretation} provided that for each input $\rho$, we identify the singleton $\llbracket P\rrbracket(\rho)=\{\rho^\prime\}$ with the partial density operator $\rho^\prime$.

Naturally, we first try to find appropriate quantum generalisations of the inference rules for classical disjoint parallel programs. But at the end of this subsection, we will see that some novel rules that have no classical counterpart are needed to cope with entanglement.

\subsection{Sequentialisation Rule}

To warm up, let us first consider a simple inference rule. As mentioned in the previous section, all disjoint parallel programs in classical computing can be sequentialised with the same denotational semantics. Accordingly, they can be verified through sequentialisation (\cite{Apt09}, Section 7.3). For quantum computing, the following sequentialisation rule is valid too:

\begin{figure}[h]\centering
\begin{align*}&({\rm R.Seq})\ \ \ \ \ \ \ \ \ \ \ \ \ \ \ \ \ \ \ \ \ \ \frac{\left\{A\right\}P_1;\cdots; P_n\left\{B\right\}}{\left\{ A\right\}P_1\|\cdots\|P_n\left\{B\right\}}\ \ \ \ \ \ \ \ \ \ \ \ \ \ \ \ \ \ \ \ \ \ \ \ \ \ \ \ \ \ \ \ \
\end{align*}
\caption{Sequentialisation Rule for Disjoint Parallel Programs.}\label{fig 3-1}
\end{figure}

\begin{lem} The rule (R.Seq) is sound for both partial and total correctness.
\end{lem}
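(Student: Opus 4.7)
The plan is to reduce the statement to a direct consequence of the two lemmas immediately preceding it, namely the Determinism Lemma (Lemma \ref{lem-det}) and the Sequentialisation Lemma (Lemma \ref{lem-seq}). Since Definition \ref{correctness-interpretation} phrases partial and total correctness as inequalities involving the semantic function of a program, the whole proof amounts to transporting those inequalities from the sequential composition $P_1;\cdots;P_n$ to the parallel composition $P_1\|\cdots\|P_n$ via equality of their semantic functions.

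Concretely, I would first invoke Lemma \ref{lem-det} to observe that for every $\rho\in\mathcal{D}(\mathcal{H}_P)$, $\llbracket P_1\|\cdots\|P_n\rrbracket(\rho)$ is a singleton, and hence, under the convention stated right after Lemma \ref{lem-det}, can be identified with a single partial density operator in $\mathcal{D}(\mathcal{H}_P)$. Then, by part (2) of Lemma \ref{lem-seq}, this partial density operator equals $\llbracket P_1;\cdots;P_n\rrbracket(\rho)$. Thus as maps $\mathcal{D}(\mathcal{H}_P)\to\mathcal{D}(\mathcal{H}_P)$, the two semantic functions coincide.

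With this identification in place, assume $\models_\mathit{par}\{A\}P_1;\cdots;P_n\{B\}$. By Definition \ref{correctness-interpretation}(2), for all $\rho$ we have
\begin{equation*}
\mathit{tr}(A\rho)\leq \mathit{tr}(B\,\llbracket P_1;\cdots;P_n\rrbracket(\rho)) + \bigl[\mathit{tr}(\rho)-\mathit{tr}(\llbracket P_1;\cdots;P_n\rrbracket(\rho))\bigr].
\end{equation*}
Substituting $\llbracket P_1;\cdots;P_n\rrbracket(\rho)=\llbracket P_1\|\cdots\|P_n\rrbracket(\rho)$ yields exactly the defining inequality for $\models_\mathit{par}\{A\}P_1\|\cdots\|P_n\{B\}$. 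The same substitution, applied to Definition \ref{correctness-interpretation}(1), takes $\models_\mathit{tot}\{A\}P_1;\cdots;P_n\{B\}$ to $\models_\mathit{tot}\{A\}P_1\|\cdots\|P_n\{B\}$.

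There is essentially no obstacle at this stage: all the real work sits inside Lemmas \ref{lem-det} and \ref{lem-seq}, whose proofs are deferred to the appendices and which, as the paper emphasises, are the genuinely nontrivial facts that resolve the intertwined-nondeterminism challenge. The only minor care needed here is the notational one of switching between the singleton-valued $\llbracket\cdot\rrbracket$ for parallel programs and the partial-density-operator-valued $\llbracket\cdot\rrbracket$ for sequential programs, which is already sanctioned by the convention established immediately after Lemma \ref{lem-det}.
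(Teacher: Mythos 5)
Your proposal is correct and follows essentially the same route as the paper, whose proof is simply ``Immediate from Lemma \ref{lem-seq}(2)''; you merely spell out the identification of the singleton semantics via Lemma \ref{lem-det} and the transport of the correctness inequalities, which the paper leaves implicit.
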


\begin{proof}Immediate from Lemma \ref{lem-seq}(2).\end{proof}

Let us give a simple example to show how rule (R.Seq) can be applied to verify disjoint parallel quantum programs. Our example is a quantum analog of the following simple example given in \cite{Apt09} to show the necessity of introducing auxiliary variables: $$\{x=y\}x:=x+1\|y:=y+1\{x=y\}.$$ This correctness formula for a disjoint parallel program cannot be proved by merely using the parallel composition rule (R.PC) in Fig. \ref{fig -1}. However, it can be simply derived by rule (R.Seq). Similarly, we have:

\begin{exam}\label{equal-pred} Let $p,q$ be two quantum variables with the same state Hilbert space $\mathcal{H}$. For each orthonormal basis $\Phi=\{|\varphi_i\rangle\}$ of $\mathcal{H}$, we define a quantum predicate:
\begin{equation}\label{entangle-0}A_\Phi=\sum_i \mu_i|\varphi_i\varphi_i\rangle\langle\varphi_i\varphi_i|\end{equation} in $\mathcal{H}\otimes\mathcal{H}$, where $\mu_i>0$ for every $i$. It can be viewed as a quantum counterpart of equality $x=y$. It is interesting to note that the quantum counterpart of $x=y$ is not unique because for different bases $\Phi=\{|\varphi_i\rangle\}$, $A_\Phi$ are different.
For any unitary operator $U$ in $\mathcal{H}$, we have: \begin{equation}\label{qdis-seq}\models_\mathit{tot}\{A_\Phi\}p:=U[p]\|q:=U[q]\{A_{U(\Phi)}\}\end{equation} where $A_{U(\Phi)}$ is the quantum counterpart of equality defined by orthonormal basis $U(\Phi)=\left\{U|\varphi_i\rangle\right\}$. Clearly, (\ref{qdis-seq}) can be proved using rule (R.Seq) together with (Ax.UT) in Figure \ref{fig 3.2}.
\end{exam}

It is worth pointing out that the quantum generalisation of a concept in a classical system usually has the flexibility arising from different choices of the basis of its state Hilbert space.

\subsection{Tensor product of quantum predicates}

Although rule (R.Seq) in Figure \ref{fig 3-1} can be used to verify a disjoint parallel program $P_1\|\cdots\| P_n$, it does not reflect the essence of (disjoint) parallelism where $P_1,...,P_n$ are independent processes. Moreover, it does not allows us to combine local reasoning about each process $P_i$ to form a global judgement about the parallel program $P_1\|\cdots\| P_n$. 
So, we will not use it in the sequel. Instead, we now start to consider how the crucial rule for reasoning about parallel programs, rule (R.PC) in Figure \ref{fig -1}, can be generalised to the quantum case. To this end, we first need to identify a quantum counterpart of conjunction $\bigwedge_{i=1}^nA_i$ (and $\bigwedge_{i=1}^nB_i$) in rule (R.PC). For disjoint parallel quantum programs, a natural choice is tensor product $\bigotimes_{i=1}^nA_i$ because
it enjoys a nice physical interpretation:
$$\mathit{tr}\left(\left(\bigotimes_{i=1}^n A_i\right)\left(\bigotimes_{i=1}^n\rho_i\right)\right)=\prod_{i=1}^n \mathit{tr}\left(A_i\rho_i\right).$$
The above equation shows that the probability that a product state $\bigotimes_{i=1}^n\rho_i$ satisfies quantum predicate $\bigotimes_{i=1}^nA_i$ is the product of the probabilities that each component state $\rho_i$ satisfies the corresponding predicate $A_i$. This observation motivates an inference rule for tensor product of quantum predicates presented in Figure \ref{fig 3}. It can be seen as the simplest quantum generalisation of rule (R.PC) in Figure \ref{fig -1}.

\begin{figure}[h]\centering
\begin{align*}&({\rm R.PC.P}) \ \ \ \ \ \ \ \ \ \ \ \ \ \ \ \frac{\left\{A_i\right\}P_i\left\{B_i\right\}\ (i=1,...,n)}{\left\{\bigotimes_{i=1}^n A_i\right\}P_1\|\cdots\|P_n\left\{\bigotimes_{i=1}^n B_i\right\}}\ \ \ \ \ \ \ \ \ \ \ \ \ \ \ \ \ \ \ \ \ \
\end{align*}
\caption{Rule for Tensor Product of Quantum Predicates}\label{fig 3}
\end{figure}

\begin{lem}\label{TP-sound} The rule (R.PC.P) is sound with respect to both partial and total correctness.
\end{lem}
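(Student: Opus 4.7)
The plan is to reduce the claim, via Lemma \ref{lem-seq}(2), to correctness of the sequential composition $P_1;\cdots;P_n$ and then to chain individual Hoare triples by (R.SC). Since $\llbracket P_1\|\cdots\|P_n\rrbracket = \llbracket P_1;\cdots;P_n\rrbracket$, validity of the conclusion of (R.PC.P) (in either the partial or total sense) is equivalent to validity of the same formula with $\|$ replaced by $;$. I would then produce $n$ triples of the form
\[
\left\{B_1\otimes\cdots\otimes B_{i-1}\otimes A_i\otimes A_{i+1}\otimes\cdots\otimes A_n\right\}\, P_i\, \left\{B_1\otimes\cdots\otimes B_i\otimes A_{i+1}\otimes\cdots\otimes A_n\right\}
\]
on the joint space $\mathcal{H}_{P_1\|\cdots\|P_n}$ and chain them by (R.SC). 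The real content is therefore a single \emph{lifting lemma}: from validity of $\{A_i\}P_i\{B_i\}$ on $\mathcal{H}_{P_i}$, derive validity of $\{C\otimes A_i\otimes D\}P_i\{C\otimes B_i\otimes D\}$ on the joint space for arbitrary quantum predicates $C$, $D$ on the remaining components.

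To prove the lifting I would translate triples into weakest-(liberal-)precondition inequalities. Writing $\mathcal{E}_i=\llbracket P_i\rrbracket$ and its Heisenberg dual $\mathcal{E}_i^\ast$, partial (resp.\ total) correctness of $\{A\}P_i\{B\}$ amounts to $A\sqsubseteq\mathcal{E}_i^\ast(B)+I-\mathcal{E}_i^\ast(I)$ (resp.\ $A\sqsubseteq\mathcal{E}_i^\ast(B)$). Because $\mathit{var}(P_i)$ is disjoint from the other $\mathit{var}(P_j)$, one checks $(\mathcal{I}\otimes\mathcal{E}_i\otimes\mathcal{I})^\ast(C\otimes X\otimes D)=C\otimes\mathcal{E}_i^\ast(X)\otimes D$ and $(\mathcal{I}\otimes\mathcal{E}_i\otimes\mathcal{I})^\ast(I)=I\otimes\mathcal{E}_i^\ast(I)\otimes I$, where $\mathcal{I}$ is the identity super-operator on the appropriate factor.

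For total correctness the lifting is immediate: tensoring $A_i\sqsubseteq\mathcal{E}_i^\ast(B_i)$ on the left by $C$ and on the right by $D$ (both positive) preserves the L\"owner order and yields exactly the wp condition for the lifted triple. For partial correctness the same tensoring only gives
\[
C\otimes A_i\otimes D \ \sqsubseteq\ C\otimes\mathcal{E}_i^\ast(B_i)\otimes D + C\otimes(I-\mathcal{E}_i^\ast(I))\otimes D,
\]
whereas the wlp condition for the lifted triple requires the second summand to be $I\otimes(I-\mathcal{E}_i^\ast(I))\otimes I$. The gap
\[
I\otimes(I-\mathcal{E}_i^\ast(I))\otimes I - C\otimes(I-\mathcal{E}_i^\ast(I))\otimes D = (I-C)\otimes(I-\mathcal{E}_i^\ast(I))\otimes I + C\otimes(I-\mathcal{E}_i^\ast(I))\otimes(I-D)
\]
is a sum of tensor products of positive operators, using $C,D\sqsubseteq I$ (since they are predicates) and $\mathcal{E}_i^\ast(I)\sqsubseteq I$ (since $\llbracket P_i\rrbracket$ is trace-nonincreasing), hence itself positive. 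This closes the lifting.

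The main, and essentially only, obstacle is this positivity-of-the-gap step for partial correctness; everything else --- the wp/wlp translation, the calculation of the Heisenberg dual of the extended super-operator, and the $n-1$ applications of (R.SC) together with Lemma \ref{lem-seq}(2) --- is routine.
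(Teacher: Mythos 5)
Your proposal is correct, and it reorganizes the argument somewhat differently from the paper. The paper also invokes the Sequentialisation Lemma, but uses it (together with the Kraus form $E_{ik}\otimes I_{\overline{i}}$ of each $\llbracket P_i\rrbracket$ on the joint space) to compute the dual of the whole parallel composition at once, namely $\llbracket P_1\|\cdots\|P_n\rrbracket^\ast\bigl(\bigotimes_i B_i\bigr)=\bigotimes_i\llbracket P_i\rrbracket^\ast(B_i)$, and then establishes the wlp inequality for $\bigotimes_i A_i$ by expanding the $n$-fold tensor product of the inequalities $A_i\sqsubseteq\llbracket P_i\rrbracket^\ast(B_i)+(I_i-F_i)$ and bounding all cross terms via $\llbracket P_i\rrbracket^\ast(B_i)\sqsubseteq F_i\sqsubseteq I_i$, so that they telescope to $I-\bigotimes_i F_i$. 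You instead prove a single-component lifting lemma --- $\models\{A_i\}P_i\{B_i\}$ on $\mathcal{H}_{P_i}$ implies $\models\{C\otimes A_i\otimes D\}P_i\{C\otimes B_i\otimes D\}$ on the joint space --- and then chain $n$ lifted triples with intermediate assertions $B_1\otimes\cdots\otimes B_{i-1}\otimes A_i\otimes\cdots\otimes A_n$, using the (semantically sound) sequential composition rule and Lemma \ref{lem-seq}(2). The ingredients are the same (the dual formula for the cylindric extension of $\llbracket P_i\rrbracket$, positivity of tensor products of positive operators, $C,D\sqsubseteq I$ and $\mathcal{E}_i^\ast(I)\sqsubseteq I$), and your gap identity $I\otimes(I-\mathcal{E}_i^\ast(I))\otimes I - C\otimes(I-\mathcal{E}_i^\ast(I))\otimes D=(I-C)\otimes(I-\mathcal{E}_i^\ast(I))\otimes I + C\otimes(I-\mathcal{E}_i^\ast(I))\otimes(I-D)$ is exactly the two-factor analogue of the paper's telescoping bound; in effect you replace the paper's one global expansion by an induction on components. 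What your route buys is a cleaner modular structure (one lifting lemma reused $n$ times, no bookkeeping of cross terms); what the paper's route buys is an explicit closed form for the dual of the parallel composition, which it reuses elsewhere (e.g.\ in the completeness proofs). Both handle partial and total correctness, and both are valid in the separable, bounded-predicate setting assumed in the paper.
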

\begin{proof} See Appendix \ref{proof-TP-sound}.
\end{proof}

The rule (R.PC.P) can only be used to infer correctness of disjoint parallel quantum programs with respect to (tensor) product predicates. For instance, we can use (R.PC.P) to prove a very special case of (\ref{qdis-seq}) in Example \ref{equal-pred} with $\{p_i\}$ being a degenerate distribution at some $i_0$:
$$\vdash_\mathit{tot}\left\{|\varphi\varphi\rangle\langle\varphi\varphi|\right\}p:=U[p]\| q:=U[q]\left\{|\psi\psi\rangle\langle\psi\psi|)|\right\}$$ where $|\varphi\rangle=|\varphi_{i_0}\rangle$ and $|\psi\rangle=U|\varphi\rangle$,
but it is not strong enough to derive the entire (\ref{qdis-seq}).

\subsection{Separable Quantum Predicates}

A larger family of predicates in $\bigotimes_{i=1}^n\mathcal{H}_{P_i}$ than product predicates is separable predicates defined in the following:
\begin{defn} Let $A$ be a quantum predicate in $\bigotimes_{i=1}^n\mathcal{H}_{P_i}$. Then: \begin{enumerate}\item $A$ is said to be separable if there exist $p_j\geq 0$ and quantum predicates $A_{ji}$ in $\mathcal{H}_{P_i}$ $(i=1,...,n;\ j=1,...,m)$ such that $\sum_{j=1}^mp_j\leq 1$ and $$A=\sum_{j=1}^mp_j\left(\bigotimes_{i=1}^nA_{ji}\right)$$ where $m$ is a positive integer or $\infty$.
\item $A$ is entangled if it is not separable.\end{enumerate}
\end{defn}

A combination of rule (R.PC.P) with the auxiliary axioms and rules (R.CC), (Ax.Inv), (R.Inv) and (R.Lim) in Figure \ref{fig 3.41} yields rule (R.PC.S) in Figure \ref{fig 5}.
\begin{figure}[h]\centering
\begin{equation*}({\rm R.PC.S})\ \ \ \frac{\left\{A_{ji}\right\}P_i\left\{B_{ji}\right\}\ (i=1,...,n;\ j=1,...,m)}{\begin{array}{cc}&\left\{\sum_{j=1}^mp_j\left(\bigotimes_{i=1}^nA_{ji}\right)\right\}P_1\|\cdots\|P_n \left\{\sum_{j=1}^mp_j\left(\bigotimes_{i=1}^nB_{ji}\right)\right\}\end{array}}\end{equation*}
\caption{Rule for Separable Quantum Predicates.\ \ \ \ Coefficients $p_j\geq 0$ and $\sum_{j=1}^mp_j\leq 1$; $m$ is a positive integer or $\infty$.}\label{fig 5}
\end{figure}

Obviously, rule (R.PC.S) can reason about disjoint parallel quantum programs with separable quantum predicates; for example,
 correctness (\ref{qdis-seq}) in Example \ref{equal-pred} can be proved using rule (R.PC.S).

\subsection{Entangled Quantum Predicates}\label{subsec-entangle} It is well-understood that entangled states are indispensable physical resources that make quantum computers outperform classical computers.
Entangled quantum predicates represent quantum non-locality in a dual setup where more information can be revealed by joint (i.e. globally entangled) measurements than can be gained by local operations and classical communications (LOCC) \cite{Peres, Bennett}.  

Obviously, inference rule (R.PC.S) is unable to prove any correctness of the form $\{A\}P_1\|\cdots\|P_n\{B\}$ for a parallel quantum program $P_1\|\cdots\|P_n$ where $A$ or $B$ is an entangled predicate, as shown in the following:

\begin{exam}\label{correct-max-ent-1}We consider a variant of Example \ref{equal-pred}. For each orthonormal basis $\Phi=\{|\varphi_i\rangle\}$ of $\mathcal{H}$, we write:
$$\beta_\Phi=\frac{1}{\sqrt{d}}\sum_i|\varphi_i\varphi_i\rangle$$ for the maximally entangled state in $\mathcal{H}\otimes\mathcal{H}$, where $d=\dim\mathcal{H}$. Then $E_\Phi=|\beta_\Phi\rangle\langle\beta_\Phi|$ can be seen as another quantum counterpart of equality $x=y$ (different from $A_\Phi$ defined by equation (\ref{entangle-0})). Obviously, \begin{equation}\label{correct-max-ent}
\models_\mathit{tot}\left\{|\beta_\Phi\rangle\langle\beta_\Phi|\right\}p:=U[p]\|q:=U[q]\left\{|\beta_{U(\Phi)}\rangle\langle\beta_{U(\Phi)}|\right\};
\end{equation} that is, if the input is maximally entangled, so is the output after the same unitary operator is performed separately on two subsystems. Indeed, we can prove correctness (\ref{correct-max-ent}) by using rules (R.Seq) and (Ax.UT), but (\ref{correct-max-ent}) cannot be derived by directly using rule (R.PC.S).
\end{exam}

\subsection{Transferring Separable Predicates to Entangled Predicates}

Interestingly, a deep result in the theoretical analysis of NMR (Nuclear Magnetic Resonance) quantum computing provides us with a partial solution. It was discovered in \cite{Zy98, Braun99} that all mixed states of $n$ qubits in a sufficiently small neighbourhood of the maximally mixed state are separable. The interpretation of this result in physics is that entanglement cannot exist in the presence of too much noise. The result was generalised in \cite{Gur03} to the case of any quantum systems with finite-dimensional state Hilbert spaces.
Recall that the Hilbert-Schmidt norm (or $2$-norm) of operator $A$ is defined as follows:
$\|A\|_2=\sqrt{\mathit{tr}(A^\dag A)}.$ In particular, if $A=\left(A_{ij}\right)$ is a matrix, then $\|A\|_2=\sqrt{\sum_{i,j}|A_{ij}|^2}.$

\begin{thm}[Gurvits and Barnum \cite{Gur03}]\label{gur-thm} Let $\mathcal{H}_1,\cdots,\mathcal{H}_n$ be finite-dimensional Hilbert spaces, and let $A$ be a positive operator in $\bigotimes_{i=1}^n\mathcal{H}_n$. If $$\|A-I\|_2\leq \frac{1}{2^{n/2-1}}$$ where $I$ is the identity operator in $\bigotimes_{i=1}^n\mathcal{H}_n$, then $A$ is separable.\end{thm}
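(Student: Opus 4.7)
The plan is to reduce the claim to the so-called \emph{separability ball} around the maximally mixed state, and then to establish that ball via a signed product-state decomposition. Set $D=\prod_{i=1}^n \dim\mathcal{H}_i$ and, since $A\succeq 0$, normalise $\rho:=A/\tr(A)$. Applying Cauchy--Schwarz to $\tr(A-I)=\tr((A-I)\cdot I)$ shows that $\tr(A)$ differs from $D$ by at most $\sqrt{D}\cdot 2^{1-n/2}$, so $\|\rho-I/D\|_2\le 2^{1-n/2}/D$ up to a lower-order correction. Because the cone of separable operators is closed under positive rescaling, it suffices to prove that every density operator $\rho$ in this Hilbert--Schmidt ball around $I/D$ is separable.

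To establish the ball lemma, I would fix, for each factor $\mathcal{H}_i$, an orthonormal basis $\{G^{(i)}_j\}$ of Hermitian operators (with respect to Hilbert--Schmidt inner product) such that $G^{(i)}_0=I_{d_i}/\sqrt{d_i}$ and the remaining $G^{(i)}_j$ are traceless: Paulis in the qubit case and generalized Gell-Mann matrices otherwise. Expanding $\rho=I/D+\sum_{\vec j\neq 0}c_{\vec j}\bigotimes_i G^{(i)}_{j_i}$, the Hilbert--Schmidt hypothesis controls $\sum_{\vec j\neq 0}c_{\vec j}^{\,2}$. Each traceless local $G^{(i)}_j$ can be written as a scaled difference of two rank-one projectors, so every non-identity tensor factor $\bigotimes_i G^{(i)}_{j_i}$ becomes a signed sum of at most $2^n$ product projectors, and $\rho$ itself a signed linear combination of rank-one product states. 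A careful accounting then shows that the total negative weight is dominated by the positive weight coming from the $I/D$ term, which admits its own convex decomposition into product pure states through local resolutions of unity; absorbing the negatives yields a bona fide separable decomposition of $\rho$, and hence of $A$.

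The main obstacle is recovering the \emph{sharp} constant $2^{1-n/2}$, since a naive Cauchy--Schwarz bound on the expansion coefficients already eats one factor of $2^{n/2}$ and then additional losses arise when breaking each tensor basis element into product projectors. The cleanest route is to dualize and argue via entanglement witnesses: it suffices to show that every Hermitian $W$ with $\|W-I\|_2\le 2^{1-n/2}$ satisfies $\tr(W\sigma)\ge 0$ for all separable $\sigma$, equivalently that $\langle\phi_1\otimes\cdots\otimes\phi_n|\,W\,|\phi_1\otimes\cdots\otimes\phi_n\rangle\ge 0$ for all unit product vectors. This in turn reduces to the inequality $|\langle\phi_1\otimes\cdots\otimes\phi_n|(W-I)|\phi_1\otimes\cdots\otimes\phi_n\rangle|\le 2^{(n-2)/2}\,\|W-I\|_2$, in which the exponential factor emerges from the fact that product vectors, when expanded against the product operator basis above, have only $2^{-(n-2)/2}$ of their Hilbert--Schmidt mass on non-identity components. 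Matching this constant exactly is the technical heart of the argument and is where the Gurvits--Barnum proof invests most of its effort.
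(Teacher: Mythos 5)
First, a point of context: the paper does not prove this statement at all --- Theorem \ref{gur-thm} is quoted from Gurvits and Barnum \cite{Gur03} as an external result, and only Corollary \ref{cor-Gur} is derived from it internally. So there is no proof in the paper to compare against; your proposal has to stand on its own as a proof of the Gurvits--Barnum bound.

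On its own terms it has a genuine gap: the dualization in your last paragraph runs in the wrong direction. Showing that every Hermitian $W$ with $\|W-I\|_2\le 2^{1-n/2}$ satisfies $\langle\phi_1\otimes\cdots\otimes\phi_n|W|\phi_1\otimes\cdots\otimes\phi_n\rangle\ge 0$ only establishes that $W$ is nonnegative on product states (block-positive), and that is trivially true here: since $2^{1-n/2}\le 1$ for $n\ge 2$ and $\|W-I\|_\infty\le\|W-I\|_2$, the hypothesis already forces $W\succeq 0$. Block-positivity is far weaker than separability (every positive semidefinite operator is block-positive, and most are entangled), so this reduction proves nothing. The correct duality is the reverse one: $A$ is separable iff $\tr(AW)\ge 0$ for \emph{every} block-positive $W$, so what must actually be proved is the nontrivial witness inequality $\tr(W)\ge 2^{1-n/2}\,\|W\|_2$ for all block-positive $W$, from which $\tr(AW)=\tr(W)+\tr\bigl((A-I)W\bigr)\ge\tr(W)-\|A-I\|_2\|W\|_2\ge 0$. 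That inequality is precisely the technical heart of Gurvits and Barnum's argument (proved for $n=2$ and propagated to $n$ parties at the cost of a factor $\sqrt{2}$ per party), and nothing in your sketch supplies it. Your alternative constructive route --- expanding in a product Pauli/Gell-Mann basis and absorbing signed product projectors into the decomposition of the identity --- you yourself concede loses factors of $2^{n/2}$ and more, so it can only yield a much smaller separable ball (of the kind in \cite{Braun99}), not the stated radius. Incidentally, the supporting claim that product vectors carry only a $2^{-(n-2)/2}$ fraction of their Hilbert--Schmidt mass on non-identity components is false (for $|0\rangle^{\otimes n}$ of $n$ qubits the identity component has weight $2^{-n/2}$), though this becomes moot once the duality is set up in the correct direction.
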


The following corollary can be easily derived from the above theorem.

\begin{cor}\label{cor-Gur}For any two positive operators $A, B$ in $\bigotimes_{i=1}^n\mathcal{H}_i$, there exists $0<\epsilon\leq 1$ such that both $(1-\epsilon)I+\epsilon A$ and $(1-\epsilon)I+\epsilon B$ are separable.
\end{cor}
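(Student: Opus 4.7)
The plan is to apply Theorem \ref{gur-thm} directly to the operators $(1-\epsilon)I+\epsilon A$ and $(1-\epsilon)I+\epsilon B$, choosing $\epsilon$ small enough that both lie in the Hilbert--Schmidt ball of radius $2^{1-n/2}$ around $I$ guaranteed by Gurvits and Barnum to consist of separable operators.

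First I would verify positivity: since $A$ and $B$ are positive by hypothesis and $I$ is positive, any convex combination $(1-\epsilon)I+\epsilon A$ with $\epsilon\in(0,1]$ is a sum of positive operators and hence positive, and likewise for $B$. Next I would compute the Hilbert--Schmidt distance to the identity:
\begin{equation*}
\|(1-\epsilon)I+\epsilon A-I\|_2=\|\epsilon(A-I)\|_2=\epsilon\,\|A-I\|_2,
\end{equation*}
and analogously for $B$. Provided $A\neq I$ and $B\neq I$, setting
\begin{equation*}
\epsilon=\min\!\left(1,\ \frac{1}{2^{n/2-1}\|A-I\|_2},\ \frac{1}{2^{n/2-1}\|B-I\|_2}\right)
\end{equation*}
makes both distances at most $2^{1-n/2}$, so Theorem \ref{gur-thm} applies to each operator and yields separability. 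In the degenerate case $A=I$ or $B=I$, the corresponding norm is zero and that term is trivially separable for every $\epsilon$, so we simply omit it from the minimum; if both are $I$, any $\epsilon\in(0,1]$ works.

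There is no real obstacle here: the argument is a direct quantitative application of Theorem \ref{gur-thm}, and the only point requiring a line of care is the handling of the boundary cases $A=I$ or $B=I$ so that the chosen $\epsilon$ is strictly positive. The finite-dimensionality of each $\mathcal{H}_i$ is inherited from the hypothesis of Theorem \ref{gur-thm} and ensures that the Hilbert--Schmidt norms $\|A-I\|_2$ and $\|B-I\|_2$ are finite, which is what allows the choice of $\epsilon>0$ above.
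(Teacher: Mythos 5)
Your proposal is correct and follows essentially the same route as the paper: both compute $\|(1-\epsilon)I+\epsilon A-I\|_2=\epsilon\|A-I\|_2$ and choose $\epsilon$ small enough that Theorem \ref{gur-thm} applies to both operators simultaneously. Your additional care with the cap $\epsilon\leq 1$ and the degenerate cases $A=I$ or $B=I$ is a minor but welcome tightening of the paper's one-line choice of $\epsilon$.
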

\begin{proof}Let $C=(1-\epsilon)I+\epsilon A$ and $D=(1-\epsilon)I+\epsilon B$. Then $\|C-I\|_2=\epsilon\|A-I\|_2$ and $\|D-I\|_2=\epsilon\|B-I\|_2$. So, by Theorem \ref{gur-thm} it suffices to take $$\epsilon\leq\frac{1}{2^{n/2-1}\max [\|A-I\|_2, \|B-I\|_2]}.$$
\end{proof}

Motivated by Corollary \ref{cor-Gur}, we introduce a new inference rule (R.S2E) in Figure \ref{fig 6}.

\begin{figure}[h]\centering
\begin{equation*}({\rm R.S2E})\ \ \ \ \ \ \ \ \ \ \ \ \ \ \ \ \ \ \ \ \ \ \ \ \ \ \ \frac{\left\{(1-\epsilon)I+\epsilon A\right\}P\left\{(1-\epsilon)I+\epsilon B\right\}}{\left\{A\right\}P\left\{B\right\}}\ \ \ \ \ \ \ \ \ \ \ \ \ \ \ \ \ \ \ \ \ \ \ \ \ \ \ \ \ \ \ \ \ \ \ \ \ \ \ \ \ \ \end{equation*}
\caption{Rule for Transforming Separable Predicates to Entangled Predicates.\ \ \ \ Here, $0<\epsilon\leq 1$.}\label{fig 6}
\end{figure}

The idea behind rule (R.S2E) is that in order to prove correctness $\{A\}P_1\|\cdots\|P_n\{B\}$ for entangled predicates $A$ and $B$, we find a parameter $\epsilon>0$ such that $(1-\epsilon)I+\epsilon A$ and $(1-\epsilon)I+\epsilon B$ are separable, and then sometimes we can prove: \begin{equation}\label{mid-entangle}\{(1-\epsilon)I+\epsilon A\}P_1\|\cdots\|P_n\{(1-\epsilon)I+\epsilon B\}\end{equation} by using rule (R.PC.S). It is worth pointing out that Corollary \ref{cor-Gur} warrants that we can choose the same parameter $\epsilon$ in the precondition and postcondition.

\begin{exam}\label{exam:S2E} For $k=0,1$, consider the quantum program $\mathbf{while}_k$ given in Example 4.2. We write: $|\Phi\rangle=\frac{1}{\sqrt{2}}(|01\rangle+|10\rangle)$ for a maximally entangled state of a 2-qubit system. Then it holds that \begin{equation}\models_{par}\left\{I_4-\frac{1}{2}|10\rangle\langle10|\right\}\mathbf{while}_0\parallel\mathbf{while}_1\{|\Phi\rangle\langle\Phi|\},\label{equ:S2E}\end{equation} where $I_4$ is the $4\times 4$ unit matrix. The correctness formula (\ref{equ:S2E}) has entangled precondition and postcondition, and thus
cannot be proved by only using rule (R.PC.S). Here, we show that it can be proved by combining rule (R.S2E) with (R.PC.S). In fact, one can first verify that
\begin{equation}\models_{par}\left\{I_2-|\alpha|^2|1-k\rangle\langle1-k|\right\}\mathbf{while}_k\{|\psi\rangle\langle\psi|\}\label{equ:compo_form}\end{equation}
for $k=0,1$ and any state $|\psi\rangle=\alpha|k\rangle+\beta|1-k\rangle$, where $I_2$ is the $2\times 2$ unit matrix. Moreover, we write:
$$|\curvearrowright\rangle=\frac{1}{\sqrt{2}}(|0\rangle+\mathrm{i}|1\rangle),\ |\curvearrowleft\rangle=\frac{1}{\sqrt{2}}(|0\rangle-\mathrm{i}|1\rangle).$$ Then we have the following decomposition of separable operator:
\begin{align*}\left(1-\frac{2}{3}\right)I_4+\frac{2}{3}|\Phi\rangle\langle\Phi|=\frac{1}{3}(|01\rangle\langle01|&+|10\rangle\langle10|+|++\rangle\langle++|+|--\rangle\langle--|\\ &+|\curvearrowright\curvearrowright\rangle\langle\curvearrowright\curvearrowright|+|\curvearrowleft\curvearrowleft\rangle\langle\curvearrowleft\curvearrowleft|),\end{align*} and it is derived that \begin{equation}\label{equ:S2E1}\left\{I_4-\frac{1}{3}|10\rangle\langle10|\right\}\mathbf{while}_0\parallel\mathbf{while}_1\left\{(1-\frac{2}{3})I_4+\frac{2}{3}|\Phi\rangle\langle\Phi|\right\}\end{equation} by applying (\ref{equ:compo_form}) for $|\psi\rangle=|0\rangle,|1\rangle,|+\rangle,|-\rangle,|\curvearrowright\rangle,|\curvearrowleft\rangle$ and $i=0,1$, respectively, and applying rule (R.PC.S). Finally, correctness (\ref{equ:S2E}) is obtained by applying rule (R.S2E) to (\ref{equ:S2E1}) with $\epsilon=\frac{2}{3}$.
\end{exam}

We conclude this subsection by presenting the soundness of inference rule R.S2E).

\begin{lem}\label{S2E-sound}The rule (R.S2E) is sound for both partial and total correctness.
\end{lem}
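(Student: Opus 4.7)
The plan is to unfold the two definitions of validity (Definition \ref{correctness-interpretation}) and observe that the $(1-\epsilon)I$ summand contributes the same scalar multiple of $\mathit{tr}(\rho)$ on the left and $\mathit{tr}(\llbracket P\rrbracket(\rho))$ on the right, so subtracting it just leaves the desired inequality scaled by $\epsilon$. Since $\epsilon>0$, dividing through finishes the argument. The only arithmetic fact from semantics one needs is that $\mathit{tr}(\llbracket P\rrbracket(\rho))\leq\mathit{tr}(\rho)$, which is immediate because $P$ is built from trace-non-increasing quantum operations.

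More concretely, fix $\rho\in\mathcal{D}(\mathcal{H}_P)$ and let $t=\mathit{tr}(\rho)$, $t'=\mathit{tr}(\llbracket P\rrbracket(\rho))$, so that $t\geq t'$. For the total-correctness case, assume
$$\mathit{tr}\bigl(((1-\epsilon)I+\epsilon A)\rho\bigr)\leq \mathit{tr}\bigl(((1-\epsilon)I+\epsilon B)\llbracket P\rrbracket(\rho)\bigr).$$
Expanding both sides gives $(1-\epsilon)t+\epsilon\,\mathit{tr}(A\rho)\leq(1-\epsilon)t'+\epsilon\,\mathit{tr}(B\llbracket P\rrbracket(\rho))$. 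Rearranging, $\epsilon[\mathit{tr}(A\rho)-\mathit{tr}(B\llbracket P\rrbracket(\rho))]\leq -(1-\epsilon)(t-t')\leq 0$, and since $\epsilon>0$ we conclude $\mathit{tr}(A\rho)\leq\mathit{tr}(B\llbracket P\rrbracket(\rho))$, which is exactly $\models_{\mathit{tot}}\{A\}P\{B\}$.

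For the partial-correctness case, the hypothesis is instead
$$(1-\epsilon)t+\epsilon\,\mathit{tr}(A\rho)\leq (1-\epsilon)t'+\epsilon\,\mathit{tr}(B\llbracket P\rrbracket(\rho))+(t-t').$$
The term $(t-t')-(1-\epsilon)(t-t')$ on the right simplifies to $\epsilon(t-t')$, so after cancelling $(1-\epsilon)t$ against $(1-\epsilon)t'+(1-\epsilon)(t-t')=(1-\epsilon)t$, one obtains $\epsilon\,\mathit{tr}(A\rho)\leq \epsilon\,\mathit{tr}(B\llbracket P\rrbracket(\rho))+\epsilon(t-t')$; dividing by $\epsilon$ yields $\mathit{tr}(A\rho)\leq \mathit{tr}(B\llbracket P\rrbracket(\rho))+[\mathit{tr}(\rho)-\mathit{tr}(\llbracket P\rrbracket(\rho))]$, i.e. $\models_{\mathit{par}}\{A\}P\{B\}$.

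There is no real obstacle here, since the rule is essentially an algebraic rescaling that leaves the non-termination correction invariant. The only subtlety worth flagging in the writeup is why the two cases behave identically under this rescaling: in the total case the non-termination slack $t-t'$ is discarded on the favourable side of the inequality, while in the partial case it is preserved exactly because the $(1-\epsilon)$-contribution to the $\mathit{tr}(\rho)-\mathit{tr}(\llbracket P\rrbracket(\rho))$ correction matches what is already present, allowing a clean division by $\epsilon$.
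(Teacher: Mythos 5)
Your proof is correct and takes essentially the same route as the paper's: the paper performs the very same cancellation at the level of the L\"{o}wner order, using that the dual semantic function is subunital ($\llbracket P\rrbracket^\ast(I)\sqsubseteq I$), whereas you do it pointwise in $\rho$ via $\mathit{tr}(\llbracket P\rrbracket(\rho))\leq\mathit{tr}(\rho)$ --- equivalent formulations of trace non-increasingness. In both versions the crux is identical: the $(1-\epsilon)I$ contribution is absorbed (exactly, in the partial case, by the nontermination correction) and the remaining inequality is divided by $\epsilon>0$.
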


\begin{proof} See Appendix \ref{proof-S2E-sound}.\end{proof}

\subsection{Auxiliary Variables}\label{sss-aux}

It was shown in the last subsection that rule (R.S2E) can be used to derive correctness of some parallel programs with entangled preconditions or postconditions. 
But it is obviously not strong enough to deal with all entangled preconditions and postconditions because it is not always possible to find the same probability (sub-)distribution $\{p_j\}$ such that the precondition  and postcondition in (\ref{mid-entangle}) can be written as $\sum_jp_j\left(\bigotimes_{i=1}^n A_{ji}\right)$ and $\sum_jp_j\left(\bigotimes_{i=1}^nB_{ji}\right)$, respectively, but such a match of probabilities in the precondition and postcondition is required in applying rule (R.PC.S). In this subsection, we present another solution to the verification problem for entangled preconditions and postconditions; namely a combination of (R.PC.S) and several rules for introducing auxiliary variables.

It is interesting to note that rule (R.TI) in Figure \ref{fig 3.41} is a quantum generalisation of two rules (DISJUNCTION) and ($\exists$-INTRODUCTION) in Section 3.8 of \cite{Apt09}, where partial trace is considered as a quantum counterpart of logical disjunction and existence quantifier; and (R.SO) in Figure \ref{fig 3.41} is a quantum generalisation of rule (SUBSTITUTION) there, with the substitution $\overline{z}:=\overline{t}$ being replaced by a super-operator $\mathcal{E}$.

Let us start to introduce our method of using auxiliary variables by considering an example. 

\begin{exam}\label{exam-auxi} We use rule (R.PC.S) together with (R.TI) and (R.SO) to prove correctness (\ref{correct-max-ent}) in Example \ref{correct-max-ent-1}.
The key idea is to introduce two auxiliary variables $p^\prime,q^\prime$ with the same state space $\mathcal{H}$. First, by (Ax.UT) we have:
\begin{equation}\label{ent-aux-1} \vdash_\mathit{tot}\left\{\left(E_\Phi\right)_{pp^\prime}\right\}p:=U[p]\left\{|\alpha\rangle_{pp^\prime}\langle\alpha|\right\},\ \ \ \ \ \ \ \ \vdash_\mathit{tot}\left\{\left(E_\Phi\right)_{qq^\prime}\right\}q:=U[q]\left\{|\alpha\rangle_{qq^\prime}\langle\alpha|\right\}\end{equation} where  we use subscripts $p,q,p^\prime,q^\prime$ to indicate the corresponding subsystems, and $|\alpha\rangle=\sum_i\left(U|i\rangle\right)|i\rangle.$ Now applying rule (R.PC.S) to (\ref{ent-aux-1}) yields:
\begin{equation}\label{ent-aux-3}\vdash_\mathit{tot} \left\{\left(E_\Phi\right)_{pp^\prime}\otimes\left(E_\Phi\right)_{qq^\prime}\right\}p:=U[p]\| q:=U[q]\left\{|\alpha\rangle_{pp^\prime}\langle\alpha|\otimes|\alpha\rangle_{qq^\prime}\langle\alpha|\right\}\end{equation}
Finally, we define superoperator: 
$$\mathcal{E}(\rho)=\sum_i\left(|\beta\rangle_{p^\prime q^\prime}\langle i|\right)\rho\left(|i\rangle_{p^\prime q^\prime}\langle\beta|\right)$$ for all mixed states $\rho$ of $p'$ and $q'$, and obtain (\ref{correct-max-ent}) by applying rule (R.SO) to (\ref{ent-aux-3}) because \begin{align*}E_\Phi\otimes I_{p^\prime q^\prime}&=\mathcal{E}^\ast\left(\left(E_\Phi\right)_{pp^\prime}\otimes \left(E_\Phi\right)_{qq^\prime}\right)=\sum_i\left(|i\rangle_{p^\prime q^\prime}\langle\beta|\right)\left(\left(E_\Phi\right)_{pp^\prime}\otimes \left(E_\Phi\right)_{qq^\prime}\right)\left(|\beta\rangle_{p^\prime q^\prime}\langle i|\right),\\
E_{U(\Phi)}\otimes I_{p^\prime q^\prime}&=\mathcal{E}^\ast\left(|\alpha\rangle_{pp^\prime}\langle\alpha|\otimes|\alpha\rangle_{qq^\prime}\langle\alpha|\right)=\sum_i\left(|i\rangle_{p^\prime q^\prime}\langle\beta|\right)\left(|\alpha\rangle_{pp^\prime}\langle\alpha|\otimes|\alpha\rangle_{qq^\prime}\langle\alpha|\right)\left(|\beta\rangle_{p^\prime q^\prime}\langle i|\right).
\end{align*}\end{exam}

\subsection{Completeness Theorems}\label{dis-completeness}

Fortunately the strategy of introducing auxiliary variables used in Example \ref{exam-auxi} can be generalised to deal with all entangled preconditions and postconditions for disjoint parallel quantum programs. More precisely, it provides with us a (relatively) complete proof system for reasoning about disjoint parallel quantum programs. For total correctness, we have the following:

\begin{thm}[Completeness for Total Correctness of Disjoint Parallel Quantum Programs]\label{thm-complete-dis} Let proof systems $qPD$ be extended with the parallel composition rule (R.PC.P) for tensor products of quantum predicates and appropriate auxiliary rules: \begin{equation*}
{\rm qTP} = {\rm qTD} \cup \{{\rm (R.PC.P), (R.CC), (R.Lin), (R.SO), (R.TI), (R.Lim)}\}.
\end{equation*}
Then qPP is complete for total correctness of disjoint parallel quantum programs; that is, for any disjoint quantum programs $P_1,...,P_n$ and quantum predicates $A, B$: 
$$\models_\mathit{tot} \{A\}P_1\|\cdots\|P_n\{B\} \Leftrightarrow\ \vdash_\mathit{qTP}\{A\}P_1\|\cdots\|P_n\{B\}.$$
\end{thm}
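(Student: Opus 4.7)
Soundness of qTP is immediate: combining Theorem \ref{sound-complete} (soundness of qTD) with Lemma \ref{TP-sound} (soundness of (R.PC.P)) and Lemma \ref{Aux-Sound} (soundness of the auxiliary rules) covers every rule of qTP. For completeness, the plan is to lift the auxiliary-variable technique of Example \ref{exam-auxi} to arbitrary (possibly entangled) predicates $A,B$ on $\mathcal{H}_P = \bigotimes_{i=1}^n \mathcal{H}_{P_i}$. The key enabler is the tensor factorization $\sm{P_1\|\cdots\|P_n}=\bigotimes_{i=1}^n\sm{P_i}$ provided by Lemma \ref{lem-seq}, which reduces the parallel dynamics to a product of component channels; combined with the Choi--Jamio\l{}kowski correspondence, this lets me encode the entanglement of $A$ and $B$ as a super-operator acting on a fresh auxiliary register, while the original program is seen to act on a product precondition and deliver a product postcondition.

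Concretely, given $\models_\mathit{tot}\{A\}P_1\|\cdots\|P_n\{B\}$, I would introduce for each $i$ a fresh auxiliary register $\overline{q_i}$ with $\mathcal{H}_{\overline{q_i}}\cong\mathcal{H}_{P_i}$, set $\mathcal{H}_Q=\bigotimes_{i=1}^n\mathcal{H}_{\overline{q_i}}$, and let $|\beta_i\rangle\in\mathcal{H}_{P_i}\otimes\mathcal{H}_{\overline{q_i}}$ be a maximally entangled state. The central construction produces quantum predicates $A^{(i)},B^{(i)}$ on $\mathcal{H}_{P_i}\otimes\mathcal{H}_{\overline{q_i}}$ and a super-operator $\mathcal{E}$ on $\mathcal{H}_Q$ satisfying (i) $\mathcal{E}^\ast(\bigotimes_i A^{(i)}) = A\otimes I_Q$, (ii) $\mathcal{E}^\ast(\bigotimes_i B^{(i)}) = B\otimes I_Q$, and (iii) $\models_\mathit{tot}\{A^{(i)}\}P_i\{B^{(i)}\}$ for each $i$ (treating $\overline{q_i}$ as an inactive subsystem of $P_i$). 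Each $A^{(i)}$ is built from the projector $|\beta_i\rangle\langle\beta_i|$, and $\mathcal{E}$ is designed so that its dual $\mathcal{E}^\ast$ ``contracts'' the auxiliary indices via Choi--Jamio\l{}kowski to reproduce the prescribed operator on $\mathcal{H}_P$; when a single $\mathcal{E}$ does not suffice I will split $A$ (and correspondingly $B$) into finitely many positive summands and glue the subderivations using (R.CC) and (R.Lin).

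With such a decomposition in hand, the qTP derivation assembles smoothly: Theorem \ref{sound-complete} applied to each $P_i$ (using (R.TI) where needed to justify the extended state space $\mathcal{H}_{P_i}\otimes\mathcal{H}_{\overline{q_i}}$) yields $\vdash_\mathit{qTD}\{A^{(i)}\}P_i\{B^{(i)}\}$; rule (R.PC.P) combines these into $\{\bigotimes_i A^{(i)}\}P_1\|\cdots\|P_n\{\bigotimes_i B^{(i)}\}$; applying (R.SO) with $\mathcal{E}$ gives $\{A\otimes I_Q\}P_1\|\cdots\|P_n\{B\otimes I_Q\}$; and finally (R.TI) eliminates the auxiliary register to reach $\{A\}P_1\|\cdots\|P_n\{B\}$. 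When some $\mathcal{H}_{P_i}$ is infinite-dimensional, I will run the construction on an increasing chain of finite-dimensional subspaces and conclude by (R.Lim) in the weak operator topology.

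The hard part will be producing, for a single pair $(A,B)$, one super-operator $\mathcal{E}$ together with product predicates $A^{(i)},B^{(i)}$ that jointly satisfy (i)--(iii). The Choi--Jamio\l{}kowski isomorphism gives the right abstract picture (every operator on $\mathcal{H}_P$ arises as the dual image of a super-operator on $\mathcal{H}_Q$ evaluated at a maximally entangled Choi state), but the nontrivial point is compatibility: the same $\mathcal{E}$ must work for both $A$ and $B$, and the component triples must remain semantically valid. This is exactly where the factorization $\sm{P_1\|\cdots\|P_n}=\bigotimes_i\sm{P_i}$ of Lemma \ref{lem-seq} enters, and I expect to isolate the step as a ``product decomposition'' lemma for validity of Hoare triples on disjoint parallel quantum programs---most naturally factoring through the weakest precondition $\sm{P_1\|\cdots\|P_n}^\ast(B)$ in the product case and extending to the fully entangled case via a spectral decomposition combined by (R.CC) and (R.Lin).
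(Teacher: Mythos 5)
Your proposal follows essentially the same route as the paper's proof: reduce via (R.Or) to the weakest-precondition case $A=\sm{P_1\|\cdots\|P_n}^\ast(B)$, introduce a maximally entangled copy register for each component, prove the component triples with the postcondition being the maximally entangled projector and the precondition its weakest precondition, combine with (R.PC.P), then contract the auxiliary registers with a Choi--Jamio\l{}kowski-style super-operator via (R.SO), finish with (R.TI) and (R.Lin), and handle general $B$ and infinite dimensions by spectral decomposition with (R.CC) plus weak-limit arguments with (R.Lim) --- exactly the paper's three-stage argument (finite-rank pure $|\beta\rangle$, general pure $|\beta\rangle$, general $B$). The ``compatibility'' issue you flag as the hard part dissolves just as you anticipate: since the contraction acts only on the auxiliary registers, its dual commutes with $\bigotimes_i\sm{P_i}^\ast$, so the single super-operator built from $|\beta\rangle$ automatically sends the product preconditions to $\bigl(\bigotimes_i\sm{P_i}^\ast\bigr)(|\beta\rangle\langle\beta|)\otimes I$ (up to the normalization absorbed by (R.Lin)).
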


\begin{proof} The basic idea is essentially the same as Example \ref{exam-auxi}; namely: (1) introducing a fresh copy of each quantum variable as an auxiliary variable; (2) establishing the maximal entanglement between each original variable and its corresponding auxiliary variable; and (3) pushing certain entanglement between the auxiliary variables through the entanglement between the original and auxiliary variables to generate indirectly the entanglement between the original variables in precondition and postcondition. But the calculation is very involved, and we defer it to Appendix \ref{proof-d-complete}. 
\end{proof}

\begin{figure}[h]
\begin{align*}
&{\rm(R.PC.SP)}\quad\ \ \frac{P_i:{\rm Abort}(C_i)\qquad P_i:{\rm Term}(D_i)\qquad \left\{D_i+A_i\right\}P_i\left\{B_i\right\}\quad(i=1,\cdots,n)
}{\left\{I - \bigotimes_{i=1}^n (I_i-C_i) + \bigotimes_{i=1}^nA_i\right\} P_1\|\cdots\|P_n \left\{\bigotimes_{i=1}^nB_i\right\}}\\
&{\rm(R.A.P)}\qquad\ \ \frac{P_i:{\rm Abort}(A_i)\ (i=1,\cdots,n)}{P_1\|\cdots\|P_m: {\rm Abort}\left(I - \bigotimes_{i=1}^n (I_i-A_i)\right)}\\
&{\rm(R.T.P)}\qquad\ \ \frac{P_i:{\rm Term}(A_i)\ (i=1,\cdots,n)
}{P_1\|\cdots\|P_m: {\rm Term}\left(I - \bigotimes_{i=1}^m (I_i-A_i)\right)}
\end{align*}
\caption{Rules for Partial Correctness of Disjoint Parallel Programs.\ \ \ \ In this rules, $I_i$ is the identity operator on $\mathcal{H}_{P_i}$ for each $i$, and $I=\bigotimes_{i=1}^nI_i$ the identity operator on $\bigotimes_{i=1}^n\mathcal{H}_i$.}
\label{qPP-newrule}
\end{figure}

For partial correctness, however, we have to strengthen rule (R.PC.P) to (P.PC.SP) and introduce two rules for reasoning about abortion and termination of disjoint parallel programs. They are presented in Figure \ref{qPP-newrule}. With these new rules, we can prove the following: 

\begin{thm}[Completeness for Partial Correctness of Disjoint Parallel Quantum Programs]\label{complete-dis-p} Let proof systems $qPD$ be extended with the parallel composition rule (R.PC.SP) for tensor products of quantum predicates and appropriate auxiliary rules: \begin{equation*}
{\rm qPP} = {\rm qPD} \cup \{{\rm (R.PC.SP), (R.A.P), (R.T.P), (R.CC1), (R.CC2), (R.SO), (R.TI), (R.Lim)}\}.
\end{equation*}
Then qPP is complete for total correctness of disjoint parallel quantum programs; that is, for any disjoint quantum programs $P_1,...,P_n$ and quantum predicates $A, B$: 
$$\models_\mathit{par} \{A\}P_1\|\cdots\|P_n\{B\}\ {\rm implies}\ \vdash_\mathit{qPP}\{A\}P_1\|\cdots\|P_n\{B\}.$$
\end{thm}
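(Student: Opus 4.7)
The plan is to adapt the strategy of Theorem~\ref{thm-complete-dis} to the partial-correctness setting, replacing (R.PC.P) by the strengthened rule (R.PC.SP), using (R.A.P) and (R.T.P) to propagate abortion and termination data through parallel composition, and invoking the refined convex-combination rules (R.CC1) and (R.CC2) wherever (R.CC) was used, to compensate for the slack introduced by possible nontermination.

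I would first handle tensor-product postconditions $B=\bigotimes_{i=1}^n B_i$. By the sequentialisation lemma (Lemma~\ref{lem-seq}) and the complete-positive, linear structure of the semantic maps, $\llbracket P_1\|\cdots\|P_n\rrbracket=\bigotimes_{i=1}^n\llbracket P_i\rrbracket$, so the weakest liberal precondition factorises as
\begin{equation*}
\mathit{wlp}(P_1\|\cdots\|P_n,B)\ =\ \bigotimes_{i=1}^n\cE_i^\ast(B_i)+\Bigl(I-\bigotimes_{i=1}^n\cE_i^\ast(I_i)\Bigr),
\end{equation*}
where $\cE_i=\llbracket P_i\rrbracket$. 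Setting $A_i=\cE_i^\ast(B_i)$ and $C_i=D_i=I_i-\cE_i^\ast(I_i)$, the relative completeness of qPD (Theorem~\ref{sound-complete}) yields derivations of $\{D_i+A_i\}P_i\{B_i\}$, and by Remark~\ref{rem-au-rules}(2) also of $P_i:\mathrm{Abort}(C_i)$ and $P_i:\mathrm{Term}(D_i)$. Rule (R.PC.SP) then produces exactly $\{\mathit{wlp}(P_1\|\cdots\|P_n,B)\}P_1\|\cdots\|P_n\{B\}$, and (R.Or) weakens the precondition to the given $A$, which by hypothesis lies below this weakest liberal precondition.

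For a general, possibly entangled, postcondition $B$, I would follow the auxiliary-variable construction of Example~\ref{exam-auxi} and the proof of Theorem~\ref{thm-complete-dis}: introduce a fresh copy $V_i'$ of $\mathit{var}(P_i)$ for each $i$, establish the maximal entanglement between $\mathit{var}(P_i)$ and $V_i'$, and express $B$ as $\cF^\ast(B_0)$ for a tensor-product predicate $B_0$ on the augmented Hilbert space $\bigotimes_{i=1}^n(\mathcal{H}_{P_i}\otimes\mathcal{H}_{V_i'})$, where $\cF$ is a super-operator acting only on the auxiliary variables. Applying the tensor-product case on the augmented system (each $P_i$ being trivially extended to $V_i'$) and then eliminating the auxiliary variables via (R.SO) and (R.TI) yields the desired derivation on the original state space. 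The limit rule (R.Lim) handles separable decompositions with countably many terms and infinite-dimensional state spaces, while (R.CC1) and (R.CC2) are used to recover the precondition $A$ from the sub-convex or super-convex combinations produced by this encoding without gaining or losing probability mass through nontermination.

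The main obstacle is tracking abortion and termination information through the auxiliary-variable lift. Concretely, once $B$ has been lifted to a product form on the augmented space, one must verify that (i) the strongest abortion and termination predicates of each $P_i$, trivially extended to $\mathcal{H}_{V_i'}$, still characterise abortion and termination of the extended program, so that (R.A.P), (R.T.P) and (R.PC.SP) are applicable in the augmented system; and (ii) the compensating term $I-\bigotimes_{i=1}^n(I_i-C_i)$ produced by (R.PC.SP) descends correctly under (R.SO) and (R.TI) to the corresponding nontermination term of the weakest liberal precondition on the original system. This bookkeeping is precisely where the present proof diverges from that of Theorem~\ref{thm-complete-dis}, whose rule (R.PC.P) carries no abortion term; the detailed verification is therefore deferred to the appendix.
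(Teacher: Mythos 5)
Your plan is correct and follows essentially the same route as the paper's Appendix H proof: reduce via Lemma \ref{lem-correctness} and (R.Or) to deriving $\{I-\mathcal{E}^\ast(I)+\mathcal{E}^\ast(B)\}P_1\|\cdots\|P_n\{B\}$, lift the postcondition to a tensor-product predicate on an augmented space with maximally entangled auxiliary copies, apply (R.PC.SP) there with the abortion/termination predicates $I_i-\mathcal{E}_i^\ast(I_i)$, descend with (R.SO) and (R.TI), use (R.CC2) together with (R.T.P) (and (R.CC1) with (R.A.P)) to absorb the scaling factor and the nontermination compensation, and finish with spectral decompositions and (R.Lim) for general and infinite-dimensional $B$. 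The bookkeeping you flag as the main obstacle is exactly what the paper's Facts (e.g.\ the computation of $(\mathcal{I}_p\otimes\mathcal{F}_\beta^\ast)(D^\prime)$) verify, so no new idea is missing.
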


\begin{proof} The idea is similar to the proof of Theorem \ref{thm-complete-dis}, but the calculation is much more involved. We defer it to Appendix \ref{proof-dp-complete}.
\end{proof}

\begin{rem}\label{aux-dis-p}\begin{enumerate}\item Sequentialisation rule (R.Seq) and rule (R.S2E) for transforming separable predicates to entangled ones are not included in the proof systems qPP and qTP. 

\item The rule (R.A.P) in the proof system qPP is actually a special case of (R.PC.SP) with $A_i=\llbracket P_i\rrbracket^\ast(B_i)$ and $D_i=I_i-\llbracket P_i\rrbracket^\ast(I_i)$.  

\item Note that assertions $P_i:{\rm Abort}(C_i)$ and $P_i:{\rm Term}(D_i)$ appear in the premise of rule (R.PC.SP). As pointed out in Remark \ref{rem-au-rules}, the first assertion can be verified in qPD, and the second can be verified in qTD but not in qPD. On the other hand, So, qPP is only complete relative to a theory about termination assertions $P:{\rm Term}(D)$, which is a sub-theory of qTD.\end{enumerate}
\end{rem}

\section{Syntax and Semantics of Parallel Quantum Programs with Shared Variables}\label{sec-shared-varaiables}

Disjoint parallel quantum programs were considered in the last two sections. This and next sections are devoted to deal with a class of more general parallel quantum programs, namely parallel quantum programs with shared variables. In this section, we first introduce their  syntax and operational and denotational semantics. 

\subsection{Syntax}

In this subsection, we define the syntax of parallel quantum programs with shared variables by removing the constraint of disjoint variables in Definition \ref{syntax-dp}.

\begin{defn}\begin{enumerate}\item Component quantum programs are generated by the grammar given in Eqs. (\ref{syntax}), (\ref{syntax+}) and (\ref{syntax++}) together with the following clause for atomic regions: $$P::=\langle P_0\rangle$$ where $P_0$ is loop-free and contains no further atomic regions; that is, it is generated only by Eqs. (\ref{syntax}) and (\ref{syntax+}).
\item Parallel quantum programs (with shared variables) are generated by the grammar given in Eqs. (\ref{syntax}), (\ref{syntax+}) and (\ref{syntax++}) together with the following clause for parallel composition:$$P::=P_1\|\cdots \| P_n\equiv \|_{i=1}^n P_i$$ where $n>1$, and $P_1,...,P_n$ are component quantum programs.
\end{enumerate}\end{defn}

The syntax of parallel quantum programs defined above is similar to that of classical parallel programs. In particular, as in the classical case, atomic regions are introduced to prevent interference from other components in their computation. A \textit{normal}  subprogram of program $P$ is defined to be a subprogram of $P$ that does not occur within any atomic region of $P$.

The set of quantum variables in a parallel quantum program is defined as follows: $\mathit{var}(\langle P\rangle)=\mathit{var}(P),$ and if $P\equiv P_1\|\cdots\|P_n$ then $$\mathit{var}(P)=\bigcup_{i=1}^n\mathit{var}(P_i).$$ Furthermore, the state Hilbert space of a parallel quantum program $P$ is $\mathcal{H}_P=\mathcal{H}_{\mathit{var}(P)}.$ It is worth pointing out that in general for a parallel quantum program $P\equiv P_1\|\cdots\|P_n$ with shared variables, $$\mathcal{H}_P\neq\bigotimes_{i=1}^n\mathcal{H}_{P_i}$$ because it is not required that $\mathit{var}(P_1),\cdots,\mathit{var}(P_n)$ are disjoint.

\subsection{Semantics}

In this subsection, we further define the operational and denotational semantics of parallel quantum programs with shared variables. Superficially, they are straightforward generalisations of the corresponding notions in classical programming. But as we already saw in Subsection \ref{disjoint-syntax}, even for disjoint parallel quantum programs, nondeterminism induced by quantum measurements and its intertwining with parallelism; in particular when some infinite computations of loops are involved, make the semantics much harder to deal with than in the classical case. We will see shortly that shared quantum variables brings a new dimension of complexity. 

\begin{defn} The operational semantics of parallel quantum programs is defined by the transitions rules in Figures \ref{fig 3.1} and \ref{fig 2} and rule (AR) in Figure \ref{fig shared-1} for atomic regions:
\begin{figure}[h]\centering
\begin{equation*}({\rm AR})\ \ \ \ \ \ \ \ \ \ \ \ \ \ \ \ \ \ \ \ \ \ \ \ \frac{\langle P,\rho\rangle\rightarrow^\ast\langle
\downarrow, \rho^{\prime}\rangle} {\langle
\langle P\rangle,\rho\rangle\rightarrow\langle
\downarrow,\rho^\prime\rangle}\ \ \ \ \ \ \ \ \ \ \ \ \ \ \ \ \ \ \ \ \ \ \ \ \ \ \ \ \ \ \ \ \ \ \ \ \ \ \ \ \ \ \ \ \end{equation*}
\caption{Transition Rule for Atomic Regions}\label{fig shared-1}
\end{figure}
\end{defn}

The rule (AR) means that any terminating computation of $P$ is reduced to a single-step computation of atomic region $\langle P\rangle$. Such a reduction guarantees that a computation of $\langle P\rangle$ may not be interfered by other components in a parallel composition. The rule (PC) in Figure \ref{fig 2} applies to both disjoint and shared-variable parallelism. 

Based on the operational semantics defined above, the denotational semantics of parallel quantum programs with shared variables can be defined in a way similar to but more involved than Definition \ref{dp-den-sem}. 
First, for a program $P$ and an input $\rho$, we recall from equation (\ref{value-set}) that $\mathcal{V}(P,\rho)$ is the set of values $\mathit{val}(\pi)$, where $\pi$ ranges over all computations of $P$ starting in $\rho$. We further define the upper closure of $\mathcal{V}(P,\rho)$: 
$$\overline{\mathcal{V}(P,\rho)}=\left\{\bigsqcup_k \rho_k: \left\{\rho_k\right\}\ {\rm is\ an\ increasing\ chain\ in}\ (\mathcal{V}(P,\rho),\sqsubseteq)\right\},$$  
where $\sqsubseteq$ is the L\"{o}wner order, and $\bigsqcup_k\rho_k$ stands for the least upper bound of $\left\{\rho_k\right\}$ in CPO $\left(\mathcal{D}\left(\mathcal{H}_P\right),\sqsubseteq\right)$, which always exists (\cite{Ying16}, Lemma 3.3.2). Then we have: 

\begin{defn}[Denotational Semantics]\label{dp-den-sem-1} The semantic function of a parallel program $P$ (with shared variables) is the mapping $\llbracket P\rrbracket:\mathcal{D}(\mathcal{H}_P)\rightarrow2^{\mathcal{D}(\mathcal{H}_P)}$ defined by $$\llbracket P\rrbracket (\rho)=\left\{{\rm maximal\ elements\ of}\ \left(\overline{\mathcal{V}(P,\rho)},\sqsubseteq\right)\right\}$$ for any $\rho\in\mathcal{D}(\mathcal{H}_P)$. 
\end{defn}

Let us carefully explain the design decision behind the above definition. First, it follows from rule (AR) that the semantics of an atomic region $\langle P\rangle$ is the same as that of $P$ as a \textbf{while}-program; that is, for any input $\rho$: $$\llbracket\langle P\rangle\rrbracket(\rho)=\llbracket P\rrbracket(\rho).$$ Second, we notice a difference between Definition \ref{dp-den-sem} for disjoint parallelism and Definition \ref{dp-den-sem-1} for shared-variable parallelism:   
in the latter, $\llbracket P\rrbracket(\rho)$ consists of the maximal elements of $\overline{\mathcal{V}(P,\rho)}$, rather than simply $\mathcal{V}(P,\rho)$ as in the former. Indeed, it is easy to show that $\left(\overline{\mathcal{V}(P,\rho)},\sqsubseteq\right)$ is inductive; that is, it contains an upper bound of every increasing chain in it. Then we see that $\llbracket P\rrbracket (\rho)$ is nonempty by Zorn's lemma. In particular, if $(\mathcal{V}(P,\rho),\sqsubseteq)$ has a maximal element, then it must be in $\llbracket P\rrbracket (\rho)$. In general, however, for a parallel program $P$ with shared variables, $\mathcal{V}(P,\rho)$ may have no maximal element, as shown in the following:

\begin{exam}Consider parallel program: $$P\equiv \mathbf{while}\ M[q]=1\ \mathbf{do}\ q:=U[q]\ \|\ q:=V[q]$$ where: \begin{itemize}\item two processes share a variable $q$, which is a qutrit with state Hilbert space $\mathcal{H}_q=\mathit{span}\ \{|0\rangle,|1\rangle,|2\rangle\};$
\item measurement $M=\left\{M_0,M_1\right\}$ with $M_0=|2\rangle\langle 2|$ and $M_1=|0\rangle\langle 0|+|1\rangle\langle 1|$;
\item unitary operators:
\begin{align*} U=|+\rangle\langle +|+e^{i\pi c}|-\rangle\langle -|+|2\rangle\langle 2|,\qquad V=|1\rangle\langle 0|+|2\rangle\langle 1|+|0\rangle\langle 2|.
\end{align*} Here, $|\pm\rangle=\frac{1}{\sqrt{2}}(|0\rangle\pm |1\rangle).$
\end{itemize} For input pure state $|0\rangle$, we can calculate $\mathit{val}(\pi)$ for a computation $\pi$ of $P$ in the following cases:

{\vskip 3pt}

\textbf{Case 1}. The second component $q:=V[q]$ is executed first, and then the \textbf{while}-loop (i.e. the first component) is executed. Then the state is first changed from $|0\rangle$ to $|1\rangle$, and it is $U^n|1\rangle\in\mathit{span}\ \{|0\rangle,|1\rangle\}$ immediately after the $n$th iteration of $U$ in the loop body. So, the program never terminates, and $\mathit{val}(\pi)=0$.

{\vskip 3pt}

\textbf{Case 2}. The \textbf{while}-loop is executed first and the second component is never executed. Then the program does not terminate and $\mathit{val}(\pi)=0$.

{\vskip 3pt}

\textbf{Case 3}. The \textbf{while}-loop is executed first, and then the second component is executed during the $n$th iteration. Then either $V$ occurs before $U$, and it holds that 
\begin{align*}\mathit{val}(\pi)=M_0UVU^{n-1}|0\rangle\langle 0|U^{\dag n-1}V^\dag U^\dag M_0^\dag = |\langle 2|UVU^{n-1}|0\rangle|^2\cdot |2\rangle\langle 2|=|\langle 1|U^{n-1}|0\rangle|^2\cdot |2\rangle\langle 2|, 
\end{align*} or $U$ occurs before $V$, and  
\begin{align*}\mathit{val}(\pi)=M_0VU^{n}|0\rangle\langle 0|U^{\dag n} M_0^\dag = |\langle 2|VU^{n}|0\rangle|^2\cdot |2\rangle\langle 2|=|\langle 1|U^{n}|0\rangle|^2\cdot |2\rangle\langle 2|.
\end{align*} Note that $$|\langle 1|U^n|0\rangle|^2=|\frac{1-e^{i\pi nc}}{2}|^2=\frac{1-\cos\pi nc}{2}.$$ Then we obtain:$$\mathcal{V}(P,|0\rangle\langle 0|)=\{0\}\cup\left\{\frac{1-\cos\pi nc}{2}\cdot |2\rangle\langle 2|: n=0,1,2,...\right\}.$$ If we choose parameter $c$ being an irrational number, then by Kronecker's theorem we assert that the set $\left\{\frac{1-\cos\pi nc}{2}: n=0,1,2,...\right\}$ of coefficients is dense in the unit interval $[0,1]$, but the supremum $1$ is not attainable. Therefore, $\mathcal{V}(P,|0\rangle\langle 0|)$ has no maximal element with respect to the L\"{o}wner order $\sqsubseteq$. Furthermore, it holds that $\overline{\mathcal{V}(P,|0\rangle\langle 0|)}=\{a\cdot |2\rangle\langle 2|: a\in [0,1]\}$, and thus $\llbracket P\rrbracket(|0\rangle\langle 0|)=|2\rangle\langle 2|$. 
\end{exam}

To conclude this subsection, we present an example showing the difference between the behaviours of a quantum program and its atomic version in parallel with another quantum program involving a quantum measurement on a shared variable.
\begin{exam}\label{exam-one-more} Let $p,q$ be qubit variables and \begin{align*}P_1\equiv p:=H[p];p:=H[p],\ \ \ \ \ \ P_1^\prime\equiv \langle P_1\rangle,\ \ \ \ \ \
&P_2\equiv\mathbf{if}\ M[p]=0\rightarrow \mathbf{skip}\\
&\ \ \ \ \ \ \ \ \ \square\ \ \ \ \ \ \ \ \ \ \ \ \ 1\rightarrow q:=X[q]\\
&\ \ \ \ \ \ \ \ \ \mathbf{fi}
\end{align*} where $H, X$ are the Hadamard and Pauli gates, respectively and $M$ the measurement in the computational basis. Consider the EPR (Einstein-Podolsky-Rosen) pair $|\psi\rangle=\frac{1}{\sqrt{2}}(|00\rangle+|11\rangle)$ as an input, where the first qubit is $p$ and the second is $q$.
\begin{enumerate}\item One of the computations of parallel composition $P_1^\prime\|P_2$ is \begin{align*}
\pi_1=\langle P_1^\prime\|P_2,|\psi\rangle\rangle &\rightarrow_1\langle\downarrow\|P_2,|\psi\rangle\rangle\rightarrow_2\left\{\langle\downarrow\|\mathbf{skip},\frac{1}{\sqrt{2}}|00\rangle\rangle,\langle\downarrow\|q:=X[q],\frac{1}{\sqrt{2}}|11\rangle\rangle\right\}\\
&\rightarrow_2\left\{\langle\downarrow,\frac{1}{\sqrt{2}}|00\rangle\rangle,\langle\downarrow,\frac{1}{\sqrt{2}}|10\rangle\rangle\right\}
\end{align*}  Indeed, for all other computations $\pi$ of $P_1^\prime\|P_2$ starting in $|\psi\rangle$, we have: $$\mathit{val}(\pi)=\mathit{val}(\pi_1)=\frac{1}{2}(|00\rangle\langle 00|+|10\rangle\langle 10|)\stackrel{\triangle}{=}\rho_1,$$ and thus $\llbracket P_1^\prime\|P_2\rrbracket(|\psi\rangle)=\{\rho_1\}.$
\item $P_1\|P_2$ has a computation starting in $|\psi\rangle$ that is quite different from $\pi_1$:
\begin{align*}
\pi_2=\ &\langle P_1\|P_2,|\psi\rangle\rangle \rightarrow_1\left\{\langle p:=H[p]\|P_2,\frac{1}{\sqrt{2}}(|+0\rangle+|-1\rangle)\rangle\right\}\\ &\rightarrow_2\begin{cases}\langle p:=H[p]\|\mathbf{skip},\frac{1}{2}(|00\rangle+|01\rangle)\rangle,\\ \langle p:=H[p]\|q:=X[q],\frac{1}{2}(|10\rangle-|11\rangle)\rangle\end{cases} \rightarrow_2\begin{cases}\langle p:=H[p]\|\downarrow,\frac{1}{2}(|00\rangle+|01\rangle)\rangle,\\ \langle p:=H[p]\|\downarrow,\frac{1}{2}(|11\rangle-|10\rangle)\rangle\end{cases}\\
&\rightarrow_1\left\{\langle \downarrow,\frac{1}{2}(|+0\rangle+|+1\rangle)\rangle, \langle \downarrow,\frac{1}{2}(|-1\rangle-|-0\rangle)\rangle\right\}\end{align*}
\end{enumerate} We have: \begin{align*}\mathit{val}(\pi_1)\neq\mathit{val}(\pi_2)=\ &\frac{1}{4}(|00\rangle\langle 00|+|00\rangle\langle 11|+|01\rangle\langle 01|+|01\rangle\langle 10|\\ &+|10\rangle\langle 01|+|10\rangle\langle 10|+|11\rangle\langle 00|+|11\rangle\langle 11|)\stackrel{\triangle}{=}\rho_2
\end{align*} and $\llbracket P_1\|P_2\rrbracket(|\psi\rangle)=\{\rho_1,\rho_2\}$.
\end{exam}

The above example indicates that the determinism of the denotational semantics of disjoint parallel quantum programs (Lemma \ref{lem-det}) is no longer true for parallel quantum programs with shared variables.

\subsection{Correctness of Parallel Quantum Programs}

Now we can define the notion of correctness for parallel quantum programs with shared variables based on their denotational semantics introduced in the previous subsection. As pointed out at the beginning of last section, the definition of correctness of quantum \textbf{while}-programs (Definition \ref{correctness-interpretation}) can be directly adopted for disjoint parallel quantum programs. However, Example \ref{exam-one-more} shows that for a parallel quantum program $P$ with shared variables and an input $\rho$, $\llbracket P\rrbracket(\rho)$ may have more than one element. Therefore, the notion of correctness of quantum \textbf{while}-programs is not directly applicable to parallel quantum programs with shared variables. But a simple modification of it works.

\begin{defn}[Partial and Total Correctness]\label{correctness-parallel} Let $P$ be a parallel quantum program (with shared variables) and A, B quantum predicates in $\mathcal{H}_P$.
Then the correctness formula $\{A\}P\{B\}$ is true in
the sense of total correctness (resp. partial correctness), written $$\models_{\mathit{tot}}\{A\}P\{B\}\ \ \ \ (\mathit{resp.}\ \models_{\mathit{par}}\{A\}P\{B\}),$$ if for each
input $\rho\in\mathcal{D}(\mathcal{H}_P)$, it holds that \begin{align*}\tr(A\rho)\leq \tr(B\rho^\prime)\ \ \ \ 
(\mathit{resp.}\ \tr(A\rho)\leq \tr(B\rho^\prime)+
[\tr(\rho)-\tr(\rho^\prime)])\end{align*} for all $\rho^\prime\in\llbracket P\rrbracket(\rho)$.
\end{defn}

\section{Proof Rules for Parallel Quantum Programs with Shared Variables}\label{correct-shared}

Our aim of this section is to introduce some useful rules for reasoning about correctness of parallel quantum programs with shared variables. 
In Section \ref{dis-correctness}, we were able to develop a (relatively) complete logical system for disjoint parallel quantum programs by finding an appropriate quantum generalisation of a special case of rule (R.PC) in  Figure \ref{fig -1} (i.e. Hoare's parallel rule) together with several auxiliary rules. Unfortunately, the idea used in Section \ref{dis-correctness} does not work here because the third major challenge pointed out in the Introduction - combining quantum predicates in the overlap of state Hilbert spaces - will emerge in the case of shared variables. Let us gradually introduce a new idea to partially avoid this hurdle.  

\subsection{A Rule for Component Quantum Programs}
As a basis for dealing with parallel quantum programs, we first consider component quantum programs. The proof techniques for classical component programs can be generalised to the quantum case without any difficulty. More precisely,
partial and total correctness of component quantum programs can be verified with the proof system qPD and qTD for quantum \textbf{while}-programs plus the rule (R.AT) in Figure \ref{fig atomic} for atomic regions.
\begin{figure}[h]\centering
\begin{equation*}({\rm R.At})\ \ \ \ \ \ \ \ \ \ \ \ \ \ \ \ \ \ \ \ \ \ \ \ \ \ \ \frac{\{A\}P\{B\}}{\{A\}\langle P\rangle\{B\}}\ \ \ \ \ \ \ \ \ \ \ \ \ \ \ \ \ \ \ \ \ \ \ \ \ \ \ \ \ \ \ \ \ \ \ \ \ \ \ \ \ \ \end{equation*}
\caption{Rule for Atomic Regions.}\label{fig atomic}
\end{figure}

\subsection{Proof Outlines}\label{sec-outline}

The most difficult issue in reasoning about parallel programs with shared variables is interference between their different components. The notion of proof outline was introduced in classical programming theory so that the proofs of programs can be organised in a structured way. More importantly, it provides an appropriate way to describe interference freedom between the component programs --- a crucial premise in inference rule (R.PC) for a parallel program with shared variables. So in this subsection, we generalise the notion of proof outline to quantum \textbf{while}-programs so that it can be used in next subsection to present our inference rules for parallel quantum programs with shared variables.

\begin{defn} Let $P$ be a quantum \textbf{while}-program. A proof outline for partial correctness of $P$ is a formula $$\{A\}P^\ast\{B\}$$ formed by the formation axioms and rules in Figure \ref{fig 3.2+}, where $P^\ast$ results from $P$ by interspersing quantum predicates.
\begin{figure}[h]\centering
\begin{equation*}\begin{split}
&({\rm Ax.Sk'})\ \ \ \{A\}\mathbf{Skip}\{A\}\ \ \ \ \ \ \ \ \ \ \ \ \ \ \ \ \ \ \ \ \ \ \ \ \ \ \ \ \ \ \ \ \ \ \ \ \ \ \ \ ({\rm Ax.In'}) \ \ \ \left\{\sum_{i}|i\rangle_q\langle 0|A|0\rangle_q\langle
i|\right\}q:=|0\rangle\{A\}\\
&({\rm Ax.UT'})\ \ \
\{U^{\dag}AU\}\overline{q}:=U\left[\overline{q}\right]\{A\}\ \ \ \ \ \ \ \ \ \ \ \ \ \ \ \ \ \ \ \ \ \ \ \ \ \ \ ({\rm R.SC'})\ \ \
\frac{\{A\}P^\ast_1\{B\}\ \ \ \ \ \ \{B\}P^\ast_2\{C\}}{\{A\}P^\ast_1;\{B\}P^\ast_2\{C\}}\\
&({\rm R.IF'})\ \ \
\frac{\left\{A_{m_i}\right\}P^\ast_{m_i}\left\{B\right\}\ (i=1,...,k)}{\begin{array}{ccc}\left\{\sum_i^{k}
M_{m_i}^{\dag}A_{m_i}M_{m_i}\right\}\ \mathbf{if}\
M[\overline{q}]=m_1\rightarrow \left\{A_{m_1}\right\}P^\ast_{m_1}\\  \ \ \ \ \ \ \ \ \ \ \ \ \ \ \ \ \ \ \ \
............ \\
\ \ \ \ \ \ \ \ \ \ \ \ \ \ \ \ \ \ \ \ \ \ \ \ \ \ \ \ \ \ \ \ \ \square\ M[\overline{q}]=m_k\rightarrow \left\{ A_{m_k}\right\}P^\ast_{m_k}\\ \ \ \ \ \ \ \ \ \ \ \ \ \ \ \ \ \ \ \ \ \ \ \ \ \ \ \ \ \ \ \ \ \ \ \ \ \ \ \ \ \ \ \ \ \ \
\mathbf{fi}\ \{B\}\ \ \ \ \ \ \ \ \ \ \ \ \ \ \ \ \ \ \ \ \ \ \ \ \ \ \ \ \ \ \ \ \ \ \ \ \ \ \ \ \ \ \ \ \ \ \ \ \ \end{array}}\\
&({\rm R.LP'})\ \ \
\frac{\{B\}P^\ast\left\{M_0^{\dag}AM_0+M_1^{\dag}BM_1\right\}}{\begin{array}{ccc}\left\{\mathbf{inv}: M_0^{\dag}AM_0+M_1^{\dag}BM_1\right\}\ \mathbf{while}\
M[\overline{q}]=1\ \mathbf{do}\ \left\{ B\right\}\ P^\ast\ \left\{M_0^{\dag}AM_0+M_1^{\dag}BM_1\right\}\ \mathbf{od}\ \{A\}\end{array}}\\
&({\rm R.Or'})\ \ \ \frac{A\sqsubseteq
A^{\prime}\ \ \ \ \{A^{\prime}\}P^\ast\{B^{\prime}\}\ \ \ \
B^{\prime}\sqsubseteq B}{\{A\}\{A^\prime\}P\{B^\prime\}\{B\}}\ \ \ \ \ \ \ \ \ \ \ \ \ \ \ \ \ ({\rm R.Del})\ \ \ \ \frac{\{A\}P^\ast\{B\}}{\{A\}P^{\ast\ast}\{B\}}
\end{split}\end{equation*}
\caption{Formation Axioms and Rules for Partial Correctness of Quantum \textbf{while}-Programs.\ \ \ \ In (R.IF'), $\{m_1,...,m_k\}$ is the set of all possible outcomes of measurement $M$. In (R.Del), $P^{\ast\ast}$ is obtained by deleting some quantum predicates from $P^\ast$, expect those labelled with \textquotedblleft\textbf{inv}\textquotedblright.}\label{fig 3.2+}
\end{figure}
\end{defn}

Obviously, (Ax.Sk'), (Ax.In'), (Ax.UT') are the same as (Ax.Sk), (Ax.In) and (Ax.UT), respectively, in Figure \ref{fig 3.2}. But (R.SC'), (R.IF'), (R.LP') and (R.Or') in Figure \ref{fig 3.2+} are obtained from their counterparts in Figure \ref{fig 3.2} by interspersing intermediate quantum predicates in appropriate places; for example, in rule (R.IF'), a predicate $A_{m_i}$ is interspersed into the branch corresponding to measurement outcome $m_i$. In particular, keyword \textquotedblleft\textbf{inv}\textquotedblright\ is introduced in rule (R.LP') to indicate loop invariants (see \cite{YYW17}, Example 4.1 for a discussion about invariants of quantum \textbf{while}-loops).Furthermore, rule (R.Del) is introduced to delete redundant intermediate predicates.    

The notion of proof outline for total correctness of quantum \textbf{while}-programs can be defined in a similar way; but we omit it here because in the rest of this section, for simplicity of presentation, we only consider partial correctness of parallel quantum programs (the proof techniques introduced in this section can be easily generalised to the case of total correctness by adding ranking functions). 

We will mainly use a special form of proof outlines defined in the following:

\begin{defn}A proof outline $\{A\}P^\ast\{B\}$ of quantum \textbf{while}-program $P$ is called standard if every subprogram $Q$ of $P$ is proceded by exactly one quantum predicate, denoted $\mathit{pre}(Q)$, in $P^\ast$.\end{defn}

The following proposition shows that the notion of standard proof outline is general enough for our purpose.

\begin{prop}\begin{enumerate} For any quantum \textbf{while}-program $P$, we have: \item If $\{A\}P^\ast\{B\}$ is a proof outline for partial correctness, then $\vdash_\mathit{qPD}\{A\}P\{B\}$.
\item If $\vdash_\mathit{qPD}\{A\}P\{B\}$, then there is a standard proof outline $\{A\}P^\ast\{B\}$ for partial correctness.
\end{enumerate}
\end{prop}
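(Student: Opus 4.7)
The plan is to prove both parts by structural induction, exploiting the fact that the formation rules of Figure~\ref{fig 3.2+} are obtained from the proof rules of qPD in Figure~\ref{fig 3.2} by interspersing intermediate predicates, with (R.Del) and the $\mathbf{inv}$ annotations playing a bookkeeping role.

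For part (1), I would proceed by induction on the derivation tree of the proof outline $\{A\}P^\ast\{B\}$. Each formation axiom (Ax.Sk'), (Ax.In'), (Ax.UT') coincides verbatim with the corresponding qPD axiom, so the base cases are immediate. For each structural formation rule --- (R.SC'), (R.IF'), (R.LP'), (R.Or') --- erasing the interspersed predicates (including the $\mathbf{inv}$-labelled ones) from the conclusion produces exactly a standard Hoare triple, and its derivation in qPD is obtained by applying the matching rule of Figure~\ref{fig 3.2} to the qPD derivations that the induction hypothesis supplies for the premises. The only non-routine case is (R.Del): since deletion does not change the outermost pre/post conditions of the outline, the qPD derivation obtained for the premise is already a derivation for the conclusion.

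For part (2), I would again induct, this time on the structure of the qPD derivation of $\{A\}P\{B\}$, and simultaneously ensure that the constructed outline is standard. For the atomic axioms and for the structural rules (R.SC), (R.IF), (R.LP), the matching formation rules (R.SC'), (R.IF'), (R.LP') introduce exactly the intermediate predicates required by the definition of standardness (one before each branch, before the loop body, and as the loop invariant). After composing these constructions, I would apply (R.Del) at the end to prune any predicates that happen to occur in more than one position, so that every subprogram is preceded by exactly one predicate, except that all $\mathbf{inv}$-labelled invariants --- which (R.Del) is forbidden to remove --- are retained as required.

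The main obstacle will be the handling of the consequence rule (R.Or) in part (2). Its direct image (R.Or') produces the non-standard fragment $\{A\}\{A'\}P\{B'\}\{B\}$, which violates standardness on the nose. My plan is to normalise the qPD derivation first, pushing each (R.Or) application as deeply as possible until it sits immediately above an atomic axiom or is absorbed into an adjacent structural rule. Concretely, when (R.Or) is followed by (R.SC), the strengthening $A\sqsubseteq A'$ is moved into the pre-condition of the first component and the weakening $B'\sqsubseteq B$ into the post-condition of the last component, using transitivity of $\sqsubseteq$ and a single outer (R.Or'); when (R.Or) precedes (R.IF) or (R.LP), the outer weakening is similarly absorbed into $B$ or into the invariant computation. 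Finally, when (R.Or) is applied directly to an atomic axiom, the extra inner predicates $\{A'\}$ and $\{B'\}$ may be removed by (R.Del) since $P$ is then atomic and carries no proper subprograms whose interspersed predicates would be erased. This normalisation argument, together with the inductive construction above, yields a standard proof outline with top-level pre/post exactly $A$ and $B$.
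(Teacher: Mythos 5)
Your part (1) and the overall inductive skeleton of part (2) are exactly the paper's (very short) argument: induction on the length of the formation/proof, matching each formation rule with the corresponding qPD rule and conversely, with (R.Del) doing the bookkeeping. The place where your plan goes wrong is the treatment of (R.Or) in part (2). The normalisation you propose --- pushing every application of (R.Or) down until it sits above an atomic axiom or is ``absorbed'' into an adjacent structural rule --- does not go through for (R.IF) and (R.LP): the preconditions of those rules have a fixed syntactic shape, $\sum_m M_m^{\dag}A_mM_m$ for the case statement and $M_0^{\dag}AM_0+M_1^{\dag}BM_1$ for the loop, so a strengthening $A\sqsubseteq\sum_m M_m^{\dag}A_mM_m$ (and likewise a weakening of the loop's postcondition $A$, which is tied to the invariant) cannot in general be rewritten as strengthenings/weakenings of the branch or body triples. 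Hence outer applications of (R.Or) at compound statements cannot all be eliminated, and the case analysis on which your induction for part (2) rests is incomplete.

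The gap is easily repaired, and the repair is what the paper intends: there is no need to normalise at all. Intermediate outlines in the induction are not required to be standard, and the formation rule (R.Del) lets you delete \emph{any} chosen non-\textbf{inv} predicates, not only in the atomic case. So when the qPD derivation ends with (R.Or) you simply apply (R.Or') to obtain the outline fragment $\{A\}\{A'\}P^{\ast}\{B'\}\{B\}$, which composes with the other formation rules since its outer pre/postconditions are $A$ and $B$, and at the very end one application of (R.Del) removes $A'$, $B'$ and all other duplicated predicates, leaving exactly one predicate before each subprogram (the \textbf{inv}-labelled invariants being retained as required). Your restriction of (R.Del) to atomic $P$ is a misreading of the rule; once that is dropped, your induction closes and coincides with the paper's proof.
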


\begin{proof}This proposition can be easily proved by induction on the lengths of proof and formation; in particular, employing rule (R.Del).\end{proof}

The notion of proof outline enables us to present a soundness of quantum Hoare logic stronger than the soundness part of Theorem \ref{sound-complete}. It indicates that soundness is well maintained in each step of the proofs of quantum \textbf{while}-programs. To this end, we need an auxiliary notation defined in the following:

\begin{defn}\label{def-head}Let $P$ be a quantum \textbf{while}-program and $T$ a subprogram of $P$. Then $\mathit{at}(T,P)$ is inductively defined as follows:\begin{enumerate}\item If $T\equiv P$, then $\mathit{at}(T,P)\equiv P$;
\item If $P\equiv P_1;P_2$, then $$\mathit{at}(T,P)\equiv\begin{cases}\mathit{at}(T,P_1);P_2\ &{\rm when}\ T\ {\rm is\ a\ subprogram\ of}\ P_1,\\ \mathit{at}(T,P)\equiv\mathit{at}(T,P_2)\ &{\rm when}\ T\ {\rm is\ a\ subprogram\ of}\ P_2;\end{cases}$$ \item If $P\equiv\mathbf{if}\ (\square m\cdot M[\overline{q}]=m\rightarrow P_m)\ \mathbf{fi}$, then for each $m$, whenever $T$ is a subprogram of $P_m$, $\mathit{at}(T,P)\equiv\mathit{at}(T,P_m)$; \item If $P\equiv\mathbf{while}\ M[\overline{q}]=1\ \mathbf{do}\ P^\prime\ \mathbf{od}$ and $T$ is a subprogram of $P^\prime$, then $\mathit{at}(T,P)\equiv\mathit{at}(T,P^\prime);P$.
\end{enumerate}
\end{defn}

Intuitively, $at(T,P)$ is (a syntactic expression of) the remainder of program $P$ that is to be executed when the program control reach subprogram $T$. For a simple presentation, here we slightly abuse the notation $\mathit{at}(T,P)$ because the same subprogram $T$ can appear in different parts of $P$. So, $\mathit{at}(T,P)$ is actually defined for a fixed occurrence of $T$ within $P$. 

Now we are ready to present the strong soundness theorem for quantum \textbf{while}-programs.

\begin{thm}[Strong Soundness for Quantum \textbf{while}-Programs]\label{thm.strong-sound} Let $\{A\}P^\ast\{B\}$ be a standard proof outline for partial correctness of quantum \textbf{while}-program $P$. If $$\langle P,\rho\rangle\rightarrow^\ast\{|\langle P_i,\rho_i\rangle|\},$$ then: \begin{enumerate}
\item for each $i$, $P_i\equiv \mathit{at}(T_i,P)$ for some subprogram $T_i$ of $P$ or $P_i\equiv\ \downarrow$; and
\item it holds that $$\mathit{tr}(A\rho)\leq\sum_i\mathit{tr}\left(B_i\rho_i\right),$$ where $$B_i=\begin{cases}B\ &{\rm if}\ P_i\equiv\ \downarrow,\\
\mathit{pre}\left(T_i\right)\ &{\rm if}\ P_i\equiv\mathit{at}\left(T_i,P\right).
\end{cases}$$
\end{enumerate}
\end{thm}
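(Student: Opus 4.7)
The plan is to prove both parts simultaneously by induction on the number $n$ of ensemble transition steps in $\langle P,\rho\rangle \rightarrow^n \{|\langle P_i,\rho_i\rangle|\}$. The base case $n=0$ is immediate: the ensemble is the singleton $\{|\langle P,\rho\rangle|\}$, and taking $T_1\equiv P$ yields $\mathit{at}(T_1,P)\equiv P$ by Definition \ref{def-head}, while $\mathit{pre}(P)=A$ because $\{A\}P^\ast\{B\}$ is a \emph{standard} proof outline (so every subprogram, including $P$ itself, carries a designated precondition). Both claims then hold with equality.

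For the inductive step, rules (MS1) and (MS2) allow a reduction to the case of a single configuration transition: (MS2) partitions the ensemble into parts, some of which remain fixed and the others of which each undergo one step, and since $\sum_i \mathit{tr}(B_i\rho_i)$ is additive, it suffices to check, for every single-configuration transition $\langle \mathit{at}(T,P),\sigma\rangle\rightarrow \{|\langle Q_j,\sigma_j\rangle|\}$, the \emph{local invariant}: (i) each $Q_j$ is $\downarrow$ or of the form $\mathit{at}(T_j',P)$ for some subprogram $T_j'$, and (ii) $\mathit{tr}(\mathit{pre}(T)\,\sigma) \leq \sum_j \mathit{tr}(B_j'\,\sigma_j)$ with $B_j'=\mathit{pre}(T_j')$ or $B_j'=B$. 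I would prove this inner statement by case analysis on the outermost constructor of $T$, matching each operational rule in Figures \ref{fig 3.1} and \ref{fig ext-3.1} to its formation counterpart in Figure \ref{fig 3.2+}. For the atomic cases $T\equiv\mathbf{skip}$, $T\equiv q:=|0\rangle$ and $T\equiv\overline{q}:=U[\overline{q}]$, the required inequality is exactly the identity supplied by axiom (Ax.Sk$'$), (Ax.In$'$) or (Ax.UT$'$) read trace-wise. For $T$ a conditional, (R.IF$'$) gives $\mathit{pre}(T)=\sum_i M_{m_i}^\dagger A_{m_i} M_{m_i}$ with $A_{m_i}=\mathit{pre}(P_{m_i})$, and the basic identity $\mathit{tr}(M_{m_i}^\dagger A_{m_i}M_{m_i}\sigma)=\mathit{tr}(A_{m_i}M_{m_i}\sigma M_{m_i}^\dagger)$ matches rule (IF$'$) branch by branch. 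The loop case is handled analogously using the $\mathbf{inv}$-labelled predicate in (R.LP$'$), which splits trace-wise into a terminating summand matching the post-loop predicate $A$ and an iterating summand matching $\mathit{pre}(P')=B$ followed by re-entry into the loop.

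The principal obstacle is a companion syntactic bookkeeping lemma that underwrites (i) and aligns $\mathit{pre}(T')$ with the postcondition produced by the local Hoare axiom for $T$. One needs to define, by structural recursion on $P$ mirroring Definition \ref{def-head}, the \emph{syntactic successor} $T'$ of a non-terminal subprogram $T$ inside $P$, and to show that $\mathit{at}(T,P)$ rewrites in one step to $\mathit{at}(T',P)$ (or to $\downarrow$ when $T$ is terminal in $P$), with $\mathit{pre}(T')$ equal to the postcondition that the corresponding formation rule of Figure \ref{fig 3.2+} attaches to $T$. The delicate part is threading this through nested contexts: a $;$-composition passes control to the next component, a conditional branch exits past the \textbf{if} to whatever follows, and a while-body re-enters the loop. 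Once this bridging lemma is in hand, each case of the inductive step collapses to a one-line trace computation, and applications of (R.Or$'$) are absorbed automatically by standardness since they only tighten the already-attached pre- and post-conditions.
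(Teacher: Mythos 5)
Your proposal is correct and follows essentially the same route as the paper's proof: induction on the length of the transition sequence, reduction via (MS1)/(MS2) to a single-configuration step, and a case analysis matching each operational rule with the corresponding formation rule of the proof outline (IF$'$ with R.IF$'$, L$'$ with R.LP$'$, and the atomic cases), closed by trace computations. The only cosmetic differences are that the paper needs only $A\sqsubseteq\mathit{pre}(P)$ (not equality) in the base case and handles your ``syntactic successor'' bookkeeping implicitly inside the case analysis via $\mathit{at}(T,P)$, absorbing (R.Or$'$) through L\"{o}wner-order inequalities exactly as you indicate.
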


\begin{proof} See Appendix \ref{proof-strong-sound}.\end{proof}

The soundness for quantum \textbf{while}-programs given in Theorem \ref{sound-complete} can be easily derived from the above theorem. Of course, the above theorem is a generalisation of the strong soundness for classical \textbf{while}-programs (see \cite{Apt09}, Theorem 3.3). But it is worthy to notice a major difference between them: due to the branching caused by quantum measurements, in the right-hand side of the inequality in clause (2) of the above theorem, we have to take a summation over a configuration ensemble $\{|\langle P_i,\rho_i\rangle|\}$ rather than considering a single configuration $\langle P_i,\rho_i\rangle$.

Proof outlines for partial correctness of component quantum programs are generated by the rules in Figure \ref{fig 3.2+} together with the rule (R.At') in Figure \ref{fig atomic+}. A proof outline of a component program $P$ is standard if every normal subprogram $Q$ is preceded by exactly one quantum predicate $\mathit{pre}(Q)$. The notation $\mathit{at}(T,P)$ is defined in the same way as in Definition \ref{def-head}, but only for normal subprograms $T$ of $P$.
The strong soundness theorem for quantum \textbf{while}-programs (Theorem \ref{thm.strong-sound}) can be easily generalised to the case of component quantum programs.
\begin{figure}[h]\centering
\begin{equation*}({\rm R.At'})\ \ \ \ \ \ \ \ \ \ \ \ \ \ \ \ \ \ \ \ \ \ \ \ \ \ \ \frac{\{A\}P^\ast\{B\}}{\{A\}\langle P\rangle\{B\}}\ \ \ \ \ \ \ \ \ \ \ \ \ \ \ \ \ \ \ \ \ \ \ \ \ \ \ \ \ \ \ \ \ \ \ \ \ \ \ \ \ \ \end{equation*}
\caption{Rule for Atomic Regions.}\label{fig atomic+}
\end{figure}

\subsection{Interference Freedom} With the preparation given in the previous subsection, we can consider how can we reason about correctness of parallel quantum programs with shared variables. Let us start from the following example showing non-compositionality in the sense that correctness of a parallel quantum program is not solely determined by correctness of its component programs.

\begin{exam}Let $q$ be a quantum variable of type $\mathbf{Bool}$ (Boolean) or $\mathbf{Int}$ (Integers). Consider the following two programs:\begin{align*}
&P_1\equiv q:=U[q],\ \ \ \ \ P_1^\prime\equiv q:=V[q]; q:=W[q]
\end{align*} where $U,V,W$ are unitary operators in $\mathcal{H}_q$ such that $U=WV$. It is obvious that $P_1$ and $P_1^\prime$ are equivalent in the following sense: for any quantum predicates $A,B$ in $\mathcal{H}_q$, $$\models_\mathit{par}\{A\}P_1\{B\}\Leftrightarrow\ \models_\mathit{par}\{A\}P_1^\prime\{B\}.$$ Now let us further consider their parallel composition with the simple initialisation program: $$P_2\equiv q:=|0\rangle.$$
We show that $P_1\|P_2$ and $P_1^\prime\|P_2$ are not equivalent; that is, $$\models_\mathit{par}\{A\}P_1\|P_2\{B\}\Leftrightarrow\ \models_\mathit{par}\{A\}P_1^\prime\|P_2\{B\}$$ is not always true. Let us define the deformation index of unitary operator $U$ as $$D(U)=\inf_\rho\frac{\langle 0|U\rho U^\dag|0\rangle}{\langle 0|\rho|0\rangle}.$$ Then we have:
 \begin{align}\label{deformation-1}&\models_\mathit{par}\{\lambda\cdot |0\rangle\langle 0|\}P_1\|P_2\{|0\rangle\langle 0|\}\ {\rm if\ and\ only\ if}\ \lambda\leq \min\left[D(U), |\langle 0|U|0\rangle|^2\right];\\
 \label{deformation-2}&\models_\mathit{par}\{\lambda\cdot |0\rangle\langle 0|\}P_1^\prime\|P_2\{|0\rangle\langle 0|\}\ {\rm if\ and\ only\ if}\ \lambda\leq \min\left[D(U), D(V)\cdot |\langle 0|W|0\rangle|^2, |\langle 0|U|0\rangle|^2\right].\end{align}  It is easy to see that the partial correctness in (\ref{deformation-1}) is true but the one in (\ref{deformation-2}) is false when $q$ is a qubit, $\lambda=1$, $U=I$ (the identity) and $V=W=H$ is the Hadamard gate.
\end{exam}

The above example clearly illustrates that as in the case of classical parallel programs, we have to take into account interference between the component programs of a parallel quantum program. Moreover, appearance of parameter $\lambda$ in Eqs. (\ref{deformation-1}) and (\ref{deformation-2}) indicates that interference between quantum programs is subtler than that between classical programs. It motivates us to introduce a parameterised notion of interference freedom for quantum programs. Let us first consider interference between a quantum predicate and a proof outline.

\begin{defn}Let $0\leq \lambda\leq 1$, and let $A$ be a quantum predicate and $\{B\}P^\ast\{C\}$ a standard proof outline for partial correctness of quantum component program $P$. We say that $A$ is $\lambda$-interference free with $\{B\}P^\ast\{C\}$ if: \begin{itemize}
\item for any atomic region, normal initialisation or unitary transformation $Q$ in $P$, it holds that \begin{equation}\label{interf-1}\models_\mathit{par}\{\lambda A+(1-\lambda)\mathit{pre}(Q)\}Q\{\lambda A+(1-\lambda)\mathit{post}(Q)\}\end{equation} where $\mathit{post}(Q)$ is the quantum predicate immediately after $Q$ in $\{B\}P^\ast\{C\}$;
\item for any normal case statement $Q\equiv\mathbf{if}\ (\square\ M[q]=m\rightarrow Q_m)\ \mathbf{fi}$ in $P$, it holds that
\begin{equation}\label{interf-2}\begin{split}\models_\mathit{par}\{\lambda A+(1-\lambda)\mathit{pre}(Q)\}\ &\mathbf{if}\ \left(\square M[q]=m \rightarrow\ \{\lambda A+(1-\lambda)\mathit{post}_m(Q)\}\ Q_m\right)\\ &\mathbf{fi}\ \{\lambda A+(1-\lambda)\mathit{post}(Q)\}\end{split}\end{equation}
where $\mathit{post}_m(Q)$ is the quantum predicate immediately after the $m$th branch of $Q$ in $\{B\}P^\ast\{C\}$.\end{itemize}
\end{defn}

\begin{rem} The reader might be wondering about why $\mathit{post}(Q)$ and $\mathit{post}_m(Q)$ appear in equations (\ref{interf-1}) and (\ref{interf-2}). This looks very different from the classical case. When defining interference freedom of $A$ with $\{B\}P^\ast\{C\}$ for a classical program $P$, we only require that \begin{equation}\label{interf-3.0}\models_\mathit{par}\{A\wedge\mathit{pre}(Q)\}Q\{A\}\end{equation} for each basic statement $Q$ in $P$ (see \cite{Apt09}, Definition 8.1). Actually, the difference between the classical and quantum cases is not as big as what we think at the first glance. In the classical case, condition (\ref{interf-3.0}) can be combined with $$\models_\mathit{par}\{\mathit{pre}(Q)\}Q\{\mathit{post}(Q)\},$$ which holds automatically, to yield: \begin{equation}\label{interf-3}\models_\mathit{par}\{A\wedge\mathit{pre}(Q)\}Q\{A\wedge\mathit{post}(Q)\}.\end{equation} If conjunctive $\wedge$ in equation (\ref{interf-3}) is replaced by a convex combination (with probabilities $\lambda$ and $1-\lambda$), then we obtain equations (\ref{interf-1}) and (\ref{interf-2}).
\end{rem}

The above definition can be straightforwardly generalised to the notion of interference freedom between a family of proof outlines, where noninterference between each quantum predicate in one proof outline and another proof outline is required. 

\begin{defn}Let $\{A_i\}P_i^\ast\{B_i\}$ be a standard proof outline for partial correctness of quantum component program $P_i$ for each $1\leq i\leq n$. \begin{enumerate}\item If $\Lambda=\{\lambda_{ij}\}_{i\neq j}$ is a family of real numbers in the unit interval, then we say that $\{A_i\}P^\ast\{B_i\}$ $(i=1,...,n)$  are $\Lambda$-interference free whenever for any $i\neq j$, each quantum predicate $C$ in $\{A_i\}P_i^\ast\{B_i\}$ is $\lambda_{ij}$-interference free with $\{A_j\}P^\ast_j\{B_j\}$.
\item In particular, $\{A_i\}P^\ast\{B_i\}$ $(i=1,...,n)$  are said to be $\lambda$-interference free if they are $\Lambda$-interference free for $\Lambda=\{\lambda_{ij}\}_{i\neq j}$ with $\lambda_{ij}\equiv\lambda$ (the same parameter) for all $i\neq j$.
\end{enumerate}\end{defn}

\subsection{A Rule for Parallel Composition of Quantum Programs with Shared Variables}\label{sec-local-H}

The notion of interference freedom introduced above provides us with a key ingredient in defining a quantum extension of inference rule (R.PC) for parallelism with shared variables. Another key ingredient would be a quantum generalisation of the logical conjuction used in combining the preconditions and postconditions. As discussed in the Introduction, tensor product is not appropriate for this purpose, but probabilistic (convex) combination can serve as a kind of approximation of conjunction. This idea leads to rule (R.PC.L) in Figure \ref{fig 3+s}.

\begin{figure}[h]\centering
\begin{align*}({\rm R.PC.L})\ \ \ \ \ \ \ \ \ \ \ \ \frac{{\rm Standard\ proof\ outlines}\ \left\{A_i\right\}P^\ast_i\left\{B_i\right\} (i=1,...,n)\ {\rm are}\ \Lambda{\rm -interference\ free}}{\left\{\sum_{i=1}^np_iA_i\}P_1\|\cdots\|P_n\{\sum_{i=1}^np_iB_i\right\}}\ \ \ \ \ \ \ \ \ \ \ \ \ \ \ \ \ \ \ \ \ \
\end{align*}
\caption{Rule for Parallel Quantum Programs with Shared Variables.\ \ \ \ \ \ $\{p_i\}_{i=1}^n$ is a probability distribution, and $\Lambda=\{\lambda_{ij}\}_{i\neq j}$ satisfies: $\sum_{i\neq j}\frac{p_i}{\lambda_{ij}}\leq 1$ for every $j$.}\label{fig 3+s}
\end{figure}

It is worth carefully comparing rule (R.PC.L) with (R.PC.P) for disjoint parallel quantum programs. First, $\Lambda$-interference freedom in (R.PC.L) is not necessary in (R.PC.P), since disjointness implies interference freedom. Second, conjunctions $\bigwedge_iA_i$ and $\bigwedge_iB_i$ of preconditions and postconditions in rule (R.PC) for classical parallel programs are replaced by tensor products $\bigotimes_iA_i$ and $\bigotimes_iB_i$ in (R.PC.P). But in (R.PC.L), programs $P_1,...,P_n$ are allowed to share variables, the tensor products of preconditions and postconditions are then not always well-defined. So, we choose to use probabilistic combinations $\sum_ip_iA_i$ and $\sum_ip_iB_i$. 
Obviously, probabilistic combination is not a perfect quantum generalisation of conjunction. 

Let us first give a simple example to illustrate how to use rule (R.PC.L) in reasoning about shared-variable parallel quantum programs. 

\begin{exam} Let $q_1,q_2,r$ be three qubit variables, and let $P_i$ be a quantum programs with variables $q_i$ and $r$:
\begin{align*}
P_i&\equiv q_i := |0\>;\ q_i := H[q_i];\ q_i,r := {\rm CNOT}[q_i,r]
\end{align*}
for $i=1,2$, where ${\rm CNOT}$ is the control-NOT gate with $q_i$ as the control qubit and $r$ as the data qubit, and $H$ is the Hadamard gate. Note that $P_1$ and $P_2$ have a shared variable $r$. We consider their parallel composition $P_1\|P_2$. 
Using rule (R.PC.L), we can derive its correctness formula:
\begin{equation}\vdash_\mathit{par}\left\{\frac{\sqrt{2}}{2}|\psi\>\<\psi|\right\}P_1\parallel P_2\left\{|\psi\>\<\psi|\right\}.\label{equ:share-form}\end{equation}
where the pure state $|\psi\rangle$ in the precondition and postcondition is given as follows:
$$|\psi\> = \frac{\sqrt{2}+1}{4}[|000\>+|001\>]+\frac{\sqrt{2}-1}{4}[|110\>+|111\>]+\frac{1}{4}[|010\>+|011\>+|100\>+|101\>].$$
with the order of register: $q_1,q_2,r$. 
First, we have the proof outlines of $P_i$:
\begin{align*}
&\left\{\frac{\sqrt{2}}{2}|\psi\>\<\psi|\right\}q_i := |0\>\{|\psi\>\<\psi|\} q_i:=H[q_i]\{|\psi\>\<\psi|\} q_i,r := {\rm CNOT}[q_i,r]\{|\psi\>\<\psi|\}
\end{align*}
for $i=1,2$, respectively. Moreover, one can verify that these two proof outlines are $0.5$-interference free because 
$$\vdash_{qPD}\left\{\frac{2+\sqrt{2}}{4}|\psi\>\<\psi|\right\}q_i := |0\>\{|\psi\>\<\psi|\}.$$
Then (\ref{equ:share-form}) is derived from (R.PC.L) with $p_0=0.5, p_1=0.5$.

One may show that with the postcondition $|\psi\>\<\psi|$, the maximal factor $c$ which guarantees validity of the correctness formula
$$\models_\mathit{par}\{c|\psi\>\<\psi|\}P_1\|P_2\{|\psi\>\<\psi|\}$$
is $c_{\max} = \frac{3+2\sqrt{2}}{8} \approx 0.728$. The the factor $\frac{\sqrt{2}}{2} \approx 0.707$ we derived in (\ref{equ:share-form}) is very close to $c_{\max}$, but a formal derivation of $c_{\max}$ is much more involved and omitted here.
\end{exam}

\begin{rem}\label{remark-loc-ham}  For some more sophisticated applications, a combination of (R.PC.P) and (P.PC.L) can achieve a better quantum approximation of the conjunctions in (R.PC). We first find maximal subfamilies, say $\mathcal{P}_j$ of $P_1,...,P_n$ of which the elements are disjoint. Then we can apply (R.PC.P) to each of these subfamily to derive:\begin{equation}\label{convex-comp}\vdash_\mathit{par}\{C_i\}\|_{P_i\in\mathcal{P}_j} P_i\{D_j\}\end{equation} where $$C_j=\bigotimes_{P_i\in\mathcal{P}_j}A_i,\qquad D_j=\bigotimes_{P_i\in\mathcal{P}_j}B_i.$$ Furthermore, a probabilistic combination of (\ref{convex-comp}) can be derived as $$\vdash_\mathit{par}\left\{\sum_jp_jC_j\right\}P_1\|\cdots\|P_n \left\{\sum_jp_jD_j\right\}.$$
We believe that this idea is strong enough to derive a large class of useful correctness properties of parallel quantum programs with shared variables. The reason is that in many-body physics, an overwhelming majority of systems of physics interest can be described by local Hamiltonian: $H=\sum_{j}H_j,$ where each $H_j$ is $k$-local, meaning that it acts over at most $k$ components of the system. It is clear that the above idea can be used to prove correctness of parallel quantum programs with their preconditions and postconditions being local Hamiltonians.
\end{rem}

Theorem \ref{thm.strong-sound} can be generalised from quantum \textbf{while}-programs to parallel quantum program, showing the strong soundness of inference rule (R.PC.L) (combined with the other rules introduced in this paper):

\begin{thm}[Strong Soundness for Parallel Quantum Programs with Convex Combination of Quantum Predicates]\label{convex-sound} Let $\{A_i\}P_i^\ast\{B_i\}$ be a standard proof outline for partial correctness of quantum component program $P_i$ $(i=1,...,n)$ and $$\langle P_1\|\cdots\|P_n,\rho\rangle\rightarrow^\ast\left\{|\langle P_{1s}\|\cdots\| P_{ns},\rho_s\rangle|\right\}.$$ Then:
\begin{enumerate}\item for each $1\leq i\leq n$ and for every $s$, $P_{is}\equiv \mathit{at}(T_{is},P_i)$ for some normal subprogram $T_{is}$ of $P_i$ or $P_{is}\equiv\ \downarrow$; and

\item for any probability distribution $\{p_i\}_{i=1}^n$, if $\{A_i\}P_i^\ast\{B_i\}$ $(i=1,...,n)$ are $\Lambda$-interference free for some $\Lambda=\{\lambda_{ij}\}_{i\neq j}$ satisfying \begin{equation}\label{lambda-con}
    \sum_{i\neq j}\frac{p_i}{\lambda_{ij}}\leq 1\ {\rm for}\ j=1,...,n;\end{equation} in particular, if they are $\lambda$-interference free for some 
    $\lambda\geq 1-\min_{i=1}^n p_i,$ then we have: $$\mathit{tr}\left[\left(\sum_{i=1}^np_iA_i\right)\rho\right]\leq\sum_s\mathit{tr}\left[\left(\sum_{i=1}^np_iB_{is}\right)\rho_s\right]$$ where $$B_{is}=\begin{cases}
B_i\ &{\rm if}\ P_{is}\equiv\ \downarrow,\\ \mathit{pre}(T_{is}) &{\rm if}\ P_{is}\equiv\mathit{at}(T_{is},P_i).
\end{cases}$$
\end{enumerate}
\end{thm}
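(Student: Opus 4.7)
The plan is to prove both clauses simultaneously by induction on the length $k$ of the transition sequence. At the inductive step, rules (MS1) and (MS2) reduce the analysis to a single transition performed by one component $P_{i_0}$ of one configuration indexed by $s_0$, via rule (PC) together with one of the atomic transition rules (Sk), (In), (UT), (IF$'$), (L$'$), (AR). Clause (1) then follows by direct case analysis on that rule: each either terminates $P_{i_0 s_0}$ or replaces $\mathit{at}(T_{i_0 s_0},P_{i_0})$ by $\mathit{at}(T',P_{i_0})$ for a normal subprogram $T'$ of $P_{i_0}$ prescribed by Definition~\ref{def-head} (the remainder after (SC), a branch after (IF$'$), the unfolded body $P;\mathbf{while}\cdots$ or $\downarrow$ after (L$'$), and $\downarrow$ after (AR)); every other component $P_{js}$ is unchanged. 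Loop-freeness of atomic regions together with the trace-preserving character of initialisations, unitaries and measurements ensures $\sum_m\mathit{tr}(\rho_{s_0}^{(m)})=\mathit{tr}(\rho_{s_0})$, so $\sum_s\mathit{tr}(\rho_s)=\mathit{tr}(\rho)$ is maintained throughout and no non-termination defect arises in the inductive bookkeeping.

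Clause (2) has a trivial base case $k=0$, since $T_{i0}\equiv P_i$ forces $\mathit{pre}(T_{i0})=A_i$. In the inductive step only the $s_0$ summand changes, so the whole matter reduces to the local one-step inequality
\begin{equation*}
\mathit{tr}\!\left[\sum_{i}p_i\,\mathit{pre}(T_{is_0})\,\rho_{s_0}\right] \leq \sum_{m}\mathit{tr}\!\left[\sum_{i}p_i\,B'_{is_0,m}\,\rho_{s_0}^{(m)}\right],
\end{equation*}
where $\{\rho_{s_0}^{(m)}\}_m$ enumerates the sub-density operators produced by the head action $Q$ of $P_{i_0}$ on $\rho_{s_0}$, $B'_{i_0 s_0,m}=\mathit{pre}(T'_{i_0s_0,m})$ is the updated precondition of the active component, and $B'_{js_0,m}=\mathit{pre}(T_{js_0})$ for $j\neq i_0$. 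The derivation splits into an active contribution with weight $p_{i_0}$ and passive contributions with weights $p_j$, $j\neq i_0$. For the active contribution, the local axioms and rules (Ax.Sk$'$), (Ax.In$'$), (Ax.UT$'$), (R.IF$'$), (R.LP$'$) of the standard outline $\{A_{i_0}\}P_{i_0}^\ast\{B_{i_0}\}$ supply the exact identity $\mathit{tr}[\mathit{pre}(T_{i_0s_0})\rho_{s_0}]=\sum_m\mathit{tr}[B'_{i_0s_0,m}\rho_{s_0}^{(m)}]$ whenever $Q$ is an initialisation, unitary, measurement guard, or loop guard; for an atomic region $Q=\langle P'\rangle$ the corresponding partial-correctness inequality comes from Theorem~\ref{thm.strong-sound} applied to $P'$ and rule (R.At$'$), producing a nonnegative ``active surplus'' $\gamma=\mathit{tr}[\mathit{post}(P')\llbracket P'\rrbracket(\rho_{s_0})]-\mathit{tr}[\mathit{pre}(P')\rho_{s_0}]\geq 0$.

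For each passive component $j\neq i_0$, the $\lambda_{ji_0}$-interference freedom of $\mathit{pre}(T_{js_0})$ with the action $Q$, as expressed in (\ref{interf-1}) or (\ref{interf-2}), is combined with the active local Hoare triple for $Q$: after unpacking the partial-correctness inequalities, canceling the $(1-\lambda_{ji_0})$-scaled intermediate terms, and dividing by $\lambda_{ji_0}$, one obtains $\mathit{tr}[\mathit{pre}(T_{js_0})\rho_{s_0}]-\sum_m\mathit{tr}[\mathit{pre}(T_{js_0})\rho_{s_0}^{(m)}]\leq \frac{1-\lambda_{ji_0}}{\lambda_{ji_0}}\gamma$ (with right-hand side $0$ when $Q$ is not an atomic region, since then $\gamma=0$). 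The main obstacle is the weighted combination of these inequalities: multiplying the $j$-th passive inequality by $p_j$ and summing, the total passive deficit is at most $\gamma\sum_{j\neq i_0}p_j(1-\lambda_{ji_0})/\lambda_{ji_0}$, which must be dominated by the active surplus $p_{i_0}\gamma$. A direct algebraic manipulation shows that $p_{i_0}-\sum_{j\neq i_0}p_j(1-\lambda_{ji_0})/\lambda_{ji_0}=1-\sum_{j\neq i_0}p_j/\lambda_{ji_0}$, which is $\geq 0$ precisely under hypothesis~(\ref{lambda-con}) instantiated at $j:=i_0$. Adding the active identity/inequality (weight $p_{i_0}$) to the weighted passive inequalities then yields the desired local one-step bound. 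Finally, the uniform-$\lambda$ case reduces to this since $\sum_{j\neq i_0}p_j/\lambda=(1-p_{i_0})/\lambda\leq 1$ for every $i_0$ if and only if $\lambda\geq 1-\min_i p_i$.
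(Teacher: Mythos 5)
Your proof is correct and follows essentially the same route as the paper's: induction on the length of the transition sequence, a strong-soundness argument in the style of Theorem~\ref{thm.strong-sound} for the active component, the one-step consequence of $\Lambda$-interference freedom for each passive component, and the same algebraic fact $p_{i_0}-\sum_{j\neq i_0}p_j(1-\lambda_{ji_0})/\lambda_{ji_0}=1-\sum_{j\neq i_0}p_j/\lambda_{ji_0}\geq 0$ obtained from (\ref{lambda-con}) at the active index. Your ``active surplus / passive deficit'' bookkeeping is merely a rearrangement of the paper's convex rewriting of $\sum_i p_iB_{ik}'$, and the one small inexactness---asserting an exact trace identity for non-atomic active steps, which can fail when (R.Or$'$) weakenings appear in the outline---is harmless, since your argument only uses the inequality direction and $\gamma\geq 0$.
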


\begin{proof} See Appendix \ref{proof-strong-sound-1}. 
\end{proof}

At this moment, we are only able to conceive rule (R.PC.L) as a quantum generalisation of the rule (R.PC) for classical parallel programs with shared variables. 
In classical computing, as proved in \cite{Owicki76-0}, rule (R.PC) together with a rule for auxiliary variables and Hoare logic for sequential programs gives rise to a (relatively) complete logical system for reasoning about parallel programs with shared variables. However, it is not the case for rule (R.PC.L) in parallel quantum programming because not every (largely entangled) precondition (resp. postcondition) of $P_1\|\cdots\|P_n$ can be written in the form of $\sum_{i=1}^np_iA_i$ (resp. $\sum_{i=1}^np_iQ_i$). As will be further discussed in the Conclusion, the problem of fining a (relatively) complete proof system for shared-variable parallel quantum programs is still widely open.  

\section{Case Study: Verification of Bravyi-Gosset-K\"{o}nig's Algorithm}\label{sec-case}

Bravyi-Gosset-K\"{o}nig's algorithm \cite{Bravyi} is a parallel quantum algorithm solving a linear algebra problem, called HLF (Hidden Linear Function). This quantum algorithm runs in a constant time, and it is proved that no classical algorithms running in a constant time can solve HLF. So,  Bravyi-Gosset-K\"{o}nig's algorithm provides for the first time an unconditional proof of quantum advantage that does not rely on any complexity-theoretic conjecture. At the same time, it is suitable for experimental realisations on near-future quantum hardwares because it only requires shallow circuits with nearest-neighbour gates.  

In this section, we present a formal verification of Bravyi-Gosset-K\"onig's parallel quantum algorithm as an application of the proof system we developed in this paper. 

\subsection{Bravyi-Gosset-K\"{o}nig's Algorithm}

For convenience of the reader, we briefly review Bravyi-Gosset-K\"onig's algorithm.

\subsubsection{HLF Problem} For any symmetric Boolean matrix $A=\left(A_{ij}\right)_{n\times n}$, where $A_{ij}=A_{ji}\in\{0,1\}$, we can define a quadratic form: 
$$q_A(x)=x^TA x=\sum_{i,j}A_{ij}x_ix_j\mod 4,$$ where (and in the sequel) superscript $^T$ stands for transpose, and $x=(x_1,...,x_n)^T$ is a column vector in $\{0,1\}^n$. The null-space of $A$ is $$\mathit{Ker}(A)=\{x\in\{0,1\}^n:Ax=0 \mod 2\}.$$ It can be shown that the restriction of $q_A$ onto $\mathit{Ker}(A)$ is linear; that is, there exists $z=(z_1,...,z_n)^T\in\{0,1\}^n$ such that \begin{equation}\label{hide}
q_A(x)=2z^Tx=2\sum_iz_ix_i\mod 4
\end{equation} for all $x\in\mathit{Ker}(A)$. Thus, linear function $$l(x)=zx^T=\sum_i z_ix_i\mod 2$$ is called an HLF (Hidden Linear Function) in $q_A$. 
The general HLF problem can be stated as follows: 

{\vskip 3pt}

\textbf{HLF Problem}: Given an $n\times n$ symmetric Boolean matrix $A$, find an HLF in $q_A$, i.e. a Boolean vector $z\in\{0,1\}^n$ satisfying equation (\ref{hide}).

{\vskip 3pt}

We first present Bravyi-Gosset-K\"{o}nig's algorithm as a sequential program. Let $q_1,...,q_n$ be $n$ qubit variables and assume that self-adjacency $$\{i:A_{ii}=1\ (1\leq i\leq n)\}=\{i_1,...,i_l\}$$ and adjacency relation $$S=\left\{(j,k):A_{jk}=1\ (1\leq j<k\leq n)\right\}=\left\{\left(j_1,k_1\right),...,\left(j_m,k_m\right)\right\}.$$ 
Recall that phase shift gate $S$ and controlled-Z gate $\mathit{CZ}$ are defined by \begin{align*}S|b\>&=\ci^b|b\>\ {\rm for}\ b\in\{0,1\};\\ \mathit{CZ}|b_1,b_2\>&=(-1)^{b_1b_2}|b_1,b_2\>\ {\rm for}\ b_1,b_2\in\{0,1\},\end{align*} respectively, where (and in the sequel) we use $\ci$ to denote the imaginary unit, i.e. the square root of $-1$ (in order to avoid confusion with index $i$, which is extensively used in this paper).
The algorithm is given program $\mathit{BGK}$ in Figure \ref{HLF-algorithm-s}. 
\begin{figure}[h]\centering
\begin{align}\label{level1}
\mathit{BGK}\equiv\ &q_1 := |0\>; \cdots; q_n:=|0\>;\\
\label{level2}&q_1 := H[q_1];\cdots;\ q_n:=H[q_n]; \\
\label{level3}&q_{i_1}:=S[q_{i_1}];\cdots;q_{i_l}:=S[q_{i_l}];\\
\label{level4}&q_{j_1},q_{k_1}:=CZ[q_{j_1},q_{k_1}];\cdots;\ q_{j_m},q_{k_m}:=CZ[q_{j_m},q_{k_m}]\\ 
\label{level5}&q_1 := H[q_1];...;q_n:=H[q_n]\end{align}
\caption{Sequential Bravyi-Gosset-K\"{o}nig algorithm.}
\label{HLF-algorithm-s}
\end{figure}
We write $P_A$ for the subprogram consisting of layers (\ref{level3}) and (\ref{level4}). It can be checked that the semantic function of subprogram $P_A$ in Figure \ref{HLF-algorithm-s} is a unitary $\llbracket P_A\rrbracket =U_A$ defined by $$U_A|x\>=\ci^{q_A(x)}|x\>\ {\rm for}\ x\in\{0,1\}^n.$$ Furthermore, if $|0\>^{\otimes n}$ is input to program $\mathit{BGK}$, then it outputs \begin{align*}
\llbracket\mathit{BGK}\rrbracket\left(|0\>^{\otimes n}\right)&=H^{\otimes n} U_A H^{\otimes n}|0\>^{\otimes n} =\frac{1}{2^n}\sum_{z\in\{0,1\}^n}\alpha_z|z\>
\end{align*} where for every $z$: $$\alpha_z=\sum_{x\in\{0,1\}^n}\ci^{q_A(x)+2z^Tx}.$$ We can show that $\alpha_z\neq0$ if and only if $z$ is a solution of the HLF problem. Thus, HLF can be finally solved by measuring the above output of $\mathit{BGK}$ in the computational basis.  
 
 \subsubsection{2D HLF Problem} It is easy to see that in general, the depth of program $\mathit{BGK}$ depends on the dimension $n$ and structure of matrix $A$ and thus is not a constant.  
We hope to parallelise $\textit{BGK}$ to a constant-depth program. Obviously, each of layers (\ref{level1})-(\ref{level3}) and (\ref{level5}) can be easily parallelised into a depth-one circuit. But only for a special class of matrices $A$, layer (\ref{level4}) can be parallelised to a constant-depth program. Let $n=N^2$ for an integer $N$. We use $i=1,...,n$ to denote the vertices of the $N\times N$ square grid. Then $A$ is called a nearest-neighbourhood matrix  of the $N\times N$ grid when: 
\begin{align*}A_{ij}=1\ {\rm only\ if}\ i=j\ {\rm or}\ i,j\ {\rm are\ nearest\text{-}neighbour\ vertices\ of\ the\ gird}.
\end{align*}
Now we consider a special case of the HLF problem:
 
 {\vskip 3pt}
 
 \textbf{2D HLF Problem}: Given a square number $n=N^2$, find an HLF of $q_A$ for a nearest-neighbourhood matrix $A$ of the $N\times N$ grid. 
 
 {\vskip 3pt}

For the 2D HLF, the adjacency relation $S$ of $A$ can be covered by the following four pairwise disjoint subsets: $S\subseteq S_1\cup S_2\cup S_3\cup S_4$, where
\begin{align*}
&S_1 = \{((i-1)N+2j-1, (i-1)N+2j): 1\le i\le N, 1\le j\le \lfloor N/2\rfloor\}, \\
&S_2 = \{((i-1)N+2j, (i-1)N+2j+1): 1\le i\le N, 1\le j\le \lfloor (N-1)/2\rfloor\},\\
&S_3 = \{(2(i-1)N+j, (2i-1)N+j): 1\le i\le \lfloor N/2\rfloor, 1\le j\le N\}, \\
&S_4 = \{((2i-1)N+j, 2iN+j): 1\le i\le \lfloor (N-1)/2\rfloor, 1\le j\le N\},
\end{align*} This division is visualised in Figure \ref{HLF-net}. 
\begin{figure}[h]\centering
\includegraphics[width=11cm]{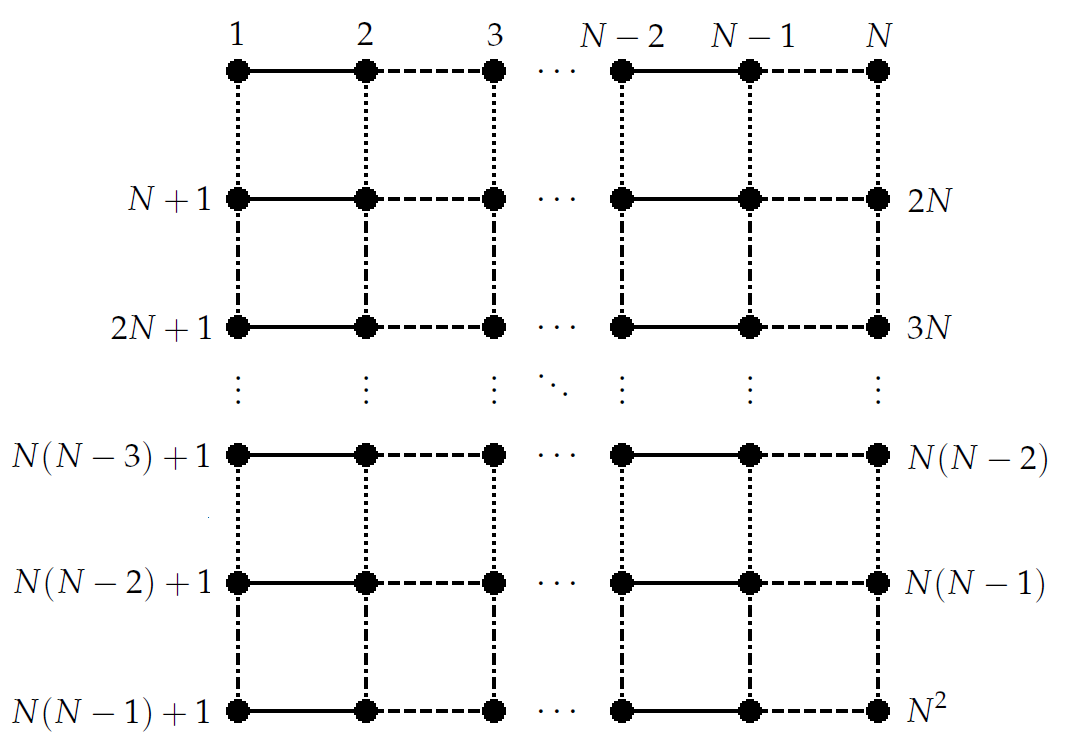}
\caption{2D HLF qubit network is a $N\times N$ grid. The vertices are numbered by 1 to $N^2$ from left to right and from top to bottom.
For even $N$, the edges in $S_1,S_2,S_3$ and $S_4$ are shown by solid line, dashed line, dotted line and dash-dotted line respectively.}
\label{HLF-net}
\end{figure}
Using the parallel quantum programming language defined in this paper, a parallelisation of $\textit{BGK}$ is presented as program $\mathit{BGK}_\|$ in Figure \ref{HLF-algorithm}. It is the sequential composition of eight subprograms with each of them being a parallel program. 
\begin{figure}[h]\centering
\begin{align*}
\mathit{BGK}_\|\equiv\ &\|_{i=1}^n\ q_i := |0\>; \\
&\|_{i=1}^n\  q_i := H[q_i]; \\
&\|_{i=1}^n\ q_i := S^{A_{i,i}}[q_i]; \\
&\|_{(i,j)\in S_1}\ q_i,q_{j} := CZ^{A_{i,j}}[q_i,q_{j}]; \\
&\|_{(i,j)\in S_2}\ q_i,q_{j} := CZ^{A_{i,j}}[q_i,q_{j}]; \\
&\|_{(i,j)\in S_3}\ q_i,q_{j} := CZ^{A_{i,j}}[q_i,q_{j}]; \\
&\|_{(i,j)\in S_4}\ q_i,q_{j} := CZ^{A_{i,j}}[q_i,q_{j}]; \\
&\|_{i=1}^n\ q_i := H[q_i]
\end{align*}
\caption{HLF algorithm. Each line indicate a layer and all gates in the same layer are separate.}
\label{HLF-algorithm}
\end{figure} Note that after such a parallelisation, $\mathit{BKG}$ is transformed to a constant-depth program because layer (\ref{level4}) is decomposed into four sublayers, each of which is a depth-one circuit.    

\subsection{Verification of $\mathit{BGK}_\|$}

Now we are going to verify $\textit{BGK}_\|$ in the proof system defined in this paper. We use $p$ to indicate the system consisting of the $n$ qubits used in $BGK_\|$. 
Then the (total) correctness of $HLF$ can be specified as the following Hoare triple:  
\begin{equation}\label{HLF-Hoare}
\vdash_\mathit{tot}\left\{I_p\right\}BGK_\|\left\{B\right\}
\end{equation}
where: $I_p=\bigotimes_{i=1}^nI_i$ is the identity operator on the state Hilbert space of $BGK_\|$ and 
$$B=\sum_{z\ {\rm is\ a\ solution\ of\ HLF}}|z\rangle\langle z|.$$ Intuitively, precondition $I_p$ is the quantum predicate representing \textquotedblleft true\textquotedblright, and postcondition $B$ is the projector onto the subspace spanned by all solutions. More precisely, as the precondition is \textquotedblleft true\textquotedblright, for any input state $\rho$ with trace one, the output $\sm{BGK_\|}(\rho)$ satisfies: $$\tr\left(B\sm{BGK_\|}(\rho)\right) = 1,$$ which implies that if we measure the output using computational basis, the outcome is just one of the solutions.

{\vskip 3pt}

\textbf{Overall Idea of the Verification}: 
Since each layer of algorithm $BGK_\|$ presented in Figure \ref{HLF-algorithm} is a disjoint parallel program, our strategy of verifying (\ref{HLF-Hoare}) is as follows: we first use parallel composition rule (R.PC.P) together with auxiliary rules (R.SO) and (R.TI) to derive a correctness formula for each layer of $BGK_\|$, and then use sequential composition rule (R.SC) to glue them in order to form a proof of a stronger correctness formula:
\begin{equation}
\label{HLF-Hoare-pure}
\vdash_\mathit{tot}\{I_p\} BGK_\| \left\{|\phi\>\<\phi|\right\} 
\end{equation}
where $|\phi\>$ is a pure state defined by
$$
|\phi\> = \frac{1}{2^n}\sum_{x,z\in\{0,1\}^n}\mathbf{i}^{x^T Ax}(-1)^{z^Tx}|z\> = \frac{1}{2^n}\sum_{\substack{\forall\ i\in[n]:\\x_i,z_i\in\{0,1\}}}\mathbf{i}^{\sum A_{i,j}x_ix_j}(-1)^{\sum z_ix_i} \bigotimes_{i\in[n]}|z_i\>_i.
$$ 
Note that $|\psi\>\<\psi|\sqsubseteq B$. Thus, (\ref{HLF-Hoare}) follows from (\ref{HLF-Hoare-pure}) and rule (R.Or).

\subsubsection{Correctness Formulas of Quantum Gates}  
Let us start from basic components. 
For each qubit $i$, we introduce an auxiliary qubit $i^\prime$. The auxiliary system consisting of qubits $i^\prime (i\in[n])$ is labeled by $p^\prime$.
First of all, using rule (Ax.UT) we obtain the following correctness formula for the quantum gates employed in $BGK_\|$:
\begin{flalign}
&&\hspace{+3cm}\label{HLF-01}\left\{I_i\right\}&& q_i &:= |0\> &&\left\{|0\>_i\<0|\right\} \hspace{+3cm}&&\\
&&\hspace{+3cm}\left\{\Psi^H_{i;i^\prime}\right\}&& q_i &:= H[q_i] &&\left\{\Phi_{i;i^\prime}\right\} \hspace{+3cm}&&\\
&&\hspace{+3cm}\left\{\Psi^S_{i;i^\prime} \right\}
&& q_i &:= S^{A_{i,i}}[q_i] &&\left\{\Phi_{i;i^\prime}\right\}\hspace{+3cm} &&\\
&&\hspace{+3cm}\label{HLF-02}\left\{\Psi_{i,j;i^\prime,j^\prime} \right\} && q_i,q_j &:= CZ^{A_{i,j}}[q_i,q_j]  &&\left\{\Phi_{i,j;i^\prime,j^\prime}\right\}\hspace{+3cm}&&
\end{flalign} where: 
\begin{align*}
\Phi_{i;i^\prime} &= \frac{1}{2}\sum_{u_i,u_i^\prime\in\{0,1\}}|u_i\>_i\<u_i^\prime|\otimes |u_i\>_{i^\prime}\<u_i^\prime|=\frac{1}{2} \left[ \sum_{u_i\in\{0,1\}}|u_i\>_i|u_i\>_{i^\prime} \right]\left[\cdot\right]^\dag\\ 
\Phi_{i,j;i^\prime,j^\prime} &= \frac{1}{4} \left[\sum_{u_i,u_j\in\{0,1\}}|u_i\>_i|u_j\>_j|u_i\>_{i^\prime}|u_j\>_{j^\prime}\right]\left[\cdot\right]^\dag\\ 
\Psi^H_{i;i^\prime} &= \sum_{v_i,v_i^\prime, u_i,u_i^\prime}\frac{1}{4}(-1)^{u_iv_i+u_i^\prime v_i^\prime}|v_i\>_i\<v_i^\prime|\otimes |u_i\>_{i^\prime}\<u_i^\prime| = \left[ \frac{1}{2}\sum_{v_i,u_i}(-1)^{u_iv_i}|v_i\>_i|u_i\>_{i^\prime}\right]\left[\cdot\right]^\dag\\ 
\Psi^S_{i;i^\prime} &= \frac{1}{2}\left[ \sum_{u_i}\mathbf{i}^{-A_{i,i}u_i}|u_i\>_i|u_i\>_{i^\prime}\right]\left[\cdot\right]^\dag\\
\Psi_{i,j;i^\prime,j^\prime} &=\frac{1}{4} \left[\sum_{u_i,u_j\in\{0,1\}}(-1)^{A_{i,j}u_iu_j}|u_i\>_i|u_j\>_j|u_i\>_{i^\prime}|u_j\>_{j^\prime}\right]\left[\cdot\right]^\dag
\end{align*} It is worth noting that $\Phi_{i,i^\prime}$ is (the unnormalized projection operator to the one-dimensional subspace spanned by) the maximal entanglement between qubits $i$ and $i^\prime$, and $\Phi_{i,j,i^\prime,j^\prime}$ is the maximal entanglement between $i,j$ and $i^\prime, j^\prime$. 

\subsubsection{Applications of Parallel Composition Rule (R.PC.P)}

Now we can apply parallel composition rule (R.PC.P) to correctness formulas (\ref{HLF-01}) - (\ref{HLF-02}), respectively, to derive a correctness formula for each layer of $BGK_\|$: 
\begin{flalign}
\label{HLF-1-1}
&&\left\{\bigotimes_{i\in[n]}I_i\right\}&& \|_{i=1}^n q_i &:= |0\> &&\left\{\bigotimes_{i\in[n]}|0\>_i\<0|\right\},\\ 
\label{HLF-2.8-1}
&&\left\{\bigotimes_{i\in[n]}\Psi^H_{i;i^\prime}\right\}&& \|_{i=1}^n\ q_i &:= H[q_i] &&\left\{\bigotimes_{i\in[n]}\Phi_{i;i^\prime}\right\},\\
\label{HLF-3-1}
&&\left\{\bigotimes_{i\in[n]}\Psi^S_{i;i^\prime}\right\} &&\|_{i=1}^n  q_i &:= S^{A_{i,i}}[q_i] &&\left\{\bigotimes_{i\in[n]}\Phi_{i;i^\prime}\right\},\\
\label{HLF-7-1}
&&\left\{\bigotimes_{(i,j)\in S_m}\Psi_{i,j;i^\prime,j^\prime}\bigotimes_{i\in T_m}\Phi_{i;i^\prime}\right\}&& \|_{(i,j)\in S_m}\ q_i,q_{j} &:= CZ^{A_{i,j}}[q_i,q_{j}] &&\left\{\bigotimes_{(i,j)\in S_m}\Phi_{i,j;i^\prime,j^\prime}\bigotimes_{i\in T_m}\Phi_{i;i^\prime}\right\},
\end{flalign}
where $m = 1,\cdots, 4$. 

\subsubsection{Applications of Auxiliary Rules (R.SO) and (R.TI)}

At this stage, we cannot directly apply rule (R.SC) to formulas (\ref{HLF-1-1}) through (\ref{HLF-7-1}) because the postcondition of each of them does not match the precondition of the next one. The auxiliary rules (R.SO) and (R.TI) can help us to resolve this issue. Let us first introduce following states:
\begin{align*}
&|\phi_m\> = \frac{1}{\sqrt{2^n}}\sum_{\forall\ i\in[n]:x_i\in\{0,1\}}\mathbf{i}^{\sum A_{i,j}x_ix_j}(-1)^{\sum_{(i,j)\in \bigcup_{l>m}S_l}A_{i,j}x_ix_j} \bigotimes_{i\in[n]}|x_i\>_i, \quad \text{for}\ m=0,1,2,3,4 \\
&|\phi_S\> = \sum_{\substack{\forall\ i\in[n]: x_i\in\{0,1\}}}\frac{1}{\sqrt{2^n}}\bigotimes_{i\in[n]}|x_i\>_i.
\end{align*}
In particular, as $\bigcup_{l>4}S_l = \emptyset$, it holds that 
$$|\phi_4\> = \frac{1}{\sqrt{2^n}}\sum_{\forall\ i\in[n]:x_i\in\{0,1\}}\mathbf{i}^{\sum A_{i,j}x_ix_j} \bigotimes_{i\in[n]}|x_i\>_i.$$
Note that $S_4\cup S_3\cup S_2 \cup S_1 = \{(i,j): \text{vertices}\ i\ \text{and}\ j\ \text{are\ adjacent}\}$. Then according to assumption that $A_{i,j} = 0$ for all $i\neq j$ and $i,j$ are not adjacent, we can simplify $|\phi_0\>$ as follows:
\begin{align*}
|\phi_0\> &= \frac{1}{\sqrt{2^n}}\sum_{\forall\ i\in[n]:x_i\in\{0,1\}}\mathbf{i}^{\sum A_{i,j}x_ix_j}(-1)^{\sum_{(i,j)\in S_4\cup S_3\cup S_2 \cup S_1}A_{i,j}x_ix_j} \bigotimes_{i\in[n]}|x_i\>_i \\
&= \frac{1}{\sqrt{2^n}}\sum_{\forall\ i\in[n]:x_i\in\{0,1\}}\mathbf{i}^{\sum A_{i,j}x_ix_j}\mathbf{i}^{\sum_{i\neq j}A_{i,j}x_ix_j} \bigotimes_{i\in[n]}|x_i\>_i \\
&= \frac{1}{\sqrt{2^n}}\sum_{\forall\ i\in[n]:x_i\in\{0,1\}}\mathbf{i}^{\sum_{i} A_{i,i}x_i} \bigotimes_{i\in[n]}|x_i\>_i
\end{align*}
because $2\sum_{i\neq j}A_{i,j}x_ix_j \mod 4 = 0$ and $x_i = x_i^2$.

Now we can construct the following quantum operations applying on system $p^\prime$ of auxiliary qubits: for any density operator $\rho$, 
\begin{align*}
\mathcal{F}(\rho) &= \sum_{\substack{\forall\ i\in[n]:\\k_i\in\{0,1\}}}|\bar{\phi}\>_{p^\prime}\left(\bigotimes_{i\in[n]}{}_{i^\prime}\<k_i|\right) \rho \left(\bigotimes_{i\in[n]}|k_i\>_{i^\prime}\right){}_{p^\prime}\<\bar{\phi}|,\\
\mathcal{F}^\prime(\rho) &= \sum_{\substack{\forall\ i\in[n]:\\k_i\in\{0,1\}}}|\bar{\phi_S}\>_{p^\prime}\left(\bigotimes_{i\in[n]}{}_{i^\prime}\<k_i|\right) \rho \left(\bigotimes_{i\in[n]}|k_i\>_{i^\prime}\right){}_{p^\prime}\<\bar{\phi_S}|,\\
\mathcal{F}_m(\rho) &= \sum_{\substack{\forall\ i\in[n]:\\k_i\in\{0,1\}}}|\bar{\phi_m}\>_{p^\prime}\left(\bigotimes_{i\in[n]}{}_{i^\prime}\<k_i|\right) \rho \left(\bigotimes_{i\in[n]}|k_i\>_{i^\prime}\right){}_{p^\prime}\<\bar{\phi_m}|,\quad \forall\ m = 1,2,3,4\\
\mathcal{F}_S(\rho) &= \sum_{\substack{\forall\ i\in[n]:\\k_i\in\{0,1\}}}|\bar{\phi_0}\>_{p^\prime}\left(\bigotimes_{i\in[n]}{}_{i^\prime}\<k_i|\right) \rho \left(\bigotimes_{i\in[n]}|k_i\>_{i^\prime}\right){}_{p^\prime}\<\bar{\phi_0}|.
\end{align*} 

Applying rule (R.SO) with the above quantum operations to correctness formulas  (\ref{HLF-2.8-1}), (\ref{HLF-3-1}), (\ref{HLF-7-1}) and (\ref{HLF-2.8-1}), respectively, we have:
\begin{flalign}&&\label{HLF-m-1}\left\{\mathcal{F}^{\prime\ast}\left(\bigotimes_{i\in[n]}\Psi^H_{i;i^\prime}\right)\right\}&& \|_{i=1}^nq_i &:= H[q_i] &&\left\{\mathcal{F}^{\prime\ast}\left(\bigotimes_{i\in[n]}\Phi_{i;i^\prime}\right)\right\},  \\
&&\left\{\mathcal{F}_S^\ast\left(\bigotimes_{i\in[n]}\Psi^S_{i;i^\prime}\right)\right\}&& \|_{i=1}^nq_i &:= S^{A_{i,i}}[q_i] &&\left\{\mathcal{F}_S^\ast\left(\bigotimes_{i\in[n]}\Phi_{i;i^\prime}\right)\right\},\\ 
&&\left\{\mathcal{F}_m^\ast\left(\bigotimes_{(i,j)\in S_m}\Psi_{i,j;i^\prime,j^\prime}\bigotimes_{i\in T_m}\Phi_{i;i^\prime}\right)\right\}&& \|_{(i,j)\in S_m}q_i,q_{j} &:= CZ^{A_{i,j}}[q_i,q_{j}] &&\left\{\mathcal{F}_m^\ast\left(\bigotimes_{(i,j)\in S_m}\Phi_{i,j;i^\prime,j^\prime}\bigotimes_{i\in T_m}\Phi_{i;i^\prime}\right)\right\},\nonumber\\
&& && & && \qquad\quad \forall\ m = 1,2,3,4 \\ 
&&\label{HLF-m-2}\left\{\mathcal{F}^\ast\left(\bigotimes_{i\in[n]}\Psi^H_{i;i^\prime}\right)\right\}&& \|_{i=1}^nq_i &:= H[q_i] &&\left\{\mathcal{F}^\ast\left(\bigotimes_{i\in[n]}\Phi_{i;i^\prime}\right)\right\},
\end{flalign}

The preconditions and postconditions of the above correctness formulas are too complicated. Their simplifications are given in the following: 

\begin{lem}\label{super-lemma-t}
\begin{align*}
&\mathcal{F}^\ast\left(\bigotimes_{i\in[n]}\Phi_{i;i^\prime}\right) = \frac{|\phi\>_p\<\phi|}{2^n}\otimes I_{p^\prime}, \qquad 
&&\hspace{-1cm}\mathcal{F}^\ast\left(\bigotimes_{i\in[n]}\Psi^H_{i;i^\prime}\right) = \frac{|\phi_4\>_p\<\phi_4|}{2^n}\otimes I_{p^\prime},\\ 
&\mathcal{F}^{\prime\ast}\left(\bigotimes_{i\in[n]}\Phi_{i;i^\prime}\right) = \frac{|\phi_S\>_p\<\phi_S|}{2^n}\otimes I_{p^\prime}, 
&&\hspace{-1cm}\mathcal{F}^{\prime\ast}\left(\bigotimes_{i\in[n]}\Psi^H_{i;i^\prime}\right) = \frac{|0\>_p\<0|}{2^n}\otimes I_{p^\prime},\\
&\mathcal{F}_S^\ast\left(\bigotimes_{i\in[n]}\Phi_{i;i^\prime}\right) = \frac{|\phi_0\>_p\<\phi_0|}{2^n}\otimes I_{p^\prime} \qquad 
&&\hspace{-1cm}\mathcal{F}_S^\ast\left(\bigotimes_{i\in[n]}\Psi^S_{i;i^\prime}\right)
= \frac{|\phi_S\>_p\<\phi_S|}{2^n}\otimes I_{p^\prime},\\
&\mathcal{F}_m^\ast\left(\bigotimes_{(i,j)\in S_m}\Phi_{i,j;i^\prime,j^\prime}\bigotimes_{i\in T_m}\Phi_{i;i^\prime}\right) = \frac{|\phi_m\>_p\<\phi_m|}{2^n}\otimes I_{p^\prime},\qquad
&&\hspace{-1cm}\mathcal{F}_m^\ast\left(\bigotimes_{(i,j)\in S_m}\Psi_{i,j;i^\prime,j^\prime}\bigotimes_{i\in T_m}\Phi_{i;i^\prime}\right) = \frac{|\phi_{m-1}\>_p\<\phi_{m-1}|}{2^n}\otimes I_{p^\prime}
\end{align*}
where $m = 1,2,3,4$.
\end{lem}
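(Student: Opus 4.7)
The plan is to reduce all eight claims to a single master formula and then to a short list of elementary unitary-state equalities on the principal system $p$. The crucial uniform observation is that each of the superoperators $\mathcal{F},\mathcal{F}',\mathcal{F}_S,\mathcal{F}_m$ has the same shape: writing
\begin{equation*}
\mathcal{G}_\chi(\rho)=\sum_{k\in\{0,1\}^n} |\bar\chi\>_{p'}\<k|_{p'}\,\rho\,|k\>_{p'}\<\bar\chi|_{p'}
\end{equation*}
for an arbitrary pure state $|\chi\>$ on $p'$, they are precisely $\mathcal{G}_\phi,\mathcal{G}_{\phi_S},\mathcal{G}_{\phi_0},\mathcal{G}_{\phi_m}$ respectively, and the adjoints act as $\mathcal{G}^*_\chi(X)=\sum_k(I_p\otimes|k\>\<\bar\chi|_{p'})\,X\,(I_p\otimes|\bar\chi\>\<k|_{p'})$.

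Next I put the eight operands into a uniform form. A direct expansion gives $\bigotimes_{i\in[n]}\Phi_{i;i'}=\frac{1}{2^n}|\Omega\>\<\Omega|$ for the unnormalized maximally entangled state $|\Omega\>=\sum_{x\in\{0,1\}^n}|x\>_p|x\>_{p'}$; and the partitioned product $\bigotimes_{(i,j)\in S_m}\Phi_{i,j;i',j'}\otimes\bigotimes_{i\in T_m}\Phi_{i;i'}$ is the same $\frac{1}{2^n}|\Omega\>\<\Omega|$, because a tensor product of maximal entanglements on disjoint sub-registers is the global maximal entanglement. A short expansion further shows that each twisted $\Psi$-operator is a local-unitary conjugate of $\Phi$ on the $p$-side: for $V_p$ chosen as $H^{\otimes n}_p$, or $\prod_i(S^{A_{i,i}}_i)^\dagger$, or $\prod_{(i,j)\in S_m}(CZ^{A_{i,j}}_{i,j})^\dagger$ (depending on the layer), one has the corresponding $\Psi$-tensor equal to $(V_p\otimes I_{p'})(\tfrac{1}{2^n}|\Omega\>\<\Omega|)(V_p^\dagger\otimes I_{p'})$.

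The heart of the argument is the single master identity
\begin{equation*}
\mathcal{G}^*_\chi\!\Bigl(\tfrac{1}{2^n}(V_p\otimes I_{p'})|\Omega\>\<\Omega|(V_p^\dagger\otimes I_{p'})\Bigr)=\tfrac{V_p|\chi\>_p\<\chi|V_p^\dagger}{2^n}\otimes I_{p'},
\end{equation*}
valid for any unitary $V_p$ on $p$ and any pure state $|\chi\>$ on $p'$. This is an immediate consequence of the ricochet identity $(I_p\otimes\<\bar\chi|_{p'})|\Omega\>=|\chi\>_p$ (itself just $\<\bar\chi|x\>=\<x|\chi\>$): sliding $V_p$ past the inner product and resumming over $k$ collapses the $p'$-part to $\sum_k|k\>\<k|=I_{p'}$.

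Combining the three observations above, each of the eight claims reduces to a unitary-state equality on $p$: the four trivial cases with $V_p=I$, together with $H^{\otimes n}|\phi\>=|\phi_4\>$ (a standard Hadamard Fourier sum using $\sum_z(-1)^{z\cdot(x+y)}=2^n\delta_{x,y}$), $H^{\otimes n}|\phi_S\>=|0\>$ (since $|\phi_S\>=|+\>^{\otimes n}$), $\prod_i(S^{A_{i,i}}_i)^\dagger|\phi_0\>=|\phi_S\>$ (the phase $\mathbf{i}^{\sum_i A_{i,i}x_i}$ in $|\phi_0\>$ is exactly what the conjugated phase gates cancel), and $\prod_{(i,j)\in S_m}(CZ^{A_{i,j}}_{i,j})^\dagger|\phi_m\>=|\phi_{m-1}\>$ (using $CZ^{A_{i,j}}|x_i,x_j\>=(-1)^{A_{i,j}x_ix_j}|x_i,x_j\>$ to merge the sign factors indexed by $S_m$ and by $\bigcup_{l>m}S_l$ into the single sign factor indexed by $\bigcup_{l\ge m}S_l=\bigcup_{l>m-1}S_l$). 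None of these steps is genuinely delicate; the entire proof is bookkeeping, and the only mild obstacle is faithfully tracking phase factors and index sets through the Fourier computation in the Hadamard layer.
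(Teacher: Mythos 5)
Your proposal is correct, and all eight identities do follow from it: the superoperators $\mathcal{F},\mathcal{F}',\mathcal{F}_S,\mathcal{F}_m$ are indeed all of the form $\mathcal{G}_\chi$ with $\chi=\phi,\phi_S,\phi_0,\phi_m$; the operands all equal $(V_p\otimes I_{p'})\tfrac{1}{2^n}|\Omega\rangle\langle\Omega|(V_p^\dagger\otimes I_{p'})$ for the appropriate layer unitary $V_p$ (identity, $H^{\otimes n}$, $\prod_i (S_i^{A_{i,i}})^\dagger$, or $\prod_{(i,j)\in S_m}CZ^{A_{i,j}}$, the last two being diagonal so the dagger is harmless); your ricochet identity $(I_p\otimes\langle\bar\chi|_{p'})|\Omega\rangle=|\chi\rangle_p$ is exactly right; and the four residual state equalities ($H^{\otimes n}|\phi\rangle=|\phi_4\rangle$, $H^{\otimes n}|\phi_S\rangle=|0\rangle$, the $S$-phase cancellation giving $|\phi_S\rangle$ from $|\phi_0\rangle$, and the $CZ$-sign merge giving $|\phi_{m-1}\rangle$ from $|\phi_m\rangle$) are all verified by the computations you indicate. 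Your route is organized differently from the paper's: the paper dispatches the four left-hand equations by citing its Fact on $\mathcal{F}_\beta$ (which is your master identity in the special case $V_p=I$) and then proves each of the four right-hand equations by a separate term-by-term expansion of the Kraus sums, re-doing the Fourier/phase bookkeeping inside each expansion. You instead promote that Fact to a covariant master identity with an arbitrary local unitary $V_p$ on the principal system and factor each $\Psi$-predicate as a $V_p$-conjugate of the maximal entanglement, so that all eight cases collapse to one lemma plus four one-line equalities of pure states on $p$. What your version buys is uniformity, much shorter verification, and a conceptual explanation (the $\mathcal{F}$-maps "teleport" the conjugated state from $p'$ to $p$, and the layer unitaries commute through); what the paper's brute-force computations buy is that they are entirely self-contained and never require identifying the $\Psi$-operators as conjugated entangled states, at the cost of repeating essentially the same manipulation four times.
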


\begin{proof} See Appendix \ref{super-lem-proof}.\end{proof}

With the above lemma, correctness formulas (\ref{HLF-m-1}) - (\ref{HLF-m-2}) can be simplified as follows after applying (R.Lin): 
\begin{align}
&&\label{HLF-mm-1}\left\{|\phi_4\>_p\<\phi_4|\otimes I_{p^\prime}\right\}\hspace{-0.2cm}&& \|_{i=1}^nq_i &:= H[q_i] &&\hspace{-0.2cm} \left\{|\phi\>_p\<\phi|\otimes I_{p^\prime}\right\},\\
\forall\ m = 1,2,3,4 \hspace{-0.2cm}&&\left\{|\phi_{m-1}\>_p\<\phi_{m-1}|\otimes I_{p^\prime}\right\}\hspace{-0.2cm}&& \|_{(i,j)\in S_m}q_i,q_{j} &:= CZ^{A_{i,j}}[q_i,q_{j}] &&\hspace{-0.2cm}\left\{|\phi_m\>_p\<\phi_m|\otimes I_{p^\prime}\right\},\\
&&\left\{|\phi_S\>_p\<\phi_S|\otimes I_{p^\prime}\right\}\hspace{-0.2cm}&& \|_{i=1}^nq_i &:= S^{A_{i,i}}[q_i] &&\hspace{-0.2cm}\left\{|\phi_0\>_p\<\phi_0|\otimes I_{p^\prime}\right\},\\ 
&&\label{HLF-mm-2}\left\{|0\>_p\<0|\otimes I_{p^\prime}\right\}\hspace{-0.2cm}&& \|_{i=1}^n q_i &:= H[q_i] &&\hspace{-0.2cm}\left\{|\phi_S\>_p\<\phi_S|\otimes I_{p^\prime}\right\}.\end{align}

Now by applying rule (R.TI) to (\ref{HLF-mm-1}) - (\ref{HLF-mm-2}), we obtain:
\begin{align}
\label{HLF-8}
&&\left\{|\phi_4\>_p\<\phi_4|\right\}&& \|_{i=1}^n q_i &:= H[q_i] &&\left\{|\phi\>_p\<\phi|\right\},\\ 
\label{HLF-7}
\hspace{0.7cm}\forall\ m = 1,2,3,4 &&\left\{|\phi_{m-1}\>_p\<\phi_{m-1}|\right\}&& \|_{(i,j)\in S_m}q_i,q_{j} &:= CZ^{A_{i,j}}[q_i,q_{j}] &&\left\{|\phi_m\>_p\<\phi_m|\right\},\hspace{0.7cm}\\ 
\label{HLF-3}
&&\left\{|\phi_S\>_p\<\phi_S|\right\}&& \|_{i=1}^nq_i &:= S^{A_{i,i}}[q_i] &&\left\{|\phi_0\>_p\<\phi_0|\right\},\\ 
\label{HLF-2}
&&\left\{|0\>_p\<0|\right\}&& \|_{i=1}^nq_i &:= H[q_i] &&\left\{|\phi_S\>_p\<\phi_S|\right\}.
\end{align} 
Finally, we use rule (R.SC') to combine formulae (\ref{HLF-1-1}, \ref{HLF-2},\ref{HLF-3},\ref{HLF-7},\ref{HLF-8}) and obtain a complete proof of 
$
\left\{I_p\right\}BGK_\|\left\{|\phi\>_p\<\phi|\right\}
$
as shown in Figure \ref{HLF-proof-outline}.

\begin{figure}[h]\centering
\begin{align*}
&\left\{I_p\right\} \\
&\|_{i=1}^n\ q_i := |0\>; \\
&\left\{|0\>_p\<0|\right\} \\
&\|_{i=1}^n\ q_i := H[q_i] \\
&\left\{|\phi_S\>_p\<\phi_S|\right\} \\
&\|_{i=1}^n\ q_i := S^{A_{i,i}}[q_i] \\
&\left\{|\phi_0\>_p\<\phi_0|\right\} \\
&\|_{(i,j)\in S_1}\ q_i,q_{j} := CZ^{A_{i,j}}[q_i,q_{j}];  \\
&\left\{|\phi_1\>_p\<\phi_1|\right\} \\
&\|_{(i,j)\in S_2}\ q_i,q_{j} := CZ^{A_{i,j}}[q_i,q_{j}]; \\
&\left\{|\phi_2\>_p\<\phi_2|\right\} \\
&\|_{(i,j)\in S_3}\ q_i,q_{j} := CZ^{A_{i,j}}[q_i,q_{j}]; \\
&\left\{|\phi_3\>_p\<\phi_3|\right\} \\
&\|_{(i,j)\in S_4}\ q_i,q_{j} := CZ^{A_{i,j}}[q_i,q_{j}]; \\
&\left\{|\phi_4\>_p\<\phi_4|\right\} \\
&\|_{i=1}^n\  q_i := H[q_i] \\
&\left\{|\phi\>_p\<\phi|\right\}
\end{align*}
\caption{Proof outline for HLF algorithm.}
\label{HLF-proof-outline}
\end{figure}

\section{Conclusion}\label{sec-con}

This paper initiates the study of parallel quantum programming; more explicitly, it defines operational and denotational semantics of parallel quantum programs and presents several useful inference rules for reasoning about correctness of parallel quantum programs. In particular, it is proved that our inference rules form a (relatively) complete proof system for disjoint parallel quantum programs. However, this is certainly merely one of the first steps toward a comprehensive theory of parallel quantum programming and leaves a series of fundamental problems unsolved.

{\vskip 4pt}

\textbf{\textit{1. Completeness}}:  
Perhaps, the most important and difficult open problem at this stage is to develop a (relatively) \textit{complete} logical system for verification of parallel quantum programs with shared variables. 
\begin{itemize}\item \textit{Stronger Rule for Parallel Composition}: As pointed out in Section \ref{correct-shared}, inference rule (R.PC.L) can be used to prove some useful correctness properties of such quantum programs, but it seems far from being the rule for parallel composition needed in a (relatively) complete logical system for these quantum programs. A possible candidate for the rule that we are seeking is based on the notions of \textit{join} and \textit{margin} of operators: let $\mathcal{H}=\bigotimes_{i=1}^n\mathcal{H}_i$ and $\mathcal{J}$ be a family of subsets of $\{1,...,n\}$. For each $J\in\mathcal{J}$, given a positive operator $A_J$ in $\mathcal{H}_J=\bigotimes_{j\in J}\mathcal{H}_j$.
If positive operator $A$ in $\mathcal{H}$ satisfies: $A_J=\mathit{tr}_{J^c}A$ for every $J\in\mathcal{J}$, where $J^c=\{1,...,n\}\setminus J$, then $A$ is called a join of $\{A_J\}_{J\in\mathcal{J}}$, and each $A_J$ is called the margin of $A$ in $\mathcal{H}_J$. With the notion of join, we can conceive that the inference rule needed for parallel composition of quantum programs with shared variables should be some variant of rule (R.PC.J) given in Figure \ref{fig 3+join}.
\begin{figure}[h]\centering
\begin{align*}({\rm R.PC.J})\ \ \ \ \ \ \ \ \ \ \ \ \frac{\begin{array}{ccc}{\rm Standard\ proof\ outlines}\ \left\{A_i\right\}P^\ast_i\left\{B_i\right\} (i=1,...,n)\ {\rm are}\ \Lambda{\rm -interference\ free}\\
A\ {\rm is\ a\ join\ of}\ \{A_i\}, {\rm and}\ B\ {\rm is\ a\ join\ of}\ \{B_i\}
\end{array}}{\{A\}P_1\|\cdots\|P_n\{B\}}\ \ \ \ \ \ \ \ \ \ \ \ \ \ \ \ \ \ \ \ \ \
\end{align*}
\caption{Rule for Parallel Quantum Programs with Shared Variables.}\label{fig 3+join}
\end{figure}

{\vskip 3pt}

\item \textit{Auxiliary Variables}: As is well-known in the theory of classical parallel programming (see \cite{Apt09}, Chapters 7 and 8, and \cite{Francez}, Chapter 7), to achieve a (relatively) complete logical system for reasoning about parallel programs, except finding a strong enough  rule for parallel composition, one must introduce \textit{auxiliary variables} to record the control flow of a program, which, at the same time, should not influence the control flow inside the program.
We presented several rules in Subsection \ref{sub-auxiliary-rules} for introducing auxiliary variables, and they were employed to establish (relative) completeness of our proof system for disjoint parallel quantum programs. However, there they were used to deal with entanglement and not for recording control flows.
It seems that auxiliary variables recording control flows are also needed in parallel quantum programming. At this moment, however, we do not have a clear idea about how such auxiliary variables can be introduced in the case of parallel quantum programs with shared variables.

{\vskip 3pt}

\item \textit{Infinite-Dimension}: The issue of infinite-dimensional state Hilbert spaces naturally arises when developing a logical system for parallel quantum programs with infinite data types like integers and reals. As we can see in Subsection \ref{dis-completeness} (and Apeendices G and H), this issue was properly resolved with auxiliary rule (R.Lim) defined in terms of weak convergence of operators in the case of disjoint parallel quantum programs. But it is still unknown whether the same idea works or not for shared variables; in particular, how it can be used in combination with a parallel composition rule like (R.PC.J) considered above.   
\end{itemize}

It seems that a \textit{full} solution to the above three issues and achieving a (relatively) complete proof system for parallel quantum programs are still far beyond the current reach.

{\vskip 4pt} 

\textbf{\textit{2. Mechanisation}}: A theorem prover for quantum Hoare logic was implemented in Isabelle/HOL for verification of quantum \textbf{while}-programs \cite{Liu19}.  
We plan to further formalise the syntax, semantics and proof rules presented in this paper and to extend the theorem prover so that it can be used for verification of parallel quantum programs. 
Mechanisation of the current proof rules seems feasible. In the future, if we are able to find a stronger rule of the form (R.PC.J) discuused above, implementing an automatic tool for verification of parallel quantum programs based on it will be difficult and even rely on a breakthrough in finding an algorithmic solution to the following long-standing open problem (listed in \cite{Stil95} as one of the ten most prominent mathematical challenges in quantum chemistry; see also \cite{Kly06}) --- \textit{Quantum Marginal Problem}: given a family $\mathcal{J}$ of subsets of $\{1,...,n\}$, and for each $J\in\mathcal{J}$, given a density operator (mixed state) $\rho_J$ in $\mathcal{H}_J$. Is there a join (global state) of $\{\rho_j\}_{J\in\mathcal{J}}$ in $\mathcal{H}$?

{\vskip 4pt}

\textbf{\textit{3. Applications}}:  As pointed out in \cite{Martin18}, parallelism at various levels will be an important consideration for quantum computing; in particular, proper architectural support for parallel implementation of quantum gates may be pivotal for harnessing the power of NISQ (Noisy Intermediate Scale Quantum) devices. Our target applications of the results obtained in this paper are of course verification of parallel quantum programs and perhaps also reasoning about concurrency in operating systems of quantum computers. On the other hand, as suggested in Section 8.8 of \cite{Ying16}, some ideas in quantum programming can be applied to quantum physics. Along this line, it would be interesting to see whether our results can also be used for reasoning about many-body quantum systems (see Remark \ref{remark-loc-ham} for a brief discussion about a link between our parallel composition rule (R.PC.L) and local Hamiltonians). 

{\vskip 4pt}

\textbf{\textit{4. Extensions}}: As a first step in the studies of parallel quantum programming, this paper tried to generalise the Owicki-Gries and Lamport method. An interesting problem for future research is how to extend moderner verification techniques  beyond the Owicki-Gries and Lamport paradigm for parallel quantum programs; for example: 
\begin{itemize}
\item \textit{Compositional Techniques}: The verification technique presented in this paper is non-compositional as the Owicki-Gries and Lamport method. It is desirable to develop some compositional verification techniques that can reduce verification of a large program to independent verification of its subprograms for parallel quantum programs, e.g. quantum extension of Jone's rely-guarantee paradigm for shared variable parallelism \cite{Jones83} and 
Misra and Chandy's assumption-commitment paradigm for synchronous message passing \cite{CM88}. 

{\vskip 3pt} 

\item \textit{Separation Logic and Modular Reasoning}: Concurrent separation logic \cite{Peter19, Stephen16} is a modern logic for reasoning about parallelism and concurrency. One of its central idea is to use separating conjunctions $\star_{i=1}^nA_i, \star_{i=1}^nB_i$ of preconditions and postconditions to replace the ordinary conjunctions $\bigwedge_{i=1}^nA_i, \bigwedge_{i=1}^nB_i$ in the parallel composition rule (R.PC). In particular, the new parallel composition rule with separating conjunctions supports modular reasoning about threads and processes. It will be a great challenge to realise this idea in presence of quantum correlations that are fundamentally different from their classical counterparts. Indeed, we are even not sure this is possible or not. 

\item \textit{Message Passing}: Shared variables and message passing are two major mechanics of process interaction in parallel programming. This paper focuses on the model of parallel quantum programming with shared variables. Parallel quantum programming through message passing has been studied in \cite{JL04, GN05, Feng11} using the process algebra approach. How can we develop a proof system of the Hoare-style for parallel (or distributed) quantum programs with message passing?

\item \textit{Reasoning about Weak Memory Models}: The memory model for parallelism of quantum programs is the same as in the original Owicki-Gries and Lamport method, namely sequential consistency. Recently, the Owicki-Gries and Lamport method has been generalised to deal with various weak memory models; see for example \cite{La-Va}. How to define and reason about parallel quantum programs with weak memory models?
\end{itemize}

\newpage

\newpage

\appendix

\section{Basic Properties of Operators in Hilbert Spaces}

For convenience of the reader, we review some notions of operators and super-operators and their basic properties that will be used in the remaining parts of the Appendices.

\subsection{L\"{o}wner Order between Operators}

The L\"{o}wner order is extensively used in the theory of quantum programming. Here, we list two of its properties needed in the proofs of our results.

\begin{lem}\label{lem-lowner} Let $A, B$ be observables (i.e. Hermitian operators) in Hilbert space $\mathcal{H}$. Then $A\sqsubseteq B$ if and only if for all density operators in $\mathcal{H}$: $$\mathit{tr}(A\rho)\leq\mathit{tr}(B\rho).$$
\end{lem}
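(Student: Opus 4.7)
The plan is to prove the two directions of the equivalence separately, using standard properties of Hermitian and positive operators together with the fact that pure states $|\psi\rangle\langle\psi|$ (for unit vectors $|\psi\rangle$) are themselves density operators.

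For the forward implication ($\Rightarrow$), I would assume $A \sqsubseteq B$, i.e.\ $B - A$ is positive, and take an arbitrary density operator $\rho$. Using the spectral decomposition $\rho = \sum_i p_i |\psi_i\rangle\langle\psi_i|$ with $p_i \geq 0$ and $\sum_i p_i \leq 1$, I would compute
\begin{equation*}
\mathit{tr}((B-A)\rho) = \sum_i p_i \langle\psi_i|(B-A)|\psi_i\rangle \geq 0,
\end{equation*}
because each factor $\langle\psi_i|(B-A)|\psi_i\rangle$ is non-negative by positivity of $B-A$, and $p_i\ge 0$. By linearity of the trace this gives $\mathit{tr}(A\rho) \leq \mathit{tr}(B\rho)$.

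For the backward implication ($\Leftarrow$), I would argue contrapositively via pure states. Given an arbitrary nonzero $|\psi\rangle \in \mathcal{H}$, normalise it to $|\widehat{\psi}\rangle = |\psi\rangle/\||\psi\rangle\|$ and set $\rho = |\widehat{\psi}\rangle\langle\widehat{\psi}|$, which is a (pure) density operator. Applying the hypothesis gives
\begin{equation*}
\langle\widehat{\psi}|(B-A)|\widehat{\psi}\rangle = \mathit{tr}((B-A)\rho) \geq 0,
\end{equation*}
and scaling back shows $\langle\psi|(B-A)|\psi\rangle \geq 0$ for every $|\psi\rangle$. Since $B-A$ is Hermitian (as a difference of observables), this is exactly the definition of positivity, so $B - A \sqsupseteq 0$, i.e.\ $A \sqsubseteq B$.

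There is no real obstacle here: both directions are essentially one-line arguments once one selects the right test states (spectral components of $\rho$ in one direction, rank-one projectors in the other). The only minor point worth mentioning is that in the infinite-dimensional case the spectral decomposition of $\rho$ should be understood as a (possibly countable) sum converging in trace norm, but since $B-A$ is bounded (as both $A$ and $B$ are bounded observables by the standing assumption on quantum predicates), termwise application of $\mathit{tr}((B-A)\,\cdot\,)$ is justified and the argument goes through unchanged.
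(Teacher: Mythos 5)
Your proof is correct: the forward direction via the spectral decomposition of $\rho$ and the backward direction via rank-one (pure-state) density operators are exactly the standard argument, and your remark about boundedness of $B-A$ handles the infinite-dimensional case properly. The paper itself states this lemma in Appendix A as a basic property without proof, so there is nothing to diverge from; your argument is the canonical one (the phrase \emph{contrapositively} is a slight misnomer, since your backward direction is in fact a direct argument, but this does not affect correctness).
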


\begin{lem}\label{lem-tensor} \begin{enumerate}\item If $A_1,A_2$ are positive operators in $\mathcal{H}_1$ and $\mathcal{H}_2$, respectively, then $A_1\otimes A_2$ is a positive operator in $\mathcal{H}_1\otimes\mathcal{H}_2$.
\item For any operators $A_1,B_1$ in $\mathcal{H}_1$ and $A_2, B_2$ in $\mathcal{H}_2$, $A_1\sqsubseteq B_1$ and $A_2\sqsubseteq B_2$ implies $A_1\otimes A_2\sqsubseteq B_1\otimes B_2.$
\end{enumerate}
\end{lem}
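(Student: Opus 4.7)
The plan is to handle the two clauses in order: first establish clause (1) from scratch, then derive clause (2) by a telescoping identity that reduces it back to clause (1).

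For clause (1), I would use the fact that every positive operator on a (separable) Hilbert space admits a positive square root. Write $A_1 = C_1 C_1^\dag$ and $A_2 = C_2 C_2^\dag$ with $C_i = \sqrt{A_i}$. Since $(C_1\ox C_2)(C_1\ox C_2)^\dag = (C_1C_1^\dag)\ox(C_2C_2^\dag) = A_1\ox A_2$, the operator $A_1\ox A_2$ is of the form $XX^\dag$ and is therefore positive. (Equivalently, one can verify this on product vectors: for $|\varphi\>\in\mathcal{H}_1\ox\mathcal{H}_2$ write $|\varphi\> = \sum_{j,k}c_{jk}|e_j^{(1)}\>|e_k^{(2)}\>$ in product bases diagonalising $A_1$ and $A_2$ and compute $\<\varphi|A_1\ox A_2|\varphi\> = \sum_{j,k}\lambda_j^{(1)}\lambda_k^{(2)}|c_{jk}|^2 \ge 0$, using that the eigenvalues $\lambda_j^{(i)}$ are non-negative.)

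For clause (2), in the setting of this paper the operators $A_i, B_i$ are quantum predicates and hence positive, so I may assume $A_1, A_2, B_1, B_2 \ge 0$. The key step is the telescoping identity
\begin{equation*}
B_1\ox B_2 - A_1\ox A_2 = (B_1 - A_1)\ox B_2 + A_1\ox (B_2 - A_2).
\end{equation*}
By hypothesis, $B_1 - A_1 \sqsupseteq 0$ and $B_2 - A_2 \sqsupseteq 0$, and by positivity we also have $B_2 \sqsupseteq 0$ and $A_1\sqsupseteq 0$. Applying clause (1) to each of the two summands shows that each is positive, hence so is their sum. Thus $A_1\ox A_2 \sqsubseteq B_1\ox B_2$.

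I do not anticipate a real obstacle: the only subtlety is ensuring that the telescoping identity is expressed so that each summand is a tensor product of two positive operators, which is why one uses $B_2$ and $A_1$ as the ``outer'' factors rather than the differences themselves. If one prefers not to invoke positivity of the $A_i,B_i$ beyond what is assumed, one may instead use the symmetric decomposition
\begin{equation*}
B_1\ox B_2 - A_1\ox A_2 = \tfrac{1}{2}\bigl[(B_1-A_1)\ox(B_2+A_2) + (B_1+A_1)\ox(B_2-A_2)\bigr],
\end{equation*}
which requires only the sign of $B_i+A_i$ to be controlled; in the paper's context this is immediate since all operators involved are quantum predicates.
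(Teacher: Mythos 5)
Your proof is correct. The paper itself states this lemma in Appendix~A as a standard fact and gives no proof, so there is nothing to compare against; your square-root argument for clause (1) and the telescoping decomposition for clause (2) are exactly the standard way to establish it. Two remarks are worth making. First, your decision to assume positivity in clause (2) is not merely a convenience: as literally stated (``for any operators''), the claim is false --- take $A_1=A_2=-I$ and $B_1=B_2=0$, so that $A_i\sqsubseteq B_i$ but $A_1\otimes A_2=I\not\sqsubseteq 0$ --- so the intended reading is indeed $0\sqsubseteq A_i\sqsubseteq B_i$, which is the only way the lemma is used in the paper (quantum predicates and partial density operators), and your telescoping identity then needs only $A_1\sqsupseteq 0$ and $B_2\sqsupseteq 0$, both of which follow. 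Second, a small caveat on your parenthetical alternative for clause (1): expanding in orthonormal eigenbases of $A_1$ and $A_2$ presupposes that the positive operators are diagonalisable with a point spectrum, which holds in finite dimensions but not for general bounded positive operators on a separable space; since the paper allows infinite-dimensional $\mathcal{H}_q$, the square-root argument $(C_1\otimes C_2)(C_1\otimes C_2)^\dag=A_1\otimes A_2$ should be taken as the actual proof, as you do, with the eigenbasis computation only an illustration.
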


\subsection{Convergence of Operators}

We need the notions of weak and strong convergence when the state Hilbert space of a quantum program is infinite-dimensional. The following lemmas will be needed in the proofs of Theorems 4.2 and 4.3 (see Appendices G and H).     

\begin{defn} \cite{Reed} \label{wot} Let $\{A_n\}$ be a sequence of operators on a separable Hilbert space $\hs$. Then: 
\begin{enumerate}
\item $\{A_n\}$ weakly converges to an operator $A$, written:
$A_n\overset{w.o.t.}{\longrightarrow}A,$
if for all $|\psi\>,|\phi\>\in\mathcal{H}$,
$$\lim_{n\rightarrow\infty}\<\psi|A_n|\phi\>\> = \<\psi|A|\phi\>\>.$$
\item $\{A_n\}$ strongly converges to an operator $A$, written: $A_n\overset{s.o.t.}{\longrightarrow}A,$
if for all $|\psi\>\in\mathcal{H}$,
$$\lim_{n\rightarrow\infty}\|(A_n-A)|\psi\>\| = 0.$$
\end{enumerate}
\end{defn}

\begin{lem} \cite{Reed}
\label{fact-sot-wot} $A_n\overset{s.o.t.}{\longrightarrow}A$ implies $A_n\overset{w.o.t.}{\longrightarrow}A$.
\end{lem}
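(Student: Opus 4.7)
The plan is to invoke the Cauchy--Schwarz inequality in $\mathcal{H}$ to reduce weak convergence of matrix elements to strong convergence of vectors. Concretely, I would fix arbitrary $|\psi\rangle,|\phi\rangle\in\mathcal{H}$ and write
\[
\langle\psi|A_n|\phi\rangle - \langle\psi|A|\phi\rangle = \langle\psi|(A_n-A)|\phi\rangle,
\]
then bound
\[
|\langle\psi|(A_n-A)|\phi\rangle| \leq \||\psi\rangle\|\cdot\|(A_n-A)|\phi\rangle\|.
\]
The first factor is a fixed finite constant, while by the hypothesis $A_n\overset{s.o.t.}{\longrightarrow}A$ the second factor tends to $0$ as $n\rightarrow\infty$. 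This yields $\lim_{n\to\infty}\langle\psi|A_n|\phi\rangle = \langle\psi|A|\phi\rangle$, which is exactly the condition for $A_n\overset{w.o.t.}{\longrightarrow}A$ in Definition \ref{wot}.

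There is essentially no obstacle here: the argument is a one-line application of Cauchy--Schwarz and requires no machinery beyond the definitions already stated. The only point to verify is that Cauchy--Schwarz applies to $\langle\psi|(A_n-A)|\phi\rangle$, which is immediate by viewing it as the inner product of $|\psi\rangle$ with the vector $(A_n-A)|\phi\rangle$ in $\mathcal{H}$. Since $|\psi\rangle$ and $|\phi\rangle$ were arbitrary, the conclusion holds for all pairs of vectors, establishing the implication.
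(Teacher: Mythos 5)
Your argument is correct and is precisely the standard Cauchy--Schwarz proof of this fact; the paper itself gives no proof but simply cites Reed--Simon, where the same one-line estimate $|\langle\psi|(A_n-A)|\phi\rangle|\leq\||\psi\rangle\|\cdot\|(A_n-A)|\phi\rangle\|$ is the intended argument. Nothing is missing.
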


The next lemma shows that trace is continuous with respect to weak convergence. 

\begin{lem}
\label{fact-alter-def-wot} For quantum predicates $\{A_n\}$ and $A$, $A_n\overset{w.o.t.}{\longrightarrow}A$ if and only if for all $\rho\in\mathcal{D}(\hs)$,
\begin{equation}\label{exp-limit}\lim_{n\rightarrow \infty}\tr(A_n\rho) = \tr(A\rho).\end{equation}
\end{lem}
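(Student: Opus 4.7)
The plan is to prove both directions separately, with the polarization identity serving as the bridge between the scalar quantities $\langle\psi|A_n|\phi\rangle$ appearing in the definition of weak operator convergence and the trace expressions $\tr(A_n\rho)$ appearing in \eqref{exp-limit}. The only nontrivial analytic ingredient needed beyond this algebraic bridge is a dominated-convergence argument that exploits the uniform bound $0\sqsubseteq A_n\sqsubseteq I$ given by the hypothesis that the $A_n$ are quantum predicates.

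For the ``$\Leftarrow$'' direction, I would first handle the diagonal case. Given any $|\psi\>\in\hs$ with $\|\psi\|=1$, the operator $\rho_\psi = |\psi\>\<\psi|$ is a density operator, and $\tr(A_n\rho_\psi)=\<\psi|A_n|\psi\>$, so the hypothesis immediately yields $\<\psi|A_n|\psi\>\to\<\psi|A|\psi\>$. For general $|\psi\>,|\phi\>\in\hs$ I would then apply the polarization identity
\begin{equation*}
\<\psi|B|\phi\> \;=\; \frac{1}{4}\sum_{k=0}^{3} \ci^{-k}\,\<\psi+\ci^k\phi|B|\psi+\ci^k\phi\>,
\end{equation*}
valid for any bounded operator $B$, to write $\<\psi|A_n|\phi\>$ as a finite linear combination of expectations $\<\chi|A_n|\chi\>$ with $\chi\in\{\psi+\ci^k\phi:k=0,1,2,3\}$. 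Each such expectation equals $\|\chi\|^2\tr(A_n\rho_\chi)$ (or $0$ when $\chi=0$), hence converges to the corresponding expectation of $A$, and taking the finite sum of limits gives $\<\psi|A_n|\phi\>\to\<\psi|A|\phi\>$, i.e.\ $A_n\overset{w.o.t.}{\longrightarrow}A$.

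For the ``$\Rightarrow$'' direction I would use the spectral decomposition $\rho=\sum_i p_i|\psi_i\>\<\psi_i|$ of a density operator, with $p_i\ge 0$, $\sum_i p_i\le 1$, and $\{|\psi_i\>\}$ orthonormal, so that
\begin{equation*}
\tr(A_n\rho)\;=\;\sum_i p_i\<\psi_i|A_n|\psi_i\>,\qquad \tr(A\rho)\;=\;\sum_i p_i\<\psi_i|A|\psi_i\>.
\end{equation*}
The hypothesis $A_n\overset{w.o.t.}{\longrightarrow}A$ yields termwise convergence $\<\psi_i|A_n|\psi_i\>\to\<\psi_i|A|\psi_i\>$, and the uniform bound $0\le\<\psi_i|A_n|\psi_i\>\le 1$ (coming from $0\sqsubseteq A_n\sqsubseteq I$) makes $\{p_i\}$ a dominating summable majorant. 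Dominated convergence (equivalently, Lebesgue's theorem applied to the counting measure on the index set) then permits interchanging the limit with the sum, yielding \eqref{exp-limit}.

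The only place any real care is required is the interchange of limit and sum in the forward direction when $\hs$ is infinite-dimensional; in the finite-dimensional case the sum is finite and continuity is automatic. The uniform predicate bound $A_n\sqsubseteq I$ is precisely what makes dominated convergence applicable, so the quantum-predicate hypothesis, rather than being merely cosmetic, is exactly what the proof uses. Everything else is routine, making this lemma essentially a repackaging of the fact that $\tr(\cdot\,\rho)$ is a weak-operator-continuous functional on uniformly bounded sets.
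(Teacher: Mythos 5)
Your proposal is correct and follows essentially the same route as the paper: the reverse direction reduces to rank-one density operators (your explicit polarization step fills in a detail the paper leaves implicit when passing from diagonal matrix elements to full weak convergence), and the forward direction uses the spectral decomposition of $\rho$ with the uniform bound $0\sqsubseteq A_n\sqsubseteq I$, your dominated-convergence phrasing being equivalent to the paper's explicit $\epsilon$-truncation of the tail $\sum_{i\ge N}\lambda_i$.
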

\begin{proof}
($\Leftarrow$) For $|\psi\>\in\hs$, we have $\rho=|\psi\>\<\psi|/\||\psi\>\|^2\in\mathcal{D}(\hs)$ and it follows from (\ref{exp-limit}) that 
$$\lim_{n\rightarrow\infty}\tr\left((A_n-A)\frac{|\psi\>\<\psi|}{\||\psi\>\|^2}\right) = 0
\quad\Rightarrow\quad\lim_{n\rightarrow\infty}\<A_n|\psi\>,|\psi\>\> = \<A|\psi\>,|\psi\>\>,
$$
which directly leads to the definition of weak operator convergence.

($\Rightarrow$) For any $\epsilon>0$ and $\rho\in\mathcal{D}(\hs)$, we assume the spectral decomposition: $$\rho=\sum_{i=1}^\infty\lambda_i|\psi_i\>\<\psi_i|.$$ Then there exists integer $N$ such that for all $k\ge N$,
$$
\tr\left(\sum_{i\ge N}\lambda_i|\psi_i\>\<\psi_i|\right)=\sum_{i\ge N}\lambda_i\le\epsilon
$$ because $\sum_{i=1}^\infty\lambda\le 1$. 
On the other hand, as $A_n\overset{w.o.t.}{\longrightarrow}A$, there exists integer $M$ such that for all $m\ge M$:
$$
\forall\ i<N,\quad |\tr(A_m|\psi_i\>\<\psi_i|) - \tr(A|\psi_i\>\<\psi_i|)|\le\epsilon.
$$ Since $A_m,A$ are quantum predicates, i.e. $0\sqsubseteq A_m,A\sqsubseteq I$, we have $\|A_m-A\|_1\le 2$. Thus by the Cauchy-Schwarz inequality, we obtain:
\begin{align*}
|\tr(A_m\rho) - \tr(A\rho)| &=
\left|\tr\left((A_m-A)\sum_{i<N}\lambda_i|\psi_i\>\<\psi_i|\right) + \tr\left((A_m-A)\sum_{i\ge N}\lambda_i|\psi_i\>\<\psi_i|\right) \right| \\
&\le \sum_{i<N}\lambda_i\max_{i<N}|\tr\left((A_m-A)|\psi_i\>\<\psi_i|\right)|
+ \|A_m-A\|_1\tr\left(\sum_{i\ge N}\lambda_i|\psi_i\>\<\psi_i|\right) \\
&\le \epsilon + 2\epsilon = 3\epsilon
\end{align*}
\end{proof}

The following lemma shows a compatibility between the L\"{o}wner order and strong convergence. 

\begin{lem}
\label{fact-wot-cpo}
For an increasing (respectively, decreasing) sequence $\{A_n\}$ of quantum predicates with respect to L\"owner order, 
$A = \bigsqcup_{n=0}^\infty A_n$ (respectively, $\bigsqcap_{n=0}^\infty A_n$) exists and $A_n$ strongly (and therefore weakly) converges to $A$: $A_n\overset{s.o.t.}{\longrightarrow}A$ 
\end{lem}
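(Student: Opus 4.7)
The plan is to prove the increasing case directly via the classical monotone convergence argument for bounded self-adjoint operators (often attributed to Vigier), and then reduce the decreasing case to it by considering the complementary sequence $\{I - A_n\}$.

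First, I would fix an increasing sequence $\{A_n\}$ of quantum predicates and construct the supremum $A$ as a weak limit. For every $|\psi\rangle \in \mathcal{H}$, the scalar sequence $\langle\psi|A_n|\psi\rangle$ is monotonically increasing and bounded above by $\langle\psi|\psi\rangle$, hence convergent. Applying the polarization identity
\[
\langle\phi|A_n|\psi\rangle = \tfrac{1}{4}\sum_{k=0}^{3}\mathbf{i}^{k}\langle\phi+\mathbf{i}^{k}\psi|A_n|\phi+\mathbf{i}^{k}\psi\rangle,
\]
I obtain that $\langle\phi|A_n|\psi\rangle$ converges for every pair $|\phi\rangle,|\psi\rangle$. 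Denoting the limit by $s(\phi,\psi)$, one checks that $s$ is a bounded sesquilinear form with $|s(\phi,\psi)|\le\|\phi\|\|\psi\|$, so by the Riesz representation theorem there exists a unique bounded operator $A$ on $\mathcal{H}$ with $\langle\phi|A|\psi\rangle = s(\phi,\psi)$. By construction $A_n \overset{w.o.t.}{\longrightarrow} A$, and passing to the limit in $0\le\langle\psi|A_n|\psi\rangle\le\langle\psi|\psi\rangle$ gives $0\sqsubseteq A\sqsubseteq I$, so $A$ is itself a quantum predicate.

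Next I would verify that $A$ is the least upper bound in the L\"owner order. Monotonicity at each vector shows $\langle\psi|A_n|\psi\rangle\le\langle\psi|A|\psi\rangle$, hence $A_n \sqsubseteq A$ for all $n$. If $B$ is any other upper bound, then $\langle\psi|A_n|\psi\rangle\le\langle\psi|B|\psi\rangle$ for every $n$ and $|\psi\rangle$, and taking the limit gives $\langle\psi|A|\psi\rangle\le\langle\psi|B|\psi\rangle$, i.e.\ $A\sqsubseteq B$. Thus $A = \bigsqcup_{n=0}^{\infty}A_n$.

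For the strong convergence claim, the key observation is the operator inequality $(A - A_n)^2 \sqsubseteq (A - A_n)$, which follows because $A - A_n$ is positive with $A - A_n \sqsubseteq I$ (so its spectrum lies in $[0,1]$, where $t^2 \le t$). Therefore, for every $|\psi\rangle$,
\[
\|(A - A_n)|\psi\rangle\|^{2} = \langle\psi|(A-A_n)^{2}|\psi\rangle \le \langle\psi|(A-A_n)|\psi\rangle \longrightarrow 0,
\]
so $A_n \overset{s.o.t.}{\longrightarrow} A$, and weak convergence follows from Lemma \ref{fact-sot-wot}. The decreasing case is handled by setting $A_n' = I - A_n$, which is an increasing sequence of quantum predicates; applying the already-proved increasing case yields a strong limit $A'$, and then $A = I - A' = \bigsqcap_{n=0}^{\infty}A_n$ with $A_n \overset{s.o.t.}{\longrightarrow} A$.

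The only mild subtlety is the passage from pointwise scalar convergence to the existence of a genuine bounded operator as the limit; this is exactly what the polarization identity together with uniform boundedness (supplied automatically by $A_n \sqsubseteq I$) delivers. Once that is in hand, the jump from weak to strong convergence via the squared-operator trick is the standard step, and no deeper functional-analytic machinery is required.
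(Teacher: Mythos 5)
Your proof is correct, and it takes a more self-contained route than the paper does. The paper's own argument is essentially a citation: existence of the least upper bound is attributed to the fact that the quantum predicates under the L\"owner order form a CPO, and the strong convergence $A_n\overset{s.o.t.}{\longrightarrow}A$ is quoted from \cite{Ying16} (page 100), with the decreasing case left implicit. You instead prove the underlying Vigier-type monotone convergence theorem from scratch: monotone bounded quadratic forms converge, polarization plus the uniform bound $\|A_n\|\le 1$ and the Riesz representation of bounded sesquilinear forms produce a weak limit $A$ with $0\sqsubseteq A\sqsubseteq I$, a direct order argument identifies $A$ as $\bigsqcup_n A_n$, and the inequality $(A-A_n)^2\sqsubseteq A-A_n$ (valid since $0\sqsubseteq A-A_n\sqsubseteq I$) upgrades weak to strong convergence; the decreasing case then follows by applying this to $\{I-A_n\}$, using that $X\mapsto I-X$ is an order-reversing bijection on predicates. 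The only cosmetic remark is that, depending on convention, your polarization identity recovers $\langle\psi|A_n|\phi\rangle$ rather than $\langle\phi|A_n|\psi\rangle$, but self-adjointness of each $A_n$ makes this immaterial. What your version buys is a verifiable in-place proof (including the explicit reduction of the infimum case), at the cost of length; what the paper's version buys is brevity by outsourcing both the CPO structure and the strong-convergence step to the textbook reference.
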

\begin{proof}
The existence of $A$ is guaranteed by the fact that the set of quantum predicates together with L\"owner order is a complete partial order (CPO). 
Furthermore, for any $|\psi\>\in\hs$, we have
$\lim_{n\rightarrow \infty} \|A_n|\psi\> - A|\psi\>\|=0$, so $A_n\overset{s.o.t.}{\longrightarrow}A$ (see \cite{Ying16}, page 100). 
\end{proof}

\subsection{Duality between Quantum Operations}

The proofs of several results in this paper require to exploit duality between quantum operations. 

\begin{defn}\label{def-dual} Let $\mathcal{E}$ be a quantum operation (i.e. super-operator) in Hilbert space $\mathcal{H}$ with the Kraus representation $\mathcal{E}=\sum_i E_i\circ E_i^\dag$. Then its (Schr\"{o}dinger-Heisenberg) dual is the super-operator $\mathcal{E}^\ast$ defined by $$\mathcal{E}^\ast(A)=\sum_i E_i^\dag AE_i$$ for any observable $A$ in $\mathcal{H}$.
\end{defn}

The following lemma the connection between a quantum operation and its dual can be given in terms of trace. 

\begin{lem}\label{lem-dual}For any quantum operation $\mathcal{E}$, observable $A$ and density operator $\rho$ in $\mathcal{H}$, we have: $$\mathit{tr}(A\mathcal{E}(\rho))=\mathit{tr}(\mathcal{E}^\ast(A)\rho).$$ In particular, it holds that $$\mathit{tr}(\mathcal{E}(\rho))=\mathit{tr}(\mathcal{E}^\ast(I)\rho),$$ where $I$ is the identity operator in $\mathcal{H}$.
\end{lem}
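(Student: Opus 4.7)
The plan is to unfold the Kraus representation $\mathcal{E} = \sum_i E_i \circ E_i^\dag$ supplied by Definition \ref{def-dual}, so that $\mathcal{E}(\rho) = \sum_i E_i \rho E_i^\dag$, and then chase the identity by a short trace calculation. First, by linearity of the trace, $\mathit{tr}(A\mathcal{E}(\rho)) = \sum_i \mathit{tr}(A E_i \rho E_i^\dag)$. Next, for each summand I would apply the cyclic property of trace to move $E_i^\dag$ to the front, yielding $\mathit{tr}(E_i^\dag A E_i \rho)$. Re-packaging the sum inside the trace by linearity gives $\mathit{tr}\!\left(\left(\sum_i E_i^\dag A E_i\right)\rho\right)$, which equals $\mathit{tr}(\mathcal{E}^\ast(A)\rho)$ by the very definition of $\mathcal{E}^\ast$. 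The second identity is then an immediate specialisation to $A = I$, since $\mathit{tr}(I\cdot\mathcal{E}(\rho)) = \mathit{tr}(\mathcal{E}(\rho))$.

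In the finite-dimensional case this is essentially a one-line computation and presents no obstacle. The only place where care is warranted is the (possibly) infinite-dimensional, separable setting that the paper explicitly allows: the Kraus index $i$ may range over a countable set, and one has to justify the interchange of the sum with the trace as well as the cyclicity step. The main obstacle to address is therefore a convergence/continuity argument, but it is mild. Since $\rho$ is trace-class and each $E_i \rho E_i^\dag$ is positive, the partial sums $\sum_{i\le N} E_i \rho E_i^\dag$ converge to $\mathcal{E}(\rho)$ in the trace norm; because $A$ is a bounded observable (quantum predicates in an infinite-dimensional $\mathcal{H}$ are required to be bounded, cf.\ the definition of quantum predicate in Section \ref{qwhile}), the linear functional $X \mapsto \mathit{tr}(AX)$ on trace-class operators is continuous in the trace norm, and hence passing the trace through the infinite sum is legitimate. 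Cyclicity of the trace is available for each individual summand because $E_i^\dag A E_i \rho$ is the product of a bounded operator and a trace-class operator, which is trace-class. With these standard facts in hand, the computation sketched above goes through verbatim, and no further machinery is needed.
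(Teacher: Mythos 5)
Your proof is correct: the paper states Lemma \ref{lem-dual} without proof, treating it as a standard consequence of the Kraus representation in Definition \ref{def-dual}, and your computation (linearity of trace, cyclicity $\mathit{tr}(AE_i\rho E_i^\dag)=\mathit{tr}(E_i^\dag AE_i\rho)$ for bounded times trace-class operators, then re-summing to get $\mathcal{E}^\ast(A)$) is exactly the canonical argument. Your handling of the infinite-dimensional case — trace-norm convergence of the partial sums of $\mathcal{E}(\rho)$ together with boundedness of $A$, and symmetrically for the dual side — is the right and sufficient level of care.
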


The next lemma shows that the dual of a quantum operation is continuous with respect to weak convergence of operators. 

\begin{lem}
\label{fact-wot-so}
For quantum predicates $\{A_n\}, A$ and quantum operation $\mathcal{E}$, if $A_n\overset{w.o.t.}{\longrightarrow}A$, then $\mathcal{E}^\ast(A_n)\overset{w.o.t.}{\longrightarrow}\mathcal{E}^\ast(A)$.
\end{lem}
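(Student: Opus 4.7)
The plan is to avoid manipulating the (possibly infinite) Kraus decomposition of $\mathcal{E}$ directly and instead reduce weak convergence of $\{\mathcal{E}^\ast(A_n)\}$ to weak convergence of $\{A_n\}$ via the trace-based characterization of Lemma \ref{fact-alter-def-wot} together with the Schr\"odinger--Heisenberg duality of Lemma \ref{lem-dual}.

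First, I would observe that, by Lemma \ref{fact-alter-def-wot}, to conclude $\mathcal{E}^\ast(A_n)\overset{w.o.t.}{\longrightarrow}\mathcal{E}^\ast(A)$ it is enough to show that
\[
\lim_{n\to\infty}\tr\bigl(\mathcal{E}^\ast(A_n)\rho\bigr)=\tr\bigl(\mathcal{E}^\ast(A)\rho\bigr)\qquad\text{for every }\rho\in\mathcal{D}(\mathcal{H}).
\]
Note that $\mathcal{E}^\ast(A_n)$ and $\mathcal{E}^\ast(A)$ are themselves quantum predicates: they are positive and bounded above by $\mathcal{E}^\ast(I)\sqsubseteq I$, so Lemma \ref{fact-alter-def-wot} is applicable to the target sequence.

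Next I would apply Lemma \ref{lem-dual} on both sides to rewrite the expectations: $\tr(\mathcal{E}^\ast(A_n)\rho)=\tr(A_n\mathcal{E}(\rho))$ and $\tr(\mathcal{E}^\ast(A)\rho)=\tr(A\mathcal{E}(\rho))$. Since $\mathcal{E}$ is a quantum operation, it is completely positive and trace-non-increasing, so $\sigma:=\mathcal{E}(\rho)$ is again a partial density operator, i.e.\ $\sigma\in\mathcal{D}(\mathcal{H})$ (the case $\sigma=0$ is trivial).

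Finally, the hypothesis $A_n\overset{w.o.t.}{\longrightarrow}A$ combined with the \emph{other} direction of Lemma \ref{fact-alter-def-wot} yields $\tr(A_n\sigma)\to\tr(A\sigma)$ for every partial density operator $\sigma$, and in particular for $\sigma=\mathcal{E}(\rho)$. Chaining the two duality identities with this limit gives the required convergence of traces, and hence the weak convergence of the duals. The argument is essentially routine; the only point requiring mild care is verifying that $\mathcal{E}(\rho)$ lies in $\mathcal{D}(\mathcal{H})$ so that Lemma \ref{fact-alter-def-wot} is legitimately applicable on the right-hand side, which is immediate from $\mathcal{E}$ being a quantum operation. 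This sidesteps any subtle interchange of limit and infinite Kraus sum that a direct matrix-element computation would have forced us to justify in the infinite-dimensional setting.
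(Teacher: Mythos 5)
Your proposal is correct and follows essentially the same route as the paper: the paper's proof is exactly the chain $\tr(\mathcal{E}^\ast(A_n)\rho)=\tr(A_n\mathcal{E}(\rho))\to\tr(A\mathcal{E}(\rho))=\tr(\mathcal{E}^\ast(A)\rho)$ for every $\rho\in\mathcal{D}(\hs)$, invoking Lemma \ref{fact-alter-def-wot} in both directions together with the duality of Lemma \ref{lem-dual}. You merely make explicit two small checks the paper leaves implicit (that $\mathcal{E}(\rho)\in\mathcal{D}(\hs)$ and that the $\mathcal{E}^\ast(A_n)$ are again quantum predicates), which is fine.
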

\begin{proof}
For any $\rho\in\mathcal{D}(\hs)$, with Lemma \ref{fact-alter-def-wot} we observe:
\begin{align*}
\lim_{n\rightarrow\infty}\tr(\mathcal{E}^\ast(A_n)\rho) =
\lim_{n\rightarrow\infty}\tr(A_n\mathcal{E}(\rho)) 
= \tr(A\mathcal{E}(\rho)) 
= \tr(\mathcal{E}^\ast(A)\rho)
\end{align*}
which implies
$\mathcal{E}^\ast(A_n)\overset{w.o.t.}{\longrightarrow}\mathcal{E}^\ast(A).$
\end{proof}

\section{Proof of Lemmas Lemma 2.1 and 2.2}\label{proof-SCC}

Before proving Lemmas \ref{Aux-Sound} and \ref{SCC}, we present a very useful technical lemma which restates the defining inequalities for total and partial correctness (see Definition \ref{correctness-interpretation}) in a form of L\"{o}wner order.

\begin{lem}\label{lem-correctness} \begin{enumerate}\item $\models_{\mathit{tot}}\{A\}P\{B\}$ if and only if $A\sqsubseteq\llbracket P\rrbracket^\ast(B)$.
\item $\models_{\mathit{par}}\{A\}P\{B\}$ if and only if $$A\sqsubseteq\llbracket P\rrbracket^\ast(B)+(I-\llbracket P\rrbracket^\ast(I)),$$ where $I$ is the identity operator in $\mathcal{H}_P$.
\end{enumerate}
\end{lem}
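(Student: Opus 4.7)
The plan is to directly rewrite the defining trace inequalities for correctness using the Schr\"odinger--Heisenberg duality recorded in Lemma~\ref{lem-dual}, and then to invoke the trace characterisation of the L\"owner order from Lemma~\ref{lem-lowner}.

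For part~(1), I will start from the defining condition
\[
\mathit{tr}(A\rho)\;\leq\;\mathit{tr}(B\,\llbracket P\rrbracket(\rho))\qquad\text{for every }\rho\in\mathcal{D}(\mathcal{H}_P),
\]
and apply Lemma~\ref{lem-dual} on the right-hand side to rewrite $\mathit{tr}(B\,\llbracket P\rrbracket(\rho))=\mathit{tr}(\llbracket P\rrbracket^\ast(B)\,\rho)$. Since $A$ and $\llbracket P\rrbracket^\ast(B)$ are both Hermitian operators on $\mathcal{H}_P$, Lemma~\ref{lem-lowner} then converts this family of scalar inequalities, quantified over all density operators, into the single operator inequality $A\sqsubseteq\llbracket P\rrbracket^\ast(B)$. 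Reading Lemma~\ref{lem-lowner} in the reverse direction recovers the trace formulation, giving the equivalence.

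For part~(2), the strategy is analogous, but an additional step is needed to absorb the defect term $\mathit{tr}(\rho)-\mathit{tr}(\llbracket P\rrbracket(\rho))$ into a single trace expression. I will rewrite $\mathit{tr}(\rho)=\mathit{tr}(I\rho)$ and, by Lemma~\ref{lem-dual} applied with the observable $I$, $\mathit{tr}(\llbracket P\rrbracket(\rho))=\mathit{tr}(\llbracket P\rrbracket^\ast(I)\,\rho)$. Combined with the rewriting from part~(1), the right-hand side of the partial-correctness inequality then collects as $\mathit{tr}\bigl((\llbracket P\rrbracket^\ast(B)+I-\llbracket P\rrbracket^\ast(I))\,\rho\bigr)$. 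Since the dual super-operator preserves Hermiticity, the combined operator is Hermitian, and Lemma~\ref{lem-lowner} once more converts the universally quantified trace inequality into the desired L\"owner-order statement.

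The argument is essentially routine and I do not anticipate a genuine obstacle: the only points to verify are that the Hermiticity hypothesis of Lemma~\ref{lem-lowner} is met on both sides of each inequality---immediate, since $B$ and $I$ are Hermitian and the dual super-operator preserves Hermiticity---and that Lemma~\ref{lem-lowner} provides the equivalence when the trace inequality is tested against the full set $\mathcal{D}(\mathcal{H}_P)$ of density operators, which is precisely how it is stated.
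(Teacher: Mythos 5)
Your proposal is correct and follows essentially the same route as the paper, which disposes of the lemma by citing Lemma 4.1.2 of \cite{Ying16} together with Definition \ref{def-dual} and Lemma \ref{lem-dual}; your argument simply unpacks that citation into the explicit rewriting $\mathit{tr}(B\llbracket P\rrbracket(\rho))=\mathit{tr}(\llbracket P\rrbracket^\ast(B)\rho)$, $\mathit{tr}(\llbracket P\rrbracket(\rho))=\mathit{tr}(\llbracket P\rrbracket^\ast(I)\rho)$ and the Löwner-order characterisation of Lemma \ref{lem-lowner}. The only cosmetic point is that Lemma \ref{lem-lowner} quantifies over trace-one density operators while Definition \ref{correctness-interpretation} quantifies over partial density operators; by homogeneity of the trace inequality the two tests are equivalent, so this does not affect the argument.
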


\begin{proof} Immediate from Lemma 4.1.2 in \cite{Ying16} and Definition \ref{def-dual} and Lemma \ref{lem-dual}.
\end{proof}

The conditions for total and partial correctness given in the above lemma are often easier to manipulate than their defining inequalities in Definition \ref{correctness-interpretation} because the universal quantifier over density operator $\rho$ is eliminated.

\begin{proof}[Proof of Lemma \ref{Aux-Sound}] This lemma was proved in \cite{Ying18} except that rule (R.Lim) is strengthened with weak operator convergence. So, we are going to prove soundness of (R.Lim). 
We only consider the case of partial correctness (the case of total correctness is similar).  
For any $\rho\in\mathcal{D}(\hs)$ and for each $n$, according to the assumption $\models_{par}\{A_n\}P\{B_n\}$, we have: 
$$\tr(A_n\rho)\le \tr(B_n)\llbracket P\rrbracket(\rho)) + [\tr(\rho) - \tr(\llbracket P\rrbracket(\rho))],$$
and we can take $n\rightarrow\infty$ to obtain:
$$\lim_{n\rightarrow\infty}\tr(A_n\rho)\le \lim_{n\rightarrow\infty}\tr(B_n\llbracket P\rrbracket(\rho)) + [\tr(\rho) - \tr(\llbracket P\rrbracket(\rho))].$$
On the other hand, it is assumed that 
$A_n\overset{w.o.t.}{\longrightarrow}A$ and $B_n\overset{w.o.t.}{\longrightarrow}B.$ So, 
it follows from Lemma \ref{fact-alter-def-wot} that
$$\tr(A\rho)\le \tr(B\llbracket P\rrbracket(\rho)) + \tr(\rho) - \tr(\llbracket P\rrbracket (\rho)).$$
Thus, $\models_{par}\{A\}P\{B\}$.
\end{proof}

\begin{proof}[Proof of Lemma \ref{SCC}]
$\bullet\ $(R.CC1) For each $i$, it follows from Lemma \ref{lem-correctness} and the assumptions $\models_{par}\{A_i\}P\{B_i\}$ and $P:{\rm Abort}(A)$; i.e. $\models_\mathit{par}\{A\}P\{0\}$ that
$$A_i\sqsubseteq\llbracket P\rrbracket^\ast(B_i)+(I-\llbracket P\rrbracket^\ast(I)),\qquad
A\sqsubseteq I-\llbracket P\rrbracket^\ast(I).$$
As $p_i\ge0$ and $\sum_ip_i\le1$, it immediately follows that 
\begin{align*}
\sum_ip_iA_i + \left(1-\sum_ip_i\right)B&\sqsubseteq
\sum_ip_i\llbracket P\rrbracket^\ast(B_i)+\sum_ip_i(I-\llbracket P\rrbracket^\ast(I))+\left(1-\sum_ip_i\right)(I-\llbracket P\rrbracket^\ast(I))\\
&=\llbracket P\rrbracket^\ast\left(\sum_ip_iB_i\right) + \left[I-\llbracket P\rrbracket^\ast(I)\right]
\end{align*}
which, together with Lemma \ref{lem-correctness}, implies
$$\models_\mathit{par}\left\{\sum_ip_iA_i+\left(1-\sum_ip_i\right)A\right\}P\left\{\sum_ip_iB_i\right\}.$$

$\bullet\ $(R.CC2) For each $i$, it follows from the assumption $\models_{par}\{A_i\}P\{B_i\}$ that
\begin{equation}\label{pcc-1}A_i\sqsubseteq^\ast(B_i)+(I-\llbracket P\rrbracket^\ast(I)).\end{equation}
Another assumption $\models P:{\rm NTerm}(A)$ ensures that:
\begin{equation}\label{pcc-2}I-\llbracket P\rrbracket^\ast(I)\sqsubseteq A.\end{equation}
Since $\lambda_i\ge0$ and $\sum_i\lambda_i\ge1$, we can combine (\ref{pcc-1}) and (\ref{pcc-2}) to obtain: 
\begin{align*}
\sum_i\lambda_iA_i - \left(\sum_i\lambda_i-1\right)A&\sqsubseteq
\sum_i\lambda_i\llbracket P\rrbracket^\ast(B_i)+\sum_i\lambda_i(I-\llbracket P\rrbracket^\ast(I))-\left(\sum_i\lambda_i-1\right)(I-\llbracket P\rrbracket^\ast(I))\\
&= \llbracket P\rrbracket^\ast\left(\sum_i\lambda_iB_i\right)+\left[I-\llbracket P\rrbracket^\ast(I)\right]
\end{align*}
which implies
$$\models_\mathit{par}\left\{\sum_{i=1}^m\lambda_iA_i - \left(\sum_{i=1}^m\lambda_i-1\right)A\right\}P\left\{\sum_{i=1}^m\lambda_iB_i\right\}.$$
\end{proof} 

\section{Proof of Lemma 3.2}\label{proof-lem-det}

We first prove a diamond property for disjoint parallel quantum programs.

\begin{lem}\label{diamond} The operational semantics of disjoint parallel quantum programs enjoys the diamond property: \begin{itemize}\item if $\mathcal{A}\rightarrow\mathcal{A}_1$, $\mathcal{A}\rightarrow\mathcal{A}_2$ and $\mathcal{A}_1\neq\mathcal{A}_2$, then there exists $\mathcal{B}$ such that $\mathcal{A}_1\rightarrow\mathcal{B}$ and $\mathcal{A}_2\rightarrow\mathcal{B}$.\end{itemize}
\end{lem}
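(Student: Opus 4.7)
The plan is to prove the diamond property by a careful case analysis that reduces any pair of transitions out of $\mathcal{A}$ to their primitive ingredients and then exploits the disjointness of variables. By rule (MS2), any transition $\mathcal{A}\rightarrow\mathcal{A}'$ is generated by (i) a partition $\{\mathcal{A}_k\}_{k\in I}$ of $\mathcal{A}$ into singletons, together with (ii) a designation $I=I_0\cup I_1$ indicating which configurations stay put and which step, and (iii) for each $k\in I_1$ a single-configuration transition, which by (PC) is in turn a choice of a component index $i(k)$ together with an application of one of the base rules (Sk), (In), (UT), (SC), (IF$'$), (L$'$). So I would begin by rewriting the given transitions $\mathcal{A}\rightarrow\mathcal{A}_1$ and $\mathcal{A}\rightarrow\mathcal{A}_2$ in this canonical form, obtaining for each $\mathcal{A}_s$ ($s=1,2$) a set $I_1^{(s)}\subseteq\mathcal{A}$ of stepped configurations, a choice of component $i_s(C)$ for each $C\in I_1^{(s)}$, and the resulting multi-set of successors.

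Next I would construct the common refinement: let $\mathcal{B}$ be built by, for each configuration $C\in\mathcal{A}$, applying the step (if any) prescribed by transition 1, then the step (if any) prescribed by transition 2, on top of $\mathcal{A}_1$; and symmetrically for $\mathcal{A}_2$. To show this is well-defined, i.e.\ that $\mathcal{A}_1\rightarrow\mathcal{B}$ and $\mathcal{A}_2\rightarrow\mathcal{B}$ lead to the same $\mathcal{B}$, I would prove a local commutation lemma: if $C\equiv\langle P_1\|\cdots\|P_n,\rho\rangle$ with $\mathit{var}(P_i)$ pairwise disjoint, and component $P_i$ can step to the ensemble $\mathcal{E}_i$ while component $P_j$ ($j\neq i$) can step to $\mathcal{E}_j$, then firing $i$ then $j$ and firing $j$ then $i$ produce the same configuration ensemble. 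This is the heart of the argument, and it reduces to the fact that each base transition on $P_i$ acts as a linear map on $\mathcal{H}_{P_i}$ only (initialization, unitary, Kraus operators $M_m$ of a measurement all act as $E\otimes I_{\mathcal{H}_{P_{-i}}}$), so operators attached to disjoint components commute, giving $(E_i\otimes I)(I\otimes E_j)\rho(\cdots)^\dag=(I\otimes E_j)(E_i\otimes I)\rho(\cdots)^\dag$, and the residual programs decouple syntactically because (SC), (IF$'$), (L$'$) only modify the $i$th slot.

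The case analysis on individual configurations $C\in\mathcal{A}$ then splits into four cases according to membership in $I_1^{(1)}$ and $I_1^{(2)}$. If $C\notin I_1^{(1)}\cup I_1^{(2)}$, both transitions leave $C$ untouched and it contributes equally to $\mathcal{A}_1$ and $\mathcal{A}_2$, hence can be left untouched in both $\mathcal{A}_s\rightarrow\mathcal{B}$. If $C\in I_1^{(1)}\setminus I_1^{(2)}$, then $C$ already stepped in $\mathcal{A}_1$ while $\mathcal{A}_2$ still contains $C$; to reach the common $\mathcal{B}$ we leave those successors alone on the $\mathcal{A}_1$ side and apply the transition of $C$ to $\mathcal{A}_2$. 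The symmetric case is analogous. The remaining, and only genuinely non-trivial, case is $C\in I_1^{(1)}\cap I_1^{(2)}$: if $i_1(C)=i_2(C)$ then by determinism of the component-level transition rules both transitions produced the same successor ensemble of $C$, so nothing more is needed; if $i_1(C)\neq i_2(C)$, the commutation lemma above shows that post-composing transition 1's successors of $C$ with a step of component $i_2(C)$ gives exactly post-composing transition 2's successors of $C$ with a step of component $i_1(C)$, and we let $\mathcal{B}$ agree with this common value on the piece coming from $C$. Assembling these local commutative squares via rule (MS2) yields the required $\mathcal{B}$ with $\mathcal{A}_1\rightarrow\mathcal{B}$ and $\mathcal{A}_2\rightarrow\mathcal{B}$.

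The main obstacle is bookkeeping, not conceptual depth. The rules (IF$'$) and (L$'$) fan one configuration into a multi-set, so when two components each fire such a rule we must verify that the Cartesian product of outcomes produced by \emph{firing $i$ then $j$} matches, as multi-sets, the product produced by \emph{firing $j$ then $i$}. This follows from bilinearity of tensor product and from the fact that the measurement operators on $\mathcal{H}_{P_i}$ and $\mathcal{H}_{P_j}$ live on disjoint tensor factors, so $M_{i,m}M_{j,m'}\rho M_{j,m'}^\dag M_{i,m}^\dag = M_{j,m'}M_{i,m}\rho M_{i,m}^\dag M_{j,m'}^\dag$ indexed by the same pair $(m,m')$; but making this precise requires some care with (SC) when $P_i$ is a sequential composition whose head is the active statement. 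Once the commutation lemma is formalized at this level, the global diamond property follows by gluing via (MS2).
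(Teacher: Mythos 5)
Your proposal is correct and takes essentially the same route as the paper's own proof: the crux in both is the local commutation of two steps fired by \emph{different} components $i\neq j$ of the \emph{same} configuration, which holds because the associated operators act on the disjoint tensor factors $\mathcal{H}_{P_i}$ and $\mathcal{H}_{P_j}$ and the residual programs only change in the $i$th (resp. $j$th) slot, while steps of distinct configurations commute trivially and everything is reassembled with (MS2). The paper's proof only writes out the case where the transitions are derived by (PC) and (MS1) and asserts that the (MS2) case is an easy generalisation, so your explicit decomposition of an (MS2)-derived transition and the four-way case analysis is just a more detailed rendering of the same argument.
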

\begin{proof} Assume that $\mathcal{A}$ comes from certain transitions of a parallel composition of $n$ programs. Then all programs in $\mathcal{A}$ are parallel compositions of $n$ programs (exept the terminating ones $\downarrow$).
Suppose that $\mathcal{A}\rightarrow\mathcal{A}_1$ results from a transition of the $i$th component of some $$P\equiv\langle P_1\|\cdots\|P_i\|\cdots\|P_n,\rho\rangle\in\mathcal{A};$$ that is, $\langle P_i,\rho\rangle\rightarrow\{|\langle Q_{ik},\sigma_k\rangle|\}$ and $\mathcal{A}_1=(\mathcal{A}\setminus\{P\})\cup\mathcal{B}_1,$ where: $$\mathcal{B}_1=\{|\langle P_1\|\cdots\|P_{i-1}\|Q_{ik}\|P_{i+1}\|\cdots\|P_n,\sigma_k\rangle|\}.$$ Note that here, only rules (PC) and (MS1) are used to derive the transition. The conclusion can be easily generalised to the case where rule (MS2) is employed. Also, suppose that $\mathcal{A}\rightarrow\mathcal{A}_2$ results from a transition of the $j$th component of some $$R\equiv\langle R_1\|\cdots\|R_i\|\cdots\|R_n,\delta\rangle\in\mathcal{A};$$ that is, $\langle R_j,\delta\rangle\rightarrow\{|\langle S_{jl},\theta_l\rangle|\}$ and $\mathcal{A}_2=(\mathcal{A}\setminus\{R\})\cup\mathcal{B}_2,$ where: $$\mathcal{B}_2=\{|\langle R_1\|\cdots\|R_{j-1}\|S_{jl}\|R_{j+1}\|\cdots\|R_n,\theta_l\rangle|\}.$$

\textbf{Case 1}. $P$ and $R$ are two different elements of multi-set $\mathcal{A}$. Put $$\mathcal{B}=(\mathcal{A}\setminus\{P,R\})\cup\mathcal{B}_1\cup\mathcal{B}_2.$$ Then it is easy to derive $\mathcal{A}_1\rightarrow\mathcal{B}$ and $\mathcal{A}_2\rightarrow\mathcal{B}$ by transitional rules (PC) and (MS).

{\vskip 4pt}

\textbf{Case 2}. $P$ and $R$ are the same element of multi-set $\mathcal{A}$. Then it must be that $i\neq j$ because $\mathcal{A}_1\neq\mathcal{A}_2$.
Note that each element of $\mathcal{B}_1$ is of the form $$\langle P_1\|\cdots\|P_{i-1}\|Q_{ik}\|P_{i+1}\|\cdots\|P_n,\sigma_k\rangle,$$ and each element of $\mathcal{B}_2$ is of the form $$\langle P_1\|\cdots\|P_{j-1}\|R_{jl}\|P_{j+1}\|\cdots\|P_n,\delta_l\rangle.$$
Here, $\sigma_k$ is obtained from applying certain operators in $P_i$ to $\rho$, and $\delta_l$ is obtained from applying some operators in $P_j$ to $\rho$. Now we can make the transition of the $j$th component in $\mathcal{B}_1$ and the transition of the $i$th component in $\mathcal{B}_2$. After that, an element in $\mathcal{B}_1$ becomes some configuration(s) of the form \begin{equation}\label{ele-1}\langle P_1\|\cdots\|P_{i-1}\|Q_{ik}\|P_{i+1}\|\cdots\|P_{j-1}\|R_{jl}\|P_{j+1}\|\cdots\|P_n,\theta_{kl}\rangle\end{equation} where $\theta_{kl}$ is obtained from applying the operators that generated $\delta_l$ to $\sigma_k$, and an element in $\mathcal{A}_2$ becomes some configuration(s) of the form
\begin{equation}\label{ele-2}\langle P_1\|\cdots\|P_{i-1}\|Q_{ik}\|P_{i+1}\|\cdots\|P_{j-1}\|R_{jl}\|P_{j+1}\|\cdots\|P_n,\eta_{kl}\rangle\end{equation} where $\eta_{kl}$ is obtained from applying the operators that generated $\sigma_k$ to $\delta_l$. We use $\mathcal{B}^\prime_1, \mathcal{B}^\prime_2$ to denote the sets of elements given in equations (\ref{ele-1}), (\ref{ele-2}), respectively.
Note that $\mathit{var}\left(P_i\right)\cap\mathit{var}\left(P_j\right)=\emptyset$. So, an operator in $\mathcal{H}_{P_i}$ and an operator in $\mathcal{H}_{P_j}$ always commute. Therefore, $\mathcal{B}^\prime_1=\mathcal{B}^\prime_2$, and we complete the proof by setting $\mathcal{B}=\mathcal{B}^\prime_1(=\mathcal{B}^\prime_2)$.
\end{proof}

A confluence property follows from the above diamond property.

\begin{lem}\label{conf-ext} If $\langle P,\rho\rangle\rightarrow\mathcal{A}_1\rightarrow\cdots\rightarrow\mathcal{A}_k\rightarrow\cdots$ and $\langle P,\rho\rangle\rightarrow\mathcal{B}_1\rightarrow\cdots\rightarrow\mathcal{B}_k\rightarrow\cdots$, then there are $\mathcal{C}_1,\cdots,\mathcal{C}_k\cdots$ such that $$\langle P,\rho\rangle\rightarrow^\ast\mathcal{C}_1\rightarrow^\ast\cdots\rightarrow^\ast\mathcal{C}_k\rightarrow^\ast\cdots$$ and $\mathcal{A}_k\rightarrow^\ast\mathcal{C}_k$ and $\mathcal{B}_k\rightarrow^\ast\mathcal{C}_k$ for every $k$.
\end{lem}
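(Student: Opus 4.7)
The plan is to bootstrap Lemma~C.1 (the one-step diamond property) into a statement about two entire (possibly infinite) computation sequences sharing a common source $\langle P,\rho\rangle$. I would build this in three layers.

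First, I would establish a \emph{strip lemma}: whenever $\mathcal{A}\rightarrow\mathcal{A}'$ and $\mathcal{A}\rightarrow^{\ast}\mathcal{C}$, there exists $\mathcal{D}$ with $\mathcal{A}'\rightarrow^{\ast}\mathcal{D}$ and $\mathcal{C}\rightarrow^{\ast}\mathcal{D}$. This is proved by induction on the number $n$ of steps in the derivation $\mathcal{A}\rightarrow^{\ast}\mathcal{C}$. The case $n=0$ takes $\mathcal{D}:=\mathcal{A}'$. For the inductive step, split $\mathcal{A}\rightarrow\mathcal{A}''\rightarrow^{\ast}\mathcal{C}$; apply Lemma~C.1 to the two single-step transitions $\mathcal{A}\rightarrow\mathcal{A}'$ and $\mathcal{A}\rightarrow\mathcal{A}''$ (handling the degenerate case $\mathcal{A}'=\mathcal{A}''$ trivially), obtaining some $\mathcal{E}$ with $\mathcal{A}'\rightarrow\mathcal{E}$ and $\mathcal{A}''\rightarrow\mathcal{E}$; then invoke the induction hypothesis on $\mathcal{A}''\rightarrow\mathcal{E}$ together with $\mathcal{A}''\rightarrow^{\ast}\mathcal{C}$ to obtain the required $\mathcal{D}$.

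Second, I would construct the desired sequence $\{\mathcal{C}_k\}$ by induction on $k$. For $k=1$, apply Lemma~C.1 to the transitions $\langle P,\rho\rangle\rightarrow\mathcal{A}_1$ and $\langle P,\rho\rangle\rightarrow\mathcal{B}_1$ (or take $\mathcal{C}_1:=\mathcal{A}_1=\mathcal{B}_1$ in the equal case). For the inductive step, assume $\mathcal{C}_k$ has already been built so that $\mathcal{A}_k\rightarrow^{\ast}\mathcal{C}_k$, $\mathcal{B}_k\rightarrow^{\ast}\mathcal{C}_k$, and $\langle P,\rho\rangle\rightarrow^{\ast}\mathcal{C}_k$. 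Apply the strip lemma to the pair $\mathcal{A}_k\rightarrow\mathcal{A}_{k+1}$ and $\mathcal{A}_k\rightarrow^{\ast}\mathcal{C}_k$, yielding some $\mathcal{D}$ with $\mathcal{A}_{k+1}\rightarrow^{\ast}\mathcal{D}$ and $\mathcal{C}_k\rightarrow^{\ast}\mathcal{D}$. Then apply the strip lemma a second time to $\mathcal{B}_k\rightarrow\mathcal{B}_{k+1}$ and $\mathcal{B}_k\rightarrow^{\ast}\mathcal{C}_k\rightarrow^{\ast}\mathcal{D}$, yielding $\mathcal{C}_{k+1}$ with $\mathcal{B}_{k+1}\rightarrow^{\ast}\mathcal{C}_{k+1}$ and $\mathcal{D}\rightarrow^{\ast}\mathcal{C}_{k+1}$. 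Chaining gives $\mathcal{A}_{k+1}\rightarrow^{\ast}\mathcal{D}\rightarrow^{\ast}\mathcal{C}_{k+1}$ and $\mathcal{C}_k\rightarrow^{\ast}\mathcal{D}\rightarrow^{\ast}\mathcal{C}_{k+1}$, closing the induction.

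Third, I would verify that the resulting chain $\langle P,\rho\rangle\rightarrow^{\ast}\mathcal{C}_1\rightarrow^{\ast}\cdots\rightarrow^{\ast}\mathcal{C}_k\rightarrow^{\ast}\cdots$ is precisely what the lemma demands; this is immediate from the two side-conditions we maintained in the induction.

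The main obstacle is the strip lemma itself: Lemma~C.1 is stated as strict one-step diamond with the caveat $\mathcal{A}_1\neq\mathcal{A}_2$, so care is required when chaining its conclusions through arbitrary-length sequences to be sure that every invocation produces a legitimate $\mathcal{E}$ (possibly taking $\mathcal{E}:=\mathcal{A}'=\mathcal{A}''$ when the two target ensembles coincide). Everything else is routine induction; no further properties of the quantum semantics beyond Lemma~C.1 are needed.
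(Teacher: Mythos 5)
Your proposal is correct and follows essentially the same route as the paper: an outer induction on $k$ maintaining $\mathcal{A}_k\rightarrow^\ast\mathcal{C}_k$ and $\mathcal{B}_k\rightarrow^\ast\mathcal{C}_k$, where your explicit strip lemma is just a careful formalization of the paper's ``repeatedly using the diamond property'' (the paper additionally tracks exact step counts and closes with one final diamond, while you chain the two strips, but this is an organizational difference only). Your explicit handling of the degenerate case $\mathcal{A}'=\mathcal{A}''$, which the paper's diamond lemma excludes by hypothesis, is a welcome extra bit of care.
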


\begin{proof} We proceed by induction on $k$ to find $\mathcal{C}_k$. For the case of $k=1$, it follows immediately from the diamond property (Lemma \ref{diamond}) that $\mathcal{C}_1$ exists. Now assume that we have
$$\langle P,\rho\rangle\rightarrow^\ast\mathcal{C}_1\rightarrow^\ast\cdots\rightarrow^\ast\mathcal{C}_k,$$ $\mathcal{A}_k\rightarrow^m\mathcal{C}_k$ and $\mathcal{B}_k\rightarrow^r\mathcal{C}_k$. Then repeatedly using the diamond property we can find $A_{k+1}^\prime$ such that $A_{k+1}\rightarrow^m\mathcal{A}_{k+1}^\prime$ and $C_k\rightarrow\mathcal{A}_{k+1}^\prime$. Similarly, we have $\mathcal{B}_{k+1}^\prime$ such that $C_k\rightarrow\mathcal{B}_{k+1}^\prime$ and $\mathcal{B}_{k+1}^\prime\rightarrow^r\mathcal{B}_{k+1}^\prime$. Consequently, we can use the diamond property once again and find $\mathcal{C}_{k+1}$ such that $\mathcal{A}_{k+1}^\prime\rightarrow\mathcal{C}_{k+1}$ and $\mathcal{B}_{k+1}^\prime\rightarrow\mathcal{C}_{k+1}$. Obviously, it holds that $\mathcal{C}_k\rightarrow^\ast\mathcal{C}_{k+1}$, $\mathcal{A}_{k+1}\rightarrow^\ast\mathcal{C}_{k+1}$ and $\mathcal{B}_{k+1}\rightarrow^\ast\mathcal{C}_{k+1}$.
\end{proof}

Now we are ready to prove Lemma \ref{lem-det} by refutation. 

\begin{proof}[Proof of Lemma \ref{lem-det}]

Assume that $\llbracket P\rrbracket(\rho)$ has two different elements $\mathit{val}(\pi_1)\neq \mathit{val}(\pi_2)$. We consider the following two cases:

{\vskip 4pt}

\textbf{Case} 1. One of $\pi_1$ and $\pi_2$ is finite. Suppose that, say, $\pi_1$ is longer than $\pi_2$. Then $\pi_1$ is finite, and using Lemma \ref{conf-ext} we can show that $\pi_1$ can be extended. Thus, $\pi_1$ is not a computation, a contradiction.

{\vskip 4pt}

\textbf{Case} 2. Both $\pi_1=\langle P,\rho\rangle\rightarrow\mathcal{A}_1\rightarrow\cdots\rightarrow\mathcal{A}_k\rightarrow\cdots$ and $\pi_2=\langle P,\rho\rangle\rightarrow\mathcal{B}_1\rightarrow\cdots\rightarrow\mathcal{B}_k\rightarrow\cdots$ are infinite. Then by Lemma \ref{conf-ext} we have a computation: $$\pi=\langle P,\rho\rangle\rightarrow\mathcal{C}_1\rightarrow^\ast\cdots\rightarrow^\ast\mathcal{C}_k\rightarrow^\ast\cdots$$ such that $\mathcal{A}_k\rightarrow^\ast\mathcal{C}_k$ and $\mathcal{B}_k\rightarrow^\ast\mathcal{C}_k$ for every $k$. It follows that $\mathit{val}(\mathcal{A}_k)\leq \mathit{val}(\mathcal{C}_k)$ and $\mathit{val}(\mathcal{B}_k)\leq \mathit{val}(\mathcal{C}_k)$ for all $k$. Furthermore, we have: $$\mathit{val}(\pi_1)=\lim_{k\rightarrow\infty}\mathit{val}(\mathcal{A}_k)\leq\lim_{k\rightarrow\infty}\mathit{val}(\mathcal{C}_k)=\mathit{val}(\pi)$$ and $\mathit{val}(\pi_2)\leq\mathit{val}(\pi)$. Since $\mathit{val}(\pi_1)\neq \mathit{val}(\pi_2)$, we have either $\mathit{val}(\pi_1)<\mathit{val}(\pi)$ or $\mathit{val}(\pi_2)< \mathit{val}(\pi)$. This contradicts to the assumption that both $\mathit{val}(\pi_1)$ and $\mathit{val}(\pi_2)$ are maximal elements of $\mathcal{V}(P,\rho)$ in Definition \ref{dp-den-sem}.
\end{proof}

\section{Proof of Lemma 3.3}\label{proof-lem-seq}
\begin{proof} The proof of Lemma \ref{lem-seq} is carried out in the following two steps:  
\begin{enumerate}\item For any input $\rho$, by Definition \ref{disjoint-op-semantics} we see that $\pi$ is a computation of $P_1\|\cdots\|P_n$ starting in $\rho$ if and only if it is a computation of $P_{i_1}\|\cdots\|P_{i_n}$ starting in $\rho$. Then by Definition \ref{dp-den-sem} it follows that $$\llbracket P_1\|\cdots\|P_n\rrbracket(\rho)=\llbracket P_{i_1}\|\cdots\|P_{i_n}\rrbracket(\rho).$$

{\vskip 4pt}

\item For each computation $$\pi=\langle P_1;\cdots;P_n,\rho\rangle\rightarrow\mathcal{A}_1\rightarrow\cdots\rightarrow\mathcal{A}_k\rightarrow\cdots$$ of $P_1;\cdots;P_n$ starting in $\rho$, we note that each configuration in $\mathcal{A}_k$ must be of the form $\langle Q;P_l;\cdots;P_n,\sigma\rangle$ for some $1\leq l\leq n$. Then we can replace it by the configuration $$\langle\downarrow\|\cdots\|\downarrow\|Q\|P_l\|\cdots\|P_n,\sigma\rangle$$ and thus obtain configuration ensemble $\mathcal{A}_k^\prime$. It is easy to see that $$\pi^\prime=\langle P_1\|\cdots\|P_n,\rho\rangle\rightarrow\cdots\mathcal{A}_1^\prime\rightarrow\cdots\rightarrow\mathcal{A}_k^\prime\rightarrow\cdots$$ is a computation of $P_1\|\cdots\|P_n$ and $\mathit{val}(\pi^\prime)=\mathit{val}(\pi).$ Therefore, we have: $$\mathcal{V}(P_1;\cdots;P_n,\rho)\subseteq\mathcal{V}(P_1\|\cdots\|P_n,\rho).$$
By Lemma \ref{lem-det} (Determinism) we know that $\llbracket P_1\|\cdots\|P_n\rrbracket(\rho)$ is the greatest element of $(\mathcal{V}(P_1\|\cdots\|P_n,\rho),\sqsubseteq)$.
Then it follows from Definition \ref{dp-den-sem} that $$\llbracket P_1;\cdots;P_n\rrbracket(\rho)\sqsubseteq\llbracket P_1\|\cdots\|P_n\rrbracket(\rho).$$

To prove the reverse of the above inequality, we note that for any quantum \textbf{while}-program $P$ and any computation $\langle P,\rho\rangle\rightarrow\mathcal{B}_1\rightarrow\cdots\rightarrow\mathcal{B}_k\rightarrow\cdots$, sequence $\{\mathit{val}(\mathcal{B}_k\}$ is increasing with respect to $\sqsubseteq$. Then for any other quantum \textbf{while}-program $P^\prime$, we have: \begin{equation}\label{seq-com-inequality}\llbracket P\rrbracket(\rho)\sqsubseteq \llbracket P;P^\prime\rrbracket(\rho).\end{equation} Furthermore, we can prove: $$\llbracket P_1;\cdots;P_n\rrbracket(\rho)\sqsubseteq \llbracket (P_1;P_1^\prime);\cdots;(P_n;P_n^\prime)\rrbracket(\rho)$$ by induction on $n$. Indeed, the induction hypothesis for $n-1$ implies: \begin{align*}
\llbracket P_1;\cdots;P_n\rrbracket(\rho)&=\llbracket P_n\rrbracket(\llbracket P_1;\cdots;P_{n-1}\rrbracket(\rho))\\ &\sqsubseteq \llbracket P_n\rrbracket(\llbracket (P_1;P_1^\prime);\cdots;(P_{n-1};P_{n-1}^\prime)\rrbracket(\rho))\\
&\sqsubseteq \llbracket P_n;P_n^\prime\rrbracket(\llbracket (P_1;P_1^\prime);\cdots;(P_{n-1};P_{n-1}^\prime)\rrbracket(\rho))\\ &=\llbracket (P_1;P_1^\prime);\cdots;(P_{n};P_{n}^\prime)\rrbracket(\rho).
\end{align*} Now, for any computation $$\pi=\langle P_1\|\cdots\|P_n,\rho\rangle\rightarrow\mathcal{A}_1\rightarrow\cdots\rightarrow\mathcal{A}_k\rightarrow\cdots$$ of $P_1\|\cdots\|P_n$, and for any $k\geq 1$, we observe that $\mathcal{A}_k$ is obtained  from $\langle P_1\|\cdots\|P_n,\rho\rangle$ by a finite number of transitions, each of which is performed by one of $P_1,\cdots,P_n$. Since $P_1,\cdots,P_n$ are disjoint, an operator in $\mathcal{H}_{P_i}$ always commutes with any operator in $\mathcal{H}_{P_j}$ provided $i\neq j$. Thus, these transitions can be re-ordered in order to satisfy the following requirements:

{\vskip 4pt}

\begin{enumerate}\item $P_i$ is (semantically) equivalent to $Q_i;P_i^\prime$ for every $i=1,...,n$;
\item the first group of transitions are done by $Q_1$, the second by $Q_2$, and so on.
\end{enumerate}

{\vskip 4pt}

Using inequality (\ref{seq-com-inequality}), we obtain: \begin{align*}
\mathit{val}(\mathcal{A}_k)&\sqsubseteq\llbracket Q_1;\cdots;Q_n\rrbracket(\rho)\\ &\sqsubseteq\left\llbracket \left(Q_1;P_1^\prime\right);\cdots;\left(Q_n;P_n^\prime\right)\right\rrbracket(\rho)= \llbracket P_1;\cdots;P_n\rrbracket(\rho)
\end{align*} and it follows that $$\mathit{val}(\pi)=\lim_{k\rightarrow\infty}\mathit{val}(\mathcal{A}_k)\sqsubseteq \llbracket P_1;\cdots;P_n\rrbracket(\rho).$$
Therefore, $\llbracket P_1;\cdots;P_n\rrbracket(\rho)$ is an upper bound of $\mathcal{V}(P_1\|\cdots\|P_n,\rho)$, and $$\llbracket P_1\|\cdots\|P_n\rrbracket(\rho)\sqsubseteq\llbracket P_1;\cdots;P_n\rrbracket(\rho).$$
\end{enumerate}
\end{proof}

\section{Proof of Lemma 4.2}\label{proof-TP-sound}

\begin{proof} For each $i$, since $$\mathit{var}(P_i)\cap\left(\bigcup_{j\neq i}\mathit{var}(P_i)\right)=\emptyset,$$ super-operator $\llbracket P_i\rrbracket$ can be written in the following Kraus form: $$\llbracket P_i\rrbracket(\rho)=\sum_{k}\left(E_{ik}\otimes I_{\overline{i}}\right)\rho \left(E_{ik}^\dag\otimes I_{\overline{i}}\right)$$ for any $\rho\in\bigotimes_{i=1}^n\mathcal{H}_{P_i}$, where $I_{\overline{i}}$ is the identity operator in $\bigotimes_{j\neq i}\mathcal{H}_j$. Then by Proposition 3.3.1(IV) in \cite{Ying16} and the Sequentialisation Lemma we obtain:\begin{align*}\llbracket P_1\|\cdots\|P_n\rrbracket(\rho)&=\llbracket P_1;\cdots;P_n\rrbracket(\rho)\\
&=\llbracket P_n\rrbracket\left(\cdots\llbracket P_2\rrbracket\left(\llbracket P_1\rrbracket\left(\rho\right)\right)\cdots\right)\\
&=\sum_{k_1,...,k_n}\left(\bigotimes_{i=1}^nE_{ik_i}\right)\rho\left(\bigotimes_{i=1}^nE_{ik_i}^\dag\right).
\end{align*} Consequently, it holds that for any observable $B_i$ in $\mathcal{H}_{P_i}$ $(i=1,...,n)$,
\begin{equation}\label{comp-dual}\begin{split}\llbracket P_1 \|\cdots\|P_n\rrbracket^\ast\left(\bigotimes_{i=1}^nB_i\right) &=\sum_{k_1,...,k_n}\left(\bigotimes_{i=1}^nE^\dag_{ik_i}\right)\left(\bigotimes_{i=1}^nB_{i}\right)\left(\bigotimes_{i=1}^nE_{ik_i}\right)\\ &=\sum_{k_1,...,k_n}\bigotimes_{i=1}^n\left(E^\dag_{ik_i}B_{i}E_{ik_i}\right)\\
&=\bigotimes_{i=1}^n\left(\sum_{k_i}E^\dag_{ik_i}B_{i}E_{ik_i}\right)\\
&=\bigotimes_{i=1}^n\llbracket P_i\rrbracket^\ast(B_{i}).
\end{split}\end{equation}

Now assume that $\models_\mathit{par}\left\{A_i\right\}P_i\left\{B_i\right\}$ $(i=1,...,n)$. Then with Lemma \ref{lem-correctness} we have: $$A_i\sqsubseteq \llbracket P_i\rrbracket^\ast\left(B_i\right)+\left(I_i- \llbracket P_i\rrbracket^\ast \left(I_i\right)\right)\ (i=1,...,n)$$ where $I_i$ is the identity operator in $\mathcal{H}_{P_i}$. To simplify the presentation, we write $F_i$ for $\llbracket P_i\rrbracket^\ast\left(I_i\right)$. Note that $B_i\sqsubseteq I_i$ and $$\llbracket P_i\rrbracket^\ast\left(B_i\right)\sqsubseteq\llbracket P_i\rrbracket^\ast\left(I_i\right)=F_i.$$ Then using Lemma \ref{lem-tensor}, we obtain:\begin{align*}
\bigotimes_{i=1}^nA_i &\sqsubseteq\bigotimes_{i=1}^n\left[\llbracket P_i\rrbracket^\ast\left(B_i\right)+\left(I_i- \llbracket P_i\rrbracket^\ast \left(I_i\right)\right)\right]
\\ &=\bigotimes_{i=1}^n\llbracket P_i\rrbracket^\ast\left(B_i\right) +\sum_{1\leq i_1<\cdots <i_k\leq n\ (k\geq 1)} \left[\left(\bigotimes_{i\notin\{i_1,...,i_k\}}\llbracket P_i\rrbracket^\ast\left(B_i\right)\right)\otimes\left(\bigotimes_{i\in\{i_1,...,i_k\}}\left(I_i-F_i\right)\right)\right]\\
&\sqsubseteq\bigotimes_{i=1}^n\llbracket P_i\rrbracket^\ast\left(B_i\right) +\sum_{1\leq i_1<\cdots <i_k\leq n\ (k\geq 1)}\ \left[\left(\bigotimes_{i\notin\{i_1,...,i_k\}}F_i\right)\otimes\left(\bigotimes_{i\in\{i_1,...,i_k\}}\left(I_i-F_i\right)\right)\right]\\
&=\bigotimes_{i=1}^n\llbracket P_i\rrbracket^\ast\left(B_i\right) +\left(\bigotimes_{i=1}^nI_i-\bigotimes_{i=1}^nF_i\right).
\end{align*} It follows from equation (\ref{comp-dual}) that $$\bigotimes_{i=1}^n F_i=\llbracket P_1\|\cdots\| P_n\rrbracket^\ast\left(\bigotimes_{i=1}^nI_i\right)$$ and thus $$\bigotimes_{i=1}^nA_i\sqsubseteq\llbracket P_1\|\cdots\|P_n\rrbracket^\ast\left(\bigotimes_{i=1}^nB_i\right)+\left[I-\llbracket P_1\|\cdots\|P_n\rrbracket^\ast(I)\right].$$ Therefore, with Lemma \ref{lem-correctness} we assert that $$\models_\mathit{par}\left\{\bigotimes_{i=1}^nA_i\right\}P_1\|\cdots\|P_n\left\{\bigotimes_{i=1}^n\right\}.$$
\end{proof}
 
\section{Proof of Lemma 4.3}\label{proof-S2E-sound}
\begin{proof} The proof of Lemma \ref{S2E-sound} is carried out in the following two steps: 
\begin{enumerate}\item We first consider the total correctness. Assume that $$\models_\mathit{tot}\left\{(1-\epsilon)I+\epsilon A\right\}P\left\{(1-\epsilon)I+\epsilon B\right\}.$$ Then by Lemma \ref{lem-correctness} we obtain:
\begin{equation}\label{S2E-proof-1}\begin{split}(1-\epsilon)I+\epsilon A&\sqsubseteq\llbracket P\rrbracket^\ast((1-\epsilon)I+\epsilon P)\\ &= (1-\epsilon)\llbracket P\rrbracket^\ast(I)+\epsilon \llbracket P\rrbracket^\ast(B)\end{split}\end{equation} because $\llbracket P\rrbracket$ is linear.
Note that for any super-operator $\mathcal{E}=\sum_i E_i\circ E_i^\dag$, it holds that $$\mathcal{E}^\ast(I)=\sum_i E_i^\dag E_i\sqsubseteq I.$$
Therefore, we have: $$(1-\epsilon)\llbracket P\rrbracket^\ast(I)\sqsubseteq (1-\epsilon)I$$ since $\epsilon\leq 1$. Consequently,
it follows from equation (\ref{S2E-proof-1}) that $\epsilon A\sqsubseteq\epsilon \llbracket P\rrbracket^\ast(B)$ and $A\sqsubseteq\llbracket P\rrbracket^\ast(B)$ because $\epsilon>0$. So, we otain $\models_\mathit{tot}\{A\}P\{B\}$.

{\vskip 4pt}

\item Now we consider the partial correctness. Let $$\models_\mathit{par}\left\{(1-\epsilon)I+\epsilon A\right\}P\left\{(1-\epsilon)I+\epsilon B\right\}.$$ Then it follows from Lemma \ref{lem-correctness} that
\begin{equation*}\begin{split}(1-\epsilon)I+\epsilon A &\sqsubseteq\llbracket P\rrbracket^\ast((1-\epsilon)I+\epsilon P)+\left[I-\llbracket P\rrbracket^\ast(I)\right]\\ &= (1-\epsilon)\llbracket P\rrbracket^\ast(I)+\epsilon \llbracket P\rrbracket^\ast(B)+\left[I-\llbracket P\rrbracket^\ast(I)\right].\end{split}\end{equation*} Since $\epsilon>0$, a routine calculation yields: $$A\sqsubseteq\llbracket P\rrbracket^\ast(B)+\left[I-\llbracket P\rrbracket^\ast(I)\right],$$ and thus $\models_\mathit{par}\{A\}P\{B\}$.
\end{enumerate}
\end{proof}

\section{Proof of Theorem 4.2}\label{proof-d-complete}

Assume that $$\models_\mathit{tot}\{A\}P_1\|\cdots\|P_n\{B\}.$$ We write $\mathcal{E}=\llbracket P_1\|\cdots\|P_n\rrbracket$ for the semantic function of parallel program and for each $i=1,...,n$, let $\mathcal{E}_i$ be the semantic function of $P_i$. Then by Lemma \ref{lem-correctness} we have $A\sqsubseteq\mathcal{E}^\ast(B)$, and by rule (R.Or) it suffices to show that
\begin{equation}\label{complete-dis}
\vdash_{qTP}\{\mathcal{E}^\ast(B)\}P_1\|\cdots\| P_n\{B\}.
\end{equation}

In what follows, we prove the theorem in three steps, gradually from a special form of $B$ to a general $B$. For each $i=1,...,n$, we use $\mathcal{H}_i=\mathcal{H}_{P_i}$ to denote the state Hilbert space of program $P_i$. We use $p$ to indicate the system of the parallel program $P_1\|\cdots\|P_n$, called the principal system. Thus, it has the state space $\mathcal{H}_p=\bigotimes_{i=1}^n\mathcal{H}_{i}$.

Let start from the very special case of $B=|\beta\rangle\langle\beta|$ with some constraints on $|\beta\rangle$. 

\begin{clm}
\label{tot-pure-finite}
For any vector $|\beta\>$ in $\mathcal{H}_p$, if its norm is less than or equal to $1$ and its reduced density operator to each $\mathcal{H}_i$ is of finite rank, then we have: 
$$\vdash_{qTP}\{\mathcal{E}^\ast(|\beta\>\<\beta|)\}P_1\|\cdots\| P_n\{|\beta\>\<\beta|\}.$$
\end{clm}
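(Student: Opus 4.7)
The plan is to lift the auxiliary-entanglement trick of Example \ref{exam-auxi} from the bipartite maximally-entangled case to an arbitrary pure postcondition whose $n$ marginals are of finite rank, and in doing so derive the claim inside qTP using only ingredients already at hand: completeness of qTD, (R.PC.P), (R.SO), (R.Lin), and (R.TI). Concretely, I would attach to each component $P_i$ a fresh auxiliary variable system $i'$ with $\dim\mathcal{H}_{i'}=d_i:=\mathrm{rank}(\rho_i)$, where $\rho_i=\mathrm{tr}_{\bar i}(|\beta\rangle\langle\beta|)$; fix an eigenbasis $\{|\varphi_k^{(i)}\rangle\}_{k=1}^{d_i}$ of $\rho_i$ and a basis $\{|k\rangle_{i'}\}$ of $\mathcal{H}_{i'}$; and introduce the normalised maximally entangled vectors $|\omega_i\rangle=\frac{1}{\sqrt{d_i}}\sum_{k=1}^{d_i}|\varphi_k^{(i)}\rangle_i|k\rangle_{i'}$. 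Writing $p'$ for the joint auxiliary system and expanding $|\beta\rangle=\sum_{\vec\ell}c_{\vec\ell}\bigotimes_i|\varphi_{\ell_i}^{(i)}\rangle$ in the product basis, the hypothesis $\||\beta\rangle\|\le 1$ reads as $\sum_{\vec\ell}|c_{\vec\ell}|^2\le 1$.

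With this setup I would first derive, for each $i$, the local Hoare triple $\{\tilde{\mathcal{E}}_i^{\,*}(|\omega_i\rangle\langle\omega_i|)\}\,P_i\,\{|\omega_i\rangle\langle\omega_i|\}$, where $\tilde{\mathcal{E}}_i=\mathcal{E}_i\otimes\mathrm{id}_{i'}$ and $P_i$ is read as a program on $\mathcal{H}_i\otimes\mathcal{H}_{i'}$ acting trivially on the auxiliary. This formula is semantically valid essentially by definition of $\mathcal{E}_i^{\,*}$, and therefore provable in qTD by completeness (Theorem \ref{sound-complete}). An application of (R.PC.P) then yields
$$\vdash_{qTP}\Bigl\{\bigotimes_i\tilde{\mathcal{E}}_i^{\,*}(|\omega_i\rangle\langle\omega_i|)\Bigr\}\,P_1\|\cdots\|P_n\,\Bigl\{\bigotimes_i|\omega_i\rangle\langle\omega_i|\Bigr\},$$
and disjointness of the components, together with the factorisation $\mathcal{E}^{\,*}=\bigotimes_i\mathcal{E}_i^{\,*}$ (extended by identity on $\mathcal{H}_{p'}$) that was computed in the proof of Lemma \ref{TP-sound}, lets me rewrite the precondition as $\mathcal{E}^{\,*}\bigl(\bigotimes_i|\omega_i\rangle\langle\omega_i|\bigr)$.

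The crux is then to collapse the product of local entanglements onto $|\beta\rangle\langle\beta|$ by a super-operator on the auxiliary alone. I would take $|c\rangle_{p'}=\sum_{\vec\ell}c^{\,*}_{\vec\ell}|\vec\ell\rangle_{p'}$ (which has norm $\||\beta\rangle\|\le 1$) and define $\mathcal{F}$ on $\mathcal{H}_{p'}$ by Kraus operators $\{F_{\vec k}=|c\rangle\langle\vec k|\}_{\vec k}$ ranging over the product basis; this gives $\sum_{\vec k}F_{\vec k}^{\,\dagger}F_{\vec k}=\|c\|^2 I_{p'}\sqsubseteq I_{p'}$, so $\mathcal{F}$ is a bona fide (trace non-increasing) super-operator. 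A short Jamio\l{}kowski-style calculation, pivoting on $(I_p\otimes\langle c|_{p'})\bigotimes_i|\omega_i\rangle=\frac{1}{\sqrt{\prod_i d_i}}\,|\beta\rangle_p$, gives
$$(\mathrm{id}_p\otimes\mathcal{F})^{\,*}\!\Bigl(\bigotimes_i|\omega_i\rangle\langle\omega_i|\Bigr)=\frac{1}{\prod_i d_i}\,|\beta\rangle\langle\beta|\otimes I_{p'}.$$
Applying (R.SO) with $\mathcal{F}$, and using that $\mathcal{F}$ and $\mathcal{E}$ commute because they act on disjoint systems, I reach a triple with precondition $\frac{1}{\prod_i d_i}\,\mathcal{E}^{\,*}(|\beta\rangle\langle\beta|)\otimes I_{p'}$ and postcondition $\frac{1}{\prod_i d_i}\,|\beta\rangle\langle\beta|\otimes I_{p'}$. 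Rescaling by (R.Lin) with $\lambda=\prod_i d_i$, which is sound for total correctness because both $|\beta\rangle\langle\beta|\otimes I_{p'}$ and $\mathcal{E}^{\,*}(|\beta\rangle\langle\beta|)\otimes I_{p'}$ lie below $I$ ($\mathcal{E}^{\,*}$ being sub-unital), followed by a single (R.TI) to discard the $I_{p'}$ factor, delivers the desired $\{\mathcal{E}^{\,*}(|\beta\rangle\langle\beta|)\}\,P_1\|\cdots\|P_n\,\{|\beta\rangle\langle\beta|\}$.

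The hard part will be nothing in the rule-chasing itself but the design of $\mathcal{F}$: it has to simultaneously be trace non-increasing, commute with $\mathcal{E}^{\,*}$, and transport the $n$-fold product of local maximal entanglements to the single global pure state tensored with the auxiliary identity. Disjoint supports take care of commutation, the norm bound $\||\beta\rangle\|\le 1$ supplies exactly the slack needed for trace non-increase, and the finite-rank hypothesis is what keeps each $\mathcal{H}_{i'}$ finite-dimensional, so that $|\omega_i\rangle$ is a legitimate vector and the Kraus sum for $\mathcal{F}$ is finite. For a general $|\beta\rangle$ with infinite-rank marginals this construction will break, and I would expect the next step of the completeness proof to reduce that case to the present claim by truncating to finite-rank approximations and passing to the weak-operator limit via (R.Lim).
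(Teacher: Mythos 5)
Your proposal reproduces the paper's own argument: the same auxiliary copies with maximally entangled states $|\Psi_i\rangle$ over the finite-rank supports, the same use of qTD-completeness plus (R.PC.P), the same trace-non-increasing super-operator on the auxiliary register with Kraus operators $|\bar\beta\rangle\langle\vec k|$ (your $\mathcal{F}$ is exactly the paper's $\mathcal{F}_\beta$), and the same finish via (R.SO), (R.Lin) with factor $\prod_i d_i$, and (R.TI). The approach and all key steps match the paper's proof of this claim, so the proposal is correct and essentially identical in method.
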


\begin{proof}[Proof of Claim \ref{tot-pure-finite}]
For each $i$, as the reduced density operator of $|\beta\>$ to $\mathcal{H}_i$ is of finite rank, we can use $\mathcal{K}_i$ to denote the support of the reduced density operator and assume its rank is $d_i$: 
$$\mathcal{K}_i = \supp(\tr_{1\cdots (i-1)(i+1)\cdots n}|\beta\>\<\beta|).$$
Obviously, $\mathcal{K}_i\subseteq\mathcal{H}_i$. We assume that $\Phi_{\mathcal{K}_i} = \{|1\>_i,\cdots,|d_i\>_i\}$ is an orthonormal basis of $\mathcal{K}_i$ and its expansion $\Phi_i = \{|1\>_i,\cdots,|d_i\>_i,|d_i+1\>_i,\cdots\}$ is an orthonormal basis of $\mathcal{H}_i$. Then $|\beta\>$ can be written as follows:
$$|\beta\rangle=\sum_{\forall i\in[n]: j_i\in[d_i]}\alpha_{j_1...j_n}\left(\bigotimes_{i=1}^n|j_i\rangle_i\right).$$ We define the conjugate vector of $|\beta\rangle$ as: $$|\overline{\beta}\rangle=\sum_{\forall i\in[n]: j_i\in[d_i]}\alpha^\ast_{j_1...j_n}\left(\bigotimes_{i=1}^n|j_i\rangle_i\right).$$
For each $i$, we further introduce an auxiliary system with the state Hilbert space $\mathcal{H}_{i^\prime}$ isomorphic to $\mathcal{H}_i$. Let $\{|j_i\>_{i^\prime}\}_{j\in[d_i]}$ and $\{|j_i\>_{i^\prime}\}_{j\in \Phi_i}$ be the orthonormal basis of $\mathcal{K}_{i^\prime}$ and $\mathcal{H}_{i^\prime}$ corresponding to $\Phi_{\mathcal{K}_i}$ and $\Phi_i$, respectively. Then
\begin{equation}\label{nota-ent-i}|\Psi_i\>=\sum_{j_i\in[d_i]}\frac{1}{\sqrt{d_i}}|j_i\>_i|j_i\>_{i^\prime}\end{equation}
is the maximally entangled state in $\mathcal{K}_i\otimes\mathcal{K}_{i^\prime}$.
We use $p^\prime$ to indicate the composed auxiliary system with state space $\mathcal{H}_{p^\prime}=\bigotimes_{i=1}^n\mathcal{H}_{i^\prime}$.
Then putting all of the entangled states together yields:
\begin{equation}\label{nota-E}E = \bigotimes_{i=1}^n (|\Psi_i\>\<\Psi_i|),\end{equation} which is a density operator in $\mathcal{H}_p\otimes\mathcal{H}_{p^\prime}$.

Now for each program $P_i$, completeness of qTD ensures that:
$$
\vdash_{qTD}\{(\mathcal{E}_i^\ast\otimes\mathcal{I}_{i^\prime})(|\Psi_i\>\<\Psi_i|)\}P_i\{|\Psi_i\>\<\Psi_i|\}.
$$
where $\mathcal{I}_{i^\prime}$ is the identity super-operator on $\mathcal{H}_{i^\prime}$.
Applying rule (R.PC.P), we obtain:
$$
\vdash_{qTD}\left\{\bigotimes_{i=1}^n(\mathcal{E}_i^\ast\otimes\mathcal{I}_{i^\prime})(|\Psi_i\>\<\Psi_i|)\right\}P_i\left\{\bigotimes_{i=1}^n|\Psi_i\>\<\Psi_i|\right\},
$$
or simply, 
\begin{equation}\label{rpcs1}\vdash_\mathit{qTP} \{D\}P_1\|\cdots\|P_n\{E\}\end{equation}
where:
\begin{equation}\label{nota-D}
D = \bigotimes_{i=1}^n \big[\left(\mathcal{E}_i^\ast\otimes \mathcal{I}_{i^\prime}\right)\left(|\Psi_i\>\<\Psi_i|\right)\big].
\end{equation}
 
We further define a super-operator $\mathcal{F}_\beta$ on $\mathcal{H}_{p^\prime}$ as follows:
\begin{equation}\label{nota-beta}
\mathcal{F}_\beta(\rho) = \sum_{\forall i\in[n]: k_i\in\mathcal{J}_i}|\bar{\beta}\>_{p^\prime}\left(\bigotimes_{i=1}^n{}_{i^\prime}\<k_i|\right) \rho \left(\bigotimes_{i=1}^n|k_i\>_{i^\prime}\right){}_{p^\prime}\<\bar{\beta}| \\
\end{equation} for every density operator $\rho$ in $\mathcal{H}_{p^\prime}$. It is easy to see that $\mathcal{F}_\beta$ is well-defined and is completely positive and trace non-increasing. Moreover, we observe:

\begin{fact}\label{big-entangle-1}
\begin{align*}
&(\mathcal{I}_p\otimes\mathcal{F}_\beta^\ast) (E)= \frac{1}{\prod_{i}d_i}|\beta\>_p\<\beta|\otimes I_{p^\prime} \\
&\left(\mathcal{I}_p\otimes \mathcal{F}_\beta^\ast \right)(D)=\frac{1}{\prod_{i}d_i}\left(\bigotimes_{i=1}^n\mathcal{E}_i^\ast\right)(|\beta\>_p\<\beta|)\otimes I_{p^\prime}.
\end{align*} where $\mathcal{I}_p$ is the identity super-operator on $\mathcal{H}_{p}$ and $I_{p^\prime}$ the identity operator on $\mathcal{H}_{p^\prime}$.
\end{fact}

\begin{proof}[Proof of Fact \ref{big-entangle-1}]
We directly compute: \begin{align*}
(\mathcal{I}_p\otimes\mathcal{F}_\beta^\ast) (E) &= \sum_{\forall i\in[n]: k_i\in\mathcal{J}_i}
\left(\bigotimes_{i=1}^n|k_i\>_{i^\prime}\right){}_{p^\prime}\<\bar{\beta}| \left[\bigotimes_{i=1}^n (|\Psi_i\>\<\Psi_i|)\right]
|\bar{\beta}\>_{p^\prime}\left(\bigotimes_{i=1}^n{}_{i^\prime}\<k_i|\right) \\
&= \sum_{\substack{\forall\ i\in[n]:\\k_i\in\mathcal{J}_i,\ l_i, l_i^\prime, j_i, j_i^\prime\in[d_i]}} \alpha_{l_1,\cdots,l_n} \alpha^\ast_{l_1^\prime,\cdots,l_n^\prime}
\left(\bigotimes_{i=1}^n|k_i\>_{i^\prime}\right)  \left(\bigotimes_{i=1}^n{}_{i^\prime}\<l_i|\right)\times\\
&\qquad \qquad\qquad \qquad\qquad \frac{1}{\prod_{i}d_i}\left[\bigotimes_{i=1}^n (|j_i\>_i\<j_i^\prime|\otimes|j_i\>_{i^\prime}\<j_i^\prime|)\right]
\left(\bigotimes_{i=1}^n|l_i^\prime\>_{i^\prime}\right)\left(\bigotimes_{i=1}^n{}_{i^\prime}\<k_i|\right) \\
&= \sum_{\substack{\forall\ i\in[n]:\\k_i\in\mathcal{J}_i,\ l_i, l_i^\prime\in[d_i]}} \alpha_{l_1,\cdots,l_n} \alpha^\ast_{l_1^\prime,\cdots,l_n^\prime}\frac{1}{\prod_{i}d_i}
\left(\bigotimes_{i=1}^n|k_i\>_{i^\prime}\right)
\left[\bigotimes_{i=1}^n (|l_i\>_i\<l_i^\prime|)\right]
\left(\bigotimes_{i=1}^n{}_{i^\prime}\<k_i|\right) \\
&= \left\{\frac{1}{\prod_{i}d_i}\sum_{\substack{\forall\ i\in[n]:\\l_i, l_i^\prime\in[d_i]}}\alpha_{l_1,\cdots,l_n} \alpha^\ast_{l_1^\prime,\cdots,l_n^\prime}\left[\bigotimes_{i=1}^n (|l_i\>_i\<l_i^\prime|)\right]\right\}\\ &\qquad\qquad \qquad \qquad\otimes \left\{\sum_{\substack{\forall\ i\in[n]:\\k_i\in\mathcal{J}_i}} \left(\bigotimes_{i=1}^n|k_i\>_{i^\prime}\right)\left(\bigotimes_{i=1}^n{}_{i^\prime}\<k_i|\right)   \right\} \\
&= \frac{1}{\prod_{i}d_i}|\beta\>_p\<\beta|\otimes I_{p^\prime}.
\end{align*}

\begin{align*}
(\mathcal{I}_p\otimes\mathcal{F}_\beta^\ast) (D) &= \sum_{\forall i\in[n]: k_i\in\mathcal{J}_i}
\left(\bigotimes_{i=1}^n|k_i\>_{i^\prime}\right){}_{p^\prime}\<\bar{\beta}| \left[\bigotimes_{i=1}^n (\mathcal{E}_i^\ast\otimes \mathcal{I}_{i^\prime})(|\Psi_i\>\<\Psi_i|)\right]
|\bar{\beta}\>_{p^\prime}\left(\bigotimes_{i=1}^n{}_{i^\prime}\<k_i|\right) \\
&= \sum_{\substack{\forall\ i\in[n]:\\k_i\in\mathcal{J}_i,\ l_i, l_i^\prime, j_i, j_i^\prime\in[d_i]}} \alpha_{l_1,\cdots,l_n} \alpha^\ast_{l_1^\prime,\cdots,l_n^\prime}
\left(\bigotimes_{i=1}^n|k_i\>_{i^\prime}\right)  \left(\bigotimes_{i=1}^n{}_{i^\prime}\<l_i|\right)\times\\ &\qquad\qquad\qquad
\left[\frac{1}{\prod_{i}d_i}\bigotimes_{i=1}^n (\mathcal{E}_i^\ast(|j_i\>_i\<j_i^\prime|)\otimes|j_i\>_{i^\prime}\<j_i^\prime|)\right]
\left(\bigotimes_{i=1}^n|l_i^\prime\>_{i^\prime}\right)\left(\bigotimes_{i=1}^n{}_{i^\prime}\<k_i|\right) \\
&= \sum_{\substack{\forall\ i\in[n]:\\k_i\in\mathcal{J}_i,\ l_i, l_i^\prime\in[d_i]}} \alpha_{l_1,\cdots,l_n} \alpha^\ast_{l_1^\prime,\cdots,l_n^\prime}
\left(\bigotimes_{i=1}^n|k_i\>_{i^\prime}\right)
\left[\frac{1}{\prod_{i}d_i}\bigotimes_{i=1}^n \mathcal{E}_i^\ast(|l_i\>_i\<l_i^\prime|)\right]
\left(\bigotimes_{i=1}^n{}_{i^\prime}\<k_i|\right) \\
&= \left\{\frac{1}{\prod_{i}d_i}\sum_{\substack{\forall\ i\in[n]:\\l_i, l_i^\prime\in[d_i]}}\alpha_{l_1,\cdots,l_n} \alpha^\ast_{l_1^\prime,\cdots,l_n^\prime}\left[\left(\bigotimes_{i=1}^n \mathcal{E}_i^\ast\right)\left(\bigotimes_{i=1}^n |l_i\>_i\<l_i^\prime|\right)\right]\right\}
\otimes \left\{\sum_{\substack{\forall\ i\in[n]:\\k_i\in\mathcal{J}_i}} \left(\bigotimes_{i=1}^n(|k_i\>_{i^\prime}\<k_i|)\right)\right\} \\
&= \frac{1}{\prod_{i}d_i}\left(\bigotimes_{i=1}^n \mathcal{E}_i^\ast\right)\left\{\sum_{\substack{\forall\ i\in[n]:\\l_i, l_i^\prime\in[d_i]}}\alpha_{l_1,\cdots,l_n} \alpha^\ast_{l_1^\prime,\cdots,l_n^\prime}\left[\bigotimes_{i=1}^n |l_i\>_i\<l_i^\prime|\right]\right\}
\otimes \left\{\sum_{\substack{\forall\ i\in[n]:\\k_i\in\mathcal{J}_i}} \left(\bigotimes_{i=1}^n(|k_i\>_{i^\prime}\<k_i|)\right)\right\} \\
&= \frac{1}{\prod_{i}d_i}\left(\bigotimes_{i=1}^n \mathcal{E}_i^\ast\right)(|\beta\>_p\<\beta|)\otimes I_{p^\prime}.
\end{align*}
\end{proof}

Now we can apply (R.SO) with completely positive and trace non-increasing super-operator $\mathcal{F}_\beta(\rho)$ on $p^\prime$ to (\ref{rpcs1}) and obtain: 
\begin{equation*}
\vdash_\mathit{qTP} \left\{\left(\mathcal{I}_p\otimes\mathcal{F}_\beta^\ast\right)(D)\right\} P_1\|\cdots\|P_n \left\{\left(\mathcal{I}_p\otimes\mathcal{F}_\beta^\ast\right) (E)\right\}.
\end{equation*}
or by Fact \ref{big-entangle-1} equivalently, 
\begin{equation}\label{q3}
\left\{\frac{1}{\prod_{i}d_i}\left(\bigotimes_{i=1}^n\mathcal{E}_i^\ast\right)(|\beta\>_p\<\beta|)\otimes I_{p^\prime}\right\}P_1\|\cdots\|P_n\left\{\frac{1}{\prod_{i}d_i}|\beta\>_p\<\beta|\otimes I_{p^\prime}\right\}.
\end{equation}
Therefore, applying rules (R.TI) and (R.Lin) to (\ref{q3}) yields:
$$
\left\{\left(\bigotimes_{i=1}^n\mathcal{E}_i^\ast\right)(|\beta\>_p\<\beta|)\right\}P_1\|\cdots\|P_n\left\{|\beta\>_p\<\beta|\right\}
$$
as we desired.
\end{proof}

Our next step is to generalise Claim \ref{tot-pure-finite} to the case of $B=|\beta\rangle\langle\beta$ with a general $|\beta\rangle$. 

\begin{clm}
\label{tot-pure}
For any pure state $|\beta\>$, we have: 
$$\vdash_{qTP}\{\mathcal{E}^\ast(|\beta\>\<\beta|)\}P_1\|\cdots\| P_n\{|\beta\>\<\beta|\}.$$
\end{clm}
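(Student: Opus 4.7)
The plan is to upgrade Claim \ref{tot-pure-finite} to arbitrary unit vectors $|\beta\rangle$ by a finite-rank approximation argument combined with rule (R.Lim). The rule (R.Lim) allows us to pass to weak-operator limits in both the precondition and postcondition, so it suffices to exhibit a sequence $\{|\beta_N\rangle\}$ whose components have finite-rank reduced density operators on each $\mathcal{H}_i$ and such that $|\beta_N\rangle\langle\beta_N|$ converges weakly to $|\beta\rangle\langle\beta|$, with $\mathcal{E}^\ast(|\beta_N\rangle\langle\beta_N|)$ converging weakly to $\mathcal{E}^\ast(|\beta\rangle\langle\beta|)$.

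Concretely, I would fix for each $i\in\{1,\dots,n\}$ an orthonormal basis $\{|e_{k}^{(i)}\rangle\}_{k\geq 1}$ of $\mathcal{H}_i$ and expand
\[
|\beta\rangle=\sum_{k_1,\dots,k_n\geq 1}\alpha_{k_1\cdots k_n}\,|e_{k_1}^{(1)}\rangle\otimes\cdots\otimes|e_{k_n}^{(n)}\rangle
\]
with $\sum|\alpha_{k_1\cdots k_n}|^2=1$. Define the truncation
\[
|\beta_N\rangle=\sum_{k_1,\dots,k_n\leq N}\alpha_{k_1\cdots k_n}\,|e_{k_1}^{(1)}\rangle\otimes\cdots\otimes|e_{k_n}^{(n)}\rangle.
\]
Then $\||\beta_N\rangle\|\leq 1$, so $|\beta_N\rangle\langle\beta_N|$ is a quantum predicate, and its reduced density operator on each $\mathcal{H}_i$ is supported on $\mathrm{span}\{|e_1^{(i)}\rangle,\dots,|e_N^{(i)}\rangle\}$, hence of finite rank. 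Claim \ref{tot-pure-finite} therefore applies and yields
\[
\vdash_{qTP}\{\mathcal{E}^\ast(|\beta_N\rangle\langle\beta_N|)\}\,P_1\|\cdots\|P_n\,\{|\beta_N\rangle\langle\beta_N|\}
\]
for every $N$.

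Next I would verify convergence. Since $|\beta_N\rangle\to|\beta\rangle$ in norm, we get $|\beta_N\rangle\langle\beta_N|\to|\beta\rangle\langle\beta|$ in trace norm, and hence strongly; by Lemma \ref{fact-sot-wot} this entails weak-operator convergence. Applying Lemma \ref{fact-wot-so} to the (linear, completely positive, trace non-increasing) super-operator $\mathcal{E}^\ast$ then gives $\mathcal{E}^\ast(|\beta_N\rangle\langle\beta_N|)\overset{w.o.t.}{\longrightarrow}\mathcal{E}^\ast(|\beta\rangle\langle\beta|)$. Rule (R.Lim) applied to the sequence of derivations above now produces the desired correctness formula $\vdash_{qTP}\{\mathcal{E}^\ast(|\beta\rangle\langle\beta|)\}\,P_1\|\cdots\|P_n\,\{|\beta\rangle\langle\beta|\}$. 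The only delicate point in this plan is the interplay between the (possibly) infinite-dimensional $\mathcal{H}_i$ and the finite-rank hypothesis of the previous claim; but since weak operator convergence is exactly what (R.Lim) requires and is preserved by $\mathcal{E}^\ast$, the approximation argument goes through without the need to track any quantitative rate, and no new ingredient beyond the auxiliary facts in Appendix A is needed.
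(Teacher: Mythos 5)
Your proposal is correct and follows essentially the same route as the paper: truncate $|\beta\rangle$ by finite-rank projections on each factor, invoke Claim \ref{tot-pure-finite}, and pass to the limit with Lemma \ref{fact-wot-so} and rule (R.Lim). The only (harmless) difference is that you obtain $|\beta_N\rangle\to|\beta\rangle$ in norm directly from square-summability of the coefficients, whereas the paper derives it from strong convergence of the tensor product of the truncating projections.
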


\begin{proof}[Proof of Claim \ref{tot-pure}]
For each $i\in[n]$, assume $\Phi_i = \{|1\>_i,|2\>_i,\cdots\}$ is an orthonormal basis of $\mathcal{H}_i$.
We first define a sequence $\{P_k\}_{k\ge1}$ of projectors in $\hs_p$, the state Hilbert space of the whole program, as follows:
$$
P_k = P_{1k}\otimes P_{2k}\otimes\cdots\otimes P_{nk},
$$
where:
$$
\forall\ i\in[n]:\quad
\left\{
\begin{array}{ll}
 P_{ik} = I_i, & {\rm if\ } k > \dim \hs_i, \\
 P_{ik} = \sum_{j \le k} |j\>_i\<j|, &{\rm otherwise}.
\end{array}
\right.
$$
We further define: $$|\beta_k\> = P_k|\beta\>.$$ It is obvious that for all $i,k$, $P_{ik}$ has a finite rank, and therefore, the reduced density operator of $|\beta_k\>$ on each $\hs_i$ also has a finite rank and the norm of $|\beta_k\>$ is less than or equal to $1$.
Thus we can use Claim \ref{tot-pure-finite} to derive that 
\begin{equation}
\label{pure-fin2infin}
\vdash_{qTP} \{\mathcal{E}^\ast(|\beta_k\>\<\beta_k|)\}P_1\|\cdots\|P_n\{|\beta_k\>\<\beta_k|\}.
\end{equation}
Moreover, we observe:
\begin{fact}
\label{fact-pure}
\begin{equation}
|\beta_k\>\<\beta_k|\overset{w.o.t.}{\longrightarrow}|\beta\>\<\beta|,\qquad \mathcal{E}^\ast(|\beta_k\>\<\beta_k|)\overset{w.o.t.}{\longrightarrow}\mathcal{E}^\ast(|\beta\>\<\beta|).
\end{equation}
\end{fact}

\begin{proof}[Proof of Fact \ref{fact-pure}]
We first note that for each $i$, $P_{ik}\overset{s.o.t.}{\longrightarrow} I_i$ because for any $|\psi\>\in\hs_i$, $\lim_{n\rightarrow\infty}\|P_{ik}|\psi\>-I_i|\psi\>\|=0.$
Then by Theorem 1 in \cite{KV08} we have: $$P_{1k}\otimes P_{2k}\otimes\cdots\otimes P_{nk}\overset{s.o.t.}{\longrightarrow} I_1\otimes I_2\cdots\otimes I_n,$$ or equivalently,
$$P_k\overset{s.o.t.}{\longrightarrow} I.$$ According to the definition of $|\beta_k\>$, we see that 
$$
|\beta_k\>\overset{\|\cdot\|}{\longrightarrow} |\beta\>,
$$
and therefore,
$$
|\beta_k\>\<\beta_k|\overset{s.o.t.}{\longrightarrow}|\beta\>\<\beta|
\quad\Rightarrow\quad|\beta_k\>\<\beta_k|\overset{w.o.t.}{\longrightarrow}|\beta\>\<\beta|.
$$
To see this, we notice that for any $|\psi\>\in\hs$, it follows from the Cauchy-Schwarz inequality and Lemma \ref{fact-sot-wot} that  
\begin{align*}
\lim_{k\rightarrow\infty}\||\beta_k\>\<\beta_k|\psi\> - |\beta\>\<\beta|\psi\>\| &=
\lim_{k\rightarrow\infty}\||\beta_k\>(\<\beta_k|-\<\beta|)|\psi\> + (|\beta_k\>-|\beta\>)\<\beta|\psi\>\| \\
&\le \lim_{k\rightarrow\infty} \||\beta_k\>\|\||\psi\>\|\||\beta_k\>-|\beta\>\| +
|\<\beta|\psi\>|\||\beta_k\>-|\beta\>\| \\
&= 0
\end{align*}
Furthermore, using Lemma \ref{fact-wot-so} we obtain: 
$$\mathcal{E}^\ast(|\beta_k\>\<\beta_k|)\overset{w.o.t.}{\longrightarrow}\mathcal{E}^\ast(|\beta\>\<\beta|).$$
\end{proof}

Now we can apply rule (R.Lim) to equation (\ref{pure-fin2infin}) and then use Fact \ref{fact-pure} to derive: 
$$
\vdash_{qTP} \{\cE^\ast(|\beta\>\<\beta|)\}P_1\|\cdots\|P_n\{|\beta\>\<\beta|\}
$$
as we desired.
\end{proof}

Finally, we are able to deal with a general quantum predicate $B$. 

\begin{clm}
\label{tot-gen}
For any quantum predicate $B$, we have:
$$\vdash_{qTP}\{\mathcal{E}^\ast(B)\}P_1\|\cdots\| P_n\{B\}.$$
\end{clm}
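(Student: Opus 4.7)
The plan is to reduce Claim \ref{tot-gen} to Claim \ref{tot-pure} via a finite-rank spectral truncation of $B$, combined with the auxiliary rules (R.Lin), (R.CC), and (R.Lim). Since $\mathcal{H}_p=\bigotimes_{i=1}^n\mathcal{H}_i$ is a separable Hilbert space, I would first fix an orthonormal basis $\{|e_n\>\}_{n\ge 1}$, set $P_N=\sum_{n=1}^N|e_n\>\<e_n|$, and define the finite-rank compression $B_N=P_N B P_N$. Each $B_N$ satisfies $0\sqsubseteq B_N\sqsubseteq P_N\sqsubseteq I$ and so is a quantum predicate of rank at most $N$. Since $P_N\overset{s.o.t.}{\longrightarrow} I$, a Cauchy--Schwarz estimate gives $B_N\overset{w.o.t.}{\longrightarrow} B$, and Lemma \ref{fact-wot-so} then yields $\mathcal{E}^\ast(B_N)\overset{w.o.t.}{\longrightarrow}\mathcal{E}^\ast(B)$.

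For each fixed $N$, the finite-rank positive operator $B_N$ admits a finite spectral decomposition $B_N=\sum_{k=1}^{K_N}\mu_k|\psi_k\>\<\psi_k|$ with $K_N\le N$, $\mu_k\in[0,1]$, and orthonormal eigenvectors $|\psi_k\>$. For each $k$, Claim \ref{tot-pure} supplies $\vdash_\mathit{qTP}\{\mathcal{E}^\ast(|\psi_k\>\<\psi_k|)\}P_1\|\cdots\|P_n\{|\psi_k\>\<\psi_k|\}$. I would then (i) rescale each derivation by $\mu_k$ using (R.Lin), whose side condition holds because $\mu_k\mathcal{E}^\ast(|\psi_k\>\<\psi_k|)\sqsubseteq\mathcal{E}^\ast(I)\sqsubseteq I$ and $\mu_k|\psi_k\>\<\psi_k|\sqsubseteq I$; (ii) combine the $K_N$ resulting formulas by (R.CC) with uniform weights $p_k=1/K_N$, which sum to one; and (iii) apply (R.Lin) once more with factor $K_N$, whose side condition is $\mathcal{E}^\ast(B_N),B_N\sqsubseteq I$. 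The outcome is $\vdash_\mathit{qTP}\{\mathcal{E}^\ast(B_N)\}P_1\|\cdots\|P_n\{B_N\}$ for every $N$.

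Finally, applying (R.Lim) to this sequence of derivations, together with the two weak convergences established in the first step, delivers $\vdash_\mathit{qTP}\{\mathcal{E}^\ast(B)\}P_1\|\cdots\|P_n\{B\}$, which is the statement of Claim \ref{tot-gen}; combining this with $A\sqsubseteq\mathcal{E}^\ast(B)$ and rule (R.Or) then establishes (\ref{complete-dis}) and hence Theorem \ref{thm-complete-dis}. The principal obstacle is the infinite-dimensional bookkeeping: verifying the weak convergence $B_N\overset{w.o.t.}{\longrightarrow} B$ from $P_N\overset{s.o.t.}{\longrightarrow} I$, and checking the side condition of (R.Lin) at each rescaling to ensure every intermediate formula remains a quantum predicate. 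Both are routine consequences of the operator-theoretic facts collected in Appendix A, but they need to be spelled out carefully so that the (R.Lim) step is genuinely applicable to the constructed sequence.
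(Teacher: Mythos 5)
Your proposal is correct, and at the top level it follows the same strategy as the paper: reduce to the rank-one case handled by Claim \ref{tot-pure}, reassemble a finite-rank approximant of $B$ with the convexity rules, and pass to the limit with (R.Lim) using Lemma \ref{fact-wot-so} for the preconditions. The differences are in the approximation scheme and the bookkeeping. The paper truncates the spectral decomposition of $B$ itself, setting $B_k=\sum_{i\le k}\lambda_i|\beta_i\>\<\beta_i|$, so the approximants are Löwner-increasing and the weak convergence $B_k\overset{w.o.t.}{\longrightarrow}B$ comes from the monotone-convergence fact (Lemma \ref{fact-wot-cpo}); it then applies (R.CC) directly with the eigenvalues $\lambda_i$ as weights. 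You instead compress $B$ by basis projections, $B_N=P_NBP_N$, diagonalise each finite-rank compression, and obtain the weak convergence $B_N\overset{w.o.t.}{\longrightarrow}B$ from $P_N\overset{s.o.t.}{\longrightarrow}I$ (your approximants need not be monotone, but (R.Lim) does not require that). A genuine advantage of your route is the (R.Lin)/(R.CC)/(R.Lin) normalisation: the stated side condition of (R.CC) requires $\sum_i p_i\le 1$, whereas the paper applies it with weights $\lambda_i$ whose sum $\tr(B_k)$ can exceed $1$ (already for $B=I$ on a two-dimensional space), so strictly speaking the paper leans on a stronger convex-combination rule that is sound for total correctness but not the one in Figure \ref{fig 3.41}; your uniform weights $1/K_N$ followed by rescaling with $\lambda=K_N$ stay within the stated side conditions, since (R.Lin) for total correctness allows $\lambda>1$ as long as $\lambda A,\lambda B\sqsubseteq I$, which you verify. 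The price you pay is the extra (routine) argument that $P_NBP_N\overset{w.o.t.}{\longrightarrow}B$, which the paper's monotone truncation gets for free.
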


\begin{proof}[Proof of Claim \ref{tot-gen}] 
For any quantum predicate $B$, we can always diagonalise it as follows: $$B = \sum_i\lambda_i|\beta_i\>\<\beta_i|$$ with $|\beta_i\rangle$ being a pure state and $0\leq\lambda_i\leq 1$ for all $i$ (spectral decomposition). Let us set: $$B_k = \sum_{i\le k}\lambda_i|\beta_i\>\<\beta_i|$$ for every $k\ge0$. Then with Lemma \ref{tot-pure}, we see that for each $i$,
\begin{equation}\label{before-cc}\vdash_{qTP}\{\mathcal{E}^\ast(|\beta_i\>\<\beta_i|)\}P_1\|\cdots\| P_n\{|\beta_i\>\<\beta_i|\}.\end{equation}
Applying rule (R.CC) to (\ref{before-cc}) yields:
$$\vdash_{qTP}\left\{\sum_{i\leq k}\lambda_i\mathcal{E}^\ast(|\beta_i\>\<\beta_i|)\right\}P_1\|\cdots\| P_n\left\{\sum_{i\leq k}\lambda_i|\beta_i\>\<\beta_i|\right\},$$
or simply, 
\begin{equation}
\label{tot-gen-pre}
\vdash_{qTP}\{A_k\}P_1\|\cdots\| P_n\{B_k\},
\end{equation}
where: $$A_k = \sum_{i\leq k}\lambda_i\mathcal{E}^\ast(|\beta_i\>\<\beta_i|) = \mathcal{E}^\ast(B_k).$$
Note that $\{B_k\}$ is an increasing sequence with respect to L\"owner order. So, by Fact \ref{fact-wot-cpo}, we have:
$$
B_k\overset{w.o.t.}{\longrightarrow}B.
$$ Furthermore, by 
Lemma \ref{fact-wot-so}, we obtain:
$$
A_k\overset{w.o.t.}{\longrightarrow}\mathcal{E}^\ast(B).
$$
Therefore, applying rule (R.Lim) to (\ref{tot-gen-pre}) yields
$$
\vdash_{qTP}\{\cE^\ast(B)\}P_1\|\cdots\| P_n\{B\}.
$$
\end{proof}

\section{Proof of Theorem 4.3}\label{proof-dp-complete}

We first prove soundness of qPP. It suffices to show that rule (R.PC.SP) is sound for partial correctness because soundness of the other rules in qPP have been proved before.
\begin{proof} For each $i$, it follows from assumptions $\models P_i:{\rm Abort}(C_i)$, $\models P_i:{\rm Term}(D_i)$ and $\models_\mathit{par}\left\{D_i+A_i\right\}P_i\left\{B_i\right\}$ that
$$C_i\sqsubseteq I_i-\llbracket P_i\rrbracket^\ast(I_i),\quad I_i-\llbracket P_i\rrbracket^\ast(I_i)\sqsubseteq D_i,\quad
D_i + A_i\sqsubseteq I_i-\llbracket P_i\rrbracket ^\ast(I_i)+\llbracket P_i\rrbracket^\ast(B_i).
$$
Consequently, we obtain: 
$$\llbracket P_i\rrbracket^\ast(I_i)\sqsubseteq I_i-C_i, \quad A_i\sqsubseteq\llbracket P_i\rrbracket^\ast(B_i).$$
Note that the semantic function for disjoint parallel program $P \equiv P_1\|\cdots\|P_n$ is $\llbracket P\rrbracket = \bigotimes_{i=1}^n\llbracket P_i\rrbracket$. Then it is straightforward to see that 
\begin{align*}
I - \bigotimes_{i=1}^n (I_i-C_i) + \bigotimes_{i=1}^nA_i&\sqsubseteq
I - \bigotimes_{i=1}^n\llbracket P_i\rrbracket^\ast(I_i) + \bigotimes_{i=1}^n\llbracket P_i\rrbracket^\ast(B_i)\\
&= I - \llbracket P\rrbracket^\ast(I) + \llbracket P\rrbracket^\ast\left(\bigotimes_{i=1}^nB_i\right)
\end{align*}
which actually means:
$$
\models_\mathit{par}\left\{I - \bigotimes_{i=1}^n (I_i-C_i) + \bigotimes_{i=1}^nA_i\right\} P_1\|\cdots\|P_n \left\{\bigotimes_{i=1}^nB_i\right\}.
$$\end{proof}

\begin{rem} In Remark \ref{aux-dis-p}, we pointed out that if nested parallelism is allowed, then rule (R.T.P) must be added in order to preserve the completeness of proof system qPP. 
The soundness of (R.T.P) is proved as follows. For each $i$, it follows from the assumption $\models P_i:{\rm Term}(A_i)$ that
$$I_i-\llbracket P_i\rrbracket^\ast(I_i)\sqsubseteq A_i, i.e.\ I_i-A_i\sqsubseteq\llbracket P_i\rrbracket^\ast(I_i).$$
Since $P \equiv P_1\|\cdots\|P_n$ is a disjoint parallel program, we have $\llbracket P\rrbracket = \bigotimes_{i=1}^n\llbracket P_i\rrbracket$ and 
\begin{align*}
I-\llbracket P\rrbracket^\ast(I) = I - \bigotimes_{i=1}^m\llbracket P_i\rrbracket^\ast(I_i) \sqsubseteq I - \bigotimes_{i=1}^m (I_i-A_i), 
\end{align*}
which means: 
$$\models P_1\|\cdots\|P_n: {\rm Term}\left(I - \bigotimes_{i=1}^m (I_i-A_i)\right).$$
\end{rem}

Now we turn to prove completeness of qPP. The idea is similar to the proof of Theorem \ref{thm-complete-dis}. So, we use the notations defined there. Assume that $$\models_\mathit{par}\{A\}P_1\|\cdots\|P_n\{B\}.$$ We write $\mathcal{E}=\llbracket P_1\|\cdots\|P_n\rrbracket$ for the semantic function of parallel program, and for each $i=1,...,n$, let $\mathcal{E}_i$ be the semantic function of program $P_i$. Then by Lemma \ref{lem-correctness} we have: $$A\sqsubseteq I-\mathcal{E}^\ast(I)+\mathcal{E}^\ast(B)$$ where $I$ is the identity operator on $\mathcal{H}_p$, the state Hilbert space of whole parallel program, and by rule (R.Or) it suffices to show that
\begin{equation}\label{complete-dis-par}
\vdash_{qPP}\{I-\mathcal{E}^\ast(I)+\mathcal{E}^\ast(B)\}P_1\|\cdots\| P_n\{B\}.
\end{equation}

We will complete the proof of (\ref{complete-dis-par}) in four steps. Our first step is to consider a special form of $B=|\beta\rangle\langle\beta|$ with certain constraint on $|\beta\rangle$. 

\begin{clm}
\label{par-pure-finite}
For any vector $|\beta\>$ such that its norm less than or equal to 1 and for each $i$, its reduced density operator on $\mathcal{H}_i$ is of finite rank, we have: 
$$\vdash_{qPP}\{I-\mathcal{E}^\ast(I)+\mathcal{E}^\ast(|\beta\>\<\beta|)\}P_1\|\cdots\| P_n\{|\beta\>\<\beta|\}.$$
\end{clm}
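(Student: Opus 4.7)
My plan is to mirror the proof of Claim \ref{tot-pure-finite} with two essential substitutions dictated by partial correctness: (R.PC.P) will be replaced by (R.PC.SP), and the application of (R.SO) with the sub-unital super-operator $\mathcal{F}_\beta$ will be replaced by one with a carefully-engineered trace-preserving extension $\widetilde{\mathcal{F}}_\beta$, because (R.SO) is sound for partial correctness only when the super-operator is trace-preserving.

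I would first introduce the same auxiliary systems $i'$, finite-dimensional subspaces $\mathcal{K}_{i'}$ and maximally entangled vectors $|\Psi_i\rangle$ as in the proof of Claim \ref{tot-pure-finite}. Completeness of qPD gives, for each component $P_i$, a qPD-proof of the weakest-partial-precondition formula
$$\vdash_{qPD}\bigl\{I_{ii'} - \mathcal{E}_i^\ast(I_i)\otimes I_{i'} + (\mathcal{E}_i^\ast\otimes\mathcal{I}_{i'})(|\Psi_i\rangle\langle\Psi_i|)\bigr\}\,P_i\,\bigl\{|\Psi_i\rangle\langle\Psi_i|\bigr\}.$$
Taking $C_i = D_i = I_{ii'} - \mathcal{E}_i^\ast(I_i)\otimes I_{i'}$ (each $P_i$, extended trivially to $i'$, has its abortion and termination both exactly characterised by this operator), $A_i = (\mathcal{E}_i^\ast\otimes\mathcal{I}_{i'})(|\Psi_i\rangle\langle\Psi_i|)$ and $B_i = |\Psi_i\rangle\langle\Psi_i|$, an application of (R.PC.SP) yields
\begin{equation}\label{prop-pcsp}\vdash_{qPP}\bigl\{I_{pp'} - \mathcal{E}^\ast(I_p)\otimes I_{p'} + D\bigr\}\,P_1\|\cdots\|P_n\,\bigl\{E\bigr\},\end{equation}
with $D$ and $E$ as in (\ref{nota-E}) and (\ref{nota-D}).

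The main technical step is then to transfer (\ref{prop-pcsp}) back from $E,D$ to $|\beta\rangle\langle\beta|,\mathcal{E}^\ast(|\beta\rangle\langle\beta|)$. I would define $\widetilde{\mathcal{F}}_\beta = \mathcal{F}_\beta + \mathcal{G}$, where $\mathcal{G}$ is designed so that (i) $\widetilde{\mathcal{F}}_\beta^\ast(I_{p'}) = I_{p'}$, equivalently $\mathcal{G}^\ast(I_{p'}) = I_{p'} - \|\beta\|^2\bigotimes_i P_{\mathcal{K}_{i'}}$, which is a positive operator since $\|\beta\|\le 1$; and (ii) every corrective Kraus operator of $\mathcal{G}$ has its range inside $(\bigotimes_i\mathcal{K}_{i'})^{\perp}$, so that its dual annihilates every vector in $\bigotimes_i\mathcal{K}_{i'}$. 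Such an operator can be built by a simple two-group Kraus construction, after enlarging the dimension of the auxiliary systems if necessary to ensure $(\bigotimes_i\mathcal{K}_{i'})^{\perp}\ne\{0\}$ when $\|\beta\|<1$. Since the $p'$-support of both $E$ and $D$ lies inside $\bigotimes_i\mathcal{K}_{i'}$, property (ii) forces $(\mathcal{I}_p\otimes\mathcal{G}^\ast)(E) = (\mathcal{I}_p\otimes\mathcal{G}^\ast)(D) = 0$, so Fact \ref{big-entangle-1} holds verbatim for $\widetilde{\mathcal{F}}_\beta$ in place of $\mathcal{F}_\beta$, while trace-preservation gives $(\mathcal{I}_p\otimes\widetilde{\mathcal{F}}_\beta^\ast)(\mathcal{E}^\ast(I_p)\otimes I_{p'}) = \mathcal{E}^\ast(I_p)\otimes I_{p'}$. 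Applying (R.SO) with the trace-preserving $\mathcal{I}_p\otimes\widetilde{\mathcal{F}}_\beta$ to (\ref{prop-pcsp}) and then (R.TI) to remove the $I_{p'}$ factor produces
$$\vdash_{qPP}\Bigl\{(I_p-\mathcal{E}^\ast(I_p)) + \tfrac{1}{\prod_i d_i}\,\mathcal{E}^\ast(|\beta\rangle\langle\beta|)\Bigr\}\,P_1\|\cdots\|P_n\,\Bigl\{\tfrac{1}{\prod_i d_i}\,|\beta\rangle\langle\beta|\Bigr\}.$$

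The final task is to remove the scaling factor $1/\prod_i d_i$. In the total-correctness proof this was done directly by (R.Lin) with $\lambda = \prod_i d_i$, but (R.Lin) is sound for partial correctness only when $\lambda\le 1$. Instead, I will apply (R.CC2) with the single coefficient $\lambda_1 = \prod_i d_i$ together with the termination assertion $\models P_1\|\cdots\|P_n:{\rm Term}(I_p - \mathcal{E}^\ast(I_p))$, which is supplied by (R.T.P) from the component termination assertions (themselves provable in qTD by its completeness). The combination $\lambda_1 A_1 - (\lambda_1 - 1)(I_p-\mathcal{E}^\ast(I_p))$ telescopes exactly to $I_p - \mathcal{E}^\ast(I_p) + \mathcal{E}^\ast(|\beta\rangle\langle\beta|)$, while the postcondition rescales to $|\beta\rangle\langle\beta|$, yielding the claim. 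The principal obstacle is the construction in Step 3: engineering $\widetilde{\mathcal{F}}_\beta$ so that it is simultaneously trace-preserving and invisible to both $E$ and $D$; once this is in place, the remainder is essentially a mechanical adaptation of the total-correctness argument, with (R.CC2) playing the role that (R.Lin) played there.
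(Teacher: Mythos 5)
Your proposal is correct and follows the paper's own proof essentially step for step: the same weakest-precondition formulas for the components built on the maximally entangled states $|\Psi_i\rangle$, rule (R.PC.SP) with $C_i=D_i=I_i\otimes I_{i^\prime}-(\mathcal{E}_i^\ast\otimes\mathcal{I}_{i^\prime})(I_i\otimes I_{i^\prime})$, the transfer back to $|\beta\rangle\langle\beta|$ via (R.SO) with $\mathcal{F}_\beta$ and then (R.TI), and finally (R.CC2) with $\lambda_1=\prod_i d_i$ supported by the termination assertion from (R.T.P). Your only divergence is the trace-preserving extension $\widetilde{\mathcal{F}}_\beta$ (where, incidentally, the correction needed is $\mathcal{G}^\ast(I_{p^\prime})=(1-\||\beta\rangle\|^2)I_{p^\prime}$, since $\mathcal{F}_\beta^\ast(I_{p^\prime})=\||\beta\rangle\|^2 I_{p^\prime}$); the paper applies (R.SO) to $\mathcal{F}_\beta$ directly, asserting $\mathcal{F}_\beta^\ast(I_{p^\prime})=I_{p^\prime}$, which is exact only when $\||\beta\rangle\|=1$, so your extra construction is a careful patch of a detail the paper glosses over rather than a genuinely different route.
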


\begin{proof}[Proof of Claim \ref{par-pure-finite}]
We use the notations defined in the proof of Claim \ref{tot-pure-finite}.
For each program $P_i$, completeness of qPD ensures that
$$
\vdash_{qPD}\{I_i\otimes I_{i^\prime} - (\mathcal{E}_i^\ast\otimes\mathcal{I}_{i^\prime})(I_i\otimes I_{i^\prime}) + (\cE_i^\ast\otimes\cI_{i^\prime})(|\Psi_i\>\<\Psi_i|)\}P_i\{|\Psi_i\>\<\Psi_i|\}.
$$
where $\mathcal{I}_{i^\prime}$ is the identity super-operator on $\mathcal{H}_{i^\prime}$.
On the other hand, it is easy to check that
\begin{align*}
\label{par-abort-term}
&\models P_i:{\rm Abort}\left(I_i\otimes I_{i^\prime} - (\mathcal{E}_i^\ast\otimes\mathcal{I}_{i^\prime})(I_i\otimes I_{i^\prime})\right),\\
&\models P_i: {\rm Term} \left(I_i\otimes I_{i^\prime} - (\mathcal{E}_i^\ast\otimes\mathcal{I}_{i^\prime})(I_i\otimes I_{i^\prime})\right).
\end{align*}
Then we can use rule (R.PC.SP) to derive:
\begin{align*}
\vdash_{qPP}\left\{
I_p\otimes I_{p^\prime} - \bigotimes_{i=1}^n\left[I_i\otimes I_{i^\prime} - \left(I_i\otimes I_{i^\prime} - (\mathcal{E}_i^\ast\otimes\mathcal{I}_{i^\prime})(I_i\otimes I_{i^\prime})\right) \right]
+ \bigotimes_{i=1}^n(\mathcal{E}_i^\ast\otimes\mathcal{I}_{i^\prime})(|\Psi_i\>\<\Psi_i|)
\right\}& \\
 P_1\|\cdots\|P_n
\left\{
\bigotimes_{i=1}^n(|\Psi_i\>\<\Psi_i|)
\right\},&
\end{align*}
or simply, $$\vdash_{qPP}\{D^\prime\} P_1\|\cdots\|P_n \{E\}$$ with the notation: 
\begin{align*}
D^\prime &= I_{pp^\prime} - \bigotimes_{i=1}^n\left[ (\mathcal{E}_i^\ast\otimes\mathcal{I}_{i^\prime})(I_i\otimes I_{i^\prime}) \right] + D \\
&= I_p\otimes I_{p^\prime} - \mathcal{E}^\ast(I_p)\otimes I_{p^\prime} + D.
\end{align*}

\begin{fact}
\label{big-entangle-4}
$$
(\mathcal{I}_p\otimes\mathcal{F}_\beta^\ast)(D^\prime) = \left(I_p -  \mathcal{E}^\ast(I_p) + \frac{1}{\prod_{i=1}^nd_i}\mathcal{E}^\ast(|\beta\>_p\<\beta|)\right)\otimes I_{p^\prime}.
$$
\end{fact}
Notice that $\mathcal{F}_\beta^\ast(I_{p^\prime}) = I_{p^\prime}$. Then the above fact immediately follows from Fact \ref{big-entangle-1} and linearity.

Furthermore, applying rule (R.SO) with quantum operation $\mathcal{F}_{\beta}$, we obtain:
\begin{align*}
\vdash_{qPP}\left\{\left(\mathcal{I}_p\otimes\mathcal{F}_{\beta}^\ast\right)(D^\prime)\right\} P_1\|\cdots\|P_n \left\{\left(\mathcal{I}_p\otimes\mathcal{F}_{\beta}^\ast\right)(E^\prime)\right\},
\end{align*}
which, together with Facts \ref{big-entangle-1} and \ref{big-entangle-4} and rule (R.TI), implies: 
\begin{align*}
\vdash_{qPP}\left\{I_p - \mathcal{E}^\ast(I_p) + \frac{1}{\prod_{i=1}^nd_i}\mathcal{E}^\ast(|\beta\>_p\<\beta|)\right\} P_1\|\cdots\|P_n \left\{\frac{1}{\prod_{i=1}^nd_i}|\beta\>_p\<\beta|\right\}.
\end{align*}

For each $i$, it is obvious that
\begin{equation}\label{middle-term}
\models P_i: {\rm Term} \left(I_i - \mathcal{E}_i^\ast(I_i)\right).
\end{equation}
Thus, using rule (R.T.P) we obtain:
\begin{equation}
\label{par-term}
\vdash_{qPP} P_1\|\cdots\|P_n: {\rm Term} \left(I - \mathcal{E}^\ast(I)\right).
\end{equation}
Therefore, applying rule (R.CC2) with $p = \prod_{i=1}^nd_i$ we have:
\begin{align*}
\vdash_{qPP}\left\{I - \mathcal{E}^\ast(I) + \mathcal{E}^\ast(|\beta\>\<\beta|)\right\} P_1\|\cdots\|P_n \left\{|\beta\>\<\beta|\right\}
\end{align*}
as we desired.
\end{proof}

Our second step is to generalise the conclusion of Claim \ref{par-pure-finite} to the case of $B=|\beta\rangle\langle\beta|$ with a general $|\beta\rangle$.  

\begin{clm}
\label{par-pure}
For any pure state $|\beta\>$, we have: 
$$\vdash_{qPP}\{I-\mathcal{E}^\ast(I)+\mathcal{E}^\ast(|\beta\>\<\beta|)\}P_1\|\cdots\| P_n\{|\beta\>\<\beta|\}.$$
\end{clm}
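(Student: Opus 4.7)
The plan is to mimic the passage from Claim \ref{tot-pure-finite} to Claim \ref{tot-pure} in the total-correctness case, but now relying on the partial-correctness analogue Claim \ref{par-pure-finite} and the fact that rule (R.Lim) is sound for partial correctness (Lemma \ref{Aux-Sound}). Concretely, for each $i\in[n]$ fix an orthonormal basis $\{|1\>_i,|2\>_i,\dots\}$ of $\mathcal H_i$ and, exactly as in the proof of Claim \ref{tot-pure}, define the finite-rank projectors $P_{ik}$ on $\mathcal H_i$, their tensor product $P_k=P_{1k}\otimes\cdots\otimes P_{nk}$ on $\mathcal H_p$, and the truncated vectors $|\beta_k\>=P_k|\beta\>$. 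By construction, each $|\beta_k\>$ has norm $\leq 1$ and its reduced density operator on every $\mathcal H_i$ has finite rank, so Claim \ref{par-pure-finite} applies to each $|\beta_k\>$ and yields
\begin{equation}\label{pr-pure-step}
\vdash_{qPP}\{I-\mathcal E^\ast(I)+\mathcal E^\ast(|\beta_k\>\<\beta_k|)\}\,P_1\|\cdots\|P_n\,\{|\beta_k\>\<\beta_k|\}.
\end{equation}

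Next I would establish that both sides of \eqref{pr-pure-step} converge weakly to the desired limits. Fact \ref{fact-pure} already gives $|\beta_k\>\<\beta_k|\overset{w.o.t.}{\longrightarrow}|\beta\>\<\beta|$, and by Lemma \ref{fact-wot-so} this is preserved by the dual super-operator $\mathcal E^\ast$, so $\mathcal E^\ast(|\beta_k\>\<\beta_k|)\overset{w.o.t.}{\longrightarrow}\mathcal E^\ast(|\beta\>\<\beta|)$. Adding the constant term $I-\mathcal E^\ast(I)$ preserves weak convergence, hence
\[
I-\mathcal E^\ast(I)+\mathcal E^\ast(|\beta_k\>\<\beta_k|)\overset{w.o.t.}{\longrightarrow}I-\mathcal E^\ast(I)+\mathcal E^\ast(|\beta\>\<\beta|).
\]
A small bookkeeping point is to verify that each operator appearing as a pre- or postcondition in \eqref{pr-pure-step} is a bona fide quantum predicate (i.e.\ lies in $[0,I]$ in the L\"owner order); for the postconditions $|\beta_k\>\<\beta_k|$ this is immediate from $\||\beta_k\>\|\leq 1$, and for the preconditions it follows from $0\sqsubseteq \mathcal E^\ast(|\beta_k\>\<\beta_k|)\sqsubseteq \mathcal E^\ast(I)\sqsubseteq I$.

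Finally, applying rule (R.Lim) to the family \eqref{pr-pure-step} gives
\[
\vdash_{qPP}\{I-\mathcal E^\ast(I)+\mathcal E^\ast(|\beta\>\<\beta|)\}\,P_1\|\cdots\|P_n\,\{|\beta\>\<\beta|\},
\]
which is exactly the claim. I do not anticipate a serious obstacle here: Claim \ref{par-pure-finite} has already absorbed the hard work of handling the nontermination term $I-\mathcal E^\ast(I)$ via rules (R.PC.SP), (R.T.P) and (R.CC2), and the only additional ingredient needed in the infinite-dimensional step is weak-operator continuity of $\mathcal E^\ast$, which is exactly what (R.Lim) is tailored for. The most delicate point in writing the argument cleanly will be ensuring that one uniformly stays within the class of quantum predicates along the truncation sequence, but this reduces to the straightforward norm bound on $|\beta_k\>$ noted above.
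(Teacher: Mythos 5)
Your proposal is correct and follows essentially the same route as the paper's proof: truncate $|\beta\>$ to the finite-rank vectors $|\beta_k\>=P_k|\beta\>$, invoke Claim \ref{par-pure-finite} for each $k$, use Fact \ref{fact-pure} together with Lemma \ref{fact-wot-so} to obtain weak-operator convergence of the pre- and postconditions, and conclude with rule (R.Lim). The extra check that each pre-/postcondition is a genuine quantum predicate is a harmless addition the paper leaves implicit.
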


\begin{proof}[Proof of Claim \ref{par-pure}]
We use the notations introduced in the proof of Claim \ref{tot-pure-finite}. First, we can use Claim \ref{par-pure-finite} to assert that 
\begin{equation}
\label{pure-fin2infin}
\vdash_{qPP} \{I - \mathcal{E}^\ast(I) + \cE^\ast(|\beta_k\>\<\beta_k|)\}P_1\|\cdots\|P_n\{|\beta_k\>\<\beta_k|\}.
\end{equation}
Moreover, as shown in the proof of Lemma \ref{tot-pure-finite}, it holds that $$|\beta_k\>\<\beta_k|\overset{w.o.t.}{\longrightarrow}|\beta\>\<\beta|.$$ Thus, with Lemma \ref{fact-wot-so}, it follows immediately that
\begin{align*}
I - \mathcal{E}^\ast(I) + \cE^\ast(|\beta_k\>\<\beta_k|)\overset{w.o.t.}{\longrightarrow}I - \mathcal{E}^\ast(I) + \cE^\ast(|\beta\>\<\beta|)
\end{align*}
Now, applying rule (R.Lim), we obtain:
$$
\vdash_{qPP} \{I - \mathcal{E}^\ast(I) + \cE^\ast(|\beta\>\<\beta|)\}P_1\|\cdots\|P_n\{|\beta\>\<\beta|\}
$$
as we desired.
\end{proof}

Next we further generalise the conclusion of Claim \ref{par-pure} to a more general $B$, using the spectral decomposition of $B$.

\begin{clm}
\label{par-gen-fin}
For any quantum predicate $B$ with finite rank, we have: 
$$\vdash_{qPP}\{I-\mathcal{E}^\ast(I)+\mathcal{E}^\ast(B)\}P_1\|\cdots\| P_n\{B\}.$$
\end{clm}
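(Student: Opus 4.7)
\medskip
\noindent\textbf{Proof proposal for Claim \ref{par-gen-fin}.}
The plan is to reduce the general finite-rank case to the pure-state case already handled by Claim \ref{par-pure} via the spectral decomposition of $B$, and then combine the resulting correctness formulas using the convex combination rules (R.CC1) and (R.CC2), with $A = I-\mathcal{E}^\ast(I)$ playing the role of the auxiliary abort/termination predicate.

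First I would write the spectral decomposition $B = \sum_{i=1}^k \lambda_i |\beta_i\rangle\langle\beta_i|$ with $k<\infty$, $|\beta_i\rangle$ a pure state and $\lambda_i\in[0,1]$ (since $0\sqsubseteq B\sqsubseteq I$). By Claim \ref{par-pure}, for each $i$ we have
$$\vdash_{qPP}\{I-\mathcal{E}^\ast(I)+\mathcal{E}^\ast(|\beta_i\rangle\langle\beta_i|)\}\,P_1\|\cdots\|P_n\,\{|\beta_i\rangle\langle\beta_i|\}.$$
Next I would observe that both
$$\models P_1\|\cdots\|P_n:\mathrm{Abort}(I-\mathcal{E}^\ast(I))\quad\text{and}\quad\models P_1\|\cdots\|P_n:\mathrm{Term}(I-\mathcal{E}^\ast(I))$$
hold: the former is immediate from the definition of abortion (in fact $\tr((I-\mathcal{E}^\ast(I))\rho)=\tr(\rho)-\tr(\mathcal{E}(\rho))$), and the latter can be derived syntactically in qPP via rule (R.T.P) from the trivially verifiable assertions $\models P_i:\mathrm{Term}(I_i-\mathcal{E}_i^\ast(I_i))$ for each component, exactly as in equations (\ref{middle-term})--(\ref{par-term}) above.

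Then I would split into two cases according to $\tr(B)=\sum_i\lambda_i$. If $\tr(B)\leq 1$, apply rule (R.CC1) with weights $p_i=\lambda_i$ and abort predicate $A=I-\mathcal{E}^\ast(I)$; the derived precondition is
$$\sum_{i=1}^k\lambda_i\bigl[I-\mathcal{E}^\ast(I)+\mathcal{E}^\ast(|\beta_i\rangle\langle\beta_i|)\bigr]+\Bigl(1-\sum_i\lambda_i\Bigr)(I-\mathcal{E}^\ast(I))=(I-\mathcal{E}^\ast(I))+\mathcal{E}^\ast(B),$$
and the derived postcondition is $\sum_i\lambda_i|\beta_i\rangle\langle\beta_i|=B$. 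If instead $\tr(B)>1$, apply rule (R.CC2) with the same $\lambda_i$ and termination predicate $A=I-\mathcal{E}^\ast(I)$; the precondition becomes
$$\sum_i\lambda_i\bigl[I-\mathcal{E}^\ast(I)+\mathcal{E}^\ast(|\beta_i\rangle\langle\beta_i|)\bigr]-\Bigl(\sum_i\lambda_i-1\Bigr)(I-\mathcal{E}^\ast(I))=(I-\mathcal{E}^\ast(I))+\mathcal{E}^\ast(B),$$
again with postcondition $B$. Either way we obtain the desired Hoare triple.

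The proof is essentially a bookkeeping exercise once Claim \ref{par-pure} is in hand; no genuine obstacle arises. The only mildly subtle point is the case split on $\tr(B)$: since $B$ is only required to be a predicate of finite rank (not subnormalised), $\tr(B)$ can exceed $1$, which is precisely why the presence of both (R.CC1) and (R.CC2) in qPP is essential. The trace-preserving/non-preserving asymmetry absorbed into the $I-\mathcal{E}^\ast(I)$ term is also the reason we cannot simply invoke the simpler (R.CC) that was sufficient for total correctness in Theorem \ref{thm-complete-dis}.
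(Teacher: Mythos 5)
Your proposal is correct and follows essentially the same route as the paper: spectral decomposition of $B$, Claim \ref{par-pure} for each eigenvector, the abort/termination predicate $I-\mathcal{E}^\ast(I)$ (established via (R.A.P)/(R.T.P) on the components), and a case split on $\tr(B)\le 1$ versus $\tr(B)>1$ resolved by (R.CC1) and (R.CC2) respectively. The only cosmetic difference is that you verify the Abort assertion semantically while the paper derives it from qPD-completeness plus (R.A.P), which is immaterial since the rule's premise is stated with $\models$.
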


\begin{proof}[Proof of Claim \ref{par-gen-fin}]
We assume that postcondition $B$ has a finite rank $d_B<\infty$. Then $B$ can be diagonalised as follows:
$$B = \sum_{i=1}^{d_B}p_i|\beta_i\>\<\beta_i|$$
with $p_i\ge 0$ and $|\beta_i\>$ being pure states  
for all $i\in[d_B]$. Now for each $i$, according to Claim \ref{par-pure}, we have:
\begin{equation}
\label{par-gen-fin-middle}
\vdash_{qPP} \{I - \mathcal{E}^\ast(I) + \cE^\ast(|\beta_i\>\<\beta_i|)\}P_1\|\cdots\|P_n\{|\beta_i\>\<\beta_i|\}.
\end{equation}

We consider the following two cases:

$\bullet\ $Case 1: $\sum_{i=1}^{d_B}p_i\le1$. The completeness of qPD implies:
$$
\vdash_{qPD}\left\{I_i - \mathcal{E}_i^\ast(I_i)\right\} P_i \left\{ 0 \right\}, {\rm i.e.} \vdash_{qPD}P_i:{\rm Abort}(I_i - \mathcal{E}_i^\ast(I_i)).
$$
Then using rule (R.A.P), we obtain:
$$
\vdash_{qPP}P_1\|\cdots\|P_n:{\rm Abort}(I - \mathcal{E}^\ast(I)).
$$
Therefore, combining the above equation with equation (\ref{par-gen-fin-middle}) for all $i\in[d_B]$ and rule (R.CC1) yields:
\begin{align*}
\vdash_{qPP}\left\{
\sum_{i\in[d_B]}p_i\left(I-\mathcal{E}^\ast(I)+\mathcal{E}^\ast(|\beta_i\>\<\beta_i|)\right)
+ \left(1-\sum_{i\in[d_B]}p_i\right)(I - \mathcal{E}^\ast(I))
\right\}  P_1\|\cdots\|P_n \left\{
\sum_{i\in[d_B]}p_i|\beta_i\>\<\beta_i|
\right\},
\end{align*}
or equivalently,
\begin{align*}
\vdash_{qPP}\left\{I-\mathcal{E}^\ast(I)+\mathcal{E}^\ast(B)
\right\}  P_1\|\cdots\|P_n \left\{B\right\}.
\end{align*}

$\bullet\ $Case 2: $\sum_{i=1}^{d_B}p_i>1$. 
Combining equation (\ref{par-term}) with equation (\ref{par-gen-fin-middle}) for all $i\in[d_B]$ and rule (R.CC2), we have:
\begin{align*}
\vdash_{qPP}\left\{
\sum_{i\in[d_B]}p_i\left(I-\mathcal{E}^\ast(I)+\mathcal{E}^\ast(|\beta_i\>\<\beta_i|)\right)
- \left(\sum_{i\in[d_B]}p_i-1\right)(I - \mathcal{E}^\ast(I))
\right\}  P_1\|\cdots\|P_n & \\ \left\{
\sum_{i\in[d_B]}p_i|\beta_i\>\<\beta_i|
\right\}, 
\end{align*}
or equivalently,
\begin{align*}
\vdash_{qPP}\left\{I-\mathcal{E}^\ast(I)+\mathcal{E}^\ast(B)
\right\}  P_1\|\cdots\|P_n \left\{B\right\}.
\end{align*}
\end{proof}

Finally, we can complete the proof by showing the following:

\begin{clm}
\label{par-gen}
For any quantum predicate $B$, we have: 
$$\vdash_{qPP}\{I-\mathcal{E}^\ast(I)+\mathcal{E}^\ast(B)\}P_1\|\cdots\| P_n\{B\}.$$
\end{clm}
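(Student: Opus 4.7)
The plan is to reduce the general case to the finite-rank case already handled in Claim \ref{par-gen-fin} by spectrally decomposing $B$ and approximating it from below by its finite-rank truncations, then invoking the weak-operator limit rule (R.Lim). This is the natural analogue, for partial correctness, of the final step in the proof of Theorem \ref{thm-complete-dis} (Claim \ref{tot-gen}), and it relies on the same continuity facts for weak convergence together with Lemma \ref{fact-wot-so}.

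First, take a spectral decomposition $B = \sum_{i\ge 1}\lambda_i|\beta_i\>\<\beta_i|$ with $0\le\lambda_i\le 1$ and $\{|\beta_i\>\}$ orthonormal (pure states), and define the finite-rank truncations
\[
B_k = \sum_{i\le k}\lambda_i|\beta_i\>\<\beta_i|\qquad(k\ge 1).
\]
Since $0\sqsubseteq B_k\sqsubseteq B\sqsubseteq I$, each $B_k$ is a quantum predicate of finite rank, so Claim \ref{par-gen-fin} applies and gives
\[
\vdash_{qPP}\bigl\{I-\mathcal{E}^\ast(I)+\mathcal{E}^\ast(B_k)\bigr\}\,P_1\|\cdots\|P_n\,\{B_k\}
\]
for every $k\ge 1$.

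Next, I would verify the two weak-operator convergences needed to close the argument with rule (R.Lim). The sequence $\{B_k\}$ is increasing in L\"owner order and bounded above by $I$, so by Lemma \ref{fact-wot-cpo} it converges strongly (hence weakly, by Lemma \ref{fact-sot-wot}) to its supremum, which is precisely $B$; thus $B_k\overset{w.o.t.}{\longrightarrow}B$. Applying Lemma \ref{fact-wot-so} to the quantum operation $\mathcal{E}=\llbracket P_1\|\cdots\|P_n\rrbracket$ gives $\mathcal{E}^\ast(B_k)\overset{w.o.t.}{\longrightarrow}\mathcal{E}^\ast(B)$. Since weak convergence is preserved under addition of a fixed operator, we also obtain
\[
I-\mathcal{E}^\ast(I)+\mathcal{E}^\ast(B_k)\overset{w.o.t.}{\longrightarrow}I-\mathcal{E}^\ast(I)+\mathcal{E}^\ast(B).
\]
A final application of rule (R.Lim) to the family of correctness formulae above then yields
\[
\vdash_{qPP}\{I-\mathcal{E}^\ast(I)+\mathcal{E}^\ast(B)\}\,P_1\|\cdots\|P_n\,\{B\},
\]
which is the desired conclusion via rule (R.Or) applied to the initial inequality $A\sqsubseteq I-\mathcal{E}^\ast(I)+\mathcal{E}^\ast(B)$.

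The step I expect to be most delicate is not the limit itself, which is essentially routine given the preparatory lemmas, but rather making sure that each intermediate assertion $I-\mathcal{E}^\ast(I)+\mathcal{E}^\ast(B_k)$ is genuinely a quantum predicate (so that rule (R.Lim) applies), and that no hidden use of total-correctness-only rules such as (R.Lin) with $\lambda>1$ sneaks in; both points are handled here because $0\sqsubseteq\mathcal{E}^\ast(B_k)\sqsubseteq\mathcal{E}^\ast(I)\sqsubseteq I$, which forces the sum $I-\mathcal{E}^\ast(I)+\mathcal{E}^\ast(B_k)$ to lie in $[0,I]$, and because the entire chain of derivations uses only rules whose partial-correctness soundness is already established.
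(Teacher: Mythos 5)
Your proposal is correct and follows essentially the same route as the paper: spectrally decompose $B$, truncate to the finite-rank predicates $B_k$, invoke Claim \ref{par-gen-fin}, establish $B_k\overset{w.o.t.}{\longrightarrow}B$ via Lemma \ref{fact-wot-cpo} and $\mathcal{E}^\ast(B_k)\overset{w.o.t.}{\longrightarrow}\mathcal{E}^\ast(B)$ via Lemma \ref{fact-wot-so}, and close with rule (R.Lim). Your extra check that $I-\mathcal{E}^\ast(I)+\mathcal{E}^\ast(B_k)$ lies in $[0,I]$ is a sensible refinement the paper leaves implicit, but it does not alter the argument.
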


\begin{proof}[Proof of Claim \ref{par-gen}]
For any quantum predicate $B$, we can always diagonalise it as follows: $$B = \sum_i\lambda_i|\beta_i\>\<\beta_i|$$ with $0\leq\lambda_i\leq 1$ for all $i$ (spectral decomposition). Let us set $$B_k = \sum_{i\le k}\lambda_i|\beta_i\>\<\beta_i|$$ for each $k\ge0$. According to Claim \ref{par-gen-fin}, we know that for all $k$,
\begin{equation}
\label{par-gen-pre}
\vdash_{qPP}\{I-\mathcal{E}^\ast(I) + \mathcal{E}^\ast(B_k)\}P_1\|\cdots\| P_n\{B_k\}.
\end{equation}
On the other hand, $\{B_k\}$ is an increasing sequence with respect to L\"owner order. So, by Lemma \ref{fact-wot-cpo} we have:
$$
B_k\overset{w.o.t.}{\longrightarrow}B.
$$
Furthermore, we can use Lemma \ref{fact-wot-so} to deduce that 
$$
I-\mathcal{E}^\ast(I) + \mathcal{E}^\ast(B_k)\overset{w.o.t.}{\longrightarrow}I-\mathcal{E}^\ast(I) + \mathcal{E}^\ast(B).
$$
Therefore, applying rule (R.Lim) to equation (\ref{par-gen-pre}), we obtain:
$$
\vdash_{qTP}\{I-\mathcal{E}^\ast(I)+\cE^\ast(B)\}P_1\|\cdots\| P_n\{B\}.
$$
\end{proof}
\section{Proof of Theorem 6.1}\label{proof-strong-sound} 

\begin{proof} Suppose that $$\langle P,\rho\rangle\rightarrow^n\{|\langle P_i,\rho_i\rangle|\}.$$ We proceed by induction on the length $n$ of computation.

{\vskip 4pt}

$\blacktriangleright$  Induction basis: For $n=0$, $\{|\langle P_i,\rho_i\rangle|\}$ is a singleton $\{|\langle P_1,\rho_1\rangle|\}$ with $P_1\equiv P$ and $\rho_1\equiv \rho$. Then we can choose $T_1\equiv P$ and it holds that $P_1\equiv\mathit{at}(T_1,P)$. Note that in the proof outline $\{A\}P^\ast\{B\}$, we have $A\sqsubseteq\mathit{pre}(P)=B_1$. Thus, $$\mathit{tr}(A\rho)\leq\mathit{tr}(B_1\rho_1)=\sum_i\mathit{tr}(B_i\rho_i).$$

$\blacktriangleright$ Induction step: Now we assume that $$\langle P,\rho\rangle\rightarrow^{n-1}\mathcal{A}\rightarrow\mathcal{A}^\prime$$ and the conclusion is true for length $n-1$. Here, we only consider the simple case where the last step is derived by rule (MS1) with $\mathcal{A}= \{|\langle P_i,\rho_i\rangle|\}$ and $\mathcal{A}_\downarrow=\{\langle P,\rho\rangle\in\mathcal{A}: P\not\equiv\downarrow\}$ being a singleton $\left\{|\langle P_{i_0},\rho_{i_0}\rangle|\right\}$. (A general case with $\mathcal{A}_\downarrow$ having more than one element follows from the fact that rule (MS2) preserves the inequality in clause (2) of Theorem 6.1.) Then we can assume that $$\mathcal{A}^\prime =  \{|\langle P_i,\rho_i\rangle|i\neq i_0|\}\cup\{|\langle Q_j,\sigma_j\rangle|\}$$ where $\langle P_{i_0},\rho_{i_0}\rangle\rightarrow\{|\langle Q_j,\sigma_j\rangle|\}$ is derived by one of the rules used in Definition \ref{def-tran-ensemble} except (MS1) and (MS2). Thus, we need to consider the following cases:

{\vskip 4pt}

\textbf{Case} 1. The last step uses rule (IF$^\prime$). Then $P_{i_0}$ can be written in the following form: $$P_{i_0}\equiv\ \mathbf{if}\ \left(\square m\cdot M[\overline{q}]=m\rightarrow R_m\right)\ \mathbf{fi},$$ and for each $j$, $Q_j\equiv R_m\equiv \mathit{at}(R_m,P)$ and $\sigma_j=M_m\rho_{i_0}M_m^\dag$ for some $m$. On the other hand, a segment of the proof outline $\{A\}P^\ast\{B\}$ must be derived by the following inference:
$$\frac{\{A_m\}R_m^\ast\{C\}\ {\rm for\ every}\ m}{\left\{\sum_mM_m^\dag A_m M_m\right\}\ \mathbf{if}\ \left(\square m\cdot M[\overline{q}]=m\rightarrow\left\{A_m\right\} R_m^\ast\right)\ \mathbf{fi}\{C\}}$$ and $B_{i_0}=\mathit{pre}\left(P_{i_0}\right)\sqsubseteq\sum_m M_m^\dag A_mM_m$, $A_m=\mathit{pre}\left(R_m\right)$. Therefore, \begin{align*}\mathit{tr}\left(B_{i_0}\rho_{i_0}\right)&\leq\mathit{tr}\left(\sum_mM_m^\dag A_mM_m\rho_{i_0}\right)\\
&=\sum_m\mathit{tr}\left(M_m^\dag A_mM_m\rho_{i_0}\right)\\ &=\sum_m\mathit{tr}\left(A_mM_m\rho_{i_0}M_m^\dag\right)\\ &=\sum_j\mathit{tr}\left(\mathit{pre}\left(Q_j\right)\sigma_j\right).
\end{align*} By the induction hypothesis, we obtain: \begin{align*}\mathit{tr}(A\rho)&\leq\sum_{i\neq i_0}\mathit{tr}\left(B_i\rho_i\right)+\mathit{tr}\left(B_{i_0}\rho_{i_0}\right)\\
&\leq \sum_{i\neq i_0}\mathit{tr}\left(B_i\rho_i\right)+\sum_j\mathit{tr}\left(\mathit{pre}\left(Q_{j}\right)\sigma_{j}\right).
\end{align*} So, the conclusion is true in this case.

{\vskip 4pt}

\textbf{Case} 2. The last step uses rule (L$^\prime$). Then $P_{i_0}$ must be in the following form: $$P_{i_0}\equiv\ \mathbf{while}\ M[\overline{q}]=1\ \mathbf{do}\ R\ \mathbf{od}$$ and $\{|\langle Q_j,\sigma_j\rangle|\}=\{|\langle Q_0,\sigma_0\rangle,\langle Q_1,\sigma_1\rangle|\}$ with $Q_0\equiv\mathbf{skip}, \sigma_0=M_0\rho_{i_0}M_0^\dag, Q_1\equiv\ R;P_{i_0}$ and $\sigma_1=M_1\rho_{i_0}M_1^\dag$. A segment of $\{A\}P^\ast\{B\}$ must be derived by the following inference:$$\frac{\{D\}R^\ast\{M_0CM_0^\dag+M_1DM_1^\dag\}}{\begin{array}{ccc}\{M_0CM_0^\dag+M_1DM_1^\dag\}\ \mathbf{while}\ M[\overline{q}]=0\ \mathbf{do}\ \{C\}\ \mathbf{skip}\ \{C\}\\
\ \ \ \ \ \ \ \ \ \ \ \ \ \ \ \ \ \ \ \ \ \ \ \ \ \ \ \ \ \ \ \ \ \ \ \ \ \ \ \ \ \ \ \ \ \ \ \ \ \ \ \ \ \ \ \ \ \ \ \ \ \ \ \ \ \ \ \ \ \ =1\ \mathbf{do}\ R^\ast \{M_0CM_0^\dag+M_1DM_1^\dag\}\\ \mathbf{od}\ \{C\}
\end{array}}$$ and $B_{i_0}\sqsubseteq M_0CM_0^\dag +M_1DM_1^\dag.$ Then $Q_0\equiv\mathit{at}(\mathbf{skip},P), \mathit{pre}(Q_0)=C, Q_1\equiv\mathit{at}(R,P)$ and $\mathit{pre}(Q_1)=D$. It follows that \begin{align*}
\mathit{tr}\left(B_{i_0}\rho_{i_0}\right)&\leq\mathit{tr}\left[\left(M_0CM_0^\dag+M_1DM_1^\dag\right)\rho_{i_0}\right]\\ &=\mathit{tr}\left(M_0CM_0^\dag\rho_{i_0}\right) +\mathit{tr}\left(M_1DM_1^\dag\rho_{i_0}\right)\\
&=\mathit{tr}\left(CM_0^\dag\rho_{i_0}M_0\right)+\mathit{tr}\left(DM_1^\dag\rho_{i_0}M_1\right)\\
&= \mathit{tr}\left(\mathit{pre}(Q_0)\sigma_0\right)+\mathit{tr}\left(\mathit{pre}(Q_1)\sigma_1\right).
\end{align*} Furthermore, by the induction hypothesis, we have: \begin{align*}\mathit{tr}(A\rho)&\leq\sum_{i\neq i_0}\mathit{tr}\left(B_i\rho_i\right)+\mathit{tr}\left(B_{i_0}\rho_{i_0}\right)\\
&\leq \sum_{i\neq i_0}\mathit{tr}\left(B_i\rho_i\right)+\sum_j\mathit{tr}\left(\mathit{pre}\left(Q_{j}\right)\sigma_{j}\right).
\end{align*} Thus, the conclusion is true in this case.

{\vskip 4pt}

\textbf{Case} 3. The last step uses rule (Sk), (In) or (UT). Similar but easier.
\end{proof}

\section{Proof of Theorem 6.2}\label{proof-strong-sound-1} 

\begin{proof} We prove the conclusion by induction on the length $l$ of transition sequence: $$\langle P_1\|\cdots\| P_n,\rho\rangle\rightarrow^l \left\{|\langle P_{1s}\|\cdots\| P_{ns},\rho_s\rangle |\right\}.$$ The conclusion is obviously true in the induction basis case of $l=0$. Now we assume that $$\langle P_1\|\cdots\| P_n,\rho\rangle\rightarrow^l \left\{|\langle P_{1k}^\prime\|\cdots\| P_{nk}^\prime,\rho_k^\prime\rangle |\right\}\rightarrow\left\{|\langle P_{1s}\|\cdots\| P_{ns},\rho_s\rangle |\right\}$$ and the last step is a transition performed by the $r$th component $(1\leq r\leq n)$: \begin{equation}\label{r-comp}\langle P_{rk}^\prime,\rho_k^\prime\rangle\rightarrow\left\{|\langle Q_{rk}^{(h)},\rho_k^{(h)}\rangle|\right\}\end{equation}
for each $k$. Then \begin{equation}\label{set-equ}\left\{|\langle P_{1s}\|\cdots\|P_{ns},\rho_s\rangle|\right\}=\bigcup_k\left\{|\langle P_{1k}^\prime\|\cdots\|P_{(r-1)k}^\prime\|Q_{rk}^{(h)}\|P_{(r+1)k}^\prime\|\cdots\|P_{nk}^\prime,\rho_k^{(h)}\rangle|\right\}.\end{equation}
By the induction hypothesis for the first $l$ steps, we obtain: \begin{equation}\label{Ind-Hy}\mathit{tr}\left[\left(\sum_{i=1}^np_iA_i\right)\rho\right]\leq\sum_k\mathit{tr}\left[\left(\sum_{i=1}^np_iB_{ik}^\prime\right)\rho_k^\prime\right]\end{equation} where $$B_{ik}^\prime=\begin{cases}B_i\ &{\rm if}\ P_{ik}^\prime\equiv\ \downarrow,\\ \mathit{pre}(T_{ik}^\prime) &{\rm if}\ P_{ik}^\prime\equiv\mathit{at}(T_{ik}^\prime,P_i).\end{cases}$$
We set $$B_{rk}^{(h)}=\begin{cases}B_r\ &{\rm if}\ Q_{rk}^{(h)}\equiv\ \downarrow,\\ \mathit{pre}(S_{rk}^{(h)}) &{\rm if}\ Q_{rk}^{(h)}\equiv\mathit{at}(S_{rk}^{(h)},P_r).\end{cases}$$ Then for each $k$, by an argument similar to the case of Theorem \ref{thm.strong-sound} on transition (\ref{r-comp}), we can prove that \begin{equation}\label{r-comp-1}\mathit{tr}\left(B_{rk}^\prime\rho_k^\prime\right)\leq\sum_h\mathit{tr}\left(B_{rk}^{(h)}\rho_k^{(h)}\right).\end{equation}
On the other hand, $\{A_i\}P_i^\ast\{B_i\}$ $(i=1,...,n)$ are $\Lambda$-interference free. Then for every $i\neq r$, it follows from transition (\ref{r-comp}) that \begin{equation}\label{ind-inter}
\mathit{tr}\left[\left(\lambda_{ir}B_{ik}^\prime+\left(1-\lambda_{ik}\right)B_{rk}^\prime\right)\rho_k^\prime\right]\leq\sum_h\mathit{tr}\left[\left(\lambda_{ir}B_{ik}^\prime+\left(1-\lambda_{ir}\right)B_{rk}^{(h)}\right)\rho_k^{(h)}\right].
\end{equation} Note that $\sum_{i=1}^n p_i=1$, thus condition (\ref{lambda-con}) implies: $$p_r-\sum_{i\neq r}\frac{p_i\left(1-\lambda_{ir}\right)}{\lambda_{ir}}\geq 0.$$
Therefore,  we have:
\begin{align}\label{total-1}
&\mathit{tr}\left[\left(\sum_{i=1}^np_iA_i\right)\rho\right]\leq\sum_k\mathit{tr}\left[\left(\sum_{i=1}^np_iB_{ik}^\prime\right)\rho_k^\prime\right]\\
&=\sum_k\mathit{tr}\left[\left(\sum_{i\neq r}^np_iB_{ik}^\prime+p_rB_{rk}^\prime\right)\rho_k^\prime\right]\\
&=\sum_k\mathit{tr}\left\{\left[\sum_{i\neq r}^n\frac{p_i}{\lambda_{ir}}\left(\lambda_{ir}B_{ik}^\prime+\left(1-\lambda_{ir}\right)B_{rk}^\prime\right)+\left(p_r-\sum_{i\neq r}\frac{p_i\left(1-\lambda_{ir}\right)}{\lambda_{ir}}\right)B_{rk}^\prime\right]\rho_k^\prime\right\}\\
\label{total-2}&\leq \sum_k\left\{\sum_{i\neq r}^n\frac{p_i}{\lambda_{ir}}\sum_h\mathit{tr}\left[\left(\lambda_{ir}B_{ik}^\prime+\left(1-\lambda_{ir}\right)B_{rk}^{(h)}\right)\rho_k^{(h)}\right]+\left(p_r-\sum_{i\neq r}\frac{p_i\left(1-\lambda_{ir}\right)}{\lambda_{ir}}\right)\sum_h B_{rk}^{(h)}\rho_k^{(h)}\right\}\\
&=\sum_{k,h}\mathit{tr}\left[\left(\sum_{i\neq r}p_iB_{ik}^\prime+p_rB_{rk}^{(h)}\right)\rho_k^{(h)}\right]\\ \label{total-3}&=\sum_s\mathit{tr}\left[\left(\sum_{i=1}^np_iB_{is}\right)\rho_s\right].
\end{align}
Here, (\ref{total-1}) comes from equation (\ref{Ind-Hy}), the first and second part of (\ref{total-2}) from (\ref{ind-inter}), (\ref{r-comp-1}), respectively, and (\ref{total-3}) from (\ref{set-equ}).
\end{proof}

\section{Proof of Lemma 7.1}\label{super-lem-proof}

\begin{proof} 
The left four equations is obvious due to Fact \ref{big-entangle-1}. The rest proof is completed by straightforward calculations:

\begin{align*}
&\mathcal{F}^\ast\left(\bigotimes_{i\in[n]}\Psi^H_{i;i^\prime}\right) =\frac{1}{2^n}\sum_{\substack{\forall\ i\in[n]:\\k_i\in\{0,1\}}}
\left[\sum_{\substack{\forall\ i\in[n]:\\v_i,u_i\in\{0,1\}}}
\left(\bigotimes_{i\in[n]}|k_i\>_{i^\prime}\right){}_{p^\prime}\<\bar{\phi}| \bigotimes_{i\in[n]}\left(\frac{1}{\sqrt{2}}(-1)^{u_iv_i}|v_i\>_i|u_i\>_{i^\prime} \right)\right]\left[\cdot\right]^\dag \\
&=\frac{1}{2^n}\sum_{\substack{\forall\ i\in[n]:\\k_i\in\{0,1\}}}\left[\sum_{\substack{\forall\ i\in[n]:\\v_i,u_i,x_i,z_i\in\{0,1\}}}
\left(\bigotimes_{i\in[n]}|k_i\>_{i^\prime}\right)
\mathbf{i}^{\sum A_{i,j}x_ix_j}(-1)^{\sum z_ix_i} \bigotimes_{i\in[n]}\left(\frac{1}{2\sqrt{2}}{}_{i^\prime}\<z_i| (-1)^{u_iv_i}|v_i\>_i|u_i\>_{i^\prime} \right)
\right]\left[\cdot\right]^\dag \\
&=\frac{1}{2^n}\sum_{\substack{\forall\ i\in[n]:\\k_i\in\{0,1\}}}\left[\left(\bigotimes_{i\in[n]}|k_i\>_{i^\prime}\right) \otimes
\sum_{\substack{\forall\ i\in[n]:\\x_i\in\{0,1\}}}
\mathbf{i}^{\sum A_{i,j}x_ix_j} \bigotimes_{i\in[n]}\left( \frac{1}{2\sqrt{2}} \sum_{v_i,z_i\in\{0,1\}}(-1)^{z_i(x_i+v_i)}|v_i\>_i \right)
\right]\left[\cdot\right]^\dag \\
&=\frac{1}{2^n}\left[\frac{1}{\sqrt{2^n}}\sum_{\forall\ i\in[n]:x_i\in\{0,1\}}\mathbf{i}^{\sum A_{i,j}x_ix_j} \bigotimes_{i\in[n]}|x_i\>_i
\right]\left[\cdot\right]^\dag\otimes I_{p^\prime} \\
&=\frac{1}{2^n}|\phi_4\>_p\<\phi_4|\otimes I_{p^\prime},
\end{align*}

\begin{align*}
\mathcal{F}_S^\ast\left(\bigotimes_{i\in[n]}\Psi^S_{i;i^\prime}\right) &=\frac{1}{2^n}\sum_{\substack{\forall\ i\in[n]:\\k_i\in\{0,1\}}}
\left[\sum_{\substack{\forall\ i\in[n]:\\u_i\in\{0,1\}}}
\left(\bigotimes_{i\in[n]}|k_i\>_{i^\prime}\right){}_{p^\prime}\<\bar{\phi_0}| \bigotimes_{i\in[n]}\left(\sum_{u_i}\mathbf{i}^{-A_{i,i}u_i}|u_i\>_i|u_i\>_{i^\prime} \right)\right]\left[\cdot\right]^\dag \\
&=\frac{1}{2^n}\sum_{\substack{\forall\ i\in[n]:\\k_i\in\{0,1\}}}\left[\sum_{\substack{\forall\ i\in[n]:\\u_i,x_i\in\{0,1\}}}
\left(\bigotimes_{i\in[n]}|k_i\>_{i^\prime}\right)\frac{1}{\sqrt{2^n}}
 \bigotimes_{i\in[n]}\left(\mathbf{i}^{A_{i,i}x_i}{}_{i^\prime}\<x_i| \mathbf{i}^{-A_{i,i}u_i}|u_i\>_i|u_i\>_{i^\prime} \right)
\right]\left[\cdot\right]^\dag \\
&=\frac{1}{2^n}\sum_{\substack{\forall\ i\in[n]:\\k_i\in\{0,1\}}}\left[\left(\bigotimes_{i\in[n]}|k_i\>_{i^\prime}\right)\otimes\sum_{\substack{\forall\ i\in[n]:\\x_i\in\{0,1\}}}
\frac{1}{\sqrt{2^n}}\bigotimes_{i\in[n]}|x_i\>_i
\right]\left[\cdot\right]^\dag \\
&=\frac{1}{2^n}\left[\sum_{\substack{\forall\ i\in[n]: x_i\in\{0,1\}}}\frac{1}{\sqrt{2^n}}\bigotimes_{i\in[n]}|x_i\>_i
\right]\left[\cdot\right]^\dag\otimes I_{p^\prime} \\
&=\frac{1}{2^n}|\phi_S\>_p\<\phi_S|\otimes I_{p^\prime},
\end{align*}

\begin{align*}
\mathcal{F}^{\prime\ast}\left(\bigotimes_{i\in[n]}\Psi^H_{i;i^\prime}\right) &=\frac{1}{2^n}\sum_{\substack{\forall\ i\in[n]:\\k_i\in\{0,1\}}}
\left[\sum_{\substack{\forall\ i\in[n]:\\x_i,v_i,u_i\in\{0,1\}}}
\left(\bigotimes_{i\in[n]}|k_i\>_{i^\prime}\right)\frac{1}{\sqrt{2^n}} \bigotimes_{i\in[n]}\left(\frac{1}{\sqrt{2}}{}_{i^\prime}\<x_i|(-1)^{u_iv_i}|v_i\>_i|u_i\>_{i^\prime} \right)\right]\left[\cdot\right]^\dag \\
&=\frac{1}{2^n}\sum_{\substack{\forall\ i\in[n]:\\k_i\in\{0,1\}}}
\left[\left(\bigotimes_{i\in[n]}|k_i\>_{i^\prime}\right) \bigotimes_{i\in[n]}\left(\frac{1}{2}\sum_{x_i,v_i\in\{0,1\}}(-1)^{x_iv_i}|v_i\>_i \right)\right]\left[\cdot\right]^\dag \\
&=\frac{1}{2^n}
\left[\bigotimes_{i\in[n]}|0\>_i\right]\left[\cdot\right]^\dag\otimes I_{p^\prime} = \frac{1}{2^n}|0\>_p\<0|\otimes I_{p^\prime},
\end{align*}

 \begin{align*}
&\mathcal{F}_m^\ast\left(\bigotimes_{(i,j)\in S_m}\Psi_{i,j;i^\prime,j^\prime}\bigotimes_{i\in T_m}\Phi_{i;i^\prime}\right) \\
&=\frac{1}{2^n}\sum_{\substack{\forall\ i\in[n]:\\k_i\in\{0,1\}}}
\left[
\left(\bigotimes_{i\in[n]}|k_i\>_{i^\prime}\right){}_{p^\prime}\<\bar{\phi_m}| \bigotimes_{(i,j)\in S_m} \left(\sum_{\substack{u_i,u_j\\\in\{0,1\}}}(-1)^{A_{i,j}u_iu_j}|u_i\>_i|u_j\>_j|u_i\>_{i^\prime}|u_j\>_{j^\prime} \right)
\bigotimes_{i\in T_m}\left(\sum_{u_i\in\{0,1\}}|u_i\>_i|u_i\>_{i^\prime} \right)
\right]\left[\cdot\right]^\dag \\
&=\frac{1}{2^n}\sum_{\substack{\forall\ i\in[n]:\\k_i\in\{0,1\}}}
\left[\rule{0cm}{1cm}
\left(\bigotimes_{i\in[n]}|k_i\>_{i^\prime}\right)  \frac{1}{\sqrt{2^n}} \sum_{\substack{\forall\ i\in[n]:\\x_i\in\{0,1\}}} \ci^{-\sum A_{i,j}x_ix_j}(-1)^{\sum_{(i,j)\in\bigcup_{l>m}S_l}A_{i,j}x_ix_j}\bigotimes_{i\in[n]}{}_{i^\prime}\<x_i|\right.\\
&\left.\rule{0cm}{1cm}\qquad\qquad\qquad\bigotimes_{(i,j)\in S_m} \left(\sum_{u_i,u_j\in\{0,1\}}(-1)^{A_{i,j}u_iu_j}|u_i\>_i|u_j\>_j|u_i\>_{i^\prime}|u_j\>_{j^\prime} \right)
\bigotimes_{i\in T_m}\left(\sum_{u_i\in\{0,1\}}|u_i\>_i|u_i\>_{i^\prime} \right)
\right]\left[\cdot\right]^\dag \\
&= \frac{1}{2^n}\sum_{\substack{\forall\ i\in[n]:\\k_i\in\{0,1\}}}\left[
\left(\bigotimes_{i\in[n]}|k_i\>_{i^\prime}\right) \frac{1}{\sqrt{2^n}}  \sum_{\substack{\forall\ i\in[n]:\\x_i\in\{0,1\}}}
\ci^{-\sum A_{i,j}x_ix_j}(-1)^{\sum_{(i,j)\in\bigcup_{l>m}S_l}A_{i,j}x_ix_j}(-1)^{\sum_{(i,j)\in S_m}A_{i,j}x_ix_j}
\bigotimes_{i\in [n]}|x_i\>_i\right]\left[\cdot\right]^\dag\\
&=\frac{1}{2^n}
\left[\frac{1}{\sqrt{2^n}}\sum_{\forall\ i\in[n]:x_i\in\{0,1\}}\ci^{\sum A_{i,j}x_ix_j}(-1)^{\sum_{(i,j)\in \bigcup_{l\ge m}S_l}A_{i,j}x_ix_j} \bigotimes_{i\in[n]}|x_i\>_i
\right]\left[\cdot\right]^\dag\otimes I_{p^\prime} \\
&= \frac{1}{2^n}|\phi_{m-1}\>_p\<\phi_{m-1}|\otimes I_{p^\prime}. 
\end{align*}
\end{proof}

\end{document}